\newtheorem{theorem}{Theorem}
\newtheorem{definition}{Definition}
\newtheorem{lemma}{Lemma}
\newtheorem{example}{Example}
\newtheorem{fact}{Fact}
\newtheorem{proposition}{Proposition}
\newtheorem{assumption}{Assumption}
\newtheorem{remark}{Remark}
\tikzstyle{link}=[line width=2pt, ->,>=latex]
\tikzstyle{junc}=[draw,circle,inner sep=1pt,minimum width=8pt]
\tikzstyle{onramp}=[line width=2pt, dashed,->,>=latex]
\DeclareMathOperator*{\argmin}{arg\,min}
\DeclareMathOperator*{\argmax}{arg\,max}
\DeclareRobustCommand\widecheck[1]{{\mathpalette\@widecheck{#1}}}
\def\@widecheck#1#2{%
    \setbox\z@\hbox{\m@th$#1#2$}%
    \setbox\tw@\hbox{\m@th$#1%
       \widehat{%
          \vrule\@width\z@\@height\ht\z@
          \vrule\@height\z@\@width\wd\z@}$}%
    \dp\tw@-\ht\z@
    \@tempdima\ht\z@ \advance\@tempdima2\ht\tw@ \divide\@tempdima\thr@@
    \setbox\tw@\hbox{%
       \raise\@tempdima\hbox{\scalebox{1}[-1]{\lower\@tempdima\box
\tw@}}}%
    {\ooalign{\box\tw@ \cr \box\z@}}}
\def\blfootnote{\gdef\@thefnmark{}\@footnotetext}
\journal{the Nonlinear Analysis: Hybrid Systems Journal}
\begin{document}

\begin{frontmatter}

\title{Abstraction-based Synthesis for Stochastic Systems with Omega-Regular Objectives}

\address[mymainaddress]{School of Electrical and Computer Engineering}
\address[mysecondaryaddress]{School of Electrical and Computer Engineering and School of Civil and Environmental Engineering}

\author[mymainaddress]{Maxence Dutreix\corref{mycorrespondingauthor}}
\ead{maxdutreix@gatech.edu}
\author[mymainaddress]{Jeongmin Huh}
\ead{jhuh32@gatech.edu}
\author[mysecondaryaddress]{Samuel Coogan}
\ead{sam.coogan@gatech.edu}
\address{The Georgia Institute of Technology, Atlanta, GA, USA}


\cortext[mycorrespondingauthor]{Corresponding author}

%

\begin{abstract}
\blfootnote{This project was supported in part by the NSF under project \#1749357.}This paper studies the synthesis of controllers for discrete-time, continuous state stochastic systems subject to omega-regular specifications using finite-state abstractions. Omega-regular properties allow specifying complex behaviors and encompass, for example, linear temporal logic. First, we present a synthesis algorithm for minimizing or maximizing the probability that a discrete-time switched stochastic system with a finite number of modes satisfies an omega-regular property. Our approach relies on a finite-state abstraction of the underlying dynamics in the form of a Bounded-parameter Markov Decision Process arising from a finite partition of the system's domain. Such Markovian abstractions allow for a range of probabilities of transition between states for each selected action representing a mode of the original system. Our method is built upon an analysis of the Cartesian product between the abstraction and a Deterministic Rabin Automaton encoding the specification of interest or its complement. Specifically, we show that synthesis can be decomposed into a qualitative problem, where the so-called greatest permanent winning components of the product automaton are created, and a quantitative problem, which requires maximizing the probability of reaching this component in the worst-case instantiation of the transition intervals. Additionally, we propose a quantitative metric for measuring the quality of the designed controller with respect to the continuous abstracted states and devise a specification-guided domain partition refinement heuristic with the objective of reaching a user-defined optimality target. Next, we present a method for computing control policies for stochastic systems with a continuous set of available inputs. In this case, the system is assumed to be affine in input and disturbance, and we derive a technique for solving the qualitative and quantitative problems in the resulting finite-state abstractions of such systems. For this, we introduce a new type of abstractions called Controlled Interval-valued Markov Chains. Specifically, we show that the greatest permanent winning component of such abstractions are found by appropriately partitioning the continuous input space in order to generate a bounded-parameter Markov decision process that accounts for all possible qualitative transitions between the finite set of states. Then, the problem of maximizing the probability of reaching these components is cast as a (possibly non-convex) optimization problem over the continuous set of available inputs. A metric of quality for the synthesized controller and a partition refinement scheme are described for this framework as well. Finally, we present a detailed case study.
\end{abstract}

\begin{keyword}
finite-state abstractions, formal methods, interval-valued Markov chains, bounded-parameter Markov decision processes, stochastic systems.
\end{keyword}

\end{frontmatter}


\section{Introduction}

The need for systems that are both complex and reliable is more critical than ever. Not only are the models describing these systems becoming increasingly complicated, but the tasks they are expected to perform also continue to grow in complexity. For example, the operating specification may combine an invariance and a reachability condition and require that \emph{the system will always return to a good state while always avoiding a bad state.} Such specifications can be formally and unambiguously represented as, for instance, a \textit{Linear Temporal Logic} (LTL)  \cite{pnueli1977temporal} specification, among other classes of symbolic languages. In this paper, we consider the class of \textit{$\omega$-regular properties} \cite{thomas1990automata}, a superset of LTL.

Recent research efforts in formal verification and synthesis have focused on the development of robust controllers to ensure that systems requirements are unequivocally met for broad classes of specifications and dynamics \cite{tabuada2006linear} \cite{kloetzer2008fully} \cite{yordanov2011temporal} \cite{rungger2013specification} \cite{rungger2016scots}  \cite{liu2016finite} \cite{sadraddini2018formal}. A general approach is to obtain a (non)deterministic finite abstraction of the continuous-state system, encode the specification as an appropriate transition system called an automaton, compute a product construction between the system abstraction and the automaton, and then synthesize a controller by solving graph-based problems on the product \cite{belta2017formal} \cite{baier2008principles}. The controller obtained from the finite abstraction is then mapped onto the original abstracted states. However, this basic recipe does not immediately work for stochastic systems because the random disturbances acting upon such systems add a quantitative component to the transitions between states in the form of transition probabilities, preventing the use of standard transition systems as finite abstractions for this framework. Typically, this limitation is overcome by using probabilistic finite transition systems as abstractions for stochastic systems \cite{bujorianu2005bisimulation} \cite{abate2008markov} \cite{abate2011approximate} \cite{kwiatkowska2011prism} \cite{lahijanian2015formal}. Even though general synthesis procedures for such abstractions inherit ideas from approaches proposed in non-stochastic settings, the mathematical machinery required is quite different.


Indeed, for stochastic systems, satisfaction of a specification may never be fully guaranteed due to randomness. Therefore, the synthesis problem requires finding a control policy which maximizes or minimizes the probability of occurrence of some desired behavior from a given initial condition. In this work, we consider the problem of synthesizing a control policy for a discrete-time, continuous-state stochastic system subject to an $\omega$-regular specification. \textcolor{black}{Although the existence of optimal policies for this problem is not known, we seek to devise policies which are satisfactory with respect to a reasonable metric of quality}. First, we consider the case when the control action is selected from a finite set of modes that the system can switch between at each time step. Then we consider the case when the control action is selected from a continuous set of possible inputs.

Recent literature demonstrated the effectiveness of \textit{Bounded-Parameter Markov Decision Processes} (BMDP) as a tool for the synthesis of control policies in stochastic systems \cite{lahijanian2015formal} \cite{givan2000bounded}. Indeed, BMDPs are naturally amenable to finite-state abstractions of switched stochastic systems constructed from a finite partition of the continuous system domain. As each discrete state abstracts the behavior of an uncountably infinite number of underlying continuous states, the probabilities of transition between states are specified as intervals for each mode of the BMDP, rather than just a single number as in standard Markov Decision Processes. Solving for an optimal switching policy in the BMDP abstraction results in a near-optimal policy for the objective of maximizing or minimizing the probability of satisfying the specification with respect to the original abstracted states. The \textcolor{black}{quality} of this policy with respect to the original system states naturally depends on the quality and fineness of the continuous domain partition from which the abstraction is constructed.

In \cite{lahijanian2015formal}, the authors present an algorithm for computing switching policies that either minimize or maximize the probability of satisfying Probabilistic Computation Tree Logic (PCTL) specifications in a BMDP. The theory developed in \cite{lahijanian2015formal} has been applied to linear systems with additive Gaussian noise subject to cosafe LTL specifications and was shown to be computationally efficient \cite{2019arXiv190101576C}. This BMDP-based technique was also recently implemented in the comprehensive verification and synthesis toolbox StocHy \cite{cauchi2019stochy}. However, PCTL and cosafe LTL are strictly less expressive than the $\omega$-regular logic and cannot articulate certain important liveness and persistence properties, such as the infinite repetition of some event \cite{rozier2011linear}. A similar problem was solved in \cite{wolff2012robust} for LTL specifications, but the proposed solution makes simplifying assumptions on the connectivity properties of the system's abstraction which drastically reduces its scope of applicability. The synthesis of control strategies for interval Markov decision processes with multi-objectives that include $\omega$-regular properties was discussed in \cite{hahn2019interval}; however, the qualitative structure of the transition system is again assumed to be invariant, which alleviates key difficulties associated with the problem.

In this paper, we implement a procedure for computing switching policies in finite-mode discrete-time stochastic systems with the objective of minimizing or maximizing the probability of occurrence of any $\omega$-regular property. We first create a partition of the continuous domain from which a BMDP abstraction of the system is generated. We then consider the Cartesian product between the BMDP abstraction and a Deterministic Rabin Automaton (DRA) representing the $\omega$-regular property of interest \textcolor{black}{for the maximization problem, or the complement of the property for the minimization problem}. We prove that any such product BMDP induces a largest set of so-called \textit{Permanent Winning Component} for a subset of all possible switching policies, and show that the probability maximization and minimization problems reduce to a reachability maximization task on these sets of states in the product BMDP. Note that our approach does not necessitate any assumption on the connectivity structure of the BMDP unlike in \cite{wolff2012robust} and \cite{hahn2019interval}. Furthermore, we introduce a quantitative measure capturing the \textcolor{black}{quality} of the switching policy designed in the BMDP abstraction when mapped onto the continuous abstracted states with respect to the objective of minimizing or maximizing the probability of fulfilling some specification in the original system. Finally, we propose a partition refinement technique inspired by our method in \cite{dutreix2020specification}, which considered only the verification problem without inputs, in order to reach a desired level of optimality for the computed policy with respect to the continuous system states and progressively discard control actions which are guaranteed to be suboptimal. \textcolor{black}{While no formal proof of the convergence of this technique is provided in this article, such refinement-based heuristics have shown to work remarkably well in practice and offer advantages in terms of scalability.}

Expanding on the theory for finite-mode systems, we address the problem of synthesizing controllers for stochastic systems with $\omega$-regular objectives from a continuous set of available inputs using finite-state abstractions. Related works discussed the synthesis of controllers for continuous input stochastic systems subject to subsets of $\omega$-regular properties, such as Büchi objectives \cite{2019arXiv191012137M}, using abstraction-based methods. Here, we specifically study the class of stochastic systems which are affine-in-disturbance and affine-in-input. We introduce \textit{Controlled Interval-valued Markov Chains} (CIMC), which serve as abstractions for continuous input systems. We present an algorithm for constructing the largest permanent winning components in the product between a CIMC and a DRA. Then, we show that the reachability maximization step on these components can be formulated as an optimization program. The \textcolor{black}{quality} of the designed policy with respect to the original abstracted system and state-space refinement are discussed as well in this framework.

In brief, the novel contributions of \textcolor{black}{this article over existing works, and in particular over our work on the verification of stochastic systems in \cite{dutreix2020specification},} are as follows: 
\textcolor{black}{
\begin{itemize}
\item We present a synthesis procedure for finite-mode discrete-time stochastic systems against $\omega$-regular specifications, implemented in Algorithm \ref{AlgFiniteMode}. Our approach employs BMDP abstractions constructed from a partition of the continuous domain of the system, and we devise an automaton-based synthesis algorithm for BMDPs against $\omega$-regular specifications from the results of Theorem \ref{TheoWorstCasePol} in conjunction with Algorithms \ref{AlgPerBSCCW} to \ref{AlgPerWin}. These algorithms perform a search of specific components of a BMDP which do not exist in abstractions without control actions along with the computation of policies generating these components, and therefore are more involved than the graph search algorithms found in \cite{dutreix2020specification}. The switching policy synthesized in the BMDP abstraction is then mapped onto the continuous abstracted states.
\item We introduce a quantitative measure of the quality of the policy computed from the BMDP abstraction with respect to the original abstracted system states. The results in \cite{dutreix2020specification} are not concerned with the computation of switching policies and therefore do not propound such a measure. This metric is determined from the facts highlighted in Theorem \ref{UpBoundMaxTheo}.
\item We develop a specification-guided refinement strategy on the partition of the system domain in Algorithm \ref{ScorAlgSyn} to enhance the quality of the switching policy in refined BMDP abstractions of the dynamics. While an algorithm is presented in \cite{dutreix2020specification} for verification that is similar in spirit, major differences are found in the input of both algorithms, their termination criteria and the computations performed to select the states to be refined. This work additionally discusses some properties which are passed from coarser to refined abstractions, which is not done in \cite{dutreix2020specification}.
\item Next, we extend the techniques above to synthesize controllers for affine-in-disturbance, affine-in-input stochastic systems with a continuous set of permissible inputs. The control policy is computed by means of CIMC abstractions constructed from a partition of the system domain and mapped onto the continuous abstracted states as detailed in Algorithm \ref{ContInputSyn}. To this end, we present a synthesis procedure for CIMC abstractions arising from stochastic systems with the aforementioned structure that relies on Algorithms \ref{InpSelecAlg} and \ref{CompConsAlg} and requires solving (possibly non-convex) optimization problems.
\item For such systems with continuous input sets, we propose a refinement scheme for the domain partition to improve the quality of the computed controller with respect to the original abstracted states.
\end{itemize}
}

The paper is organized as follows: Section 2 introduces some preliminaries; Section 3 formulates the problem to be solved; Section 4 describes our controller synthesis strategy for finite-mode stochastic systems; Section 5 presents a controller synthesis algorithm for stochastic systems with a continuous set of inputs; Section 6 shows a case study; Section 7 concludes our work.

\section{Preliminaries}
\label{prelimNAHS}

A \textit{Deterministic Rabin Automaton (DRA)} \cite{baier2008principles} is a 5-tuple $\mathcal{A} = (S, \textcolor{black}{\Pi}, \delta, s_0, Acc)$ where:
\begin{itemize}
\item $S$ is a finite set of states,
\item $\textcolor{black}{\Pi}$ is an alphabet,
\item $\delta : S \times \textcolor{black}{\Pi} \rightarrow S$ is a transition function,
\item $s_0$ is an initial state,
\item $Acc \subseteq 2^{S} \times 2^{S}$. An element $(E_{i}, F_{i}) \in Acc$, with $E_{i}, F_{i} \, \textcolor{black}{\subseteq} \,  S$, is called a \textit{Rabin Pair}.
\end{itemize}

A DRA $\mathcal{A}$ reads an infinite string or \textit{word} over alphabet $\textcolor{black}{\Pi}$ as an input and transitions from state to state according to $\delta$. The resulting sequence of states or \textit{run} is an \textit{accepting} run if \textcolor{black}{some states of $F_i$ are visited infinitely often and all states of $E_i$ are visited finitely often for some $i$}. A word is said to be accepted by $\mathcal{A}$ if it produces an accepting run in $\mathcal{A}$. We call a set of words a \textit{property}. The property \textit{accepted} by $\mathcal{A}$ is the set of all words accepted by $\mathcal{A}$.\\

A property over an alphabet $\textcolor{black}{\Pi}$ is \textit{$\omega$-regular} if and only if it is accepted by a Rabin Automaton with alphabet $\textcolor{black}{\Pi}$ (for more detailed definitions of $\omega$-regular properties, see \cite[Section 4.3.1]{baier2008principles}). In particular, all properties defined by a \textit{Linear Temporal Logic} (LTL) formula are $\omega$-regular. For example, the property ``Eventually reach A", written in LTL as $\Diamond A$, has an equivalent $\omega$-regular expression representation $(\neg A)^{*}A(\textcolor{black}{\Pi})^{\omega}$, where $*$ and $\omega$ are respectively the finite and infinite repetition operators. See \cite{baier2008principles} for a detailed description of the syntax and semantics of LTL.\\

\textcolor{black}{A $S \times S$ matrix is a \textit{transition matrix} $M$ if $\Sigma_{j=1}^{S} M_{i,j} = 1$ for all $i = 1, 2, \ldots, S$, where $M_{i,j}$ is the $i$th row and $j$th column element of $M$.}

\section{Problem Formulation}

We first consider the discrete-time, continuous-state stochastic system
\begin{align}
x[k+1] = \mathcal{F}_{a}(x[k], w_a[k])
\label{eq1}
\end{align}
where $x[k] \in D \subset \mathbb{R}^{n}$ is the state of the system at time $k$, $a \in A$ where $A$ is a finite set of \textit{modes}, $w_{a}[k] \in W_{a} \subset \mathbb{R}^{p_a}$ is a random disturbance (which could be mode-dependent), $\mathcal{F}_{a} : D \times W_{a} \rightarrow D$ is a continuous map. Let $L: D \rightarrow \Sigma$ be a labeling function, where $\Sigma$ is a finite alphabet \textcolor{black}{and such that, for all $\sigma \in \Sigma$, the subset $D_{\sigma}\subseteq D$ of all states $x \in D$ satisfying $L(x) = \sigma$ can be written as a finite union of subsets of $D$, that is, $D_{\sigma} = \cup_{i = 1}^{N} J_{i}, \, J_{i} \subseteq D, \,  n \in \mathbb{N}$}. In Section \ref{Continputsec}, we extend this setup to allow for an infinite set of modes, \emph{i.e.}, a control input selected from a continuous set of inputs. An infinite random path $x[0] x[1] \ldots$ satisfying \eqref{eq1} generates the word $L(x[1]) L(x[2]) \ldots$ over $\Sigma$. At each time-step $k$, a mode $a \in A$ is chosen and the random disturbance $w_{a}[k]$ is sampled from a probability distribution with probability density function $f_{w_{a}}:\mathbb{R}^{p_{a}}\to \mathbb{R}_{\geq 0}$ satisfying $f_{w_{a}}(z)=0$ if $z\not\in W_{a}$. Then, a transition from state $x[k]$ to state $x[k+1]$ takes place according to the dynamics defined by mode $a$. \textcolor{black}{The set of all infinite paths of \eqref{eq1} is denoted by $Paths$.}
A finite sequence of states $\pi = x[0] x[1] \ldots x[n]$ produced by \eqref{eq1} is called a \textit{finite path}. The set of all finite paths of \eqref{eq1} is denoted by $Paths_{fin}$. A function $\mu : Paths_{fin} \rightarrow A$ assigning a mode to each finite path in \eqref{eq1}  is called a \textit{switching policy} and the set of all switching policies of \eqref{eq1} is denoted by $\mathcal{U}=\{ \mu \mid \mu: Paths_{fin} \rightarrow A\}$. For simplicity, we assume that all modes of $A$ are available at each state of $D$. \textcolor{black}{A policy $\mu \in \mathcal{U}$ induces a probability measure $Prob_{\mu}$ on the outcome space of infinite paths $Paths$ of \eqref{eq1}, where $Prob_{\mu}$ is defined by the stochastic transition kernel $T_{D} : D \times A \times \mathcal{R}(D) $ assigning a probability measure to any state $x \in D$ and mode $a \in A$ on the space $(D, \mathcal{R}(D))$, with $\mathcal{R}(D)$ denoting the Borel $\sigma$-algebra on $D$ and such that $Pr(x,a | \mathcal{K}) = \int_{\mathcal{A}} T_{D}(x,a | dx)$,  $\mathcal{K} \in \mathcal{R}(D)$  \cite{hernandez1996discrete}.}

We denote by $\Psi$ an arbitrary $\omega$-regular property over alphabet $\Sigma$ and write as $(p^x_{\Psi})_{\mu}$ the probability that a word generated by a random path starting in $x$ satisfies property $\Psi$ under policy $\mu$ (for a rigorous formalization of this probability, see, e.g., \cite{abate2011approximate}). Our objective is to determine switching policies $\widecheck{\mu}_{\Psi}$ and $\widehat{\mu}_{\Psi}$ that respectively minimize and maximize the probability of satisfying property $\Psi$ for any path in the system and, by extension, for any initialization to $x$ of the system.\\

\textbf{Problem 1}: \textit{Given a system of the form \eqref{eq1}, any initial state $x \in D$ and an $\omega$-regular property $\Psi$, find switching policies $\widecheck{\mu}_{\Psi} \in \mathcal{U}$ and $\widehat{\mu}_{\Psi} \in \mathcal{U}$ that respectively minimize and maximize the probability of satisfying $\Psi$ from $x$, \emph{i.e.},}
\begin{align}
\widecheck{\mu}_{\Psi} & = \argmin_{\mu \in \mathcal{U}} (p^x_{\Psi})_{\mu}\\
\widehat{\mu}_{\Psi}  & = \argmax_{\mu \in \mathcal{U}} (p^x_{\Psi})_{\mu} \ . 
\end{align}\mbox{}

For complex specifications and dynamics, devising these exact optimal policies is likely to be intractable or infeasible due to the uncountably infinite number of states of the system's domain. To determine a policy which is close to optimal, we consider an abstraction-based approach that consists in partitioning $D$ into a finite collection of states $P$ to construct a finite abstraction of the stochastic dynamics.\\

\begin{definition}[Partition]
A \textit{partition} $P$ of a domain $D \subset \mathbb{R}^{n}$ is a collection of discrete states $P = \lbrace Q_{j} \rbrace_{j=1}^{m}, \; Q_{j} \subset D,$ satisfying
\begin{itemize}
\item $\bigcup_{j=1}^{m} Q_{j} = D$,
\item $  \textbf{int}(Q_{j}) \cap \textbf{int}(Q_{\ell}) = \emptyset \;\; \forall j, \ell, \;  j \not = \ell \ ,$
\end{itemize}
where \textbf{int} denotes the interior. For any continuous state $x$ belonging to a state $Q_j$, we write $x \in Q_j$.
\end{definition}\mbox{}

\noindent For a partition $P$ of the domain $D$ of \eqref{eq1}, the likelihood of transitioning from a state $Q_j$ of $P$ to another state $Q_{\ell}$ generally varies with the continuous state abstracted by $Q_{j}$ from which the transition is actually taking place. Therefore, we cannot use partition $P$ to exactly abstract the system into a standard finite-mode Markovian model, such as an MDP. Instead, we propose producing a \textit{BMDP abstraction} of the system where, for each action of the BMDP abstracting the behavior of \eqref{eq1} under some mode, the transition probabilities between states are constrained within some bounds, as depicted in Figure \ref{FigBMDPAbs}.\\

\begin{definition}[Bounded-parameter Markov Decision Process]
A \emph{Bounded-parameter Markov Decision Process (BMDP)} \cite{givan2000bounded} is a 6-tuple $\mathcal{B} = (Q, Act, \widecheck{T}, \widehat{T}, \\ \textcolor{black}{q_0}, \Sigma, L)$ where:
\begin{itemize}
\setlength{\itemsep}{0pt}
\item $Q$ is a finite set of states,
\item $Act$ is a finite set of actions, and the set of actions available at state $Q_{j} \in Q$ is denoted by $A(Q_{j}) \subseteq Act$,
\item $\widecheck{T}: Q \times Act \times Q \rightarrow [0, 1] $ maps pairs of states and an action to a lower transition bound so that $\widecheck{T}_{Q_{j} \xrightarrow{a} Q_{\ell}} := \widecheck{T}(Q_{j}, a, Q_{\ell})$ denotes the lower bound of the transition probability from state $Q_{j}$ to state $Q_{\ell}$ under action $a \in A(Q_j)$, and 
\item $\widehat{T}: Q \times Act \times Q \rightarrow [0, 1] $ maps pairs of states and an action to an upper transition bound so that $\widehat{T}_{Q_{j} \xrightarrow{a} Q_{\ell}} := \widehat{T}(Q_{j}, a, Q_{\ell})$ denotes the upper bound of the transition probability from state $Q_{j}$ to state $Q_{\ell}$ under action $a \in A(Q_j)$,
\item \textcolor{black}{$q_{0} \subseteq Q$ is a set of initial states,}
\item $\Sigma$ is a finite set of atomic propositions,
\item $L : Q \rightarrow \textcolor{black}{2^{\Sigma}}$ is a labeling function from states to \textcolor{black}{the power set of} $\Sigma$,
\end{itemize}
and $\widecheck{T}$ and $\widehat{T}$ satisfy $\widecheck{T}(Q_j, a, Q_\ell)\leq \widehat{T}(Q_j, a, Q_\ell)$ for all $Q_j,Q_\ell\in Q$, all $a \in A(Q_j)$, and 
\begin{equation}
\label{eq:37}
\sum_{Q_\ell\in Q} \widecheck{T}(Q_j, a, Q_\ell)\leq 1\leq \sum_{Q_\ell\in Q}\widehat{T}(Q_j, a, Q_\ell)  
\end{equation}
 for all $Q_j\in Q$ and all $a \in A(Q_j)$.
 \end{definition}\mbox{}
 
 \begin{definition}[BMDP Abstraction]
Given the system \eqref{eq1} evolving on a domain $D \subset  \mathbb{R}^{n}$ and a partition $P=\{Q_j\}_{j=1}^m$ of $D$, a BMDP $\mathcal{B} = (Q, Act,  \widecheck{T}, \widehat{T}, \textcolor{black}{q_0}, \Sigma, L)$ is an \emph{abstraction} of \eqref{eq1} if:  
\begin{itemize}
\item $Q := P$, that is, the set of states of the BMDP is the partition $P$, 
\item $Act := A$, that is, the set of actions of the BMDP are the modes of \eqref{eq1},
\item For all $Q_j,Q_\ell\in P$ and action $a \in Act$,
\begin{align}
\label{eq:3}   \widecheck{T}_{Q_{j} \xrightarrow{a} Q_{\ell}}  &\leq  \inf_{x\in Q_j} Pr( \mathcal{F}_{a} (x,w_{a}) \in Q_\ell ),\text{ and}\\
\label{eq:3-2}   \widehat{T}_{Q_{j} \xrightarrow{a} Q_{\ell}}  &\geq \sup_{x\in Q_j} Pr( \mathcal{F}_{a} (x,w_{a}) \in Q_\ell ),
\end{align}
where $Pr( \mathcal{F}_{a}(x,w_{a}) \in Q_\ell)$ for fixed $x$ denotes the probability that \eqref{eq1} transitions from $x$ to some state $x'=\mathcal{F}_{a}(x,w_{a})$ in $Q_{\ell}$ under mode $a$,
\item \textcolor{black}{$P = q_0$, i.e., the set of initial states of the BMDP is the partition $P$,}
\item For all $Q_j \in P$ and for any two states $x_i, x_{\ell} \in Q_j$, it holds that $L(Q_j) := L(x_i) = L(x_{\ell})$, that is, the partition conforms to the boundaries induced by the labeling function.
\end{itemize}
\end{definition}\mbox{}

For a given action, two continuous states belonging to the same discrete state of a BMDP abstraction $\mathcal{B}$ may, in general, give rise to different transition probabilities. This fact is encoded in $\mathcal{B}$ by the upper and lower transition probabilities.

\begin{figure}
\centering
\includegraphics[scale=0.35]{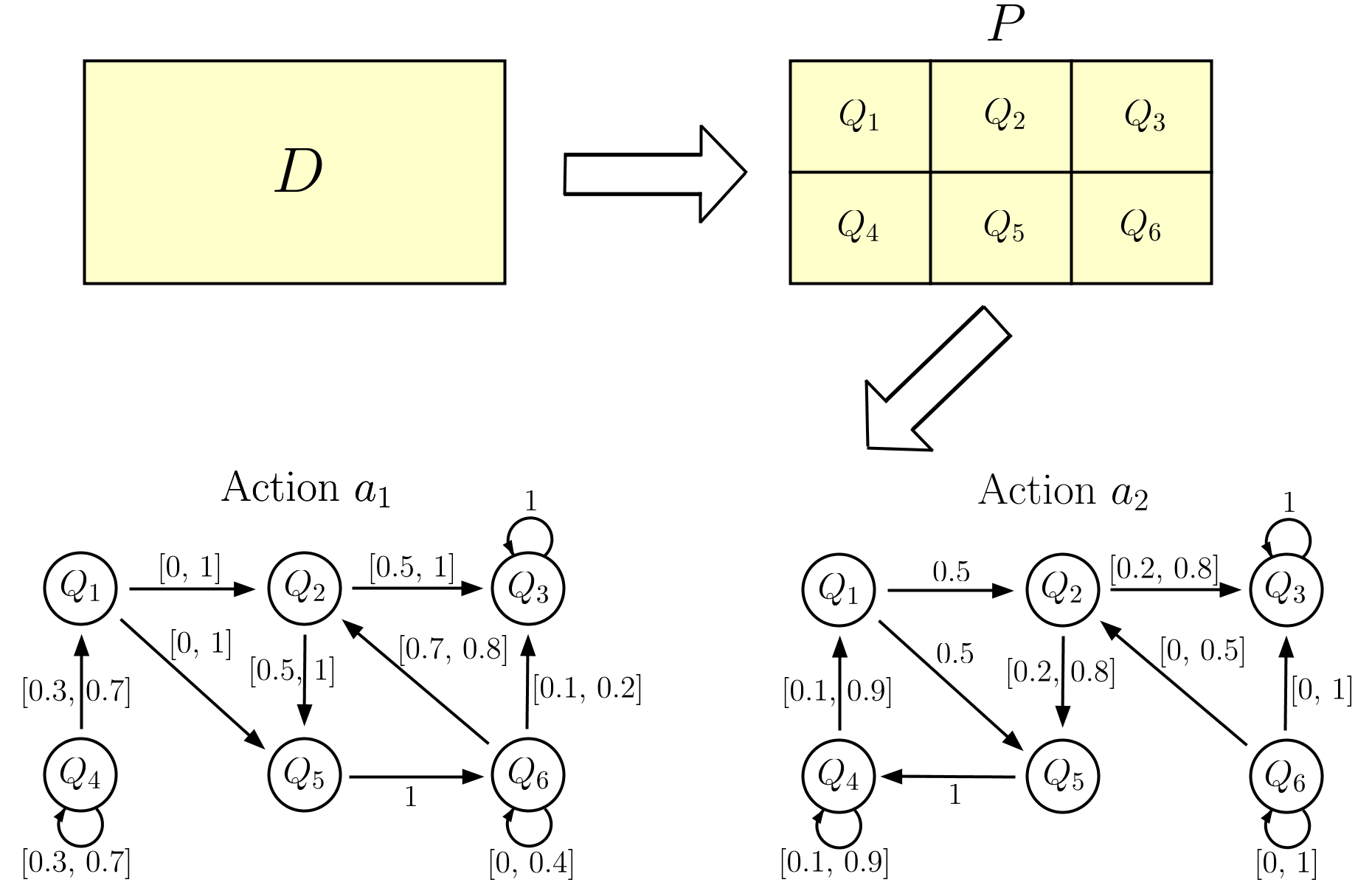}
\caption{A finite-state BMDP abstraction $\mathcal{B}$ of system \eqref{eq1} with domain $D$. A partition $P$ of $D$ is generated and bounds on the transition probabilities between states are estimated for two actions $a_{1}$ and $a_{2}$ of $\mathcal{B}$.}
\label{FigBMDPAbs}
\end{figure}

In this paper, we do not present algorithms for computing BMDP abstraction of \eqref{eq1}, which typically rely on overapproximating reachable sets; see \cite{dutreix2018} for such an approach. Thus, we assume that BMDP abstractions are available given a partition $P$ of $D$ for \eqref{eq1}. However, we will focus on the problem of refining $P$ in order to obtain better BMDP abstractions.
 
Furthermore, we make the assumption that any state in $Q$ of a BMDP can serve as an initial state. Denoting the set of all finite paths of a BMDP $\mathcal{B}$ by $(Paths_{fin})_{\mathcal{B}}$, a switching policy $\mu : (Paths_{fin})_{\mathcal{B}} \rightarrow Act$ for $\mathcal{B}$ is a function assigning an action to all finite paths in $\mathcal{B}$. The set of all switching policies of $\mathcal{B}$ is denoted by $\mathcal{U}_{\mathcal{B}} = \{ \mu \mid \mu : (Paths_{fin})_{\mathcal{B}} \rightarrow Act \}$. Under a switching policy $\mu$, the available actions in BMDP $\mathcal{B}$ reduce to a single possibility at each time step, namely, that prescribed by the switching policy $\mu$, inducing a (possibly countably infinite-state) \textit{Interval-valued Markov Chain} (IMC), defined formally next. As will be discussed further, only finite-memory policies need to be considered in this work, which induce finite-state IMCs.\\
 
\begin{definition}[Interval-valued Markov Chain]
An \emph{Interval-valued Markov Chain} (IMC) $\mathcal{I} = (Q, \widecheck{T}, \widehat{T}, \textcolor{black}{q_0}, \Sigma, L)$ is defined similarly to a BDMP with the difference that a single action (which is omitted in the defining tuple) is available.
\end{definition}\mbox{}

\noindent The IMC induced by policy $\mu$ in BMDP $\mathcal{B}$ is denoted by $\mathcal{B}[\mu]$.

The state of an IMC $\mathcal{I}$ evolves as follows: at each time step $k$, the environment non-deterministically chooses a transition matrix $T_k$ compatible with the transition bound functions $\widecheck{T}$ and $\widehat{T}$ of $\mathcal{I}$ and the next transition occurs according to $T_k$ \cite{sen2006model} \footnote{This is the Interval Markov Decision Process interpretation of IMCs.}. A mapping $\mathcal{\nu}$ from a finite path $\pi = q_0 \ldots q_k$ in $\mathcal{I}$ to a transition matrix $T_k$ is called an \textit{adversary}. The set of all adversaries of $\mathcal{I}$ is denoted by $\mathcal{\nu}_{\mathcal{I}}$. \textcolor{black}{A unique probability measure $Prob_{\mathcal{\nu}}$ is induced over the set of all infinite paths $Paths_{\mathcal{I}}$ of IMC $\mathcal{I}$ under adversary $\mathcal{\nu} \in \mathcal{\nu}_{\mathcal{I}}$ \cite[Def. 10.10]{baier2008principles}. By extension, a probability measure $Prob_{\mu, \mathcal{\nu}}$ is induced over the set of all infinite paths $Paths_{\mathcal{B}}$ of BMDP $\mathcal{B}$ under policy $\mu$ and adversary $\mathcal{\nu} \in \mathcal{\nu}_{\mathcal{B}[\mu]}$}.

The probability of satisfying $\omega$-regular property $\Psi$ starting from initial state $Q_j$ in IMC $\mathcal{I}$ under adversary $\mathcal{\nu}$ is denoted by $\mathcal{P}_{\mathcal{I}[\mathcal{\nu}]}(Q_j \models \Psi)$.
The greatest lower bound and least upper bound on the probability of satisfying property $\Psi$ starting from initial state $Q_j$ in IMC $\mathcal{I}$ are denoted by $\widecheck{\mathcal{P}}_{\mathcal{I}}(Q_j \models \Psi) = \inf_{\nu \in \nu_{\mathcal{I}}} \mathcal{P}_{\mathcal{I}[\mathcal{\nu}]}(Q_j \models \Psi)$ and $\widehat{\mathcal{P}}_{\mathcal{I}}(Q_j\models \Psi) = \sup_{\nu \in \nu_{\mathcal{I}}} \mathcal{P}_{\mathcal{I}[\mathcal{\nu}]}(Q_j \models \Psi)$ respectively. When these bounds are the same for all states in a set of states $C$ of $\mathcal{I}$, we write $\widecheck{\mathcal{P}}_{\mathcal{I}}(C\models \Psi)$ and $\widehat{\mathcal{P}}_{\mathcal{I}}(C\models \Psi)$. \\

\textcolor{black}{To design switching policies in BMDPs, it is crucial to note that a BMDP $\mathcal{B}$ subject to a switching policy $\mu$ reduces to an IMC $\mathcal{B}[\mu]$; therefore, finding the probability of satisfying a specification $\Psi$ from some initial state of $\mathcal{B}[\mu]$ amounts to solving a verification problem on an IMC. As discussed above, the probability of satisfying a specification $\Psi$ in an IMC is not uniquely defined and depends on the instantiation of a non-deterministic adversary. Consequently, the verification of the IMC $\mathcal{B}[\mu]$ induced by a policy $\mu$ in a BMDP $\mathcal{B}$ does not compute, in general, a fixed probability but an interval of satisfaction probabilities $(I_{j})_{\mu} = [(p^{j}_{min})_{\mu}, (p^{j}_{max})_{\mu}]$ for all initial states $Q_{j}$ of $\mathcal{B}[\mu]$. The meaning of this interval is that the probability of fulfilling $\Psi$ from state $Q_{j}$ in $\mathcal{B}[\mu]$ is contained in $(I_{j})_{\mu}$ for all possible adversaries of $\mathcal{B}[\mu]$, that is, $\mathcal{P}_{\mathcal{B}[\mu][\mathcal{\nu}]}(Q_j \models \Psi) \in (I_{j})_{\mu}, \; \forall \nu \in \nu_{\mathcal{B}[\mu]}$.} 

\textcolor{black}{Because a switching policy in a BMDP returns an interval of satisfaction for all its initial states, it may not seem obvious which quantities to minimize or maximize when synthesizing policies in BMDP abstractions of continuous state systems. Note that a policy $\mu$ for a BMDP abstraction $\mathcal{B}$ of \eqref{eq1} maps to a policy for \eqref{eq1} in the natural way, \emph{i.e.}, the control action prescribed by $\mu$ at a discrete state $Q_i$ of $\mathcal{B}$ is applied to all continuous states $x\in Q_i$ in \eqref{eq1}. By virtue of $\mathcal{B}$ being an abstraction of \eqref{eq1}, it then holds that the exact probability of satisfying $\Psi$ from any continuous initial state $x\in Q_j$ for \eqref{eq1} is contained within the bounds of the interval $(I_{j})_{\mu}$ induced by policy $\mu$ for initial state $Q_j$ in $\mathcal{B}$ \cite{lahijanian2015formal}.
Therefore, given a BMDP abstraction $\mathcal{B}$ of \eqref{eq1} generated from a partition $P$ of the domain $D$, our approach to Problem 1 is to find policies $\widehat{\mu}^{low}_{\Psi}$ and $\widecheck{\mu}^{up}_{\Psi}$ in $\mathcal{B}$ that respectively maximize the lower bound probability (for the maximization objective) and minimize the upper bound probability (for the minimization objective) of satisfying $\Psi$ for all initial states $Q_j$ of $\mathcal{B}$.}\\

\textbf{Subproblem 1.1}: \textit{Given a system of the form \eqref{eq1}, a partition $P$ of its domain $D$, a BMDP abstraction $\mathcal{B}$ of \eqref{eq1} arising from $P$, any initial state $Q_j \in Q$ of $\mathcal{B}$ and an $\omega$-regular property $\Psi$, compute switching policies $\widecheck{\mu}^{up}_{\Psi} \in \mathcal{U}_{\mathcal{B}}$ and  $\widehat{\mu}^{low}_{\Psi} \in \mathcal{U}_{\mathcal{B}}$ that respectively minimize the upper bound probability and maximize the lower bound probability of satisfying $\Psi$ in $\mathcal{B}$, \emph{i.e.},}
\begin{align}
\widecheck{\mu}^{up}_{\Psi} & = \argmin_{\mu \in \mathcal{U}_{\mathcal{B}}} \widehat{\mathcal{P}}_{\mathcal{B}[\mu]}(Q_j \models \Psi)\\
\widehat{\mu}^{low}_{\Psi}  & = \argmax_{\mu \in \mathcal{U}_{\mathcal{B}}} \widecheck{\mathcal{P}}_{\mathcal{B}[\mu]}(Q_j \models \Psi) \ . 
\end{align}\mbox{}

If $\mathcal{B}$ is a BMDP abstraction of \eqref{eq1}, then a unique control action is assigned to all continuous states abstracted by some $Q_{i}$ in $\mathcal{B}$. In this case, the \textcolor{black}{quality} of the policies $\widehat{\mu}^{low}_{\Psi}$ and $\widecheck{\mu}^{up}_{\Psi}$ heavily depends on the quality and fineness of the partition $P$ of the domain $D$. Indeed, because these policies only accommodate the extreme behaviors of all discrete states of $\mathcal{B}$, it is reasonable to assume that the computed policies may be suboptimal for a collection of continuous states abstracted by some $Q_{i}$. In this work, we address this problem by starting with a coarse partition of the system's domain; then, we iteratively and selectively refine this partition so as to target discrete states that are at a higher risk of containing suboptimally controlled continuous states or are responsible for considerable uncertainty in the control of other states. As finer partitions result in larger abstractions to be analyzed, it is crucial to avoid performing unnecessary refinement in order to alleviate the state-space explosion phenomenon. The procedure terminates once a precision threshold which will be defined in further sections has been reached.\\

\textbf{Subproblem 1.2}: \textit{Given a system of the form \eqref{eq1} with a BMDP abstraction $\mathcal{B}$ arising from a partition $P$ of the domain $D$ and an $\omega$-regular property $\Psi$, refine the partition $P$ of $D$ until the computed switching policy reaches a user-defined threshold of \textcolor{black}{quality} with respect to the objective of minimizing or maximizing the probability of satisfying $\Psi$ in \eqref{eq1}.}\\

After presenting solutions to Subproblem 1.1 and 1.2 in Section \ref{SecSynthFin}, we next investigate stochastic systems of the form 
\begin{align}
x[k+1] = \mathcal{F}(x[k], u[k], w[k])
\label{eq2}
\end{align}
\noindent where $x[k] \in D \subset \mathbb{R}^{n}$ is the state of the system at time $k$, $u[k] \in U$ where $U \subset \mathbb{R}^{m}$ is a continuous set of inputs, $w[k] \in W \subset \mathbb{R}^{p}$ is a random disturbance whose probability density function $f_{w}$ is assumed to be independent of $u$, $\mathcal{F} : D \times U \times W \rightarrow D$ is a continuous map. Here, a \textit{control policy} is a function $\mu : Paths_{fin} \rightarrow U$ assigning a control action to each finite path in \eqref{eq2}. The set of all control policies of \eqref{eq2} is denoted by $\mathcal{U}=\{ \mu \mid \mu: Paths_{fin} \rightarrow A\}$ as in the finite-mode system case.

The difficulty of establishing policies aiming to maximize or minimize the probability of satisfying a temporal property in \eqref{eq2} is highly dependent on the structure of the considered system. In this work, we restrict our attention to systems which are affine in input and disturbance, that is
\begin{align}
x[k+1] = \mathcal{F}(x[k]) + u[k] + w[k] \ .
\label{eq3}
\end{align}

As in the finite-mode case, we are interested in the design of a control policy that maximizes or minimizes the probability of satisfying an $\omega$-regular property $\Psi$.\\

\textbf{Problem 2}: \textit{Given a system of the form \eqref{eq3}, any initial state $x \in D$ and an $\omega$-regular property $\Psi$, find control policies $\widecheck{\mu}_{\Psi} \in \mathcal{U}$ and $\widehat{\mu}_{\Psi} \in \mathcal{U}$ that respectively minimize and maximize the probability of satisfying $\Psi$ from $x$.}\\

Solving this problem for an arbitrary property $\Psi$ again involves a partition $P$ of the domain $D$ from which a finite-state abstraction of the system is constructed and analyzed. In this work, we introduce new abstraction tools called \textit{Controlled Interval-valued Markov Chains} (CIMC) which differ from BMDPs in that the set of available actions is uncountably infinite. CIMCs are the abstractions of choice for systems of the form \eqref{eq3}.\\

\begin{definition}[Controlled Interval-valued Markov Chain]
A \emph{Controlled Interval-valued Markov Chain (CIMC)} is a 6-tuple $\mathcal{C} = (Q, U, \widecheck{T}, \widehat{T}, \textcolor{black}{q_{0}}, \Sigma, L)$ defined similarly to a BMDP with the difference that a continuous set of inputs $U \subseteq \mathbb{R}^{m}$ replaces the finite set of actions $Act$.
\end{definition}\mbox{}

\begin{definition}[Controlled Interval-valued Markov Chain Abstraction]
Given the system \eqref{eq3} evolving on a domain $D \subset  \mathbb{R}^{n}$ and a partition $P=\{Q_j\}_{j=1}^m$ of $D$, a CIMC $\mathcal{C} = (Q, U,  \widecheck{T}, \widehat{T}, \textcolor{black}{q_{0}}, \Sigma, L)$ is an \emph{abstraction} of \eqref{eq3} if it satisfies the same conditions as a BMDP abstraction with the difference that a continuous set of inputs $U \subseteq \mathbb{R}^{m}$ replaces the finite set of actions $Act$.
\end{definition}\mbox{}

Denoting the set of all finite paths in a CIMC $\mathcal{C}$ by $(Paths_{fin})_{\mathcal{C}}$, a control policy $\mu : (Paths_{fin})_{\mathcal{C}} \rightarrow U$ for $\mathcal{C}$ is a function assigning an input to all finite paths in $\mathcal{C}$. The set of all control policies of $\mathcal{C}$ is denoted by $\mathcal{U}_{\mathcal{C}} = \{ \mu \mid \mu : (Paths_{fin})_{\mathcal{C}} \rightarrow U \}$. A policy $\mu$ applied to a CIMC $\mathcal{C}$ induces an IMC denoted by $\mathcal{C}[\mu]$.

Computing an optimal policy in a CIMC abstraction translates to computing a near-optimal policy when the former is applied to the original abstracted system. Thus, for all possible finite paths in $\mathcal{C}$, the goal is to find the input in the uncountable set $U$ that yields the most favorable IMC abstraction with respect to the desired objective. Note that, unlike in a BMDP abstraction, this problem offers an infinite set of available inputs to select from, ruling out the possibility of using an exhaustive search.\\

\textbf{Subproblem 2.1}: \textit{Given a system of the form \eqref{eq3}, a partition $P$ of its domain $D$, a CIMC abstraction $\mathcal{C}$ of \eqref{eq3} arising from $P$, any initial state $Q_j \in Q$ of $\mathcal{C}$ and an $\omega$-regular property $\Psi$, compute the control policies $\widecheck{\mu}^{up}_{\Psi} \in \mathcal{U}_{\mathcal{C}}$ and  $\widehat{\mu}^{low}_{\Psi} \in \mathcal{U}_{\mathcal{C}}$ that respectively minimize the upper bound probability and maximize the lower bound probability of satisfying $\Psi$ in $\mathcal{C}$, \emph{i.e.},}
\begin{align}
\widecheck{\mu}^{up}_{\Psi} & = \argmin_{\mu \in \mathcal{U}_{\mathcal{C}}} \widehat{\mathcal{P}}_{\mathcal{C}[\mu]}(Q_j \models \Psi)\\
\widehat{\mu}^{low}_{\Psi}  & = \argmax_{\mu \in \mathcal{U}_{\mathcal{C}}} \widecheck{\mathcal{P}}_{\mathcal{C}[\mu]}(Q_j \models \Psi) \ . 
\end{align}\mbox{}

As our approach again relies on finite-state abstractions, finer partitions of the domain $D$ generally yield \textcolor{black}{higher-quality} control policies. Therefore, partition refinement for this case is discussed as well.\\

\textbf{Subproblem 2.2}: \textit{Given a system of the form \eqref{eq3} with a CIMC abstraction $\mathcal{C}$ arising from a partition $P$ of the domain $D$ and an $\omega$-regular property $\Psi$, refine the partition $P$ of $D$ until the computed control policy reaches a user-defined threshold of \textcolor{black}{quality} with respect to the objective of minimizing or maximizing the probability of satisfying $\Psi$ in \eqref{eq3}.}\mbox{}\\

\textcolor{black}{In the next section, we comprehensively detail our solution to the synthesis of switching policies for finite mode systems as formalized in Problem 1. Specifically, Subsections \ref{SecBMDPSynth} and \ref{SecCompSearch} focus on the computation of controllers for BMDP abstractions as stated in Subproblem 1.1, whereas Subsection \ref{SecRefFin} is concerned with Subproblem 1.2 and the refinement of BMDP abstractions for the synthesis of improved policies with respect to the abstracted system.}

\section{CONTROLLER SYNTHESIS FOR FINITE MODE SYSTEMS}
\label{SecSynthFin}

\subsection{BMDP CONTROLLER SYNTHESIS}
\label{SecBMDPSynth}

In this subsection, we present the theory for addressing Subproblem 1.1. We adopt an automaton-based approach for computing maximizing and minimizing switching policies in a BMDP $\mathcal{B}$ with respect to an $\omega$-regular property $\Psi$. As discussed in Section \ref{prelimNAHS}, for every such property, there exists a corresponding DRA representation $\mathcal{A}$. Similar to \cite[page 798]{baier2008principles} and \cite{dutreix2020specification} where the Cartesian product with a Markov Chain (MC) and an IMC are introduced, we define the product $\mathcal{B} \otimes \mathcal{A}$ between a BMDP and a DRA.\\

\begin{definition}[Product Bounded-Parameter Markov Decision Process]
\label{defProductBMDP}
Let $\mathcal{B} = (Q, Act, \widecheck{T}, \widehat{T}, \textcolor{black}{q_{0}}, \Sigma, L)$ be a BMDP and $\mathcal{A} = (S, 2^{\Sigma}, \delta, s_0, Acc)$ be a DRA. The \emph{product} $\mathcal{B} \otimes \mathcal{A} = (Q \times S, Act, \widecheck{T'}, \widehat{T'}, \textcolor{black}{q^{\otimes}_{0}}, Acc', L')$ is a BMDP where:
\begin{itemize}
\item $Q \times S$ is a set of states,
\item $Act$ is the same set of actions of $\mathcal{B}$, where $A(\left<Q_{j}, s_{i} \right>) = A(Q_{j})$ for all $Q_{j} \in Q$ and for all $s_{i} \in S$,
\item $\widecheck{T'}_{ \left<Q_{j},s\right> \xrightarrow{a} \left<Q_{\ell},s'\right>} = 
\begin{cases}
\textcolor{black}{\widecheck{T}_{Q_{j} \xrightarrow{a} Q_{\ell}}}, \;\; \text{if} \;\; s' = \delta(s, L(Q_{\ell}))\\ \;\;\; \;\;\;\;\;\;0, \;\;\;\;\;\; \text{otherwise}
\end{cases}$\mbox{}\\\\
\item $\widehat{T'}_{ \left<Q_{j},s\right> \xrightarrow{a} \left<Q_{\ell},s'\right>} = 
\begin{cases}
\textcolor{black}{\widehat{T}_{Q_{j} \xrightarrow{a} Q_{\ell}}}, \;\; \text{if} \;\; s' = \delta(s, L(Q_{\ell}))\\ \;\;\; \;\;\;\;\;\;0, \;\;\;\;\;\; \text{otherwise}
\end{cases}$\mbox{}\\
\item \textcolor{black}{$ q^{\otimes}_{0} = \{(Q_j,s_0):Q_j\in Q\}$ is a finite set of initial states,}
\item $Acc' = \lbrace E_{1}, E_{2}, \ldots, E_{k}, F_{1}, F_{2}, \ldots, F_{k} \rbrace$ is a set of atomic propositions, where $E_{i}$ and $F_{i}$ are the sets in the Rabin pairs of $Acc$,
\item $L': Q \times S \rightarrow 2^{Acc'}$ such that\textcolor{black}{, for all atomic proposition $H \in Acc' $, for all $Q_{j} \in Q$ and for all $s_{i} \in S$,  $H \in L'(\left<Q_{j},s_{i}\right>)$ if and only if $s_{i}$ belongs to the set in the Rabin pairs of $Acc$ corresponding to $H$}.
\end{itemize}
\end{definition}\mbox{}

In this product construction, the DRA $\mathcal{A}$ is used as a finite-memory instrument that monitors all transitions occurring in $\mathcal{B}$ and assesses whether the resulting path satisfies $\Psi$. Indeed, any random path $\pi = q_0 q_1 \ldots$ in $\mathcal{B}$ generates a unique path $\pi_{\otimes}^{\mathcal{A}} = \left< q_0, s_0 \right> \left< q_1, s_ j \right> \ldots $ in $\mathcal{B} \otimes \mathcal{A}$ which depends on the labels of the states of $\mathcal{B}$ as per Definition \ref{defProductBMDP}. It follows that a switching policy in $\mathcal{B}$ can be induced by inspecting the sequences of states generated in $\mathcal{B} \otimes \mathcal{A}$ and choosing control actions accordingly.\\

\begin{definition}[Generated Path in Product BMDP]
Consider a BMDP $\mathcal{B}$ with set of states $Q$ and labeling function $L$ and a DRA $\mathcal{A}$ with set of states $S$ and transition function $\delta$. A path $\pi_{\otimes}^{\mathcal{A}} = \left< q_{0}, s'_{0} \right>, \left<q_{1}, s'_{1} \right> \ldots, \ q_{i} \in Q, \ s'_{i} \in S,$ in the product BMDP $\mathcal{B} \otimes \mathcal{A}$ is said to be \emph{generated} by the path $\pi = q_{0}, q_{1} \ldots$ in $\mathcal{B}$ if it holds that $s'_{i+1} = \delta(s'_{i}, L(q_{i+1})), \forall i = 0, 1, 2, \ldots \ $. 
\end{definition}\mbox{}

\begin{definition}[Induced Switching Policy]
Consider a BMDP $\mathcal{B}$, a DRA $\mathcal{A}$ and a switching policy $\mu \in \mathcal{U}_{\mathcal{B}}$. Let $\pi \in (Paths_{fin})_{\mathcal{B}}$ be any finite path in $\mathcal{B}$. We denote by $\pi_{\otimes}^{\mathcal{A}}$ the path generated by $\pi$ in the product BMDP $\mathcal{B} \otimes \mathcal{A}$. The switching policy $\mu$ is said to be \emph{induced} by a switching policy $\mu_{\otimes}$ of $\mathcal{B} \otimes \mathcal{A}$ if, for all $\pi \in  (Paths_{fin})_{\mathcal{B}}$, it holds that $\mu(\pi) = \mu_{\otimes}( \pi_{\otimes}^{\mathcal{A}})$.
\end{definition}\mbox{}

For a fixed switching policy $\mu$ of $\mathcal{B}$, the probability of satisfying $\Psi$ in the induced IMC $\mathcal{B}[\mu]$ is equal to the probability of reaching a so-called \textcolor{black}{Accepting} \textit{Bottom Strongly Connected Component} (BSCC) in the product IMC $\mathcal{B}[\mu] \otimes \mathcal{A}$ \cite{dutreix2020specification} defined below. The probability of reaching an accepting BSCC in $\mathcal{B}[\mu] \otimes \mathcal{A}$ is not uniquely defined and depends on the assumed transition values within the probability intervals selected by a non-deterministic adversary $\nu \in \nu_{\mathcal{B}[\mu] \otimes \mathcal{A}}$ which induces a product MC $\mathcal{B}[\mu][\nu]_{\otimes}^{\mathcal{A}}$.\\

\begin{definition}[Product Interval-valued Markov Chain]
\label{ProdInterDef}
Let $\mathcal{I} = (Q, \widecheck{T}, \widehat{T}, \textcolor{black}{q_{0}}, \Sigma, L)$ be an IMC and $\mathcal{A} = (S, 2^{\Sigma}, \delta, s_0, Acc)$ be a DRA. The \emph{product}  $\mathcal{I} \otimes \mathcal{A}  = (Q, \widecheck{T'}, \widehat{T'}, \textcolor{black}{q^{\otimes}_{0}}, Acc', L')$ is an IMC defined similarly to a product BDMP with the difference that a single action (which is omitted in the defining tuple) is available.
\end{definition}\mbox{}

\begin{definition}[Markov Chain]
A \textit{Markov Chain} (MC) $\mathcal{M} = (Q, T, \textcolor{black}{q_{0}}, \Sigma, L)$ is defined similarly to an IMC with the difference that the transition probability function or transition matrix \textcolor{black}{of the Markov Chain} $T: Q \times Q \rightarrow [0, 1] $ satisfies $0 \leq T(Q_j, Q_\ell) \leq 1$ for all $Q_j, Q_{\ell}\in Q$ and $\sum_{Q_{\ell} \in Q} T(Q_j, Q_\ell) = 1$ for all $Q_j\in Q$.
\end{definition}
\noindent The probability of satisfying property $\Psi$ in Markov Chain $\mathcal{M}$ from initial state $Q_{j}$ is denoted by $P_{\mathcal{M}}(Q_{j} \models \Psi)$.\\

\begin{definition}[Induced Product Markov Chain]
A Product Markov Chain $\mathcal{I}[\nu]_{\otimes}^{\mathcal{A}} = (Q \times S, T, \textcolor{black}{q^{\otimes}_{0}}, Acc', L')$  is said to be \emph{induced} by an adversary $\nu$ of a product IMC $\mathcal{I} \otimes \mathcal{A}$ if they share the same $Q$ (for memoryless policies $\mu$), $\mathcal{A}$, $\textcolor{black}{q^{\otimes}_{0}}$, $L'$ and $Acc'$, and for all $q_{j}$, $q_{\ell} \in Q \times S$ and all action $a = \mu(q_{j})$, the transition probability function $T$ satisfies $\widecheck{T}_{q_{j} \xrightarrow{a} q_{\ell}} \leq T(q_{j}, q_{\ell}) \leq \widehat{T}_{q_{j} \xrightarrow{a} q_{\ell}}$.
\end{definition}\mbox{}

\begin{definition}[Bottom Strongly Connected Component]
\label{BSCCdef}
Given a Markov Chain $\mathcal{M}$ with states $Q$, a subset $B \subseteq Q$ is called a \emph{Bottom Strongly Connected Component} (BSCC) of $\mathcal{M}$ if
\begin{itemize}
\item $B$ is strongly connected: for each pair of states $(q,t)$ in $B$, there exists a path $q_{0}q_{1}\ldots q_n$ such that $T(q_i,q_{i+1}) > 0$, $i = 0,1, \ldots, n-1$, and $q_i \in B$ for $0 \leq i \leq n$ with $q_0 = q,$ $q_n = t$,
\item no proper superset of $B$ is strongly connected,
\item $\forall s \in B$, $\Sigma_{t \in B}T(s,t) = 1$.
\end{itemize}
\end{definition}\mbox{}

In words, every state in a BSCC $B$ is reachable from any state in $B$, and every state in $B$ only transitions to another state in $B$.\\

\begin{definition}[Accepting and Non-Accepting Bottom Strongly Connected Component]
\label{accepcon}
A Bottom Strongly Connected Component $B$ of a product Markov Chain $\mathcal{M}_{\otimes}^{\mathcal{A}}$ is said to be \emph{accepting} if:
\begin{align}
\exists i: & \Bigg( \; \exists \left<Q_{j},s_{\ell} \right> \in B \; : \; F_{i} \in L'(\left<Q_{j},s_{\ell} \right>) \; \Bigg) \nonumber  \wedge \Bigg( \;  \forall \left<Q_{j},s_{\ell} \right> \in B \; : \; E_{i} \not \in L'(\left<Q_{j},s_{\ell} \right>) \; \Bigg).
\end{align}
$\mathcal{M}_{\otimes}^{\mathcal{A}}$ is said to be \emph{non-accepting} if it is not accepting.
\end{definition}\mbox{}

\noindent A key observation is that, for any policy $\mu$ in $\mathcal{B}$ induced by a policy $\mu_{\otimes}$ in the product $\mathcal{B} \otimes \mathcal{A}$, the bounds on the probability of reaching an accepting BSCC from the initial states of $\mathcal{B}[\mu] \otimes \mathcal{A}$ are identical to the bounds on the probability of reaching an accepting BSCC from the initial states of $(\mathcal{B} \otimes \mathcal{A})[\mu_{\otimes}]$ according to Definitions \ref{defProductBMDP} and \ref{ProdInterDef} which ensure that the elements in the defining tuples of $\mathcal{B}[\mu] \otimes \mathcal{A}$ and $(\mathcal{B} \otimes \mathcal{A})[\mu_{\otimes}]$ are the same. Consequently, an analysis of the product $\mathcal{B} \otimes \mathcal{A}$ is sufficient for approaching the synthesis problem.

\textcolor{black}{Because $\omega$-regular properties are closed under complementation, it should be noted that the problem of minimizing the upper bound probability of satisfying property $\Psi$ in $\mathcal{B}$ can be converted to the problem of maximizing the lower bound probability of satisfying the complement property $\overline{\Psi}$ with corresponding DRA $\overline{\mathcal{A}}$. It follows that Subproblem 1.1 is solved by applying the same tools to both $\mathcal{B} \otimes \mathcal{A}$ and $\mathcal{B} \otimes \overline{\mathcal{A}}$.}\\

\textcolor{black}{\begin{fact}
Let $\mathcal{B}$ be a BMDP and $\Psi$ be an $\omega$-regular specification. We denote by $\overline{\Psi}$ the complement of property $\Psi$. For any initial state $Q_{j} \in Q$ of $\mathcal{B}$ and policy $\mu \in \mathcal{U}_{\mathcal{B}}$, it holds that 
\begin{align*}
\widehat{\mathcal{P}}_{\mathcal{B}[\mu]}(Q_j \models \Psi) & = 1 -  \widecheck{\mathcal{P}}_{\mathcal{B}[\mu]}(Q_j \models \overline{\Psi})\\
\widecheck{\mathcal{P}}_{\mathcal{B}[\mu]}(Q_j \models \Psi) & = 1 -  \widehat{\mathcal{P}}_{\mathcal{B}[\mu]}(Q_j \models \overline{\Psi}) \ .
\end{align*}
\label{FactComp}
\end{fact}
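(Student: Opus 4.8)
The plan is to fix the policy $\mu$ first, which collapses the BMDP $\mathcal{B}$ into the IMC $\mathcal{B}[\mu]$, and then argue adversary-by-adversary. The observation that makes everything go through is that the set of adversaries $\nu_{\mathcal{B}[\mu]}$ depends only on the transition bound functions $\widecheck{T}, \widehat{T}$ of $\mathcal{B}[\mu]$: an adversary is nothing but a choice, for each finite path, of a transition matrix compatible with these bounds, and this notion is entirely indifferent to whether we later evaluate $\Psi$ or $\overline{\Psi}$. Hence the very same index set $\nu_{\mathcal{B}[\mu]}$ underlies $\widehat{\mathcal{P}}_{\mathcal{B}[\mu]}(\cdot \models \Psi)$, $\widecheck{\mathcal{P}}_{\mathcal{B}[\mu]}(\cdot \models \Psi)$, $\widehat{\mathcal{P}}_{\mathcal{B}[\mu]}(\cdot \models \overline{\Psi})$ and $\widecheck{\mathcal{P}}_{\mathcal{B}[\mu]}(\cdot \models \overline{\Psi})$.

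First I would establish the pointwise (Markov-chain-level) identity. Fix any $\nu \in \nu_{\mathcal{B}[\mu]}$; it induces a Markov chain $\mathcal{B}[\mu][\nu]$ (finite-state when $\mu$ is finite-memory, which is all that is needed here) carrying a unique probability measure on its infinite paths, defined on the standard cylinder $\sigma$-algebra \cite[Def. 10.10]{baier2008principles}, \cite{abate2011approximate}. Each infinite path $q_0 q_1 \ldots$ starting at $Q_j$ generates the word $L(q_1)L(q_2)\ldots$, and by the definition of the complement property every word belongs to exactly one of $\Psi$ or $\overline{\Psi}$. Since $\omega$-regular properties are measurable events, the sets $\{\pi : \pi \models \Psi\}$ and $\{\pi : \pi \models \overline{\Psi}\}$ form a measurable partition of the path space of $\mathcal{B}[\mu][\nu]$, so $\mathcal{P}_{\mathcal{B}[\mu][\nu]}(Q_j \models \Psi) + \mathcal{P}_{\mathcal{B}[\mu][\nu]}(Q_j \models \overline{\Psi}) = 1$, i.e. $\mathcal{P}_{\mathcal{B}[\mu][\nu]}(Q_j \models \Psi) = 1 - \mathcal{P}_{\mathcal{B}[\mu][\nu]}(Q_j \models \overline{\Psi})$.

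Then I would push this through the sup/inf over $\nu$. Taking the supremum of the last identity over $\nu \in \nu_{\mathcal{B}[\mu]}$ and using $\sup_{\nu}(1 - a_{\nu}) = 1 - \inf_{\nu} a_{\nu}$ gives $\widehat{\mathcal{P}}_{\mathcal{B}[\mu]}(Q_j \models \Psi) = 1 - \widecheck{\mathcal{P}}_{\mathcal{B}[\mu]}(Q_j \models \overline{\Psi})$; taking the infimum instead gives $\widecheck{\mathcal{P}}_{\mathcal{B}[\mu]}(Q_j \models \Psi) = 1 - \widehat{\mathcal{P}}_{\mathcal{B}[\mu]}(Q_j \models \overline{\Psi})$, which is the second claim. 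No step is genuinely difficult: the entire content is (i) noting that the adversary set is specification-independent and (ii) the elementary fact that complementary measurable events have probabilities summing to one in each induced Markov chain. The only point requiring care is invoking measurability of $\omega$-regular events, so that ``$\models \Psi$'' is a bona fide event on the path space and the probabilities in the identity are well-defined in the first place.
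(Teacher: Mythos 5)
Your proof is correct. The paper states this as a Fact with no accompanying proof, relying only on the preceding remark that $\omega$-regular properties are closed under complementation; your argument — fixing the adversary, observing that $\{\pi \models \Psi\}$ and $\{\pi \models \overline{\Psi}\}$ form a measurable partition of the path space of each induced Markov chain so the pointwise probabilities sum to one, and then passing through $\sup_{\nu}(1-a_{\nu}) = 1-\inf_{\nu}a_{\nu}$ over the specification-independent adversary set — is precisely the standard formalization of that appeal. (One minor remark: the parenthetical restriction to finite-memory $\mu$ is unnecessary, since the cylinder-set path measure, and hence the complementation identity, is well-defined for arbitrary history-dependent policies and adversaries, so the Fact holds for all $\mu \in \mathcal{U}_{\mathcal{B}}$ exactly as stated.)
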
}

\textcolor{black}{Therefore}, our objective consists in computing a policy that maximizes the lower bound probability of reaching an accepting BSCC from all initial states of the resulting product IMC $\mathcal{B}[\mu] \otimes \mathcal{A}$. \textcolor{black}{We introduce the class of \textit{memoryless} policies, which solely depend on the current state of the BMDP and will further prove optimal for our problem in the product BMDP $\mathcal{B} \otimes \mathcal{A}$, and the class of memoryless adversaries of an IMC.}\\

\begin{definition}[Memoryless Policy]
A policy $\mu \in \mathcal{U}_{\mathcal{B}}$ of a BMDP $\mathcal{B}$ is said to be \emph{memoryless} if, for all finite paths $\pi = q[0] q[1] \ldots q[k]$ of $\mathcal{B}$, it holds that $\mu(\pi) = \mu(q[k])$.
\end{definition}\mbox{}

\begin{definition}[Memoryless Adversary]
An adversary $\nu \in \mathcal{I}_{\nu}$ of an IMC $\mathcal{I}$ is said to be \emph{memoryless} if, for all finite paths $\pi = q[0] q[1] \ldots q[k]$ of $\mathcal{I}$, it holds that $\nu(\pi) = \nu(q[k])$.
\end{definition}\mbox{}

Before presenting a solution to Subproblem 1.1, we first recall some basic results established in \cite{dutreix2020specification} for the purpose of verification in IMCs which we then extend to compute switching policies in BMDPs. 

For a given policy $\mu$ of $\mathcal{B}$ and automaton $\mathcal{A}$, the sets of accepting and non-accepting BSCCs of the resulting product IMC $\mathcal{B}[\mu] \otimes \mathcal{A}$ depend on the assumed \textcolor{black}{probability} values for the transitions with zero lower bound and non-zero upper bound. \textcolor{black}{Specifically, whether a zero or a non-zero value is assigned to these transitions directly affects the qualitative structure of the product IMC, and therefore its sets of accepting and non-accepting BSCCs, as a zero probability implies that a transition can never occur between the corresponding states, while a non-zero probability indicates that a transition is possible. When a non-zero probability is assumed for such a transition, we describe the transition as being ``on", and we say that the transition is ``off" in the scenario that a probability of zero is assumed. Nonetheless, it is shown in} \cite{dutreix2020specification} that, for any product IMC, there exists a \textit{largest winning component} and a \textit{largest losing component} which can be created among all combinations of ``on" and ``off" transitions allowed by the transition bound functions of the product IMC. A winning component of a product MC is a set of states that reach an accepting BSCC with probability 1, while a losing component is a set of states that reach a non-accepting BSCC with probability 1.\\

\begin{definition}[Winning Component]
\cite{dutreix2020specification} A \emph{winning component} $WC$ of a product MC $\mathcal{M}_{\otimes}^{\mathcal{A}}$ is a set of states satisfying $\mathcal{P}(WC \models \Diamond R) = 1$, where $R$ is the set of states belonging to an accepting BSCC in $\mathcal{M}_{\otimes}^{\mathcal{A}}$.
\label{defwin}
\end{definition}\mbox{}

\begin{definition}[Losing Component]
\cite{dutreix2020specification} A \emph{losing component} $LC$ of a product MC $\mathcal{M}_{\otimes}^{\mathcal{A}}$ is a set of states satisfying $\mathcal{P}(LC \models \Diamond R) = 1$, where $R$ is the set of states belonging to a non-accepting BSCC in $\mathcal{M}_{\otimes}^{\mathcal{A}}$.
\end{definition}\mbox{}

\begin{definition}[Largest Winning/Losing Components]
\cite{dutreix2020specification} A state $\left< Q_j, s_i \right> \in Q \times S$ of a product IMC $\mathcal{I}$ is a member of the \emph{Largest Winning} (respectively, \emph{Losing}) \emph{Component} $(WC)_{L} \ \big( \text{respectively,} \; (LC)_{L}  \big)$ if there exists a product MC induced by $\mathcal{I}$ such that $\left< Q_j, s_i \right>$ is a winning (respectively, losing) component.
\end{definition}\mbox{}

Moreover, it was shown in \cite{dutreix2020specification} that the upper bound probability of satisfying $\Psi$ in the IMC $\mathcal{I}$ from state $Q_j$ is equal to the upper bound probability of reaching the largest winning component $(WC)_{L}$ of the product $\mathcal{I} \otimes \mathcal{A}$ from state $\left< Q_j, s_0 \right>$. Likewise, the lower bound probability of satisfying $\Psi$ is found by solving a reachability problem on the largest losing component $(LC)_{L}$. These results naturally apply to product IMCs $\mathcal{B}[\mu] \otimes \mathcal{A}$ constructed from an IMC $\mathcal{B}[\mu]$ induced by a policy $\mu$ of a BMDP $\mathcal{B}$.\\

\begin{fact}[\cite{dutreix2020specification}]
Let $\mathcal{B}[\mu]$ be an IMC induced by a switching policy $\mu$ of a BMDP $\mathcal{B}$ and $\mathcal{A}$ be a DRA corresponding to the $\omega$-regular property $\Psi$. Let $(WC)_{L}$ and $(LC)_{L}$ be the largest winning and losing components of $\mathcal{B}[\mu] \otimes \mathcal{A}$ respectively. It holds that, for all initial states $Q_{j}$ of $\mathcal{B}[\mu]$,
\begin{align}
\widehat{\mathcal{P}}_{\mathcal{B}[\mu]} (Q_j \models \Psi) = & \ \widehat{\mathcal{P}}_{\mathcal{B}[\mu] \otimes \mathcal{A}}(\left< Q_j, s_0 \right> \models \Diamond (WC)_{L})\\
\widecheck{\mathcal{P}}_{\mathcal{B}[\mu]} (Q_j \models \Psi) = & \ 1 -  \widehat{\mathcal{P}}_{\mathcal{B}[\mu] \otimes \mathcal{A}}(\left< Q_j, s_0 \right> \models \Diamond (LC)_{L}).
\end{align}
\end{fact}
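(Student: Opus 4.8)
The statement is the instance, for $\mathcal{B}[\mu]$, of the IMC verification result proved in \cite{dutreix2020specification}, so the plan is to observe that $\mathcal{B}[\mu]$ is itself a (finite-state, since $\mu$ may be taken finite-memory) IMC and then invoke the corresponding theorem; I sketch the underlying argument for completeness.

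\textbf{Step 1 (reduction to reachability in an induced product MC).} Fixing the policy $\mu$ turns the BMDP $\mathcal{B}$ into the IMC $\mathcal{B}[\mu]$, so $\mathcal{B}[\mu]\otimes\mathcal{A}$ is a product IMC in the sense of Definition \ref{ProdInterDef}. Since the automaton component of any path in $\mathcal{B}[\mu]\otimes\mathcal{A}$ is a deterministic function of the projected path in $\mathcal{B}[\mu]$ through $\delta$, the adversaries of $\mathcal{B}[\mu]$ and of $\mathcal{B}[\mu]\otimes\mathcal{A}$ are in one-to-one correspondence, and each adversary $\nu$ induces the product MC $\mathcal{B}[\mu][\nu]_{\otimes}^{\mathcal{A}}$. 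The classical automata-theoretic identity (\cite{baier2008principles}, also used in \cite{dutreix2020specification}) gives, for each such finite product MC and each state, that the probability of generating a word in $\Psi$ equals the probability of reaching the set $R_{\nu}^{\mathrm{acc}}$ of states lying in an accepting BSCC; because the product MC is finite, w.p.\ $1$ every path eventually reaches some BSCC, each of which is accepting or non-accepting, so that probability also equals $1$ minus the probability of reaching the set $R_{\nu}^{\mathrm{rej}}$ of states in a non-accepting BSCC.

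\textbf{Step 2 (the $(WC)_L$ identity).} For ``$\le$'', note that any accepting BSCC of $\mathcal{B}[\mu][\nu]_{\otimes}^{\mathcal{A}}$ is itself a winning component, so $R_{\nu}^{\mathrm{acc}}\subseteq(WC)_L$ for every $\nu$; hence $\mathcal{P}_{\mathcal{B}[\mu][\nu]}(\langle Q_j,s_0\rangle\models\Psi)=\mathcal{P}(\langle Q_j,s_0\rangle\models\Diamond R_{\nu}^{\mathrm{acc}})\le\mathcal{P}(\langle Q_j,s_0\rangle\models\Diamond(WC)_L)$ pointwise in $\nu$, and taking suprema gives $\widehat{\mathcal{P}}_{\mathcal{B}[\mu]}(Q_j\models\Psi)\le\widehat{\mathcal{P}}_{\mathcal{B}[\mu]\otimes\mathcal{A}}(\langle Q_j,s_0\rangle\models\Diamond(WC)_L)$. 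For ``$\ge$'', invoke the structural result of \cite{dutreix2020specification} that $(WC)_L$ is winning under a \emph{single} adversary $\nu_W$, i.e.\ there is a product MC induced by $\mathcal{B}[\mu]\otimes\mathcal{A}$ in which every state of $(WC)_L$ reaches an accepting BSCC w.p.\ $1$. Take an adversary $\nu^{\ast}$ maximizing the reachability probability of $(WC)_L$ and splice $\nu_W$ onto it from the first visit to $(WC)_L$ onward; by the strong Markov property the spliced adversary reaches an accepting BSCC with probability at least that of reaching $(WC)_L$ under $\nu^{\ast}$, which yields the reverse inequality and hence the first claimed equation.

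\textbf{Step 3 (the $(LC)_L$ identity) and main obstacle.} Repeating Step 2 with ``accepting'' replaced by ``non-accepting'' and $(WC)_L$ by $(LC)_L$ gives $\sup_{\nu}\mathcal{P}(\langle Q_j,s_0\rangle\models\Diamond R_{\nu}^{\mathrm{rej}})=\widehat{\mathcal{P}}_{\mathcal{B}[\mu]\otimes\mathcal{A}}(\langle Q_j,s_0\rangle\models\Diamond(LC)_L)$; combining with the last identity of Step 1, $\widecheck{\mathcal{P}}_{\mathcal{B}[\mu]}(Q_j\models\Psi)=\inf_{\nu}\bigl(1-\mathcal{P}(\langle Q_j,s_0\rangle\models\Diamond R_{\nu}^{\mathrm{rej}})\bigr)=1-\widehat{\mathcal{P}}_{\mathcal{B}[\mu]\otimes\mathcal{A}}(\langle Q_j,s_0\rangle\models\Diamond(LC)_L)$. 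The routine part is the ``$\le$'' inequalities and the reduction to reachability; the real work, carried out in \cite{dutreix2020specification} and quoted here, is the ``$\ge$'' direction: showing that the per-state ``on/off'' witnesses can be merged into one adversary realizing all of $(WC)_L$ (resp.\ $(LC)_L$) as winning (resp.\ losing), and that an optimal reachability adversary for an IMC exists and can be composed with such a witness without loss. That consistency-of-the-component argument is the crux.
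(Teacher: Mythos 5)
Your proposal is correct and takes the same route as the paper: the paper states this as a Fact imported from \cite{dutreix2020specification}, justified only by the observation that fixing $\mu$ reduces $\mathcal{B}$ to an IMC so the cited IMC verification result applies directly, which is precisely your Step 1 reduction. Your additional sketch (reachability of accepting/non-accepting BSCCs in each induced product MC, the easy $\leq$ inclusions $R_{\nu}^{\mathrm{acc}}\subseteq(WC)_L$, and the spliced single-adversary witness for the $\geq$ direction) matches the argument the paper itself attributes to Theorem 1 and Lemma 9 of the cited work and correctly identifies the simultaneous-realizability of $(WC)_L$ as the nontrivial ingredient.
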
\mbox{}

\noindent The intuitive interpretation of this property is that any IMC $\mathcal{B}[\mu]$ has a ``best-case" adversary and a ``worst-case" adversary in the product $\mathcal{B}[\mu] \otimes \mathcal{A}$ that respectively maximizes and minimizes the probability of reaching an accepting BSCC  for all initial states of $\mathcal{B}[\mu] \otimes \mathcal{A}$ simultaneously, since reachability probabilities are maximized by memoryless adversaries. These probabilities are equal to the upper bound and lower bound probabilities of satisfying $\Psi$ from the initial states of $\mathcal{B}[\mu]$. In \textcolor{black}{an} induced product MC corresponding to the best-case scenario, the set of winning components is as large as it can possibly be;  \textcolor{black}{in an induced product MC corresponding to the worst-case scenario, the set of winning components is reduced to the smallest possible set of \textit{permanent} winning components.}\\

\begin{definition}[Permanent Winning Components]
\cite{dutreix2020specification} A state $\left< Q_j, s_i \right> \in Q \times S$ of a product IMC $\mathcal{I}\otimes \mathcal{A}$ is a member of the \emph{Permanent Winning} \emph{Component} $(WC)_{P}$ of $\mathcal{I} \otimes \mathcal{A}$ if $\left< Q_j, s_i \right>$ is a winning component for all product MCs induced by $\mathcal{I} \otimes \mathcal{A}$.
\end{definition}\mbox{}

\noindent We further introduce the notions of permanent accepting BSCC, \textcolor{black}{which is a subset of the permanent winning components of a product IMC}. These sets will prove useful in subsequent sections. \\

\begin{definition}[Permanent Accepting Bottom Strongly Connected Component]
\cite{dutreix2020specification} A state $\left< Q_j, s_i \right> \in Q \times S$ of a product IMC $\mathcal{I}\otimes \mathcal{A}$ is a member of the \emph{Permanent Accepting} \emph{BSCC} $(U^A)_{P}$ of $\mathcal{I} \otimes \mathcal{A}$ if $\left< Q_j, s_i \right>$ belongs to an accepting BSCC for all product MCs induced by $\mathcal{I} \otimes \mathcal{A}$.
\end{definition}\mbox{}

\textcolor{black}{Now that these fundamental results regarding verification in IMCs, which will play a key role in the remainder of this section, have been stated}, recall our \textcolor{black}{primary} objective which is to find switching policies $\widecheck{\mu}^{up}_{\Psi}$ and  $\widehat{\mu}^{low}_{\Psi}$ that respectively minimize the upper bound probability and maximize the lower bound probability of satisfying property $\Psi$ from initial state $Q_j$ in a BMDP $\mathcal{B}$. In light of the above facts, this amounts to enforcing the best possible worst-case scenario with respect to the probability of reaching an accepting BSCC in the product $\mathcal{B} \otimes \mathcal{A}$ for the maximization case, \textcolor{black}{or in the product $\mathcal{B} \otimes \overline{\mathcal{A}}$ for the minimization case}. To this end, we first state in the following lemma that there exist sets of \textcolor{black}{memoryless} switching policies of $\mathcal{B} \otimes \mathcal{A}$ resulting in the greatest possible set of permanent winning components in the corresponding induced product IMCs.\\

\begin{lemma}
\label{LemmaGreaPerm}
Let $\mathcal{B}$ be a BMDP and $\Psi$ be an $\omega$-regular property with corresponding DRA $\mathcal{A}$. \textcolor{black}{The set of policies of the product $\mathcal{B} \otimes \mathcal{A}$ is denoted by $\mathcal{U}_{\otimes}^{\mathcal{A}}$ and the set of memoryless policies of the product $\mathcal{B} \otimes \mathcal{A}$ is denoted by $(\mathcal{U}_{\otimes}^{\mathcal{A}})_{mem} \subseteq \mathcal{U}_{\otimes}^{\mathcal{A}}$}. There exists a set of \textcolor{black}{memoryless} switching policies $\mathcal{U}_{(WC)^{G}_{P}} \subseteq (\mathcal{U}_{\otimes}^{\mathcal{A}})_{mem}$ generating the set $(WC)_{P}^{G}$ in $\mathcal{B} \otimes \mathcal{A}$ such that, for all $\mu \in \mathcal{U}_{\otimes}^{\mathcal{A}}$, $(WC)_{P} \subseteq (WC)_{P}^{G}$ where $(WC)_{P}$ is the permanent winning component of $(\mathcal{B} \otimes \mathcal{A})[\mu]$, and, for all $\mu \in \mathcal{U}_{(WC)^{G}_{P}}$, the permanent winning component of $(\mathcal{B} \otimes \mathcal{A})[\mu]$ is $(WC)^{G}_{P}$.
\end{lemma}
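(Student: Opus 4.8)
The plan is to establish the existence of a maximal permanent winning component by a monotonicity/union argument over memoryless policies. First I would observe that for a \emph{fixed} memoryless policy $\mu$ of $\mathcal{B}\otimes\mathcal{A}$, the induced IMC $(\mathcal{B}\otimes\mathcal{A})[\mu]$ has a well-defined permanent winning component $(WC)_P^\mu$ by the results of \cite{dutreix2020specification} recalled above. The key structural fact I would extract is that membership of a state $\langle Q_j,s_i\rangle$ in $(WC)_P^\mu$ is a \emph{local} property of the policy in the following sense: whether $\langle Q_j,s_i\rangle$ is a permanent winning component depends only on the actions $\mu$ assigns on the set of states reachable from $\langle Q_j,s_i\rangle$ under the ``off'' (worst-case) instantiation of the optional transitions. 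This is because a permanent winning component must reach an accepting BSCC with probability $1$ under \emph{every} adversary, in particular the worst-case one, and the states involved in certifying this are confined to the forward-reachable set.

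Next I would define $(WC)_P^G$ to be the union $\bigcup_{\mu\in(\mathcal{U}_{\otimes}^{\mathcal{A}})_{mem}}(WC)_P^\mu$ over all memoryless policies, and argue this union is itself realized by a single memoryless policy. The argument is a ``stitching'' construction: given that each state $\langle Q_j,s_i\rangle\in(WC)_P^G$ lies in $(WC)_P^{\mu_{j,i}}$ for some witness policy $\mu_{j,i}$, I would build a composite memoryless policy $\mu^\ast$ that, on each state of $(WC)_P^G$, follows the action prescribed by an appropriate witness. The subtlety is that different witness policies may disagree, so one must check that the ``winning certificate'' (the almost-sure reachability of an accepting BSCC under all adversaries) is preserved under the composite. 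Here I would invoke the locality observation: restrict attention to a minimal self-contained subset of $(WC)_P^G$ — concretely, process states in an order respecting the ``rank'' to the permanent accepting BSCC $(U^A)_P$, and on the permanent accepting BSCCs themselves fix the actions that keep them closed and strongly connected, then propagate outward. At each stage the chosen action preserves the property that the state still reaches the already-fixed winning region with probability $1$ against every adversary, because the witness policy $\mu_{j,i}$ did so and the relevant forward-reachable states have already been assigned consistently.

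Finally I would close the argument by the two required verifications. For the maximality claim: for an \emph{arbitrary} (not necessarily memoryless) policy $\mu\in\mathcal{U}_{\otimes}^{\mathcal{A}}$, I would show $(WC)_P\subseteq(WC)_P^G$ by noting that if $\langle Q_j,s_i\rangle$ is a permanent winning component of $(\mathcal{B}\otimes\mathcal{A})[\mu]$ then, since reaching a target set with probability $1$ under all adversaries is a memoryless-achievable objective on the induced IMC, there is a memoryless policy on the forward-reachable states realizing the same winning certificate, hence $\langle Q_j,s_i\rangle\in(WC)_P^{\mu'}\subseteq(WC)_P^G$ for that memoryless $\mu'$. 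For the exactness claim: any $\mu\in\mathcal{U}_{(WC)^G_P}$ (i.e., any memoryless policy that agrees with the composite construction on $(WC)_P^G$) has permanent winning component exactly $(WC)_P^G$ — the inclusion $(WC)_P^G\subseteq(WC)_P^\mu$ holds by construction, and the reverse inclusion is the maximality just proved. The main obstacle I anticipate is the stitching step: ensuring that combining witness policies does not destroy the almost-sure reachability under \emph{all} adversaries, which requires carefully ordering the states by distance to $(U^A)_P$ and arguing that optional (``on''/``off'') transitions leaving $(WC)_P^G$ cannot be exploited by an adversary to escape, precisely because each witness policy already guaranteed no such escape from its own certified states.
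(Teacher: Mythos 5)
Your overall architecture matches the paper's: handle the permanent accepting BSCCs first, then extend outward, stitching witness policies together via the memoryless optimality of reachability in BMDPs, and derive maximality from the fact that the relevant certificates are memoryless-achievable. However, there is a genuine gap in the propagation step. You order states by their ``rank'' to $(U^A)_P$ and certify each newly added state by the property that it ``reaches the already-fixed winning region with probability $1$ against every adversary.'' That certificate only covers what the paper calls permanent sink states. The permanent winning component generally contains a third kind of state: one that, under some adversaries, never reaches a permanent accepting BSCC and instead gets trapped in a \emph{non-permanent} accepting BSCC formed among states of $(WC)_P^G\setminus (U^A)_P$. (See the second half of Example 1: with $\widecheck{T}(Q_1,Q_1)=1$, $\widecheck{T}(Q_2,Q_1)=\widecheck{T}(Q_2,Q_2)=0$, $\widehat{T}(Q_2,Q_1)=\widehat{T}(Q_2,Q_2)=1$, the state $Q_2$ is permanently winning yet does not reach $Q_1$ almost surely under every adversary, so it has no finite ``rank'' and your propagation never certifies it.) The correct certificate for a set $C$ outside $(U)_P^G$ is the pair of conditions the paper uses: no adversary allows a transition out of $C\cup(U)_P^G$, and no subset of $C$ can form a losing component under any adversary; the memoryless policy on $C$ is then built by recursing over the ``trap'' sets $K_i$ in which every trajectory either reaches $(U)_P^G$ or recurs to an unmatched Rabin-accepting state, following the memoryless-sufficiency argument for Rabin objectives.

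A second, smaller issue is the stitching of overlapping witnesses. When two permanent accepting BSCCs $B_1$ and $B_2$ overlap and you fix $\mu_1$'s actions on $B_1$, the states of $B_2\setminus B_1$ can no longer rely on $\mu_2$'s certificate verbatim: that certificate guaranteed confinement to and recurrence within $B_2$, not reachability of the already-fixed region. One must additionally argue that confinement to $B_2$ under $\mu_2$ forces almost-sure reachability of $B_1\cap B_2$ (hence of $B_1$), which is the step the paper makes explicit; your justification ``because the witness policy did so'' elides it. Finally, the forward-reachable set relevant to your locality observation is the one under the upper-bound (``on'') instantiation of the optional transitions, not the ``off'' one, since permanence must be certified against every adversary.
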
\mbox{}

\noindent A constructive proof of this lemma is provided in the Appendix. \textcolor{black}{The set $(WC)_{P}^{G}$ is called the \textit{Greatest Permanent Winning Component} of the product BMDP $\mathcal{B} \otimes \mathcal{A}$.}

From Lemma \ref{LemmaGreaPerm}, we infer that a maximizing policy with respect to $\Psi$ in BMDP $\mathcal{B}$ is induced by a policy $(\widehat{\mu}_{\Psi}^{low})_{\otimes}$ in the product BMDP $\mathcal{B} \otimes \mathcal{A}$ that effectively generates the set $(WC)_{P}^{G}$ and, for all states not in $(WC)_{P}^{G}$, maximizes the lower bound probability of reaching this set; \textcolor{black}{on the other hand, a minimizing policy with respect to $\Psi$ in $\mathcal{B}$ is induced by a policy $(\widecheck{\mu}_{\Psi}^{up})_{\otimes}$ achieving the same thing in $\mathcal{B} \otimes \overline{\mathcal{A}}$, with $\overline{\mathcal{A}}$ denoting a DRA for the complement property of $\Psi$.}  Because optimal switching policies for reachability objectives are memoryless \textcolor{black}{in BMDPs \cite{haddad2018interval}}, it follows that the policy $(\widehat{\mu}_{\Psi}^{low})_{\otimes}$ maximizing the lower bound probability of reaching an accepting BSCC in $\mathcal{B} \otimes \mathcal{A}$ is the same for all initial states of $\mathcal{B} \otimes \mathcal{A}$. \textcolor{black}{Likewise, the policy $(\widecheck{\mu}_{\Psi}^{up})_{\otimes}$ maximizing the lower bound probability of reaching an accepting BSCC in $\mathcal{B} \otimes \overline{\mathcal{A}}$ is the same for all initial states of $\mathcal{B} \otimes \overline{\mathcal{A}}$.}\\

\begin{theorem}
\label{TheoWorstCasePol}
Let $\mathcal{B}$ be a BMDP, $\Psi$ be an $\omega$-regular property with corresponding DRA $\mathcal{A}$, \textcolor{black}{and $\overline{\Psi}$ be the complement of $\Psi$ with corresponding DRA $\overline{\mathcal{A}}$}. Let $(WC)_{P}^{G}$ and \textcolor{black}{$(\overline{WC})_{P}^{G}$} be the greatest permanent winning component, respectively, of the product BMDP $\mathcal{B} \otimes \mathcal{A}$ and \textcolor{black}{$\mathcal{B} \otimes \overline{\mathcal{A}}$}, and $\mathcal{U}_{(WC)_{P}^{G}}$ and \textcolor{black}{$\mathcal{U}_{(\overline{WC})_{P}^{G}}$} be the \textcolor{black}{memoryless} policies generating these sets \textcolor{black}{in the corresponding product BMDP} as defined in Lemma \ref{LemmaGreaPerm}. A lower bound maximizing and upper bound minimizing switching policy $\widehat{\mu}^{low}_{\Psi}$ and $\widecheck{\mu}^{up}_{\Psi}$ in $\mathcal{B}$ with respect to $\Psi$ are respectively induced by switching policies $(\widehat{\mu}^{low}_{\Psi})_{\otimes}$ in $\mathcal{B} \otimes \mathcal{A}$ and $(\widecheck{\mu}^{up}_{\Psi})_{\otimes}$  \textcolor{black}{in $\mathcal{B} \otimes \overline{\mathcal{A}}$} such that
\begin{align}
\label{eqth1} (\widehat{\mu}^{low}_{\Psi})_{\otimes} = & \argmax_{\mu \in \mathcal{U}_{(WC)_{P}^{G}}} \widecheck{\mathcal{P}}_{(\mathcal{B} \otimes \mathcal{A})[\mu]} \big(\left<Q_j, s_0 \right> \models \Diamond (WC)_{P}^{G} \big)\\
\label{eqth2} (\widecheck{\mu}^{up}_{\Psi})_{\otimes} = & \argmax_{\mu \in \mathcal{U}_{\textcolor{black}{(\overline{WC})_{P}^{G}}}} \widecheck{\mathcal{P}}_{(\mathcal{B} \otimes \textcolor{black}{\overline{\mathcal{A}}})[\mu]} \big(\left<Q_j, s_0 \right> \models \Diamond \textcolor{black}{(\overline{WC})_{P}^{G}} \big)\;
\end{align}
for all initial states $Q_j$ of $\mathcal{B}$.
\end{theorem}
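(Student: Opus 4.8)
\emph{Reduction of the minimization problem.} The argument decomposes into a reduction to reachability of a ``permanent'' target inside the product automaton, followed by an exchange argument. By Fact~\ref{FactComp}, $\widehat{\mathcal{P}}_{\mathcal{B}[\mu]}(Q_j\models\Psi)=1-\widecheck{\mathcal{P}}_{\mathcal{B}[\mu]}(Q_j\models\overline{\Psi})$, so an upper-bound minimizing policy for $\Psi$ is exactly a lower-bound maximizing policy for $\overline{\Psi}$; applying the lower-bound maximization claim to $\overline{\Psi}$ and its DRA $\overline{\mathcal{A}}$ (with greatest permanent winning component $(\overline{WC})^{G}_{P}$ in $\mathcal{B}\otimes\overline{\mathcal{A}}$) turns \eqref{eqth1} into \eqref{eqth2}, so it suffices to prove \eqref{eqth1} for an arbitrary $\omega$-regular $\Psi$ with DRA $\mathcal{A}$. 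Throughout I would use the correspondence ``policy $\mu$ of $\mathcal{B}$ $\leftrightarrow$ the policy $\mu_{\otimes}$ of $\mathcal{B}\otimes\mathcal{A}$ inducing it'', under which $\mathcal{B}[\mu]\otimes\mathcal{A}$ and $(\mathcal{B}\otimes\mathcal{A})[\mu_{\otimes}]$ are the same IMC; since only finite-memory policies need be considered, these IMCs are finite-state.

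\emph{From satisfaction to reachability.} Next I would establish, by adapting the IMC verification analysis of \cite{dutreix2020specification}, that for every $\mu$, with $(WC)_{P}^{\mu}$ the permanent winning component of $(\mathcal{B}\otimes\mathcal{A})[\mu_{\otimes}]$,
\begin{equation*}
\widecheck{\mathcal{P}}_{\mathcal{B}[\mu]}(Q_j\models\Psi)=\widecheck{\mathcal{P}}_{(\mathcal{B}\otimes\mathcal{A})[\mu_{\otimes}]}\big(\left<Q_j,s_0\right>\models\Diamond (WC)_{P}^{\mu}\big).
\end{equation*}
The content is that the worst-case adversary of the IMC $(\mathcal{B}\otimes\mathcal{A})[\mu_{\otimes}]$ can be chosen so as to simultaneously (i) shrink the winning components of the induced Markov chain to exactly $(WC)_{P}^{\mu}$ --- so that every accepting BSCC of that chain lies inside $(WC)_{P}^{\mu}$ and reaching an accepting BSCC is equivalent to reaching $(WC)_{P}^{\mu}$ --- and (ii) among the remaining choices, minimize the probability of reaching $(WC)_{P}^{\mu}$. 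Granting this, the maximization objective of Problem~1 becomes: maximize $\widecheck{\mathcal{P}}_{(\mathcal{B}\otimes\mathcal{A})[\mu_{\otimes}]}\big(\left<Q_j,s_0\right>\models\Diamond (WC)_{P}^{\mu}\big)$ over all $\mu_{\otimes}$.

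\emph{Qualitative/quantitative decomposition.} By Lemma~\ref{LemmaGreaPerm}, $(WC)_{P}^{\mu}\subseteq (WC)_{P}^{G}$ for every $\mu_{\otimes}$, and $\mathcal{U}_{(WC)_{P}^{G}}$ contains memoryless policies whose permanent winning component equals $(WC)_{P}^{G}$. Given any $\mu_{\otimes}$, I would splice a policy $\mu'_{\otimes}$ that agrees with a fixed policy of $\mathcal{U}_{(WC)_{P}^{G}}$ on the states of $(WC)_{P}^{G}$ (this set being closed under such a policy's transitions --- a structural fact from the proof of Lemma~\ref{LemmaGreaPerm}) and, on the complement, with a memoryless policy that is uniformly optimal for the worst-case probability of reaching $(WC)_{P}^{G}$; such a policy exists for BMDPs by \cite{haddad2018interval}, and the behavior on $(WC)_{P}^{G}$ does not affect that reachability value. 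Then $\mu'_{\otimes}\in\mathcal{U}_{(WC)_{P}^{G}}$, since its permanent winning component contains $(WC)_{P}^{G}$ by construction and cannot exceed it by Lemma~\ref{LemmaGreaPerm}. Using the identity of the previous step for $\mu$ and for $\mu'$, monotonicity of worst-case reachability under enlarging the target ($(WC)_{P}^{\mu}\subseteq (WC)_{P}^{G}$), and optimality of $\mu'_{\otimes}$ off $(WC)_{P}^{G}$, I obtain
\begin{align*}
\widecheck{\mathcal{P}}_{\mathcal{B}[\mu']}(Q_j\models\Psi)&=\widecheck{\mathcal{P}}_{(\mathcal{B}\otimes\mathcal{A})[\mu'_{\otimes}]}\big(\left<Q_j,s_0\right>\models\Diamond (WC)_{P}^{G}\big)\\
&\geq\widecheck{\mathcal{P}}_{(\mathcal{B}\otimes\mathcal{A})[\mu_{\otimes}]}\big(\left<Q_j,s_0\right>\models\Diamond (WC)_{P}^{G}\big)\\
&\geq\widecheck{\mathcal{P}}_{(\mathcal{B}\otimes\mathcal{A})[\mu_{\otimes}]}\big(\left<Q_j,s_0\right>\models\Diamond (WC)_{P}^{\mu}\big)=\widecheck{\mathcal{P}}_{\mathcal{B}[\mu]}(Q_j\models\Psi).
\end{align*}
As $\mu$ was arbitrary, $\sup_{\mu\in\mathcal{U}_{\mathcal{B}}}\widecheck{\mathcal{P}}_{\mathcal{B}[\mu]}(Q_j\models\Psi)$ is attained by the policy of $\mathcal{B}$ induced by any $\mu_{\otimes}\in\mathcal{U}_{(WC)_{P}^{G}}$ maximizing $\widecheck{\mathcal{P}}_{(\mathcal{B}\otimes\mathcal{A})[\mu_{\otimes}]}(\left<Q_j,s_0\right>\models\Diamond (WC)_{P}^{G})$, i.e.\ by the policy induced by the $(\widehat{\mu}^{low}_{\Psi})_{\otimes}$ of \eqref{eqth1}; this can be taken memoryless and uniformly optimal for all initial states $Q_j$ at once because the reachability-to-$(WC)_{P}^{G}$ subproblem has that property in a BMDP.

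\emph{Main obstacle.} The hard part is the identity in the second step: showing that a single adversary of the IMC $(\mathcal{B}\otimes\mathcal{A})[\mu_{\otimes}]$ is worst case both for ``how small the accepting part of the induced Markov chain is'' (collapsing winning components to the permanent ones) and for ``how unlikely that part is to be reached'', i.e.\ that shrinking the winning components and diverting probability mass do not conflict. This is exactly the level of adversary analysis carried out for IMC verification in \cite{dutreix2020specification}, invoked here since $(\mathcal{B}\otimes\mathcal{A})[\mu_{\otimes}]$ is an IMC; the remaining ingredients (monotonicity of reachability in the target, memoryless optimality of reachability for BMDPs, closure of $(WC)_{P}^{G}$ under a $\mathcal{U}_{(WC)_{P}^{G}}$-policy) are standard or already supplied by Lemma~\ref{LemmaGreaPerm}.
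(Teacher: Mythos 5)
Your proposal is correct and follows essentially the same route as the paper's proof: both reduce the minimization case to maximization of the complement via Fact~\ref{FactComp}, convert satisfaction of $\Psi$ into worst-case reachability of the permanent winning component using the IMC verification results of \cite{dutreix2020specification}, invoke Lemma~\ref{LemmaGreaPerm} together with monotonicity of reachability in the target to show any candidate policy is dominated by one generating $(WC)_{P}^{G}$, and appeal to memoryless optimality of BMDP reachability \cite{haddad2018interval} for uniformity over initial states. The only difference is presentational — you phrase the key step as a direct splicing/exchange chain of inequalities where the paper argues by contradiction with the modified policy $(\mu^{*})'$ — but the underlying argument and supporting lemmas are the same.
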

\begin{proof}
We first prove equation \eqref{eqth1}. For all states belonging to $(WC)_{P}^{G}$, the lower bound probability of reaching an accepting BSCC under the defined policy $(\widehat{\mu}^{low}_{\Psi})_{\otimes}$ is equal to 1, since $(\widehat{\mu}^{low}_{\Psi})_{\otimes} \in \mathcal{U}_{(WC)^{G}_{P}}$, and is therefore maximized. 

Next, in \cite[Theorem 1]{dutreix2020specification}, it was shown that a lower bound on the probability of reaching an accepting BSCC in a product IMC $\mathcal{I} \otimes \mathcal{A}$ is achieved in an induced product MC $(\mathcal{M}_{\otimes}^{\mathcal{A}})$ with the smallest possible set of winning components admissible by $\mathcal{I} \otimes \mathcal{A}$, which is the permanent winning component $(WC)_{P}$ of $\mathcal{I} \otimes \mathcal{A}$, for all states of $\mathcal{I} \otimes \mathcal{A}$. Furthermore, it was shown in \cite[Lemma 9]{dutreix2020specification} that the probability of reaching an accepting BSCC in an induced product MC $(\mathcal{M}_{\otimes}^{\mathcal{A}})$ increases for all states of $(\mathcal{M}_{\otimes}^{\mathcal{A}})$ as more states are added to the set of winning components of $(\mathcal{M}_{\otimes}^{\mathcal{A}})$ while keeping all other transition probabilities identical. \textcolor{black}{Assume the optimal policy $\mu^{*}$ does not belong to $\mathcal{U}_{(WC)^{G}_{P}}$ for some initial state $\left<Q_{j}, s_{0} \right>$ of $\mathcal{B} \otimes \mathcal{A}$ and denote by $(WC)^{*}_{P}$ the permanent winning component of $\mathcal{B} \otimes \mathcal{A}[\mu^{*}]$. As per the facts above, it follows that the probability of reaching an accepting BSCC from $\left<Q_{j}, s_{0} \right>$ in the worst-case MC of $\mathcal{B} \otimes \mathcal{A}[\mu^{*}]$ has to be less than the probability of reaching an accepting BSCC from $\left<Q_{j}, s_{0} \right>$ in the worst-case MC of $\mathcal{B} \otimes \mathcal{A}[(\mu^{*})']$, where $(\mu^{*})' \in \mathcal{U}_{(WC)^{G}_{P}}$ allows the states in $(WC)^{G}_{P} \setminus (WC)^{*}_{P}$ to be members of the permanent winning component and is the same as $\mu^{*}$ for all states outside of $(WC)^{G}_{P}$, which is a contradiction.} Therefore, for all states of $\mathcal{B} \otimes \mathcal{A}$ which are not in $(WC)^{G}_{P}$, a policy $\mu$ maximizing the lower bound probability of reaching a winning component has to belong to the set $\mathcal{U}_{(WC)^{G}_{P}}$ and generates the largest possible permanent winning component in $(\mathcal{B} \otimes \mathcal{A})[\mu]$. 

Due to the properties of reachability problems in BMDPs, whose optimal policies are memoryless \cite{haddad2018interval}, there exists a policy in $\mathcal{U}_{(WC)^{G}_{P}}$ maximizing the lower bound probability of reaching $(WC)^{G}_{P}$ simultaneously for all states which are not in $(WC)_{P}^{G}$, and, in particular, for all initial states $\left<Q_{j}, s_{0} \right>$ of $\mathcal{B} \otimes \mathcal{A}$ that do not belong to $(WC)_{P}^{G}$, concluding the proof of \eqref{eqth1}.  Symmetric arguments \textcolor{black}{combined with Fact \ref{FactComp} prove \eqref{eqth2}}.
\end{proof}\mbox{}
This theorem shows that the desired policies are computed by solving a lower bound reachability maximization problem on a fixed set of states, which can be accomplished using the value iteration scheme presented in \cite{lahijanian2015formal}. An algorithm for finding the sets $(WC)_{P}^{G}$ and $\textcolor{black}{(\overline{WC})_{P}^{G}}$ as well as their associated control actions are presented in the next subsection.

In this work, we also consider the policies $(\widehat{\mu}_{\Psi}^{up})_{\otimes}$ and $(\widecheck{\mu}_{\Psi}^{low})_{\otimes}$ that respectively maximize the upper bound and minimize the lower bound probability of reaching a winning component for all states in a product BMDP $\mathcal{B} \otimes \mathcal{A}$. While these policies are not mapped onto the original system states, they will prove useful for assessing the \textcolor{black}{quality} of $\widehat{\mu}^{low}_{\Psi}$ and $\widecheck{\mu}^{up}_{\Psi}$ in further sections. These are found by solving an upper bound reachability maximization problem on the \textit{Greatest Winning Component} $(WC)^{G}_L$ in $\mathcal{B} \otimes \mathcal{A}$ \textcolor{black}{(or $\mathcal{B} \otimes \overline{\mathcal{A}}$)}, whose existence is established in the lemma below.\\

\begin{lemma}
\label{LemmaGrea}
Let $\mathcal{B}$ be a BMDP and $\Psi$ be an $\omega$-regular property with corresponding DRA $\mathcal{A}$. \textcolor{black}{The set of policies of the product $\mathcal{B} \otimes \mathcal{A}$ is denoted by $\mathcal{U}_{\otimes}^{\mathcal{A}}$ and the set of memoryless policies of the product $\mathcal{B} \otimes \mathcal{A}$ is denoted by $(\mathcal{U}_{\otimes}^{\mathcal{A}})_{mem} \subseteq \mathcal{U}_{\otimes}^{\mathcal{A}}$}. There exists a set of \textcolor{black}{memoryless} switching policies $\mathcal{U}_{(WC)^{G}_{L}} \subseteq (\mathcal{U}_{\otimes}^{\mathcal{A}})_{mem}$ generating the set $(WC)_{L}^{G}$ in $\mathcal{B} \otimes \mathcal{A}$ such that, for all $\mu \in \mathcal{U}_{\otimes}^{\mathcal{A}}$,  $(WC)_{L} \subseteq (WC)_{L}^{G}$ where $(WC)_{L}$ is the largest winning component of $(\mathcal{B} \otimes \mathcal{A})[\mu]$, and, for all $\mu \in \mathcal{U}_{(WC)^{G}_{L}}$, the largest winning component of $(\mathcal{B} \otimes \mathcal{A})[\mu]$ is $(WC)^{G}_{L}$.
\end{lemma}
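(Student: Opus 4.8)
The plan is to run the same kind of constructive argument used for Lemma~\ref{LemmaGreaPerm}, but optimized for the \emph{most favorable} adversary rather than the worst one; the analysis is in fact milder here, because membership in a winning component only requires reaching \emph{some} accepting BSCC with probability one, which makes the relevant set monotone under our choices. Concretely, I would first recall the construction of the largest winning component of a fixed product IMC from \cite{dutreix2020specification}: it is the fixpoint of the operator ``take all maximal sets of states that can be sealed into an accepting BSCC by switching the zero-lower-bound transitions on or off (call these \emph{potential accepting BSCCs}), then add every state from which some on/off choice keeps all mandatory successors inside the current set and reaches the set via an on-transition.'' I would lift this operator to $\mathcal{B}\otimes\mathcal{A}$ by letting a memoryless action assignment be an extra degree of freedom alongside the on/off choice: a set $B$ is a potential accepting BSCC of $\mathcal{B}\otimes\mathcal{A}$ if there is a memoryless action assignment on $B$ together with an instantiation of the internal uncertain transitions of $B$ making $B$ an accepting BSCC, and a state $\left<Q_j,s_i\right>$ is added to a set $Y$ if some action puts all of its mandatory successors in $Y$ and some on-transition reaches $Y$. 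Iterating this operator to a fixpoint produces a set $(WC)^{G}_{L}$, together with at least one witnessing action at each of its states; $\mathcal{U}_{(WC)^{G}_{L}}$ is then the set of all memoryless policies of $\mathcal{B}\otimes\mathcal{A}$ that agree with such a witness at every state of $(WC)^{G}_{L}$ (and are arbitrary elsewhere). One should check the fixpoint is well defined and self-consistent: since the set only grows, the mandatory successors of a newly added state stay inside $(WC)^{G}_{L}$, so for a witnessing action $a$ one has $\sum_{t\in (WC)^{G}_{L}}\widehat{T}(\left<Q_j,s_i\right>,a,t)\ge 1$, and hence the corresponding adversary can keep the whole row of transition probabilities inside $(WC)^{G}_{L}$, as required by the definition of an IMC.

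Next I would establish the two inclusions in the statement. For ``$(WC)^{G}_{L}$ is realized,'' fix $\mu\in\mathcal{U}_{(WC)^{G}_{L}}$ and build the adversary of $(\mathcal{B}\otimes\mathcal{A})[\mu]$ that turns on exactly the internal transitions of the sealed accepting BSCCs and the steering transitions used in the fixpoint and turns off every transition leaving $(WC)^{G}_{L}$ (legal, since each such transition has zero lower bound by construction); in the induced product MC every state of $(WC)^{G}_{L}$ reaches an accepting BSCC with probability one, so $(WC)^{G}_{L}$ is a winning component and is therefore contained in the largest winning component of $(\mathcal{B}\otimes\mathcal{A})[\mu]$. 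For the converse inclusion, which must hold for \emph{every} policy $\mu\in\mathcal{U}_{\otimes}^{\mathcal{A}}$, note that if $\left<Q_j,s_i\right>$ lies in a winning component of some induced product MC $(\mathcal{B}\otimes\mathcal{A})[\mu][\nu]$, then from $\left<Q_j,s_i\right>$ there is a joint policy/adversary choice under which the Rabin acceptance condition of $\mathcal{A}$ holds with probability one. Since the maximization of a probability-one reach-an-accepting-BSCC objective in a BMDP is attained by a memoryless policy (\cite{haddad2018interval}, together with the reduction of the $\omega$-regular objective to reachability on accepting BSCCs carried out by the product construction) and, for such reachability objectives, by a memoryless adversary, a memoryless $\mu'$ and $\nu'$ achieve the same; reading off the sealed accepting BSCC and the routing in $(\mathcal{B}\otimes\mathcal{A})[\mu'][\nu']$ shows that $\left<Q_j,s_i\right>$ satisfies the defining property of the fixpoint and hence lies in $(WC)^{G}_{L}$. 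Combining the two inclusions yields that the largest winning component of $(\mathcal{B}\otimes\mathcal{A})[\mu]$ equals $(WC)^{G}_{L}$ for every $\mu\in\mathcal{U}_{(WC)^{G}_{L}}$ and is contained in $(WC)^{G}_{L}$ for every policy.

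The step I expect to be the main obstacle is showing that the per-state witnesses produced by the fixpoint assemble into a \emph{single} memoryless policy without shrinking the target, i.e. that the action choices on different potential accepting BSCCs and on their steering predecessors do not genuinely conflict. Unlike the permanent-winning-component case of Lemma~\ref{LemmaGreaPerm}, where such interactions are adversarial and must be resolved delicately, here the situation is monotone: a state only needs to be funneled to \emph{some} accepting BSCC, the set of states from which this is possible is downward closed along the reachability relation, and changing one state's action to any other witnessing choice merely swaps one almost-sure route to an accepting BSCC for another and cannot destroy any predecessor's route. I would make this precise in the spirit of \cite[Lemma~9]{dutreix2020specification} (enlarging the set of winning components only increases reachability probabilities), after which the fixpoint is easily shown to stabilize at the claimed maximal set; the remaining bookkeeping mirrors the proof of Lemma~\ref{LemmaGreaPerm}.
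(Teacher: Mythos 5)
Your overall route is the one the paper intends: the paper's proof of this lemma is literally ``repeat the constructive proof of Lemma \ref{LemmaGreaPerm} with lower-bound operators replaced by upper-bound ones,'' and your fixpoint lift of the IMC largest-winning-component construction --- together with memoryless optimality of reachability and the observation that unions of potential accepting BSCCs can be realized simultaneously because the states of one potential BSCC still hit a shared state with probability one after another state's row is re-assigned --- is a legitimate reorganization of that argument (pairwise unions in the appendix versus a monotone operator iteration in your write-up). Your handling of the ``main obstacle,'' assembling per-state witnesses into a single memoryless policy, is informal but captures the same mechanism as parts I and II of the appendix proof, and the converse inclusion ultimately rests on the same memoryless-sufficiency facts for Rabin objectives that the paper invokes via \cite{chatterjee2004trading}.

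There is, however, one step that fails as written. You define the steering part of your operator by ``some action puts all of its mandatory successors in $Y$ and some on-transition reaches $Y$,'' and then claim this yields $\sum_{t\in (WC)^{G}_{L}}\widehat{T}(\left<Q_j,s_i\right>,a,t)\ge 1$, so that an adversary can confine the whole row to $(WC)^{G}_{L}$. That implication is false: take a state with successors $t_1\in Y$ carrying interval $[0.1,0.3]$ and $t_2\notin Y$ carrying interval $[0,0.9]$. All mandatory (positive-lower-bound) successors lie in $Y$, yet every admissible adversary must place at least $0.7$ of probability mass on $t_2$, so the state cannot be kept inside $Y$ and cannot reach an accepting BSCC with probability one if $t_2$ is losing. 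As written, your operator therefore over-approximates $(WC)^{G}_{L}$, and the ``realized'' inclusion breaks for the spuriously added states. The repair is to make $\sum_{t\in Y}\widehat{T}(\cdot,a,t)\ge 1$ part of the \emph{defining} condition of the operator --- it is exactly the feasibility condition for an adversary whose row is supported on $Y$, alongside the zero-lower-bound condition on transitions leaving $Y$ --- rather than a consequence of the mandatory-successor condition. With that change the two inclusions go through and the argument matches the paper's.
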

\begin{proof}
Lemma \ref{LemmaGrea} follows from a similar constructive argument as the one in the proof of Lemma \ref{LemmaGreaPerm} \textcolor{black}{where the lower bound probability operators are replaced with upper bound probability operators and vice versa}.
\end{proof}\mbox{}

The set $(WC)_{L}^{G}$ \textcolor{black}{is} called the \textit{Greatest Winning Component} of the product BMDP $\mathcal{B} \otimes \mathcal{A}$.\\

\begin{theorem}
\label{UpBoundMaxTheo}
Let $\mathcal{B}$ be a BMDP,  $\Psi$ be an $\omega$-regular property with corresponding DRA $\mathcal{A}$ \textcolor{black}{and $\overline{\Psi}$ be the complement of $\Psi$ with corresponding DRA $\overline{\mathcal{A}}$}. Let $(WC)_{L}^{G}$ and \textcolor{black}{$(\overline{WC})_{L}^{G}$} be the greatest winning component, respectively, of the product BMDP $\mathcal{B} \otimes \mathcal{A}$ \textcolor{black}{and $\mathcal{B} \otimes \overline{\mathcal{A}}$}, and $\mathcal{U}_{(WC)_{L}^{G}}$ and $\mathcal{U}_{\textcolor{black}{(\overline{WC})_{L}^{G}}}$ be the \textcolor{black}{memoryless} policies generating these sets \textcolor{black}{in the corresponding BMDP} as defined in Lemma \ref{LemmaGrea}. An upper bound maximizing and lower bound minimizing switching policy $\widehat{\mu}^{up}_{\Psi}$ and $\widecheck{\mu}^{low}_{\Psi}$ in $\mathcal{B}$ with respect to $\Psi$ are respectively induced by switching policies $(\widehat{\mu}^{up}_{\Psi})_{\otimes}$ \textcolor{black}{in $\mathcal{B} \otimes \mathcal{A}$} and $(\widecheck{\mu}^{low}_{\Psi})_{\otimes}$ in \textcolor{black}{$\mathcal{B} \otimes \overline{\mathcal{A}}$} such that
\begin{align}
\label{eqth21} (\widehat{\mu}^{up}_{\Psi})_{\otimes} = & \argmax_{\mu \in \mathcal{U}_{(WC)_{L}^{G}}} \widehat{\mathcal{P}}_{(\mathcal{B} \otimes \mathcal{A})[\mu] }\big(\left<Q_j, s_0 \right> \models \Diamond (WC)_{L}^{G} \big)\\
\label{eqth22}(\widecheck{\mu}^{low}_{\Psi})_{\otimes} = & \argmax_{\mu \in \mathcal{U}_{\textcolor{black}{(\overline{WC})_{L}^{G}}}} \widehat{\mathcal{P}}_{(\mathcal{B} \otimes \textcolor{black}{\overline{\mathcal{A}}})[\mu]} \big(\left<Q_j, s_0 \right> \models \Diamond \textcolor{black}{(\overline{WC})_{L}^{G}} \big)\;
\end{align}
for all initial states $Q_j$ of $\mathcal{B}$.
\end{theorem}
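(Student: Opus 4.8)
The plan is to mirror the proof of Theorem \ref{TheoWorstCasePol} almost verbatim, but with every lower-bound probability operator replaced by the corresponding upper-bound operator and with the greatest winning component $(WC)_L^G$ (and the policy set $\mathcal{U}_{(WC)_L^G}$ furnished by Lemma \ref{LemmaGrea}) playing the role that the greatest permanent winning component $(WC)_P^G$ played there. The starting point is the fact recalled from \cite{dutreix2020specification} just before the theorem: for any policy $\mu_\otimes$ of $\mathcal{B}\otimes\mathcal{A}$ inducing a policy $\mu$ of $\mathcal{B}$, we have $\widehat{\mathcal{P}}_{\mathcal{B}[\mu]}(Q_j\models\Psi)=\widehat{\mathcal{P}}_{(\mathcal{B}\otimes\mathcal{A})[\mu_\otimes]}\big(\langle Q_j,s_0\rangle\models\Diamond (WC)_L\big)$, where $(WC)_L$ is the (policy-dependent) largest winning component of $(\mathcal{B}\otimes\mathcal{A})[\mu_\otimes]$. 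So maximizing the upper-bound probability of satisfying $\Psi$ over policies of $\mathcal{B}$ reduces to choosing $\mu_\otimes$ that maximizes the upper-bound probability of reaching its own largest winning component.

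To prove \eqref{eqth21}, I would split on whether an initial state lies in $(WC)_L^G$. For $\langle Q_j,s_0\rangle\in(WC)_L^G$ and any $\mu_\otimes\in\mathcal{U}_{(WC)_L^G}$, Lemma \ref{LemmaGrea} gives that the largest winning component of $(\mathcal{B}\otimes\mathcal{A})[\mu_\otimes]$ is exactly $(WC)_L^G$, so there is an induced product MC in which all of $(WC)_L^G$ reaches an accepting BSCC with probability $1$; hence the upper-bound reachability probability from such states is $1$ and is trivially maximal. For $\langle Q_j,s_0\rangle\notin(WC)_L^G$, I would argue that some optimal policy lies in $\mathcal{U}_{(WC)_L^G}$: fix an optimal $\mu^{*}$ with largest winning component $(WC)_L^{*}\subseteq(WC)_L^G$ (containment from Lemma \ref{LemmaGrea}), and let $(\mu^{*})'\in\mathcal{U}_{(WC)_L^G}$ coincide with $\mu^{*}$ on every state outside $(WC)_L^G$ while additionally rendering the states of $(WC)_L^G\setminus(WC)_L^{*}$ part of the largest winning component. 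By \cite[Lemma 9]{dutreix2020specification}, enlarging the set of winning-component states while keeping the remaining transition structure fixed can only increase the probability of reaching an accepting BSCC from every state, and by \cite[Theorem 1]{dutreix2020specification} the upper bound is realized in the induced product MC with the largest admissible winning component; together these give that $(\mu^{*})'$ is at least as good as $\mu^{*}$ from $\langle Q_j,s_0\rangle$ (paralleling the contradiction step in Theorem \ref{TheoWorstCasePol}). Thus one may restrict to $\mathcal{U}_{(WC)_L^G}$ and maximize the upper-bound probability of reaching the now-fixed target $(WC)_L^G$; since upper-bound reachability in a BMDP admits a memoryless optimal policy simultaneously for all initial states \cite{haddad2018interval}, a single $(\widehat{\mu}^{up}_\Psi)_\otimes$ works for every $\langle Q_j,s_0\rangle$, and the value-iteration scheme of \cite{lahijanian2015formal} computes it. This establishes \eqref{eqth21}.

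For \eqref{eqth22}, I would invoke Fact \ref{FactComp}: $\widecheck{\mathcal{P}}_{\mathcal{B}[\mu]}(Q_j\models\Psi)=1-\widehat{\mathcal{P}}_{\mathcal{B}[\mu]}(Q_j\models\overline{\Psi})$, so minimizing the lower-bound probability of $\Psi$ is equivalent to maximizing the upper-bound probability of $\overline{\Psi}$, whose DRA is $\overline{\mathcal{A}}$. Applying the result just proved to the product $\mathcal{B}\otimes\overline{\mathcal{A}}$, with greatest winning component $(\overline{WC})_L^G$ and generating policies $\mathcal{U}_{(\overline{WC})_L^G}$, yields \eqref{eqth22}.

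The main obstacle I anticipate is making the monotonicity/re-routing step fully rigorous: one must verify that the modified policy $(\mu^{*})'$ is well-defined and memoryless (so that the BMDP reachability optimality of \cite{haddad2018interval} applies), that the "best-case adversary realizes the maximal winning component" fact of \cite{dutreix2020specification} combines correctly with the monotonicity lemma \cite[Lemma 9]{dutreix2020specification} when the target set changes from $(WC)_L^{*}$ to the larger $(WC)_L^G$, and that no subtlety arises from the fact that enlarging the winning component may only weakly (not strictly) increase reachability from states that cannot reach the newly added states — which is harmless here since we only need the existence of an optimal policy inside $\mathcal{U}_{(WC)_L^G}$, not strict improvement. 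The rest of the argument is a direct transcription of the proof of Theorem \ref{TheoWorstCasePol}.
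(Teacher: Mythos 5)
Your proposal is correct and follows essentially the same route as the paper: the paper's own proof simply cites \cite[Theorem 1]{dutreix2020specification} for the fact that the upper bound is realized in the induced product MC with the largest admissible winning component, then transplants the argument of Theorem \ref{TheoWorstCasePol} with $(WC)_L^G$ and upper-bound operators in place of $(WC)_P^G$ and lower-bound operators, and obtains \eqref{eqth22} by symmetric arguments combined with Fact \ref{FactComp} — exactly the decomposition, re-routing/monotonicity step, and memoryless-reachability appeal you describe. Your closing caveat that only weak improvement (hence existence of an optimizer inside $\mathcal{U}_{(WC)_L^G}$) is needed is a fair and accurate reading of the same step in the paper.
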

\begin{proof}
As shown in \cite[Theorem 1]{dutreix2020specification}, an upper bound on the probability of reaching an accepting BSCC in a product IMC $\mathcal{I} \otimes \mathcal{A}$ is achieved in an induced product MC $(\mathcal{M}_{\otimes}^{\mathcal{A}})$ with the largest possible set of winning components allowed by $\mathcal{I} \otimes \mathcal{A}$, which is the largest winning component $(WC)_{L}$ of $\mathcal{I} \otimes \mathcal{A}$, for all initial states of $\mathcal{I} \otimes \mathcal{A}$. Hence, the same arguments as in the proof of Theorem \ref{TheoWorstCasePol} proves \eqref{eqth21}. \textcolor{black}{Symmetric arguments combined with Fact \ref{FactComp} prove \eqref{eqth22}}.
\end{proof}\mbox{}

\noindent We remark that replacing $(WC)_{L}^{G}$ in \eqref{eqth21} by \textit{the greatest accepting BSCC} $(U)^{G}_{L} \subseteq (WC)_{L}^{G}$ \textcolor{black}{of $\mathcal{B} \otimes \mathcal{A}$} does not change the validity of \eqref{eqth21}. The set $(U)^{G}_{L}$ contains all states which belong to an accepting BSCC for at least one induced product MC under at least one policy in $\mathcal{B} \otimes \mathcal{A}$. The proof of the existence of a set of control policies generating this set is similar to the first part of the proof of Lemma \ref{LemmaGreaPerm}. This substitution can be done because, by definition, $\widehat{\mathcal{P}}_{(\mathcal{B} \otimes \mathcal{A})[(\widehat{\mu}^{up}_{\Psi})_{\otimes}]} \Big( (WC)^{G}_L \models \Diamond (U)_{L}^{G} \Big) = 1$, and leads to algorithmic simplifications as the full set $(WC)_{L}^{G}$ may not need to be computed explicitly. \textcolor{black}{A similar reasoning holds by replacing $(\overline{WC})_{L}^{G}$ with the greatest accepting BSCC of $\mathcal{B} \otimes \overline{\mathcal{A}}$} in \eqref{eqth22}. The components $(WC)_{L}^{G}$ and \textcolor{black}{$(\overline{WC})_{L}^{G}$} as well as the control actions generating these components are found via a graph search, as detailed in the next subsections.

\subsection{WINNING COMPONENTS SEARCH ALGORITHMS}
\label{SecCompSearch}

Now, we present graph-based algorithms for finding the greatest permanent winning component $(WC)_{P}^{G}$ of a product BMDP $\mathcal{B} \otimes \mathcal{A}$ defined in Lemma \ref{LemmaGreaPerm}. Furthermore, we show how to design a switching policy that effectively generates this greatest permanent component.

\textcolor{black}{The search is decomposed in two parts: first, we determine a superset of the \textit{greatest permanent accepting BSCC}, denoted by $(U)^{G}_P$, of $\mathcal{B} \otimes \mathcal{A}$ following Algorithm \ref{AlgPerBSCCW}. The set $(U)^{G}_P$ contains all states which belong to a permanent accepting for some control policy in $\mathcal{B} \otimes \mathcal{A}$, and all such states are a part of $(WC)_{P}^{G}$ as seen in the proof of Lemma \ref{LemmaGreaPerm}. We call the superset of $(U)^{G}_P$ returned by this algorithm an extended greatest permanent accepting BSCC, denoted by $(U_{+})^{G}_P$. This set additionally satisfies $(U)^{G}_P \subseteq (U_{+})^{G}_P \subseteq (WC)_{P}^{G}$. Although Algorithm \ref{AlgPerBSCCW} is driven by a search of the sets $(U)^{G}_P$, our implementation allows us to find additional members of $(WC)_{P}^{G}$ in some instances.}

\textcolor{black}{Then, by using an iterative technique which alternates between a graph search and a reachability maximization step in Algorithm \ref{AlgPerWin}, one can find the set of states which are not members of $(U_{+})^{G}_P$ but for which the lower bound probability of reaching an accepting BSCC is equal to 1 nonetheless for some control policy, and effectively create $(WC)_{P}^{G}$}. 

\subsubsection{GREATEST PERMANENT BSCC SEARCH ALGORITHMS}

We now detail an algorithm for finding an extended greatest permanent accepting BSCC \textcolor{black}{$(U_{+})^{G}_P$} of a product BMDP $\mathcal{B} \otimes \mathcal{A}$.

We introduce the following notations and terminology: a set of states in a product $\mathcal{B} \otimes \mathcal{A}$ is said to be accepting if it satisfies the acceptance condition in Definition \ref{accepcon} and is said to be non-accepting otherwise. A state $\left <Q_{\ell}, s_{j} \right>$ of $\mathcal{B} \otimes \mathcal{A}$ with labeling function $L'$ is said to be \textit{Rabin accepting with respect to the $i^{th}$ Rabin pair} of $\mathcal{A}$ if $F_{i} \in L'(\left<Q_{\ell}, s_{j} \right>)$; $\left <Q_{\ell}, s_{j} \right>$ is said to be \textit{Rabin non-accepting with respect to the $i^{th}$ Rabin pair} of $\mathcal{A}$ if $E_{i} \in L'(\left<Q_{\ell}, s_{j} \right>)$. A Rabin accepting state with respect to the $i^{th}$ pair is said to be \textit{unmatched} in a set of states $C$ if, for all $\left <Q_{\ell}, s_{j} \right> \in C$, $E_{i} \not \in L'(\left<Q_{\ell}, s_{j} \right>)$, and it is said to be \textit{matched} otherwise. $Act(C)$ is a set containing all sets of actions allowed for each state in a set $C$, that is, if $C = \{q_{0}, q_{1}, \ldots, q_{k}\}$, $q_{i} \in Q \times S$, then,  $Act(C) = \{ A(q_{0}), A(q_{1}), \ldots, A(q_{k}) \}$. $At_{P}(B, C, Act(C))$ is a function which outputs the set of states in $C$ which have a non-zero probability of transition to $B$ for at least one adversary under all actions in $Act(C)$. In addition, this function removes all actions from the sets in $Act(C)$ for which a transition to $B$ is possible under at least one adversary and returns the updated set of allowed actions for each state of $C$.

We provide a short description of the algorithm: Algorithm \ref{AlgPerBSCCW} first finds the largest possible set of Strongly Connected Components (SCC), denoted by $S$, that can be constructed in the product BMDP in line 4 and 5 assuming all actions are available, as the greatest permanent BSCCs are a subset of these by Definition \ref{BSCCdef}. Set $S$ is determined by applying a standard SCC search techniques on the graph $G$ defined in line 4.

Then, the algorithms iteratively remove the actions and states which prevent these SCCs from being a permanent BSCC, that is, actions and states which allow for a transition outside of the SCCs, as captured by line 9. Note that a state is discarded in set $C_{i}$ once its action set is empty. Then, new SCCs are computed with the remaining states and actions in line 12. If the algorithm finds an SCC $S_{k}$ which does not allow any transition outside of $S_{k}$ for any state and action available, then it is potentially a member of \textcolor{black}{$(U_{+})^{G}_P$}  (line 13). 

Next, the acceptance status of SCC $S_{k}$ is checked at line 14.  This is done by inspecting the states belonging to the SCC and comparing them with Definition \ref{accepcon}. If $S_{k}$ \textcolor{black}{is not accepting}, states which can revert the acceptance status of $S_{k}$ are removed and new SCCs are computed with the remaining states in line 23. Otherwise, the algorithm enters the if-statement in line 14 for a further analysis of $S_{k}$.

An additional condition for $S_{k}$ to be a part of \textcolor{black}{$(U_{+})^{G}_P$} is that no subset of states of $S_{k}$ can form a non-accepting BSCC under any scenario allowed by the transition intervals of the product BSCC.  Too make sure that no subset of $S_{k}$ can form a non-accepting BSCC, we choose control actions for the states in $S_{k}$ that maximize the lower bound probability of reaching the unmatched Rabin accepting states contained in $S_{k}$ in line 14 to 17. If this lower bound is zero for some subset of $S_{k}$, then these states could potentially form a non-accepting BSCC inside $S_{k}$ for some assignment of the probabilities under all available actions. The set of all such states is denoted by $A_{bad}$. If $A_{bad}$ is empty, the algorithm found a control policy that guarantees $S_{k}$ to be accepting for all possible adversaries of the induced product IMC, since no state of $S_{k}$ can form a BSCC which doesn't contain at least one of the unmatched accepting states, and $S_{k}$ is added to \textcolor{black}{$(U_{+})^{G}_P$} in line 18.  Otherwise, the SCCs which can be formed by the states in $A_{bad}$ and by the states in $S_{k} \setminus A_{bad}$ with the remaining actions are computed and added to $S$ in line 20.

\begin{algorithm}[H]
\caption{Find Extended Greatest Permanent Accepting BSCC} 
\algsetup{linenosize=\scriptsize}
\scriptsize
\begin{algorithmic}[1]
\STATE \textbf{Input}: Product BMDP $\mathcal{B} \otimes \mathcal{A}$
\STATE \textbf{Output}: Extended greatest permanent accepting BSCCs $\textcolor{black}{(U_{+})^{G}_P}$ with corresponding policy $(\widehat{\mu}_{\Psi}^{low})_{\otimes}$ for the states in this set

\STATE \textbf{Initialize}: $(U_{+})^{G}_P := \emptyset$
\STATE Initially allow all actions for all states. Construct $G := (V,E)$ with a vertex for each state in $\mathcal{B} \otimes \mathcal{A}$ $(V = Q \times S)$ and an edge between states $\left< Q_{i} , s_{j} \right>$ and $\langle Q_{i'} , s_{j'} \rangle$ if $\widehat{T}(\left< Q_{i} , s_{j} \right>, a, \langle Q_{i'} , s_{j'} \rangle) > 0$ for some $a \in A(\left< Q_{i} , s_{j} \right>)$
\STATE Find all SCCs of $G$ and list them in $S$
\FOR {$S_k \in S$}
\STATE $C_0 := \emptyset$,  $i := 0$
\REPEAT
\STATE $R_i := S_k \setminus \cup_{\ell = 0}^{i} C_{\ell}$; \; $Tr_i := V \setminus R_i$; \; $(C_{i+1}, Act(R_i)) = At_P(Tr_i, R_i, Act(R_i))$; \; $i = i +1$
\UNTIL $C_i = \emptyset$ and no action is removed from $Act(R_i)$
\IF {$i \not = 1$}
\STATE Find all SCCs of $R_i$ (with the remaining actions) and add them to $S$
\ELSE
\IF{$S_k$ is accepting}
\STATE Find the set $A$ of all unmatched Rabin accepting states of  $S_k$
\STATE For all states in $S_{k}$, maximize the lower bound probability of $\Diamond A$. Find the set of states $A_{bad}$ whose lower bound probability of reaching $A$ is zero after the maximization step
\IF{$A_{bad} = \emptyset$}
\STATE $(U_{+})^{G}_P := (U_{+})^{G}_P \cup S_k$ and save the actions computed in the maximization of $\Diamond A$ to $(\widehat{\mu}_{\Psi}^{low})_{\otimes}$ for all states of $S_{k}$
\ELSE
\STATE Compute the SCCs formed by the states in $A_{bad}$ and the states in $S_{k} \setminus A_{bad}$ with the remaining actions and add them to $S$
\ENDIF
\ELSE
\STATE If $S_{k}$ does not contain any Rabin accepting state, continue. Otherwise, for all Rabin accepting set of states $A_{i}$ with respect to pair $i$ in $S_{k}$, find the set $A_{i}^{non}$ of all states in $S_{k}$ which are non-accepting with respect to the same pair as $A_{i}$. Compute the SCCs formed by the states in $S_{k} \setminus A_{i}^{non}$ with the remaining actions and add them to $S$
\ENDIF
\ENDIF
\ENDFOR
\RETURN $(U_{+})^{G}_P$ , $(\widehat{\mu}_{\Psi}^{low})_{\otimes}$ for states in $(U_{+})^{G}_P$ 
\end{algorithmic}
\label{AlgPerBSCCW}
\end{algorithm}

We offer the following reasoning as a proof sketch for the correctness of the algorithm \textcolor{black}{, i.e, to show that the output $(U_+)^G_P$ of Algorithm \ref{AlgPerBSCCW} satisfies the chains of inequalities $(U)^G_P \subseteq (U_+)^G_P \subseteq (WC)^G_P$}: for a set of states $S_{k}$ to belong to a permanent BSCC of a given kind in a product IMC, \textcolor{black}{the following conditions must hold}: 1) its constituents are not allowed to transition outside of $S_{k}$ under any adversary, \textcolor{black}{2) its constituents have to be reachable from one another under all adversaries}, \textcolor{black}{3)} its constituents have to fulfill the requirements for accepting and non-accepting BSCCs defined in Definition \ref{accepcon}, \textcolor{black}{4)} no subset of $S_{k}$ is allowed to form a BSCC of the opposite acceptance status under any adversary.  \textcolor{black}{Condition 1)} is guaranteed by lines 7 to 10; \textcolor{black}{Condition 2) is not enforced and is the reason for outputting a superset of $(U)^G_P$. This is because, as long as the other 3 conditions are fulfilled, the states in the set $S_{k}$ will still be permanently winning,, although the transition bounds within $S_{k}$ might allow these sets to be winning via different scenarios that are not only a BSCC formed by all the states of $S_{k}$ (e.g. a subset of $S_{k}$ always transitioning to a another subset of $S_{k}$ forming a BSCC)}; \textcolor{black}{Condition 3)} is enforced by the if-statement in line 14  and the corresponding else-statements of lines 22 to 24; \textcolor{black}{Condition 4)} is imposed by the remainder of the main for-loop. Lastly, the algorithm iteratively removes the minimum number of actions and states causing a set $S_{k}$ to violate one of these conditions and analyze all of the remaining states, ensuring that the procedure does not skip any permanent component. Note that none of the removed states could form a permanent BSCC between each other under any policy. Indeed, if these states did not belong to a common SCC in $S$, this would be a contradiction. \textcolor{black}{Therefore, by virtue of this fact, Algorithm \ref{AlgPerBSCCW} does not ``miss" any permanent BSCCs and it must hold that $(U)^G_P \subseteq (U_+)^G_P$. Moreover, the previous discussion regarding Condition 2 ensures that all states in $(U_+)^G_P \setminus (U)^G_P$ are still permanently winning, guaranteeing that $(U_+)^G_P \subseteq (WC)^G_P$ and concluding the proof sketch.}

\textcolor{black}{This algorithm} can be adapted to determine an extended greatest accepting $(U_{+})^{G}_{L}$ by replacing all instances of the function $At_{P}(B, C, Act(C))$ with the function $At_{\textcolor{black}{pot}}(B, C,$ $ Act(C))$, where $At_{\textcolor{black}{pot}}(B, C, Act(C))$ returns the set of states of $C$ which have a non-zero probability of transition to $B$ for all adversaries under all allowed actions. This function also removes all actions from $Act(C)$ for which a non-zero probability of transition to $B$ exists under all adversaries of the induced IMC and returns the updated set of allowed actions. In addition, all mentions of the term ``lower bound" have to be replaced with ``upper bound". The extended set \textcolor{black}{is} such that $(U)^{G}_L \subseteq (U_{+})^{G}_L \subseteq (WC)_{L}^{G}$.

\subsubsection{GREATEST PERMANENT COMPONENTS SEARCH ALGORITHMS}

Next, we present an algorithm which constructs the greatest permanent winning components $(WC)^{G}_P$ in a product BMDP $\mathcal{B} \otimes \mathcal{A}$ once \textcolor{black}{an extended greatest permanent BSCC $(U_{+})_{P}^{G}$ has been found.}

In a product IMC $\mathcal{I} \otimes \mathcal{A}$, some states which are not in a permanent BSCC can still be a part of the permanent winning component of $\mathcal{I} \otimes \mathcal{A}$, as discussed in the second part of the proof of Lemma \ref{LemmaGreaPerm}. These states are those which belong to a set of states $C$ such that no transition outside the union of $C$ and the permanent BSCCs of $\mathcal{I} \otimes \mathcal{A}$ is possible for any adversary, and such that no subset of $C$ can form a \textcolor{black}{non-accepting} BSCC status under any adversary. We can further classify these states into \textit{permanent sink} states, which cannot be a part of a BSCC under any scenario but transition to another winning set of state with lower bound probability 1, and states which allow non-deterministic scenarios where the state is sometimes a sink state with respect to another permanent winning set of states and sometimes a part of a winning component that reaches a non permanent accepting BSCC with probability one. The examples below, \textcolor{black}{illustrated in Figure \ref{FigExample1}}, present situations where these scenarios can occur.

\begin{figure}
\hspace{0.3cm}
\includegraphics[scale=0.8]{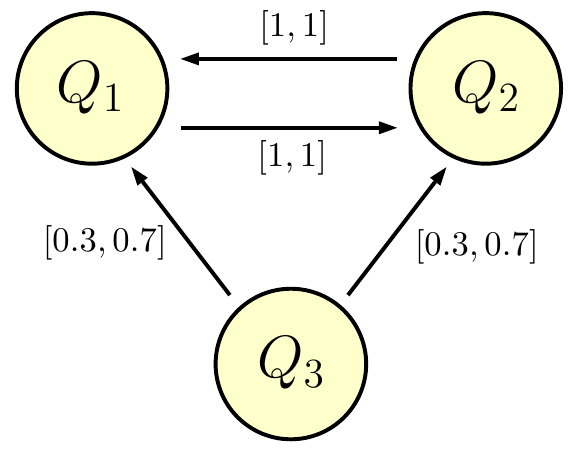}
\hspace{2cm}
\includegraphics[scale=0.8]{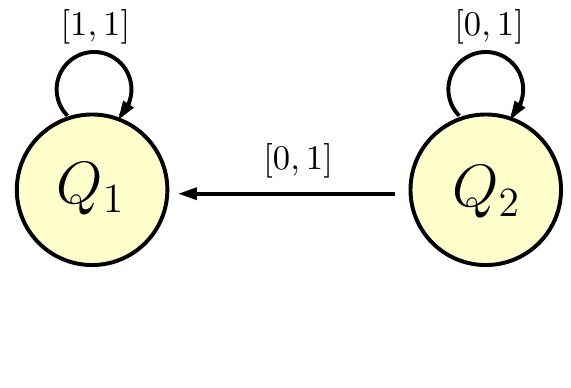}
\caption{\textcolor{black}{Depiction of the product IMCs in Example 1. On the left, state $Q_{3}$ transitions to the BSCC formed by $Q_{1}$ and $Q_{2}$ under all possible adversaries and is therefore a permanent sink state. On the right, state $Q_{2}$ is either a sink state with respect to state $Q_{1}$ or a BSCC itself for all realizations of the probability intervals.}}
\label{FigExample1}
\end{figure}

\begin{example}
Consider three states $Q_{1}$, $Q_{2}$ and $Q_{3}$ of a product IMC such that $Q_{1}$ and $Q_{2}$ form a permanent BSCC, with $\widecheck{T}(Q_{1}, Q_{2}) = \widecheck{T}(Q_{2}, Q_{1}) = 1$. Furthermore, $\widecheck{T}(Q_{3}, Q_{1}) = \widecheck{T}(Q_{3}, Q_{2}) = \textcolor{black}{0.3}$ \textcolor{black}{and $\widehat{T}(Q_{3}, Q_{1}) = \widehat{T}(Q_{3}, Q_{2}) = 0.7$}. Clearly, $Q_{3}$ is not a member of the BSCC encompassing $Q_{1}$ and $Q_{2}$; yet, $Q_{3}$ always transitions to either $Q_{1}$ or $Q_{2}$ with probability probability 1 and is therefore a permanent sink state.

Now, consider two states $Q_{1}$ and $Q_{2}$ such that $\widecheck{T}(Q_{1}, Q_{1}) = 1$, $\widecheck{T}(Q_{2}, Q_{1}) =  \widecheck{T}(Q_{2}, Q_{2}) = 0$ and $\widehat{T}(Q_{2}, Q_{1}) =  \widehat{T}(Q_{2}, Q_{2}) = 1$. While $Q_{1}$ is a permanent BSCC, $Q_{2}$ is neither a permanent sink state nor a permanent BSCC. However, all adversaries of the product IMC make $Q_{2}$ either a sink state with respect to $Q_{1}$ or a BSCC with itself.
\end{example}

Consequently, we describe a procedure in Algorithm \ref{AlgPerWin} that finds all states in a product $\mathcal{B} \otimes \mathcal{A}$ for which a control policy induces one of the aforementioned scenarios given extended greatest permanent BSCCs $ (U_{+})_{P}^{G}$.

We explain the main features of \textcolor{black}{this algorithm}: first, the greatest permanent \textcolor{black}{winning} component $(WC)^{G}_P$ \textcolor{black}{is} initialized to the extended greatest permanent \textcolor{black}{accepting} BSCCs in line 3. Then, in line 5, the lower bound probability of reaching \textcolor{black}{this component} is maximized in the product BMDP to reveal the states which can be rendered permanent sinks with respect to $(WC)^{G}_P$, as these states yield a lower bound of 1 of reaching the component. The sink states are added to \textcolor{black}{$(WC)^{G}_P$} in line 8.

Next, we define the greatest potential accepting BSCC $(U)_{\textcolor{black}{pot}}^{G}$ of a product BMDP, which are computed by taking the set difference between the greatest \textcolor{black}{winning} BSCC and the greatest permanent \textcolor{black}{winning} BSCC. States in $(U)_{\textcolor{black}{pot}}^{G}$ are those which could engender the second type of permanent components previously discussed. If $(U)_{\textcolor{black}{pot}}^{G}$ happened to contain a permanent sink state found in line 8, we compute the greatest accepting and non-accepting BSCC as well as their associated allowed actions with the remaining states in line 10 to update $(U)_{\textcolor{black}{pot}}^{G}$.

Then, in lines 12 to 17, for all BSCCs $S$ which can be created in $(U)_{\textcolor{black}{pot}}^{G}$, \textcolor{black}{we check whether there exists a policy such that no state of $S$ can transition outside of the union of $S$ and the current version of the greatest permanent winning component} for any instantiation of the resulting transition intervals. If such a policy does not exist, states and actions for which a transition outside of the aforementioned set is possible are removed from $S$ and the BSCCs which can be created inside the greatest BSCC of the remaining states are added to the list $N$ of BSCCs to inspect in line 19. On the other hand, if $S$ only contains valid states and corresponding actions, the algorithm enters the else-statement in line 20,  where we need to choose a policy for the states in $S$ which additionally does not allow the existence of a \textcolor{black}{non-accepting} BSCC within $S$ under any adversary. 

This step is done similarly as in Algorithm \ref{AlgPerBSCCW} by maximizing the lower bound probability of reaching the unmatched Rabin accepting states in $S$ and removing the states yielding a lower bound probability of $0$. If no such state is found, then we designed a policy that effectively makes $S$ either a set of sink states or an \textcolor{black}{accepting} BSCC for all adversaries, and the states of $S$ are added to the greatest permanent winning component $(WC)^{G}_P$. This process is described in line 21 to 28.

In the case that new states were added to $(WC)^{G}_P$ upon execution of the reachability maximization step and the graph search, which is checked in line 31 to 33, we return to the beginning of the while-loop and repeat this process with the augmented version of the \textcolor{black}{greatest permanent winning component}, as it could now allow previously discarded states to become permanently winning. Otherwise, the loop is exited and the algorithms return the true set $(WC)^{G}_P$ with \textcolor{black}{its} associated control actions.

A slight modification of Algorithm \ref{AlgPerWin} can be employed to compute the greatest set $(WC)^{G}_L$ defined in Lemma \ref{LemmaGrea}. However, in this paper, we solely use the greatest \textcolor{black}{accepting} BSCC \textcolor{black}{$(U_{+})_{L}^{G}$} as our target set for computing the upper bound maximizing and lower bound minimizing policies $(\widehat{\mu}_{\Psi}^{up})_{\otimes}$ and $(\widecheck{\mu}_{\Psi}^{low})_{\otimes}$, as explained in Subsection \ref{SecBMDPSynth}.

\begin{algorithm}[H]
\algsetup{linenosize=\scriptsize}
\scriptsize
\caption{Find Greatest Permanent Winning Components} 
\begin{algorithmic}[1]
\STATE \textbf{Input}: Product BMDP $\mathcal{B} \otimes \mathcal{A}$, extended greatest permanent accepting BSCC \textcolor{black}{$(U_{+})_{P}^{G}$}, extended greatest accepting BSCCs \textcolor{black}{$(U_{+})_{L}^{G}$}
\STATE \textbf{Output}: Greatest permanent winning component $(WC)^{G}_P$ with corresponding policy $(\widehat{\mu}_{\Psi}^{low})_{\otimes}$ for the states in this set
\STATE \textbf{Initialize}: $ (WC)^{G}_P :=  (U_{+})^{G}_P$, $(U)^{G}_{\textcolor{black}{pot}} := (U_{+})_{L}^{G} \setminus (U_{+})_{P}^{G}$, $(WC)^{G}_{P, prev} := (WC)^{G}_P$
\REPEAT
\STATE Maximize the lower bound probability of $ \Diamond (WC)^{G}_P$ for all states $\left< Q_{i}, s_{j} \right>$ in $\mathcal{B} \otimes \mathcal{A}$
\STATE Construct the set $L$ of all states with a lower bound equal to 1 that are not in $(WC)^{G}_P$
\FOR{$Q \in L$}
\STATE $(WC)^{G}_P := (WC)^{G}_P \cup Q$, save the action $(\widehat{\mu}_{\Psi}^{low})_{\otimes}(Q)$ computed during maximization step
\ENDFOR
\STATE Find the greatest accepting BSCC of $(U)^{G}_{\textcolor{black}{pot}} \setminus L$ using Algorithm \ref{AlgPerBSCCW} and set $(U)^{G}_{\textcolor{black}{pot}}$ to this new set of states
\STATE Construct the set $N$ of all accepting BSCCs constructed in $(U)^{G}_{\textcolor{black}{pot}}$ under some policy
\FOR{$S_{k} \in N$}
\STATE Construct $G := (V,E)$ with a vertex for each state in $\mathcal{B} \otimes \mathcal{A}$ $(V = Q \times S)$ and an edge between states $\left< Q_{i} , s_{j} \right>$ and $\langle Q_{i'} , s_{j'} \rangle$ if $\widehat{T}(\left< Q_{i} , s_{j} \right>, a, \langle Q_{i'} , s_{j'} \rangle) > 0$ for some $a \in A(\left< Q_{i} , s_{j} \right>)$
\STATE $C_{0} := \emptyset$, $i :=0$
\REPEAT
\STATE $R_i := S_{k} \setminus \cup_{\ell = 0}^{i} C_{\ell}$; \; $Tr_i := V \setminus (R_i \cup (WC)^{G}_P)$; \; $(C_{i+1}, Act(R_i)) := At_P(Tr_i, R_i, Act(R_i))$; \; $i := i +1$
\UNTIL $C_i = \emptyset$ and no action is removed from $Act(R_i)$
\IF {$i \not = 1$}
\STATE Find the greatest accepting BSCC of $R_i$ (with remaining actions) using Algorithm \ref{AlgPerBSCCW}, enumerate all accepting BSCCs constructed in this set under some policy, and add them to $N$
\ELSE 
\STATE Find the set $A$ of all unmatched Rabin accepting states of $S_k$
\STATE For all states in $S_k$, maximize the lower bound probability of $\Diamond A$. Find the set of states $A_{bad}$ whose lower bound probability of reaching $A$ is zero after the maximization step
\IF{$A_{bad} = \emptyset$}
\STATE $(WC)^{G}_P := (WC)^{G}_P \cup S_{k}$, save corresponding actions in $(\widehat{\mu}_{\Psi}^{low})_{\otimes}$ for the states in $S_{k}$\hspace{-0.2em}
\STATE $(U)^{G}_{\textcolor{black}{pot}} := (U)^{G}_{\textcolor{black}{pot}} \setminus S_{k}$
\ELSE 
\STATE Compute the greatest accepting BSCC of $A_{bad}$ and $S_{k} \setminus A_{bad}$ using Algorithm \ref{AlgPerBSCCW}, enumerate all accepting BSCCs constructed in this set under some policy, and add them to $N$
\ENDIF
\ENDIF
\ENDFOR
\STATE $Y := (WC)^{G}_P \setminus (WC)^{G}_{P, prev}$
\STATE $(WC)^{G}_{P, prev} := (WC)^{G}_P$
\UNTIL{$Y = \emptyset$}
\RETURN $(WC)^{G}_P$ , $(\widehat{\mu}_{\Psi}^{low})_{\otimes}$ for states in $(WC)^{G}_P$
\end{algorithmic}
\label{AlgPerWin}
\end{algorithm}

In summary, we develop a procedure for computing policies that either maximize the lower bound probability or minimize the upper bound probability of satisfying an arbitrary $\omega$-regular property in a BMDP. To this end, we show that these policies are induced by policies in the product between the BDMP and \textcolor{black}{a} DRA encoding the specification of interest \textcolor{black}{for the maximization objective}, \textcolor{black}{or a DRA encoding the complement of the specification for the minimization objective}. In Lemma \ref{LemmaGreaPerm}, we remarked that a product BMDP always possesses a greatest permanent winning component. In Algorithms \textcolor{black}{\ref{AlgPerBSCCW} and \ref{AlgPerWin}},  we devise graph-based techniques for determining \textcolor{black}{this component} as well as the corresponding control actions for the states composing them. Finally, we show in Theorem \ref{TheoWorstCasePol} that, for the remaining states in the product BMDPs, the optimal policies \textcolor{black}{are} found by carrying out a lower bound reachability maximization computation on the greatest permanent \textcolor{black}{winning} component.

\subsection{STATE SPACE REFINEMENT}
\label{SecRefFin}

\subsubsection{\textcolor{black}{QUALITY} OF COMPUTED POLICY}
\label{SubSubOpt}
In the previous subsections, we implemented a technique for computing an optimal switching policy in a BMDP subject to an $\omega$-regular specification. However, recall that, in the problem at hand, BMDPs are used as abstractions of the underlying system \eqref{eq1} with respect to a partition of the system's continuous domain. Therefore, as each state of the BMDP abstracts the behavior an infinite number of continuous states of \eqref{eq1}, the switching policy derived in the BMDP abstraction is likely to be suboptimal when mapped onto the original system.

Here, we provide a measure of the suboptimality of the control strategy computed in a BMDP abstraction with respect to the abstracted system. \textcolor{black}{While the discussion in this section focuses on optimality for the probability maximization problem with respect to specification $\Psi$, the same facts can straightforwardly be applied to the dual minimization problem by replacing the instances of $(\widehat{\mu}_{\Psi}^{low})_{\otimes}$, $(\widehat{\mu}^{up}_{\Psi})_{\otimes}$ and $\mathcal{B} \otimes \mathcal{A}$ with $(\widecheck{\mu}_{\Psi}^{up})_{\otimes}$, $(\widecheck{\mu}_{\Psi}^{low})_{\otimes}$ and $\mathcal{B} \otimes \overline{\mathcal{A}}$ respectively, where $\overline{\mathcal{A}}$ is a DRA representing the complement specification $\overline{\Psi}$.} 

\textcolor{black}{The value iteration algorithm used to design the policies $(\widehat{\mu}_{\Psi}^{low})_{\otimes}$ and $(\widehat{\mu}^{up}_{\Psi})_{\otimes}$ discussed in Theorem \ref{TheoWorstCasePol} and Theorem \ref{UpBoundMaxTheo} provides useful information amenable to a quantitative measure of the quality of the lower bound maximizing policy $(\widehat{\mu}_{\Psi}^{low})_{\otimes}$. In particular, for all states $\left< Q_j, s_{i} \right>$, the algorithm determines a lower bound on the maximum lower bound probability of reaching an accepting BSCC achievable from $\left< Q_j, s_{i} \right>$ over all memoryless policies of $\mathcal{B} \otimes \mathcal{A}$ choosing the lower bound maximizing action $a_{\ell,max} =  (\widehat{\mu}_{\Psi}^{low})_{\otimes}(\left< Q_j, s_{i} \right>)$ at state $\left< Q_j, s_{i} \right>$, and an upper bound on the maximum upper bound probability of reaching an accepting BSCC achievable from $\left< Q_j, s_{i} \right>$ over all memoryless policies of $\mathcal{B} \otimes \mathcal{A}$ choosing action $a_{\ell}$ at state $\left< Q_j, s_{i} \right>$ for all actions $a_{\ell} \in A(\left< Q_j, s_{i} \right>)$. Denoting these lower and upper bounds by $\widecheck{p}_{\ell}$ and $\widehat{p}_{\ell}$ respectively for action $a_{\ell}$, and the set of memoryless policies of $\mathcal{B} \otimes \mathcal{A}$ by $(\mathcal{U}_{\otimes}^{\mathcal{A}})_{mem}$, this is formally stated as
\begin{align}
\widecheck{p}_{\ell,max} \leq \hspace{-8mm} \max_{\substack{\mu \in (\mathcal{U}_{\otimes}^{\mathcal{A}})_{mem} \\ s.t. \\ \mu(\left< Q_j, s_{i} \right>) = a_{\ell,max}}} \hspace{-8mm} \widecheck{\mathcal{P}}_{(\mathcal{B} \otimes \mathcal{A})[\mu]}(\left< Q_j, s_{i} \right> \models \Diamond R) \ ,
\end{align}}

\noindent \textcolor{black}{where the subscript $\ell,max$ refers to the lower bound maximizing action and, for all actions $a_{\ell} \in A(\left< Q_j, s_{i} \right>)$,
\begin{align}
\widehat{p}_{\ell} \geq \hspace{-5mm} \max_{\substack{\mu \in (\mathcal{U}_{\otimes}^{\mathcal{A}})_{mem}\\ s.t. \\ \mu(\left< Q_j, s_{i} \right>) = a_{\ell}}} \hspace{-5mm} \widehat{\mathcal{P}}_{(\mathcal{B} \otimes \mathcal{A})[\mu]}(\left< Q_j, s_{i} \right> \models \Diamond R) \ ,
\end{align}}

\noindent \textcolor{black}{where $\Diamond R$ is a slight abuse of notation denoting the objective of reaching an accepting BSCC --- which is generally not a fixed set of states as discussed in previous sections --- in the product IMC $(\mathcal{B} \otimes \mathcal{A})[\mu]$.}

\textcolor{black}{Therefore, we introduce the \textit{suboptimality factor} $\epsilon_{\left< Q_j, s_{i} \right>}$ of state $\left< Q_j, s_{i} \right>$ with respect to the lower bound maximizing policy $(\widehat{\mu}_{\Psi}^{low})_{\otimes}$ in the product BMDP $\mathcal{B} \otimes \mathcal{A}$ which is defined as
\begin{align}
\label{SubFacMax}
 \epsilon_{\left< Q_j, s_{i} \right>}  =  \max_{\ell \not = \ell,max} \widehat{p}_{\ell}   -  \widecheck{\mathcal{P}}_{(\mathcal{B} \otimes \mathcal{A})[(\widehat{\mu}_{\Psi}^{low})_{\otimes}]}\left(\left< Q_j, s_{i} \right> \models \Diamond (WC)_{P}^{G} \right) \ .
\end{align}
\noindent The quantity $\epsilon_{\left< Q_j, s_{i} \right>}$ represents an upper bound on the maximum improvement in the probability of satisfying $\Psi$, using a memoryless policy with respect to the DRA states, any continuous state in $Q_j$ could achieve by choosing another fixed action from the one prescribed by $(\widehat{\mu}_{\Psi}^{low})_{\otimes}$ when the product state is $\left< Q_j, s_{i} \right>$, as the maximum satisfaction probability attainable when applying a different action is upper bounded by $ \max_{\ell \not = \ell,max} \widehat{p}_{\ell}$. Therefore, the smaller $\epsilon_{\left< Q_j, s_{i} \right>}$ is, the more certain we are that $(\widehat{\mu}_{\Psi}^{low})_{\otimes}$ is close to the best memoryless (in the product) policy for all states in $Q_j$ when the automaton state is $s_{i}$.}

\textcolor{black}{Furthermore, the bounds computed by the value iteration algorithm can additionally be used to show that certain actions are \textit{suboptimal} or \textit{optimal} at a given state of a product BMDP $\mathcal{B} \otimes \mathcal{A}$ and, by extension, that the modes represented by these actions are suboptimal or optimal for some continuous states of the abstracted system for policies that are memoryless in the product. By comparing these bounds for all actions in an action space of a given state of the product BMDP $\mathcal{B} \otimes \mathcal{A}$, some of these actions may appear to surely perform worse or better than others at that particular state, as illustrated in the example below.}

\begin{example}
Consider a state $\left< Q_{j}, s_{i} \right>$ of the product BMDP $\mathcal{B} \otimes \mathcal{A}$ with a set of actions $A(\left< Q_{j}, s_{i} \right>) = \{a_{1}, a_{2}, a_{3} \}$, and $(\widehat{\mu}_{\Psi}^{low})_{\otimes}(\left< Q_{j}, s_{i} \right>) = a_{1}$.  Suppose the probabilities of reaching an accepting BSCC from $\left< Q_{j}, s_{i} \right>$ under all 3 actions are described by the following intervals:
\begin{itemize}
\item $(I_{\left< Q_{j}, s_{i} \right>})_{a_{1}} = [0.5, 0.8], $
\item $(I_{\left< Q_{j}, s_{i} \right>})_{a_{2}} = [0.0, 0.7], $
\item $(I_{\left< Q_{j}, s_{i} \right>})_{a_{3}} = [0.0, 0.45], $
\end{itemize}
\noindent where the lower bounds correspond to a lower bound on the maximum lower bound probability of reaching an accepting BSCC from state $\left< Q_{j}, s_{i} \right>$ achievable over all memoryless policies of $\mathcal{B} \otimes \mathcal{A}$ choosing the corresponding action at state $\left< Q_{j}, s_{i} \right>$, and the upper bounds correspond to an upper bound on the maximum upper bound probability of reaching an accepting BSCC from state $\left< Q_{j}, s_{i} \right>$ achievable over all memoryless policies of $\mathcal{B} \otimes \mathcal{A}$ choosing the corresponding action at state $\left< Q_{j}, s_{i} \right>$.

Although action $a_{1}$ maximizes the lower bound probability of reaching an accepting BSCC at $0.5$, it appears that some continuous states of $Q_{j}$ could potentially produce a higher probability --- up to $0.7$ --- of reaching an accepting BSCC under action $a_{2}$, since a non-deterministic scenario of the product BMDP allows for this probability to occur under some policy choosing $a_{2}$. However, under no memoryless policy and adversary can action $a_{3}$ generate a higher probability of reaching an accepting BSCC than action $a_{1}$, since $0.45 < 0.5$, and can therefore be discarded. 

\textcolor{black}{In spite of action $a_{3}$ being removed}, the suboptimality factor of $\left< Q_{j}, s_{i} \right>$ with respect to $(\widehat{\mu}_{\Psi}^{low})_{\otimes}$ in this case is $ \epsilon_{\left< Q_j, s_{i} \right>}  = 0.7 - 0.5 = 0.2$, \textcolor{black}{as there still exists an action achieving a higher upper bound probability of reaching an accepting BSCC, namely $a_{2}$ with $0.7$, than the lower bound probability of reaching an accepting BSCC under the lower bound maximizing action, namely $a_{1}$ with $0.5$.}
\end{example}\mbox{}

\begin{definition}[Optimal/Suboptimal Action]
\label{OptActDef}
Consider a state $\left< Q_{j}, s_{i} \right>$ of a product BMDP $\mathcal{B} \otimes \mathcal{A}$ with a set of actions $A(\left< Q_{j}, s_{i} \right>)$. Let us denote by $\widecheck{p}_{\ell}$ a lower bound on the maximum lower bound probability of reaching an accepting BSCC from $\left< Q_{j}, s_{i} \right>$ achievable over all memoryless policies of $\mathcal{B} \otimes \mathcal{A}$ choosing action $a_{\ell} \in A(\left< Q_{j}, s_{i} \right>)$ at state $\left< Q_{j}, s_{i} \right>$, and by $\widehat{p}_{\ell}$ an upper bound on the maximum upper bound probability of reaching an accepting BSCC from $\left< Q_{j}, s_{i} \right>$ achievable over all memoryless policies of $\mathcal{B} \otimes \mathcal{A}$ choosing action $a_{\ell} $ at state $\left< Q_{j}, s_{i} \right>$. An action $a_{\ell}$ is said to be \emph{suboptimal} for state $\left< Q_{j}, s_{i} \right>$ with respect to $A(\left< Q_{j}, s_{i} \right>)$ if there exists an action $a_{k} \in A(\left< Q_{j}, s_{i} \right>)$, $k \not = \ell$, such that $\widehat{p}_{\ell} < \widecheck{p}_{k}$. An action $a_{\ell}$ is said to be \emph{optimal} for state $\left< Q_{j}, s_{i} \right>$ with respect to $A(\left< Q_{j}, s_{i} \right>)$ if, for all $a_{k} \in A(\left< Q_{j}, s_{i} \right>)$, $k \not = \ell$,  $\widecheck{p}_{\ell} \geq \widehat{p}_{k}$.
\end{definition}\mbox{}

\begin{definition}[Optimal/Suboptimal Mode]
Let $\pi = x[0]x[1]x[2] \ ... \ x[k]$ be any finite path of \eqref{eq1} such that the word $L(x[0])L(x[1])L(x[2]) \ ... \ L(x[k])$ produces a run $s[0]s[1]s[2] \ldots s[k]$ in automaton $\mathcal{A}$ corresponding to property $\Psi$, where $x[k] =: x \in D$ and $s[k] = s_{i} \in S$. Let us denote by $\widecheck{p}_{\ell}$ a lower bound on the maximum (respectively, minimum) probability of an infinite path with prefix $\pi$ to satisfy $\Psi$ in \eqref{eq1} over all policies of \eqref{eq1} choosing mode $a_{\ell} \in A$ for path $\pi$, and by $\widehat{p}_{\ell}$ an upper bound on the maximum (respectively, minimum) probability of an infinite path with prefix $\pi$ to satisfy $\Psi$ in \eqref{eq1} over all policies of \eqref{eq1} choosing mode $a_{\ell} \in A$ for path $\pi$. When the objective is to maximize (respectively, minimize) the probability of satisfying $\Psi$, a mode $a_{\ell}$ is said to be \emph{suboptimal} for state $x$ with respect to automaton state $s_{i}$ and the set of modes $A$ if there exists a mode $a_{k} \in A$, $k \not = \ell$, such that $\widehat{p}_{\ell} < \widecheck{p}_{k}$ (respectively, $\widecheck{p}_{\ell} > \widehat{p}_{k}$). A mode $a_{\ell}$ is said to be \emph{optimal} for state $x$ with respect to automaton state $s_{i}$ and the set of modes $A$ if, for all $a_{k} \in A$, $k \not = \ell$,  $\widehat{p}_{k} \leq \widecheck{p}_{\ell}$ (respectively, $\widehat{p}_{\ell} \leq \widecheck{p}_{k}$).
\end{definition}\mbox{}

\noindent If the set of actions $A(\left< Q_{j}, s_{i} \right>)$ of state $\left< Q_{j}, s_{i} \right>$ contains an optimal action, then the suboptimality factor $\epsilon_{\left< Q_{j}, s_{i} \right>}$ is set to 0.\\

\subsubsection{REFINEMENT PROCEDURE}
\label{SubSubRef}

Now that a quantitative measure for the \textcolor{black}{quality} of the computed switching policy has been introduced, our next objective is to design a domain partition refinement scheme to address Subproblem 1.2 and achieve a user-defined level of optimality. In order to mitigate the state-space explosion phenomenon, the refinement algorithm should specifically target the states causing the most uncertainty in the domain partition. 

We define the \textit{greatest suboptimality factor} $\epsilon_{max}$ as
\begin{align}
\epsilon_{max} = \max_{\left<Q_j , s_{i} \right> \in (Q \times S)} \; \epsilon_{\left<Q_j , s_{i} \right>}
\label{maxsub}
\end{align}
\noindent  which can be used as a natural precision criterion for a given domain partition $P$. A low factor $\epsilon_{max}$ ensures that no state in the original system is poorly controlled under the switching policy computed in the BMDP abstraction arising from $P$. Looser notions of optimality, such as \textit{the average suboptimality factor} or \textit{the fraction of states} below a fixed optimality threshold, are less sensitive to outliers and can alternatively be considered. We denote the desired suboptimality target by $\epsilon_{thr}$. Note that a target $\epsilon_{thr}$ equal to $0$ requires to find an optimal action for all states in $\mathcal{B} \otimes \mathcal{A}$ in the case of maximization \textcolor{black}{or in $\mathcal{B} \otimes \overline{\mathcal{A}}$ for the case of minimization}.

Formally, a partition $P'$ is a refinement of a coarser partition $P$ if all states in $P$ is equal to the union of a set of states in $P'$. In the general case, abstractions constructed from a refinement $P'$ of $P$ will exhibit a lesser degree of non-determinism than abstractions constructed from $P$, allowing for the computation of \textcolor{black}{higher-quality} controllers with respect to the abstracted system. \\

\begin{definition}[Partition Refinement]
A partition $P'$ is a refinement of a partition $P$ if, for all states $Q_{j} \in P$, there exists a set of states $\{ Q^{k}_{j'}\}_{k = 0}^{m_{j}}$ in $P'$ such that $Q_{j} = \cup_{k=0}^{m_j} Q^{k}_{j'}$.
\end{definition}\mbox{}

\noindent The proposed refinement procedure to achieve a target precision $\epsilon_{thr}$ is inspired by our technique in \cite{dutreix2020specification} where refinement was conducted for the purpose of verification in an IMC \textcolor{black}{and whose main features are extended to the synthesis problem at hand}. This \textcolor{black}{new} procedure is based on a heuristical scoring of the states in a partition $P$ which highlights the regions of the state-space causing the most uncertainty with respect to the specification of interest and the set of actions at hand. Specifically, this score aims to capture how differently a partition state behaves between the extreme cases induced by the two maximizing (or minimizing) policies previously discussed, as well as how much this state influences other states which are known to be suboptimaly controlled. 

Our scoring algorithm is presented in Algorithm \ref{ScorAlgSyn} and is summarized as follows: first, we take as input  a ``best-case" product MC $(\mathcal{M}_{\otimes}^{\mathcal{A}})_{u}$ and a ``worst-case" product MC $(\mathcal{M}_{\otimes}^{\mathcal{A}})_{l}$. For the case of maximization, the worst-case product MC $(\mathcal{M}_{\otimes}^{\mathcal{A}})_{l}$ is \textcolor{black}{a} worst-case product MC induced by the IMC $(\mathcal{B} \otimes \mathcal{A})[(\widehat{\mu}_{\Psi}^{low})_{\otimes}]$ with respect to the objective of reaching an accepting BSCC, while the best-case product MC $(\mathcal{M}_{\otimes}^{\mathcal{A}})_{u}$ is \textcolor{black}{a} best-case product MC induced by the IMC $(\mathcal{B} \otimes \mathcal{A})[(\widehat{\mu}_{\Psi}^{up})_{\otimes}]$. Similarly, for the case of minimization, the worst-case product MC $(\mathcal{M}_{\otimes}^{\mathcal{A}})_{l}$ is \textcolor{black}{a} worst-case product MC induced by the IMC \textcolor{black}{$(\mathcal{B} \otimes \overline{\mathcal{A}})[(\widecheck{\mu}_{\Psi}^{up})_{\otimes}]$} with respect to the objective of reaching an accepting BSCC, while the best-case product MC $(\mathcal{M}_{\otimes}^{\mathcal{A}})_{u}$ is \textcolor{black}{a} best-case product MC induced by the IMC \textcolor{black}{$(\mathcal{B} \otimes \overline{\mathcal{A}})[\widecheck{\mu}_{\Psi}^{low}]$}. Again, the aforementioned MCs are automatically constructed when applying the \textcolor{black}{reachability} value iteration algorithm used \textcolor{black}{in Algorithms \ref{AlgPerBSCCW} and  \ref{AlgPerWin}} and for designing the two maximizing (or minimizing) policies.  

Next, for all state $\left< Q_{j}, s_{i} \right>$ of the product BMDP $\mathcal{B} \otimes \mathcal{A}$ \textcolor{black}{(or $\mathcal{B} \otimes \overline{\mathcal{A}}$ )} whose suboptimality factor is greater than the target $\epsilon_{thr}$, we compute the probability $p_{\left< j, i \right> \rightarrow \left< j', i' \right>}$ of reaching any state $\langle Q_{j'}, s_{i'} \rangle$ from $\left< Q_{j}, s_{i} \right>$  in the MC $(\mathcal{M}_{\otimes}^{\mathcal{A}})_{u}$ on line 7 using the results in \cite{katoen2013model}. Then, for all states $\langle Q_{j'}, s_{i'} \rangle$ of the product BMDP that do not belong to a permanent component (as these do not require refinement), the quantity $p_{\left< j, i \right> \rightarrow \left< j', i' \right>}  \cdot ||T^{u}_{\left< j', i' \right>}  - T^{\ell}_{ \left< j', i' \right>} ||_{2}$ is added to the score $\sigma_{j'}$ of the partition state $Q_{j'}$ on \textcolor{black}{line 10}, where $T^{u}_{\left< j', i' \right>}$ and $T^{\ell}_{ \left< j', i' \right>} $ are the rows corresponding to state $\langle Q_{j'}, s_{i'} \rangle$ in the transition matrices of $(\mathcal{M}_{\otimes}^{\mathcal{A}})_{u}$ and $(\mathcal{M}_{\otimes}^{\mathcal{A}})_{l}$ respectively. The term $||T^{u}_{\left< j', i' \right>}  - T^{\ell}_{ \left< j', i' \right>} ||_{2}$ aims to capture how differently state $\langle Q_{j'}, s_{i'} \rangle$ behaves in the two extreme MCs, while $p_{\left< j, i \right> \rightarrow \left< j', i' \right>}$ is a term associated with how much state $\langle Q_{j'}, s_{i'} \rangle$ affects state $\langle Q_{j}, s_{i} \rangle$.  Finally, from line 10 to 13, we additionally increment the score of states which have the potential of changing the qualitative connectivity structure of the ``best" and ''worst" case scenarios. These states are those which belong to a BSCC that is present in one of the scenarios and not in the other and have the potential of confirming or invalidating the existence of these BSCCs, that is, states which have an outgoing transition with a zero lower bound and a non-zero upper bound for at least one available \textcolor{black}{(non-suboptimal)} control action.

Once a score is attributed to each state of $P$ via Algorithm \ref{ScorAlgSyn}, states with a score above a user-defined threshold are refined to generate a finer partition $P'$. A new switching policy is computed in a BMDP abstraction constructed from $P'$, and more refinement steps are subsequently applied if necessary. The procedure terminates once the optimality factor $\epsilon_{max}$ becomes less than the target $\epsilon_{thr}$.

\textcolor{black}{It should be noted that a product IMC generally does not induce a unique worst-case and best-case MC, but rather induces sets of possible worst-case and best-case MCs yielding the same probabilities of reaching an accepting BSCC from all states \cite{dutreix2020specification}. Therefore, the choice of inputs for Algorithm \ref{ScorAlgSyn} may not be unique. As previously discussed, we choose to input the MCs computed in the process of designing the control policies for the BMDP. Although selecting other MCs is possible, we claim that the design of Algorithm \ref{ScorAlgSyn} renders the effect of choosing other input MCs negligible in the long-term behavior of the refinement-based synthesis algorithm in all but pathological cases. The reasoning behind this claim is that a lot of discrepancies between different worst-case (or best-case) MCs occur in the transitions within permanent winning or losing components which belong to the set $G$ defined in Line 5 of Algorithm \ref{ScorAlgSyn}, as the existence of these components depend on the qualitative structure of the IMC and not on the exact transition values, and have no influence on the computation of the refinement scores. Other large discrepancies between different such MCs may be found in the potential BSCCs stored in set $\textcolor{black}{R}$ in Line 4. However, the relative difference captured by the term $||T^{u}_{\left< j', i' \right>}  - T^{\ell}_{ \left< j', i' \right>} ||_{2}$ at such states is likely to be similar regardless, causing only a minor variation in refinement scores for two different input MCs, and the set $\textcolor{black}{R}$ may quickly become empty after a few iterations of the refinement algorithm as the states causing this set to exist, namely states without zero lower bound and non-zero upper bound, are heavily targeted in Line 12 to 14. Finally, different transitions between two worst-case (or best-case) MCs can be found outside of the aforementioned sets, but this scenario is improbable for abstractions computed from dynamical systems with continuous state-spaces. Indeed, this would require for at least two states outside these sets to have the exact same probability of reaching an accepting or a non-accepting BSCC in the extremal assignments of the transition probabilities, which is highly unlikely when using transition bounds derived from integrals over continuous sets. Such a scenario may be encountered on coarse abstractions with very few states that are all bound to be refined no matter which best or worst-case MC is chosen, and finding such states with equal reachability probabilities in high-dimensional MCs would be implausible or an isolated event with little impact on the refinement algorithm.}

\begin{algorithm}[H]
\caption{Refinement Scoring Algorithm} 
\begin{algorithmic}[1]
\STATE \textbf{Input}: Product BMDP $\mathcal{B} \otimes \mathcal{A}$, best-case product MC $(\mathcal{M}_{\otimes}^{\mathcal{A}})_{u}$, worst-case product MC $(\mathcal{M}_{\otimes}^{\mathcal{A}})_{l}$, threshold suboptimality factor $\epsilon_{thr}$, suboptimality factors $\epsilon_{\left< Q_{j}, s_{i} \right>}$ for all states $\left< Q_{j}, s_{i} \right>$ of $\mathcal{B} \otimes \mathcal{A}$
\STATE \textbf{Output}: Refinement scores $\sigma = \left[\sigma_{0}, \sigma_{1}, \ldots, \sigma_{|Q|-1} \right]$ for all states of partition $P$
\STATE \textbf{Initialize}: $\sigma = \left[\sigma_{0}, \sigma_{1}, \ldots, \sigma_{|Q|-1} \right]$ where $\sigma_{i} = 0$
\STATE In $\textcolor{black}{R}$, list all states of $\mathcal{B} \otimes \mathcal{A}$ belonging to a BSCC that exists in $(\mathcal{M}_{\otimes}^{\mathcal{A}})_{u}$ and not in $(\mathcal{M}_{\otimes}^{\mathcal{A}})_{l}$, or vice-versa
\STATE In $G$, list all states of $\mathcal{B} \otimes \mathcal{A}$ with a probability of reaching an accepting BSCC of 0 in both $(\mathcal{M}_{\otimes}^{\mathcal{A}})_{u}$ and $(\mathcal{M}_{\otimes}^{\mathcal{A}})_{l}$ or of 1 in both $(\mathcal{M}_{\otimes}^{\mathcal{A}})_{u}$ and $(\mathcal{M}_{\otimes}^{\mathcal{A}})_{l}$
\FOR{$\left<Q_{j}, s_{i} \right> \in \mathcal{B} \otimes \mathcal{A}$}
\IF{$\epsilon_{\left< Q_{j}, s_{i} \right>} \geq \epsilon_{thr}$}
\STATE Compute the probability $p_{\left< j, i \right> \rightarrow \left< j', i' \right>}$ of reaching  $\langle Q_{j'}, s_{i'} \rangle$ from $\left<Q_{j}, s_{i} \right>$ in $(\mathcal{M}_{\otimes}^{\mathcal{A}})_{u}$, for all $\langle Q_{j'}, s_{i'} \rangle \in \mathcal{B} \otimes \mathcal{A}$, using the technique in \cite{katoen2013model}
\FOR{$\langle Q_{j'}, s_{i'} \rangle \in \mathcal{B} \otimes \mathcal{A}$ such that $\langle Q_{j'}, s_{i'} \rangle \not \in G$}
\STATE $\sigma_{j'} = \sigma_{j'} + p_{\left< j, i \right> \rightarrow \left< j', i' \right>}  \cdot ||T^{u}_{\left< j', i' \right>}  - T^{\ell}_{ \left< j', i' \right>} ||_{2}$, where $T^{u}_{\left< j', i' \right>}$ and $T^{\ell}_{ \left< j', i' \right>} $ are the rows corresponding to state $\langle Q_{j'}, s_{i'} \rangle$ in the transition matrices of $(\mathcal{M}_{\otimes}^{\mathcal{A}})_{u}$ and $(\mathcal{M}_{\otimes}^{\mathcal{A}})_{l}$ respectively
\IF{ $\langle Q_{j'}, s_{i'} \rangle \in \textcolor{black}{R}$}
\FOR{$\langle Q_{j''}, s_{i''} \rangle \in \mathcal{B} \otimes \mathcal{A}$ such that $\langle Q_{j'}, s_{i'} \rangle$ and $\langle Q_{j''}, s_{i''} \rangle$ belong to a common  BSCC in $(\mathcal{M}_{\otimes}^{\mathcal{A}})_{u}$ or $(\mathcal{M}_{\otimes}^{\mathcal{A}})_{l}$}
\IF{$\langle Q_{j''}, s_{i''} \rangle$ has an outgoing transition with a zero lower bound and a non-zero upper bound for at least one available \textcolor{black}{(non-suboptimal)} control action}
\STATE $\sigma_{j''} = \sigma_{j''} + p_{\left< j, i \right> \rightarrow \left< j', i' \right>}  \cdot ||T^{u}_{\left< j', i' \right>}  - T^{\ell}_{ \left< j', i' \right>} ||_{2}$
\ENDIF
\ENDFOR
\ENDIF
\ENDFOR
\ENDIF
\ENDFOR
\end{algorithmic}
\label{ScorAlgSyn}
\end{algorithm}

The fact that a partition $P'$ is a refinement of a partition $P$ allows us to make inferences about the properties of the states in $P'$ from the synthesis computations previously performed on the states in $P$. First, as discussed in the previous subsection, not all actions allowed in $P$ may need to be considered in the refined partition $P'$ when computing a new switching policy. Indeed, given a partition $Q_{j} = \cup_{k=0}^{m_j} Q^{k}_{j'}$ of a state $Q_{j} \in P$, it follows that a certainly suboptimal action with respect to the action set of a product state $\left<Q_{j}, s_{i} \right>$ will also be suboptimal with respect to all $\left<Q^{k}_{j'} , s_{i} \right>$ and can be eliminated in the synthesis procedure applied to $P'$.\\

\begin{proposition}
\label{PropSubop}
Let $\mathcal{B}$ be a BMDP abstraction constructed from a partition $P$ of the domain $D$ of \eqref{eq1}, $\mathcal{A}$ be a DRA corresponding to specification $\Psi$, and $P'$ be a refinement of $P$. Let $\{ Q^{k}_{j'}\}_{k = 0}^{m_{j}} \subseteq P'$, be a partition of state $Q_{j} \in P$. If action $a \in A(\left<Q_{j}, s_{i} \right>)$ is suboptimal for state $\left<Q_{j}, s_{i} \right>$ with respect to $A(\left<Q_{j}, s_{i} \right>)$ in the product BMDP $\mathcal{B} \otimes \mathcal{A}$, then the mode of \eqref{eq1} represented by action $a$ is suboptimal for all $x \in Q_{j}$ with respect to the automaton state $s_{i}$ and the set of available modes, and, in particular, for all $x \in Q^{k}_{j'}$, $k = 0, 1 \ldots, m_{j}$.
\end{proposition}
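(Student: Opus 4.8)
The plan is to reduce the claim to the observation that the quantities $\widehat{p}_\ell$ and $\widecheck{p}_\ell$ used in Definition \ref{OptActDef} to certify suboptimality of an action in the product BMDP $\mathcal{B}\otimes\mathcal{A}$ are themselves admissible values of the bounds appearing in the definition of a suboptimal \emph{mode}, for every continuous state $x\in Q_j$ together with the automaton state $s_i$. Indeed, by Definition \ref{OptActDef} there is an action $a'\in A(\langle Q_j,s_i\rangle)$, $a'\neq a$, whose associated lower bound $\widecheck{p}_{a'}$ strictly exceeds the upper bound $\widehat{p}_{a}$ associated with $a$ in $\mathcal{B}\otimes\mathcal{A}$; if these two numbers are admissible as the mode-sense bounds for the mode represented by $a$ and the mode represented by $a'$ at $x$ (with automaton state $s_i$), then $\widehat{p}_{a}<\widecheck{p}_{a'}$ is exactly a witness that the mode represented by $a$ is suboptimal for $x$ with respect to $s_i$ and the set of modes. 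Since $Q^k_{j'}\subseteq Q_j$ for every $k$, the same conclusion then holds in particular for every $x\in Q^k_{j'}$, $k=0,\dots,m_j$, so the refined-partition part of the statement requires no extra argument.

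What remains is to verify, for every $x\in Q_j$ and mode $a_\ell$, the two inequalities
\[
\sup_{\mu\in\mathcal{U},\,\mu(\pi)=a_\ell}(p^x_{\Psi})^{s_i}_{\mu}\ \le\ \widehat{p}_\ell,\qquad
\sup_{\mu\in\mathcal{U},\,\mu(\pi)=a_\ell}(p^x_{\Psi})^{s_i}_{\mu}\ \ge\ \widecheck{p}_\ell,
\]
where $(p^x_{\Psi})^{s_i}_{\mu}$ abbreviates the probability that an infinite path of \eqref{eq1} with a prefix $\pi$ ending at $x$ and driving $\mathcal{A}$ to $s_i$ satisfies $\Psi$ under $\mu$, and $\widehat{p}_\ell,\widecheck{p}_\ell$ are the product-BMDP bounds of Definition \ref{OptActDef}. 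The lower inequality is the easy one: take a memoryless policy $\mu^\ell_\otimes$ of $\mathcal{B}\otimes\mathcal{A}$ with $\mu^\ell_\otimes(\langle Q_j,s_i\rangle)=a_\ell$ that attains the maximum which $\widecheck{p}_\ell$ lower-bounds, lift it to the continuous policy applying at each step the mode that $\mu^\ell_\otimes$ assigns to the current $\langle Q,s\rangle$ pair (so it chooses $a_\ell$ for $\pi$), and invoke soundness of the BMDP abstraction — the interval containment of \cite{lahijanian2015formal} recalled in Section 3, applied with $\mathcal{A}$ initialized at $s_i$ — which gives a satisfaction probability from $x$ no smaller than $\widecheck{\mathcal{P}}_{(\mathcal{B}\otimes\mathcal{A})[\mu^\ell_\otimes]}(\langle Q_j,s_i\rangle\models\Diamond R)\ge\widecheck{p}_\ell$.

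The upper inequality is the main obstacle, because it must hold against \emph{all} continuous-system policies, including history-dependent ones that could a priori exploit more information than any memoryless policy on $\mathcal{B}\otimes\mathcal{A}$. I would prove it by a Bellman-comparison argument: writing $V^\ast(x',s')=\sup_{\mu\in\mathcal{U}}(p^{x'}_{\Psi})^{s'}_{\mu}$ for the optimal value of \eqref{eq1} and $\overline{W}(\langle Q,s'\rangle)=\max_{\mu_\otimes\in(\mathcal{U}_{\otimes}^{\mathcal{A}})_{mem}}\widehat{\mathcal{P}}_{(\mathcal{B}\otimes\mathcal{A})[\mu_\otimes]}(\langle Q,s'\rangle\models\Diamond R)$ for the optimal best-case reachability value of the product BMDP, one shows $V^\ast(x',s')\le\overline{W}(\langle Q,s'\rangle)$ for every $x'\in Q$, $s'\in S$. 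This follows because $V^\ast$ satisfies the continuous Bellman optimality equation, $\overline{W}$ is the least fixed point of the BMDP best-case reachability operator — for which the optimal controller policy is memoryless \cite{haddad2018interval} and the optimal adversary is memoryless — and the defining inequalities \eqref{eq:3}--\eqref{eq:3-2} of a BMDP abstraction make one application of the continuous operator to functions supported on $Q$ pointwise dominated by one application of the BMDP best-case operator to the corresponding functions on $\langle Q,\cdot\rangle$; iterating both monotone operators from $0$ and passing to the limit gives the domination. Conditioning on the first action being $a_\ell$ and integrating the one-step continuous transition out of $x$ (again bounded via \eqref{eq:3-2}) then yields $\sup_{\mu:\mu(\pi)=a_\ell}(p^x_{\Psi})^{s_i}_{\mu}\le\max_{\mu_\otimes:\mu_\otimes(\langle Q_j,s_i\rangle)=a_\ell}\widehat{\mathcal{P}}_{(\mathcal{B}\otimes\mathcal{A})[\mu_\otimes]}(\langle Q_j,s_i\rangle\models\Diamond R)\le\widehat{p}_\ell$, as required. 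The delicate points to get right are the bookkeeping of the automaton memory along prefixes — handled by the product construction together with the indexing conventions of the problem formulation — and the fact that it genuinely suffices to compare against memoryless product policies and adversaries, which is where the cited optimality results for reachability in BMDPs and IMCs enter.
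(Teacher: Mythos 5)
Your proposal is correct and follows essentially the same route as the paper: reduce mode-suboptimality to the inequality $\widehat{p}_{a} < \widecheck{p}_{a'}$ certified in the product BMDP by Definition \ref{OptActDef}, transfer it to every $x \in Q_{j}$ via soundness of the BMDP abstraction, and obtain the refined-partition claim from $Q^{k}_{j'} \subseteq Q_{j}$. The only difference is that the paper simply invokes the interval-containment property of the abstraction (citing \cite{lahijanian2015formal}) where you sketch a Bellman-comparison justification of it, so your handling of the upper inequality is more explicit, but the underlying argument is the same.
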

\begin{proof}
The proof assumes the objective of synthesis to be the maximization of the probability of satisfying $\Psi$. We denote by $\widehat{p}$ an upper bound on the maximum upper bound probability of reaching an accepting BSCC in $\mathcal{B} \otimes \mathcal{A}$ from $\left<Q_{j}, s_{i} \right>$ achievable over all memoryless policies choosing action $a \in A(\left<Q_{j}, s_{i} \right>)$ at state $\left<Q_{j}, s_{i} \right>$. The assumption that $a$ is suboptimal with respect to $A(\left<Q_{j}, s_{i} \right>)$ in $\mathcal{B} \otimes \mathcal{A}$ implies that there exists an action $a' \in A(\left<Q_{j}, s_{i} \right>)$ with a known a lower bound $\widecheck{p}'$ on the maximum lower bound probability of reaching an accepting BSCC in $\mathcal{B} \otimes \mathcal{A}$ from $\left<Q_{j}, s_{i} \right>$ achievable over all memoryless policies choosing action $a' \in A(\left<Q_{j}, s_{i} \right>)$ and such that $\widehat{p} < \widecheck{p}'$. Therefore, by virtue of $\mathcal{B}$ being an abstraction of \eqref{eq1}, $\forall x \in Q_{j}$, it follows that $\widehat{p}_{mode} < \widecheck{p}'_{mode}$, where $\widehat{p}_{mode}$ and $\widecheck{p}'_{mode}$ are a lower bound and an upper bound on the maximum probability that an infinite path of \eqref{eq1} with prefix $\pi = x[0]x[1]x[2] \ ... \ x[k]$, $x[k] =: x$, such that the word $L(x[0])L(x[1])L(x[2]) \ ... \ L(x[k])$ produces a run $s[0]s[1]s[2] \ldots s[k]$, with $s[k] = s_{i}$, satisfies $\Psi$ over all the (memoryless in the product) policies of \eqref{eq1} choosing the modes represented by actions $a$ and $a'$ respectively at path $\pi$. It follows that the mode represented by action $a$ is suboptimal for all $x \in Q_{j}$ with respect to automaton state $s_{i}$ and the set of available modes. In particular, this statement is true for all $x \in Q^{k}_{j'}$, $k = 0, 1 \ldots, m_{j}$, since $Q^{k}_{j'} \subseteq Q_{j}$, proving the proposition. Symmetric arguments prove this proposition \textcolor{black}{in the case of minimization}.
\end{proof}\mbox{}

Furthermore, out of the remaining actions, only a subset of them may be retained for the qualitative problems of constructing the largest and permanent components in $P'$ using Algorithms \ref{AlgPerBSCCW} \textcolor{black}{and \ref{AlgPerWin}}. Indeed, all actions in $A(\left<Q_{j}, s_{i} \right>)$ which were discarded during the graph search for $(WC)_{L}^{G}$ could not, under any policy and adversary, generate a winning component in $\mathcal{B} \otimes \mathcal{A}$. Therefore, based on this fact, we can define the set of actions $A_{qual}(\left<Q^{k}_{j'} , s_{i} \right>) \subseteq A(\left<Q^{k}_{j'} , s_{i} \right>)$ used specifically for the component graph search and containing all actions which, at state $\left<Q_{j}, s_{i} \right>$, allowed for the existence of $(WC)_{L}^{G}$ with respect to the partition $P$.\\

\begin{proposition} 
\label{PropLargest}
Let $\mathcal{B}$ be a BMDP abstraction constructed from a partition $P$ of the domain $D$ of \eqref{eq1}, $\mathcal{A}$ be a DRA corresponding to specification $\Psi$, \textcolor{black}{$\overline{\mathcal{A}}$ be a DRA corresponding to complement specification $\overline{\Psi}$}, and $P'$ be refinement of a partition $P$. If state $\left<Q_{j} , s_{i} \right>$ is not a member of $(WC)_{L}^{G}$ in the product BMDP $\mathcal{B} \otimes \mathcal{A}$ (respectively, \textcolor{black}{$\mathcal{B} \otimes \overline{\mathcal{A}}$}) under any memoryless policy $\mu$ of $\mathcal{B} \otimes \mathcal{A}$ (respectively, \textcolor{black}{$\mathcal{B} \otimes \overline{\mathcal{A}}$}) such that $\mu (\left<Q_{j} , s_{i} \right>) = a \in  A(\left<Q_{j} , s_{i} \right>)$, then, for all $x \in Q_{j}$, the probability that an infinite path with prefix $\pi = x[0]x[1]x[2] \ ... \ x[k]$, $x[k] =: x$, such that the word $L(x[0])L(x[1])L(x[2]) \ ... \ L(x[k])$ produces a run $s[0]s[1]s[2] \ldots s[k]$, with $s[k] = s_{i}$ in automaton $\mathcal{A}$, satisfies $\Psi$ is strictly less than 1 (respectively, strictly greater than 0) for all policies of \eqref{eq1} choosing the mode represented by action $a$ at state $x$. In particular, this statement is true for all $x \in Q^{k}_{j'}$, $k = 0, 1 \ldots, m_{j}$, where $\{ Q^{k}_{j'}\}_{k = 0}^{m_{j}}$, $Q^{k}_{j'} \in P'$, is a partition of state $Q_{j} \in P$.
\end{proposition}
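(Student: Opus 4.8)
The plan is to treat the maximization objective explicitly; the minimization case then follows by symmetric reasoning together with Fact \ref{FactComp}, replacing $\mathcal{A}$ by $\overline{\mathcal{A}}$, satisfaction of $\Psi$ by satisfaction of $\overline{\Psi}$, $(WC)_L^G$ by $(\overline{WC})_L^G$, and the conclusion ``strictly less than $1$'' by ``strictly greater than $0$''. So suppose the goal is to maximize the probability of satisfying $\Psi$.

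First I would translate the hypothesis into a statement about induced IMCs. By Lemma \ref{LemmaGrea}, $(WC)_L^G$ contains the largest winning component $(WC)_L$ of $(\mathcal{B}\otimes\mathcal{A})[\mu]$ for every $\mu\in\mathcal{U}_\otimes^\mathcal{A}$, so the assumption that $\langle Q_j,s_i\rangle$ is never a member of $(WC)_L^G$ under a memoryless policy choosing $a$ at $\langle Q_j,s_i\rangle$ yields $\langle Q_j,s_i\rangle\notin(WC)_L$ of $(\mathcal{B}\otimes\mathcal{A})[\mu]$ for every memoryless $\mu$ with $\mu(\langle Q_j,s_i\rangle)=a$. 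Next I would invoke \cite[Theorem 1]{dutreix2020specification}: the upper bound probability of reaching an accepting BSCC from $\langle Q_j,s_i\rangle$ in the IMC $(\mathcal{B}\otimes\mathcal{A})[\mu]$ is attained in the induced product MC whose set of winning components equals $(WC)_L$, and in that MC the set of states reaching an accepting BSCC with probability $1$ is precisely $(WC)_L$; hence $\langle Q_j,s_i\rangle\notin (WC)_L$ forces $\widehat{\mathcal{P}}_{(\mathcal{B}\otimes\mathcal{A})[\mu]}\big(\langle Q_j,s_i\rangle\models\Psi\big)<1$. Since $Act$ is finite there are only finitely many memoryless policies of $\mathcal{B}\otimes\mathcal{A}$, so the maximum of these upper bound probabilities over all memoryless $\mu$ with $\mu(\langle Q_j,s_i\rangle)=a$ is some constant $c<1$.

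The final step mirrors the abstraction argument in the proof of Proposition \ref{PropSubop}. Given any $x\in Q_j$ and any (memoryless-in-the-product) policy of \eqref{eq1} that assigns mode $a$ to the path $\pi$, i.e., at continuous state $x$ with automaton memory $s_i$, the corresponding policy in the product BMDP $\mathcal{B}\otimes\mathcal{A}$ chooses action $a$ at $\langle Q_j,s_i\rangle$; hence, by virtue of $\mathcal{B}\otimes\mathcal{A}$ being an abstraction of \eqref{eq1} via the inequalities \eqref{eq:3}--\eqref{eq:3-2}, the probability that an infinite path of \eqref{eq1} with prefix $\pi$ satisfies $\Psi$ is bounded above by $c<1$. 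Since $Q^k_{j'}\subseteq Q_j$ for every $k$, this bound holds in particular for all $x\in Q^k_{j'}$, $k=0,1,\ldots,m_j$, which gives the ``in particular'' assertion.

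The step I expect to be the main obstacle is the translation of ``$\langle Q_j,s_i\rangle$ not in the largest winning component'' into a strict inequality $<1$ that is moreover uniform over all admissible policies, i.e., bounding away from $1$ the supremum of the satisfaction probabilities over the relevant policies of \eqref{eq1} choosing $a$ for $\pi$. This requires carefully combining the characterization of the upper bound probability via $(WC)_L$ from \cite{dutreix2020specification}, the equivalence between membership in a winning component and reaching an accepting BSCC with probability $1$, and the finiteness of the memoryless policy set; care is needed because an accepting BSCC is not a fixed set of states but varies with the adversary, and because lifting a policy of \eqref{eq1} to one of $\mathcal{B}\otimes\mathcal{A}$ must respect the partition cells.
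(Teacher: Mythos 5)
Your proposal is correct and follows essentially the same route as the paper's proof: translate non-membership in $(WC)_{L}^{G}$ into a strict upper bound $\widehat{p}<1$ on the probability of reaching an accepting BSCC under all memoryless policies choosing $a$, then push this bound down to the continuous states via the abstraction property, note it persists on the refined cells $Q^{k}_{j'}\subseteq Q_{j}$, and handle minimization symmetrically. You simply make explicit what the paper leaves terse — invoking Lemma \ref{LemmaGrea}, the characterization of the upper bound via $(WC)_{L}$ from \cite{dutreix2020specification}, and the finiteness of the memoryless policy set to get a uniform constant $c<1$ — which is a faithful elaboration rather than a different argument.
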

\begin{proof}
The proof assumes the objective of synthesis to be the maximization of the probability of $\Psi$. If state $\left<Q_{j} , s_{i} \right>$ is not a member of $(WC)_{L}^{G}$ under any memoryless policy $\mu$ such that $\mu (\left<Q_{j} , s_{i} \right>) = a$, then it must be true that $\widehat{p} <1$, where $\widehat{p}$ is an upper bound on the probability of $\left<Q_{j} , s_{i} \right>$ to reach an accepting BSCC in $\mathcal{B} \otimes \mathcal{A}$ under all memoryless policies $\mu$ such that $\mu (\left<Q_{j} , s_{i} \right>) = a$. Therefore, by virtue of $\mathcal{B}$ being an abstraction of \eqref{eq1}, it follows that the probability of an infinite path with prefix $\pi = x[0]x[1]x[2] \ ... \ x[k]$, $x[k] =: x$, such that the word $L(x[0])L(x[1])L(x[2]) \ ... \ L(x[k])$ produces a run $s[0]s[1]s[2] \ldots s[k]$, with $s[k] = s_{i}$ in automaton $\mathcal{A}$ to satisfy $\Psi$ is upper bounded by $\widehat{p}$ for all policies of \eqref{eq1} choosing the mode represented by action $a$ for the path $\pi$ and is thus strictly less than 1. In particular, this statement is true for all $x \in Q^{k}_{j'}$, $k = 0, 1 \ldots, m_{j}$, since $Q^{k}_{j'} \subseteq Q_{j}$, proving the proposition. Symmetric arguments prove the proposition with respect to \textcolor{black}{the minimization objective}.  
\end{proof}
\mbox{}

\textcolor{black}{An analogous proposition can be established with respect to the greatest BSCCs $(U)_{L}^{G}$ for Algorithm \ref{AlgPerBSCCW}.}

We also remark that any state $\left<Q_{j}, s_{i} \right>$ belonging to the greatest permanent \textcolor{black}{winning} components $(WC)_{P}^{G}$ of a BMDP abstraction $\mathcal{B} \otimes \mathcal{A}$ constructed from a partition $P$ has to belong \textcolor{black}{to} the greatest permanent components with respect to a refined partition $P'$ if the same control action applied to all $\left<Q_{j}, s_{i} \right> \in (WC)_{P}^{G}$ in the abstraction resulting from $P$ is applied to all their refinement states $\left<Q^{k}_{j'} , s_{i} \right>$. \\

\begin{proposition}
\label{PropCompPer}
Let $\mathcal{B}$ be a BMDP abstraction constructed from a partition $P$ of the domain $D$ of \eqref{eq1}, $\mathcal{A}$ be a DRA corresponding to specification $\Psi$, \textcolor{black}{$\overline{\mathcal{A}}$ be a DRA corresponding to complement specification $\overline{\Psi}$}, and $P'$ be refinement of a partition $P$. A policy $\mu$ of $\mathcal{B}$ induced by a policy in $\mathcal{B} \otimes \mathcal{A}$ (\textcolor{black}{respectively, $\mathcal{B} \otimes \overline{\mathcal{A}}$ in the case of minimization)} generating the greatest permanent winning component $(WC)^{G}_{P}$ of $\mathcal{B} \otimes \mathcal{A}$ (\textcolor{black}{respectively, of $\mathcal{B} \otimes \overline{\mathcal{A}}$)} selects an optimal mode (with the appropriate mode/action correspondence) for all $x \in Q_{j}$ such that $\left<Q_{j}, s_{i} \right> \in (WC)^{G}_{P}$ with respect to the automaton state $s_{i}$ and the set of available modes, and, in particular, for all $x \in Q^{k}_{j'}$, $k = 0, 1 \ldots, m_{j}$, where $\{ Q^{k}_{j'}\}_{k = 0}^{m_{j}}$, $Q^{k}_{j'} \in P'$, is a partition of state $Q_{j} \in P$.
\end{proposition}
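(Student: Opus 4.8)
The plan is to follow the template of the proofs of Propositions~\ref{PropSubop} and~\ref{PropLargest}: first establish a reachability statement about accepting BSCCs inside the product BMDP under the policy at hand, then push it to \eqref{eq1} through the abstraction relationship, arguing the maximization objective explicitly and the minimization objective symmetrically. Fix a policy $\mu_{\otimes}$ of $\mathcal{B}\otimes\mathcal{A}$ generating $(WC)^{G}_{P}$; by Lemma~\ref{LemmaGreaPerm} we may take $\mu_{\otimes}\in\mathcal{U}_{(WC)^{G}_{P}}\subseteq(\mathcal{U}_{\otimes}^{\mathcal{A}})_{mem}$, so $\mu_{\otimes}$ is memoryless, and let $\mu$ be the switching policy of $\mathcal{B}$ it induces. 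Pick any product state $\left<Q_{j},s_{i}\right>\in(WC)^{G}_{P}$ and set $a:=\mu_{\otimes}(\left<Q_{j},s_{i}\right>)$, which, since $Act=A$ for BMDP abstractions, also denotes a mode of \eqref{eq1}. The first step is to observe that, since by Lemma~\ref{LemmaGreaPerm} the permanent winning component of $(\mathcal{B}\otimes\mathcal{A})[\mu_{\otimes}]$ is exactly $(WC)^{G}_{P}$, the state $\left<Q_{j},s_{i}\right>$ is a winning component for every product MC induced by $(\mathcal{B}\otimes\mathcal{A})[\mu_{\otimes}]$, and hence $\widecheck{\mathcal{P}}_{(\mathcal{B}\otimes\mathcal{A})[\mu_{\otimes}]}(\left<Q_{j},s_{i}\right>\models\Diamond R)=1$, where $R$ denotes the (not necessarily fixed) set of accepting-BSCC states.

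The second step transfers this to \eqref{eq1}. Let $\pi=x[0]\ldots x[k]$ be any finite path of \eqref{eq1} with $x[k]=x\in Q_{j}$ whose word produces an automaton run of $\mathcal{A}$ ending at $s_{i}$ (if no such path exists the claim is vacuous for that pair). Because $\mu_{\otimes}$ is memoryless and the Rabin acceptance condition is a tail property, continuing with the induced policy $\mu$ after $\pi$ behaves exactly like running the finite-state IMC $(\mathcal{B}\otimes\mathcal{A})[\mu_{\otimes}]$ from the initial state $\left<Q_{j},s_{i}\right>$, which is admissible as every product state may serve as an initial state of a BMDP abstraction. By virtue of $\mathcal{B}$ being an abstraction of \eqref{eq1}, the probability that an infinite path of \eqref{eq1} with prefix $\pi$ satisfies $\Psi$ under $\mu$ lies in $\big[\,\widecheck{\mathcal{P}}_{(\mathcal{B}\otimes\mathcal{A})[\mu_{\otimes}]}(\left<Q_{j},s_{i}\right>\models\Diamond R),\ \widehat{\mathcal{P}}_{(\mathcal{B}\otimes\mathcal{A})[\mu_{\otimes}]}(\left<Q_{j},s_{i}\right>\models\Diamond R)\,\big]=[1,1]$, hence equals $1$. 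Thus the maximum over all policies of \eqref{eq1} choosing mode $a$ for $\pi$ of the probability of an infinite path with prefix $\pi$ satisfying $\Psi$ is $1$, so $\widecheck{p}_{a}=1$ is a valid lower bound on that maximum; since $\widehat{p}_{k}\le1=\widecheck{p}_{a}$ for every other mode $a_{k}$ (any probability being at most $1$), the definition of an optimal mode is met, and $a$ is optimal for $x$ with respect to the automaton state $s_{i}$ and the set of modes; in particular this holds for all $x\in Q^{k}_{j'}\subseteq Q_{j}$. For the minimization objective one uses $\overline{\mathcal{A}}$, $(\overline{WC})^{G}_{P}$ and accepting BSCCs of $\mathcal{B}\otimes\overline{\mathcal{A}}$: reaching such a BSCC with probability $1$ means satisfying $\overline{\Psi}$ with probability $1$, hence satisfying $\Psi$ with probability $0$ by Fact~\ref{FactComp}, which is the global minimum, so $\widehat{p}_{a}=0$ is a valid upper bound and $\widecheck{p}_{k}\ge0=\widehat{p}_{a}$ for every other mode, making $a$ optimal in the minimization sense.

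I expect the main obstacle to be the careful justification of the ``restart'' argument in the second step: making rigorous that conditioning an infinite path of \eqref{eq1} on the prefix $\pi$ and applying the memoryless product policy is precisely captured by the abstraction inequality for the product BMDP with $\left<Q_{j},s_{i}\right>$ taken as the initial state, together with the observation that acceptance of the full generated word coincides with acceptance of the suffix run started at $s_{i}$ because the Rabin condition depends only on the tail of the run. This is the same mechanism already invoked, without heavy elaboration, in the proofs of Propositions~\ref{PropSubop} and~\ref{PropLargest}, so I would present it at a comparable level of detail rather than reproving it from first principles.
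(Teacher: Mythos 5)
Your proposal is correct and follows essentially the same route as the paper's proof: establish that the policy generating $(WC)^{G}_{P}$ forces the lower bound probability of reaching an accepting BSCC from $\left<Q_{j},s_{i}\right>$ to equal $1$, transfer this to \eqref{eq1} via the abstraction property to conclude the satisfaction probability is $1$ for every $x\in Q_{j}$, note that $1$ trivially dominates any other mode's upper bound (and dually $0$ for minimization via the complement automaton), and pass to the refined states by inclusion. Your additional care with the memoryless restart argument and the tail property of the Rabin condition is a more explicit rendering of what the paper leaves implicit, not a different proof.
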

\begin{proof}
The proof assumes the objective of synthesis to be the maximization of the probability of $\Psi$. A policy $(\mu)_{\otimes}$ generating $(WC)^{G}_{P}$ in $\mathcal{B} \otimes \mathcal{A}$ ensures that $\widecheck{\mathcal{P}}(\left<Q_{j}, s_{i} \right> \models \Diamond (WC)^{G}_{P}) = 1$ for all $\left<Q_{j}, s_{i} \right> \in (WC)^{G}_{P}$. The policy $\mu$ in $\mathcal{B}$ induced by $(\mu)_{\otimes}$ applied to all $x \in Q_{j}$ such that $\left<Q_{j}, s_{i} \right> \in (WC)^{G}_{P}$ when the automaton state is $s_{i}$ with the appropriate mode/action correspondence guarantees that, for all such $x$, the probability of an infinite path with prefix $\pi = x[0]x[1]x[2] \ ... \ x[k]$, $x[k] =: x$, such that the word $L(x[0])L(x[1])L(x[2]) \ ... \ L(x[k])$ produces a run $s[0]s[1]s[2] \ldots s[k]$, with $s[k] = s_{i}$ in automaton $\mathcal{A}$ to satisfy $\Psi$ is equal to 1, by virtue of $\mathcal{B}$ being an abstraction of \eqref{eq1}. Therefore, $\mu$ selects an optimal mode for all such $x$. In particular, this statement is true for all $x \in Q^{k}_{j'}$, $k = 0, 1 \ldots, m_{j}$, since $Q^{k}_{j'} \subseteq Q_{j}$,  proving the proposition. Symmetric arguments prove the proposition \textcolor{black}{with respect to the minimization case}.
\end{proof}\mbox{}

Therefore, by pruning all states which were a member of $(WC)_{P}^{G}$ in an abstraction constructed $P$, since an action engendering a fixed probability of reaching an accepting BSCC equal to 1 is known for such states, we can reduce the effective set of states for which a controller has to be synthesized in the abstraction arising from a refined partition $P'$ after each refinement step.

Finally, additional crucial information can be exploited to tremendously reduce the number of operations performed in a refined partition. For example, in the numerical examples presented further, all states which were shown to be reachable from a given state $Q_{j}$ under some action in partition $P$ are stored in memory, and only these states or their subsets are inspected for computing the transitions from $Q_{j}$ in the abstraction arising from a refined partition $P'$. This is justified by the fact that, if $\widehat{T}(Q_{1}, Q_{2}) = 0$ for any $Q_{1}$ and $Q_{2}$ in partition $P$, then it follows that $\widehat{T}(Q^{k}_{1}, Q^{k}_{2}) = 0$ for any $Q^{k}_{1} \subseteq Q_{1}$ and $Q^{k}_{2} \subseteq Q_{2}$. Finding other structural properties which are transmitted from one partition to its refined versions will be the focus of future research.

This novel iterative approach \textcolor{black}{that} removes suboptimal actions at each refinement step is promising in terms of scalability compared to \textcolor{black}{existing methods. For instance, prominant tools such as StocHy \cite{cauchi2019stochy} and FAUST$^{2}$ \cite{soudjani2014faust} employ a single gridding approach where a unique (often conservative) partition of the domain guaranteeing a target abstraction error is created and used for computing a switching policy; in this case, all possible actions allowed by the original abstracted system have to be considered on possibly very fine partition grids, causing intractability issues when the action space is large. Here, the action space to be analyzed for the refined states is likely to reduce in size as the partition is progressively rendered finer. Therefore, the number of computations performed to synthesize a switching policy for an equivalent level of abstraction fineness is reduced compared to the aforementioned tools. Furthermore, the continuous domain grid in StocHy and FAUST$^{2}$ depends primarily on the properties of the abstracted system whereas our refinement is specification-guided, i.e, tailored to the specification under consideration only, diminishing the generation of unnecessary discrete states. The iterative refinement technique proposed in \cite{lahijanian2015formal} does not implement an action removal scheme and therefore suffers from the same tractability issues discussed above. In addition, structural properties inherited from coarser partitions are not discussed  and leveraged to lessen the computational burden of synthesis as done in our algorithm. Also, the termination criterion of the algorithm in \cite{lahijanian2015formal} is a low abstraction error under the lower bound maximizing (or upper bound minimizing) policy which, unlike the suboptimality factor introduced in this work, does not directly capture the possible improvement one could achieve by choosing a different policy (which is memoryless in the product construction) on a refined abstraction. Lastly, the selection of states to be refined in \cite{lahijanian2015formal} focuses on one-step transition errors and does not involve the inspection of the overall structure of the abstraction between the two extreme scenarios of the BMDP as in Algorithm \ref{ScorAlgSyn}.}

Our specification-guided, refinement-based synthesis procedure for finite-mode systems is summarized in Algorithm \ref{AlgFiniteMode}. We assume that states selected by the scoring scheme are split in half along their greatest dimension. In this case, the worst-case growth of the \textcolor{black}{BMDP} abstraction throughout this refinement-based synthesis procedure is $\mathcal{O}(|S| \cdot |Act| \cdot 2^{|Q|})$ when every state in the partition is refined. However, the iterative removal of considered actions, coupled with the scoring algorithm targeting only specific regions of the domain, mitigates this exponential growth in practice. \textcolor{black}{The run-time complexity of the sub-components of Algorithm \ref{AlgFiniteMode} is as follows: Algorithm \ref{AlgPerBSCCW} is exponential in $|S| \cdot |Q|$ as the number of SCCs to analyze may grow exponentially in the worst-case; consequently, Algorithm \ref{AlgPerWin}, which calls Algorithm \ref{AlgPerBSCCW}, displays the same run-time complexity; the iterative reachability maximization algorithm on the winning components is polynomial in $|Act| \cdot |S| \cdot |Q|$ \cite{lahijanian2015formal} and Algorithm \ref{ScorAlgSyn}, whose limiting factor is the computation reachability probabilities in MCs, is therefore polynomial in $|S| \cdot |Q|$}.

\begin{algorithm}[H]
\caption{Controller Synthesis for Finite-mode Systems} 
\begin{algorithmic}[1]
\STATE \textbf{Input}: Partition $P_{0}$ of domain $D$ of \eqref{eq1}, $\omega$-regular property $\Psi$ \textcolor{black}{(complement property \textcolor{black}{$\overline{\Psi}$})} and corresponding DRA $\mathcal{A}$ \textcolor{black}{($\overline{\mathcal{A}}$)}, target controller precision $\epsilon_{thr}$ \hspace{-0.6em}
\STATE \textbf{Output}: Maximizing (minimizing) switching policy $\widehat{\mu}_{\Psi}^{low}$ ($\widecheck{\mu}_{\Psi}^{up}$), final partition $P_{fin}$
\STATE \textbf{Initialize}: $\epsilon_{max} := 1$, $i := 0$
\WHILE{$\epsilon_{max} > \epsilon_{thr}$}
\STATE Compute the sets $(WC)_{P}^{G}$ and $(WC)_{L}^{G}$ of the product BMDP $\mathcal{B} \otimes \mathcal{A}$ \textcolor{black}{($\mathcal{B} \otimes \overline{\mathcal{A}}$)} constructed from $P_{i}$ using Algorithms \ref{AlgPerBSCCW} \textcolor{black}{and \ref{AlgPerWin}}
\STATE Compute the policies $\widehat{\mu}_{\Psi}^{low}$ and $\widehat{\mu}_{\Psi}^{up}$ ($\widecheck{\mu}_{\Psi}^{up}$ and $\widecheck{\mu}_{\Psi}^{low}$) of the BMDP $\mathcal{B}$ according to Subsections \ref{SecBMDPSynth} 
\STATE Compute $\epsilon_{max}$ using \eqref{maxsub}
\IF{$\epsilon_{max} > \epsilon_{thr}$}
\STATE Compute \textcolor{black}{a} best-case and worst-case product MC $(\mathcal{M}_{u})_{\otimes}^{\mathcal{A}}$ and $(\mathcal{M}_{l})_{\otimes}^{\mathcal{A}}$ as discussed in Subsection \ref{SubSubRef}
\STATE Apply the scoring procedure in Algorithm \ref{ScorAlgSyn} and refine all states above a user-defined threshold score to produce $P_{i+1}$
\STATE Update the set of actions of all states in $P_{i+1}$ for the component search and reachability problem as discussed in Subsection \ref{SubSubRef}
\STATE $i := i + 1$
\ENDIF
\ENDWHILE
\RETURN $\widehat{\mu}_{\Psi}^{low}$ ($\widecheck{\mu}_{\Psi}^{up}$), $P_{fin} := P_{i}$
\end{algorithmic}
\label{AlgFiniteMode}
\end{algorithm}

\subsubsection{MONOTONICITY AND CONVERGENCE OF SYNTHESIS PROCEDURE}
\label{MonDisSec}

As pointed out in \cite{dutreix2020specification}, it is possible to construct scenarios where, for two states $Q_{i}$ and $Q_{j}$ in a given partition, and two states $Q'_{j}$ and $Q''_{j}$ generated from a refinement of $Q_{j}$, that is, $Q_{j} = Q'_{j} \cup  Q''_{j}$, the inequality $\widehat{T}_{ex}(Q_i, a, Q_j) < \widehat{T}_{ex}(Q_i, a, Q'_j) + \widehat{T}_{ex}(Q_i, a , Q''_j)$ holds for some mode $a$ of system \eqref{eq1}, where $\widehat{T}_{ex}(Q_i, a, Q_j)$ returns the least upper bound on the probability for any continuous state $x \in Q_i$ to transition to a state in $Q_j$ under mode $a$. As a consequence, because the current implementations of the graph search and reachability maximization algorithms view the abstractions created from a partition and its refinements as being independent from one another, our synthesis algorithm may assign a larger amount of probability to the transition from state $Q_{i}$ to the total refined states constituting $Q_{j}$ in the refined abstractions than was allowed in the coarser ones. This phenomenon may cause:
\begin{itemize}
\item The \textcolor{black}{set $(WC)_{L}^{G}$} to increase and \textcolor{black}{the set $(WC)_{P}^{G}$} to decrease upon refinement. Specifically, given a state $\left< Q_j, s_{i} \right>$ of a product BMDP $\mathcal{B} \otimes \mathcal{A}$ constructed from a partition $P$, and a state $\langle Q'_j, s_{i} \rangle$ of a product BMDP $\mathcal{B'} \otimes \mathcal{A}$ constructed from a refinement $P'$ of $P$, where $Q_j' \subset Q_j$, it is possible for $\langle Q'_j, s_{i} \rangle$ to belong to $(WC)_{L}^{G}$ in $\mathcal{B'} \otimes \mathcal{A}$ while $\langle Q_j, s_{i} \rangle$ does not belong to \textcolor{black}{this set} in $\mathcal{B} \otimes \mathcal{A}$, and it is possible for $\langle Q_j, s_{i} \rangle$ to belong to $(WC)_{P}^{G}$ in $\mathcal{B} \otimes \mathcal{A}$ while $\langle Q'_j, s_{i} \rangle$ does not belong to \textcolor{black}{this set} in $\mathcal{B'} \otimes \mathcal{A}$,
\item The lower bound probabilities of reaching $(WC)_{P}^{G}$ to decrease from some states of the product BMDP for a fixed policy, and the upper bound probability of reaching $(WC)_{L}^{G}$ to increase from some states of the product BMDP for a fixed policy.
\end{itemize}

\noindent Therefore, a finer partition could provide ``less certainty" and result in the synthesis of a switching policy yielding a smaller satisfaction lower bound \textcolor{black}{(or greater upper bound in the case of minimization)} for some states of the refined BMDP abstraction. This means that a monotone decrease of the greatest suboptimality factor $\epsilon_{max}$ is not guaranteed under the proposed iterative refinement method. We address the first bullet point by saving the states that belong to the aforementioned components in the coarser abstraction before each refinement step and using the facts enunciated in Propositions \ref{PropLargest} and \ref{PropCompPer}; however, the second bullet point affects the monotonicity of the value iteration algorithm of \cite{lahijanian2015formal} in its current state.

Nonetheless, under a continuity assumption on the dynamics and using adequate BMDP abstraction techniques, it seems that having the size of all discrete states which are not in a permanent component approach zero in the limit is sufficient for guaranteeing convergence of Algorithm \ref{AlgFiniteMode}, as seen in related case studies using iterative refinement \cite{lahijanian2015formal}, \cite{dutreix2020specification} and the case study presented further. We conjecture that the scoring and refinement procedure applied in Algorithm \ref{AlgFiniteMode} satisfies this condition and therefore ensures convergence; however, we leave a thorough investigation and potential formal proof of these facts for future work. Modifying the value iteration algorithm in \cite{lahijanian2015formal} to exploit all information obtained from coarser partitions and enforce monotonicity of the overall procedure is another immediate research direction.\\

In brief, we introduce a quantitative measure of the suboptimality of the devised switching policy in a BMDP abstraction with respect to the original continuous abstracted states. This suboptimality factor defined through \eqref{SubFacMax} and \eqref{maxsub} corresponds to an upper bound on the potential improvement any continuous state of the system could experience in the probability of satisfying the specification \textcolor{black}{using memoryless (in the product) policies} by choosing a different control action from the one prescribed by the computed policy. This factor is established in the BMDP abstraction through a comparison between the worst-case assignment of the probability intervals under the computed policy and the best-case assignment of these probabilities under a policy assuming the most optimistic outcome of the transition intervals. Furthermore, these worst-case and best-case scenarios are used to identify control actions that are certainly suboptimal for a given state as formalized in Proposition \ref{PropSubop}. Lastly, in Algorithm \ref{AlgFiniteMode}, we presented an iterative partition refinement \textcolor{black}{heuristic} which selectively targets certain regions of the state-space by comparing these two extreme scenarios \textcolor{black}{with the objective of achieving} a user-defined precision threshold. Some structural properties transmitted from coarser abstractions to refined ones are identified in Proposition \ref{PropLargest} and \ref{PropCompPer}, allowing to reduce the number of required computations after each refinement step. 

\textcolor{black}{While the techniques derived in this section are applicable to finite mode stochastic systems, they do not straightforwardly extend to the synthesis of control policies for stochastic systems with a continuous set of available inputs as stated in Problem 2. Indeed, the latter systems cannot be abstracted into BMDPs and CIMC abstractions have to be employed instead. The next section first discusses Subproblem 2.1 and the synthesis of control policies for CIMC abstractions with $\omega$-regular objectives in Subsections \ref{SubsecCompCons} and \ref{ReachMaxCont}. Then, an abstraction refinement scheme  directed to Subproblem 2.2 is proposed in Subsection \ref{ContSpaceRef} for this particular framework.}

\section{CONTROLLER SYNTHESIS FOR CONTINUOUS INPUT SYSTEMS}
\label{Continputsec}

In this section, we discuss synthesis for stochastic systems with a continuous set of inputs as defined in Problem 2. Recall that we focus our attention on systems of the form \eqref{eq3} with state update equation $x[k+1] = \mathcal{F}(x[k]) + u[k] + w[k]$.

To synthesize controllers for such systems, we again construct a finite partition $P$ of the continuous domain $D$ of \eqref{eq3} in order to generate a CIMC abstraction $\mathcal{C}$ of the system. Note that the results presented in the lemmas and theorems of Section \ref{SecSynthFin} for BMPDs are not altered if the set of available actions is infinite and consequently apply identically to CIMCs. Therefore, our approach is similar to the synthesis method for BMDPs, that is, a DRA representation $\mathcal{A}$ of the specification of interest $\Psi$ is computed, and the problem is converted to a component search and a reachability maximization step in the product CIMC $\mathcal{C} \otimes \mathcal{A}$.\\

\begin{definition}[Product Controlled Interval-valued Markov Chain]
Let $\mathcal{C} = (Q, U, \widecheck{T}, \widehat{T}, \textcolor{black}{q_{0}}, \Sigma, L)$ be a CIMC and $\mathcal{A} = (S, 2^{\Sigma}, \delta, s_0, Acc)$ be a DRA. The \emph{product} $\mathcal{C} \otimes \mathcal{A} = (Q \times S, U, \widecheck{T'}, \widehat{T'}, \textcolor{black}{q_{0}^{\otimes}},  Acc', L')$ is a CIMC defined similarly to product BMDP with the difference that a continuous set of inputs $U \subset \mathbb{R}^{m}$ replaces the finite set of actions $Act$.
\end{definition}\mbox{}

However, because the number of ``modes'' of \eqref{eq3} corresponding to different choices of input $u$ can be viewed as being uncountably infinite, the techniques established in Section \ref{SecSynthFin}, which rely on exhaustive searches over all possible actions at all states of the abstraction, cannot be applied directly in this context. Instead, we need to consider the underlying continuous dynamics of the abstracted system and exploit their relationship with the bounds of the CIMC abstraction $\mathcal{C}$.

To propose a solution to this problem, we first make the following additional assumptions on \eqref{eq3} which allow to derive closed-form expressions for the lower and upper bound transition maps $\widecheck{T}$ and $\widehat{T}$ as a function of the input parameter $u$.\\

\begin{assumption}
\label{AssumpRec}
The partition $P$ of the domain $D$ of system \eqref{eq3} \textcolor{black}{conforms to the labeling function of \eqref{eq3}} and is rectangular, that is, $\forall Q_{j} \in P$, $Q_{j} = [a_{1}^{j}, b_{1}^{j}] \times [a_{2}^{j}, b_{2}^{j}] \times \ldots \times [a_{n}^{j}, b_{n}^{j}]$.
\end{assumption}\mbox{}

\begin{assumption}
\label{AssumpReach}
For every discrete state $Q_j$ in the partition $P$ of $D$, a rectangular over-approximation of the one-step reachable set from $Q_j$ under $\mathcal{F}$, denoted by $R_{Q_{j}} = [\widecheck{r}^{j}_{1}, \widehat{r}^{j}_{1}] \times [\widecheck{r}^{j}_{2}, \widehat{r}^{j}_{2}] \times \ldots \times [\widecheck{r}^{j}_{n}, \widehat{r}^{j}_{n}]$, is available.
\end{assumption}\mbox{}

\begin{assumption}
\label{assum:uni}
The random disturbance $w[k]$ in \eqref{eq3} is of the form $w[k]= \begin{bmatrix}w_{1}[k]& w_{2}[k]& \ldots& w_{n}[k]\end{bmatrix}^{T}$, where each $w_{i} \in W_{i} \subset \mathbb{R}$ has probability density function $f_{w_i}(x_i)$, $W_i$ is an interval, and the collection $\{w_i\}_{i=1}^n$ is mutually independent. We denote by $F_{w_i}(x)=\int_{-\infty}^x f_{w_i}(\sigma) d\sigma$ the cumulative distribution function for $w_i$. Moreover, the probability density function $f_{w_i}$ for each random variable $w_{i}$ is symmetric and unimodal with mode $c_i$.
\end{assumption}\mbox{}

\textcolor{black}{For systems which cannot satisfy Assumption \ref{AssumpRec}, derivations of probability bounds using over and under-approximations of labeled regions are found in \cite{2019arXiv190101576C} and can be extended to our synthesis framework to allow for a rectangular partition}.  Assumption \ref{AssumpReach} is relevant for wide classes of systems. For example, it was shown that a rectangular over-approximation of the reachable set from any box state could be efficiently computed under mixed-monotone dynamics, which include the well-known class of monotone systems \cite{coogan2015efficient} \cite{Hirsch:2005ek}. Note that, under this assumption, an over-approximation of the reachable set of state $Q_{j}$ under $\mathcal{F}$ with an additive input $u \in U$ is a shifted version of the rectangular set $R_{Q_{j}}$, denoted by $R_{Q_{j}}^{u}$.\\

\begin{remark}
\label{RemReach}
Let $R_{Q_{j}} = [\widecheck{r}^{j}_{1}, \widehat{r}^{j}_{1}] \times [\widecheck{r}^{j}_{2}, \widehat{r}^{j}_{2}] \times \ldots \times [\widecheck{r}^{j}_{n}, \widehat{r}^{j}_{n}] \supseteq \{\mathcal{F}(x) : x\in Q_j \}$ be an over-approximation of the one-step reachable set from discrete state $Q_{j} \in P$ under the state update map $\mathcal{F}(x)$. Then, $R_{Q_{j}}^{u} = [\widecheck{r}^{j}_{1} + u_{1}, \widehat{r}^{j}_{1} + u_{1}] \times [\widecheck{r}^{j}_{2} + u_{2}, \widehat{r}^{j}_{2} + u_{2}] \times \ldots \times [\widecheck{r}^{j}_{n} + u_{n}, \widehat{r}^{j}_{n} + u_{n}] \supseteq\{F(x)+u : x\in Q_j \}$ is an over-approximation of the one-step reachable set from $Q_{j}$ under the state update map $\mathcal{F}(x) +u$.
\end{remark}\mbox{}

In \cite{dutreix2018}, we showed that under Assumptions \ref{AssumpRec} to \ref{assum:uni} and for a fixed $u$, an upper bound on the probability of transition from state $Q_{j}$ to state $Q_{\ell}$ is computed by placing the mode $c$ of disturbance $w$, restricted to the reachable set $R_{Q_{j}}^{u}$, as close as possible to the center of $Q_{\ell}$. A lower bound on this probability is computed by placing the mode of $w$ as far as possible from the center of $Q_{\ell}$.\\

\begin{fact}[\cite{dutreix2018}]
\label{FactBounds}
For system \eqref{eq3} under Assumptions \ref{AssumpRec} to \ref{assum:uni}, an upper and lower bound on the probability of transition from state $Q_{j}$ to state $Q_{\ell}$, $Q_{j}, Q_{\ell} \in P$, under input $u = [u_{1}, u_{2} , \ldots , u_{n}] \in U$, are given by
\begin{align}
\label{eq:15}\widehat{T}_{Q_{j} \xrightarrow{u} Q_{\ell}} & = \prod_{i = 1}^{n} \int_{a^{\ell}_{i}}^{b^{\ell}_{i}} f_{w_{i}}(x_{i}-s_{i,max}^{j \rightarrow \ell})\; dx_{i}, \\
& = \prod_{i = 1}^{n} \bigg(F_{w_{i}}(b^{\ell}_{i}-s_{i,max}^{j \rightarrow \ell}) - F_{w_{i}}(a^{\ell}_{i}-s_{i,max}^{j \rightarrow \ell})\bigg),
\end{align}
\begin{align}
\label{eq:16}\widecheck{T}_{Q_{j} \xrightarrow{u} Q_{\ell}} & = \prod_{i = 1}^{n} \int_{a^{\ell}_{i}}^{b^{\ell}_{i}} f_{w_{i}}(x_{i}-s_{i,min}^{j\rightarrow \ell})\; dx_{i} \\ 
& = \prod_{i = 1}^{n} \bigg(F_{w_{i}}(b^{\ell}_{i}-s_{i,min}^{j \rightarrow \ell}) - F_{w_{i}}(a^{\ell}_{i}-s_{i,min}^{j \rightarrow \ell})\bigg)
\end{align}
where $F_{w_i}$ is the cumulative distribution function for $w_i$ and
\begin{align}
\label{eq:13}
  s_{i,max}^{j \rightarrow \ell}&=\begin{cases}
  s^\ell_{i,max},&\text{if} \;\;   s^\ell_{i,max} \in [\widecheck{r}_{i}^j + u_{i}, \widehat{r}_{i}^j + u_{i}] \\
\widehat{r}_{i}^j + u_{i} , &\text{if} \;\;  s^\ell_{i,max} > \widehat{r}_{i}^j + u_{i} \\
\widecheck{r}_{i}^j + u_{i},& \text{if} \;\;  s^\ell_{i,max} < \widecheck{r}_{i}^j + u_{i},
\end{cases}\\
\label{eq:14}  s_{i,min}^{j \rightarrow \ell}&=\begin{cases}
\widecheck{r}_{i}^j + u_{i} ,&\text{if} \;\; s_{i, max}^{j \rightarrow \ell} > \frac{\widecheck{r}_{i}^j+\widehat{r}_{i}^j}{2} + u_{i} \\
\widehat{r}_{i}^j + u_{i}, &\text{otherwise} \;\; ,
\end{cases}
\end{align}
with $s^\ell_{i,max} = \frac{a^{\ell}_{i} + b^{\ell}_{i}}{2} - c_{i}$.
\end{fact}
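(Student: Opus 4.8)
The plan is to fix a continuous state $x\in Q_j$, express the one-step transition probability to the box $Q_\ell$ as a product of one-dimensional integrals using Assumption~\ref{assum:uni}, and then optimize each factor separately over the reachable-set over-approximation of Assumption~\ref{AssumpReach}.

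First I would observe that, for fixed $x$ and input $u$, the successor state is $\mathcal{F}(x)+u+w$ with $w$ having mutually independent coordinates, so by the rectangularity of $Q_\ell=[a_1^\ell,b_1^\ell]\times\cdots\times[a_n^\ell,b_n^\ell]$ (Assumption~\ref{AssumpRec}),
\[
Pr\big(\mathcal{F}(x)+u+w\in Q_\ell\big)=\prod_{i=1}^n \int_{a_i^\ell}^{b_i^\ell} f_{w_i}\big(x_i-(y_i+u_i)\big)\,dx_i ,
\]
where $y=\mathcal{F}(x)$. Writing $s_i:=y_i+u_i$ and $g_i(s_i):=\int_{a_i^\ell}^{b_i^\ell} f_{w_i}(x_i-s_i)\,dx_i=F_{w_i}(b_i^\ell-s_i)-F_{w_i}(a_i^\ell-s_i)$, this equals $\prod_i g_i(s_i)$. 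As $x$ ranges over $Q_j$, $y$ ranges over $\{\mathcal{F}(x):x\in Q_j\}\subseteq R_{Q_j}$, so $s=y+u$ ranges within the box $R_{Q_j}^u=\prod_i[\widecheck r_i^j+u_i,\widehat r_i^j+u_i]$ (Remark~\ref{RemReach}). Because this range is a product of intervals, the coordinates $s_i$ may be chosen independently, hence $\sup_{x\in Q_j}\prod_i g_i(s_i)\le\prod_i\sup_{s_i}g_i(s_i)$ and similarly $\inf_{x\in Q_j}\prod_i g_i(s_i)\ge\prod_i\inf_{s_i}g_i(s_i)$, and since $R_{Q_j}$ over-approximates $\{\mathcal{F}(x):x\in Q_j\}$ these products are valid upper and lower bounds on the transition probability for all $x\in Q_j$.

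Next I would determine the shape of each $g_i$. With $m_i=\tfrac{a_i^\ell+b_i^\ell}{2}$ and $L_i=b_i^\ell-a_i^\ell$, the integral defining $g_i(s_i)$ is over the length-$L_i$ window centered at $m_i-s_i$, so $g_i(s_i)=\phi_i(m_i-s_i)$ with $\phi_i(t):=\int_{t-L_i/2}^{t+L_i/2}f_{w_i}(\sigma)\,d\sigma$. The key lemma is that, when $f_{w_i}$ is symmetric about $c_i$ and unimodal, $\phi_i$ is symmetric about $c_i$ and nonincreasing in $|t-c_i|$: differentiating gives $\phi_i'(t)=f_{w_i}(t+L_i/2)-f_{w_i}(t-L_i/2)$, and a short case analysis using the symmetry $f_{w_i}(z)=f_{w_i}(2c_i-z)$ together with the one-sided monotonicity of $f_{w_i}$ shows $\phi_i'(t)\le 0$ for $t\ge c_i$ and $\phi_i'(t)\ge 0$ for $t\le c_i$, while symmetry of $\phi_i$ about $c_i$ follows from the substitution $\sigma\mapsto 2c_i-\sigma$. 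Consequently $g_i$ is symmetric about $s_i=m_i-c_i=s^\ell_{i,max}$ and nonincreasing in $|s_i-s^\ell_{i,max}|$.

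Finally I would optimize $g_i$ over $s_i\in[\widecheck r_i^j+u_i,\widehat r_i^j+u_i]$. Since $g_i$ is unimodal with peak at $s^\ell_{i,max}$, its maximum on this interval is attained at the point of the interval closest to $s^\ell_{i,max}$, namely the clamped value $s^{j\to\ell}_{i,max}$ of \eqref{eq:13}; its minimum is attained at whichever endpoint is farthest from $s^\ell_{i,max}$, which by comparing $s^{j\to\ell}_{i,max}$ with the interval midpoint $\tfrac{\widecheck r_i^j+\widehat r_i^j}{2}+u_i$ is exactly $s^{j\to\ell}_{i,min}$ of \eqref{eq:14}. Substituting $s_i=s^{j\to\ell}_{i,max}$ into each $g_i$ and multiplying gives \eqref{eq:15}, and doing the same with $s^{j\to\ell}_{i,min}$ gives \eqref{eq:16}. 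I expect the main obstacle to be the ``window-integral lemma'': verifying carefully that $\phi_i$ inherits symmetry and unimodality from $f_{w_i}$, in particular handling the boundary cases where the integration window straddles the mode $c_i$, since there the sign of $\phi_i'$ must be obtained by combining symmetry with the one-sided monotonicity of $f_{w_i}$ rather than from monotonicity alone.
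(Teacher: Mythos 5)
Your proposal is correct and follows essentially the same route as the source the paper cites for this fact: factor the transition probability into coordinate-wise window integrals via independence and rectangularity, show each window integral is symmetric and unimodal about $s^{\ell}_{i,max}$, and then clamp to the nearest (resp.\ farthest) point of $[\widecheck r_i^j+u_i,\widehat r_i^j+u_i]$ — exactly the ``place the mode of $w$ as close as possible to (resp.\ as far as possible from) the center of $Q_\ell$'' argument the paper attributes to \cite{dutreix2018}. The window-integral lemma you flag as the main obstacle goes through as you sketch it, since for $t\ge c_i$ one has $(t-c_i)+L_i/2\ge |(t-c_i)-L_i/2|$, so symmetry plus one-sided monotonicity of $f_{w_i}$ gives $\phi_i'(t)\le 0$ there (and symmetrically for $t\le c_i$).
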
\mbox{}\\

\begin{figure}
\centering
\includegraphics[scale=0.6]{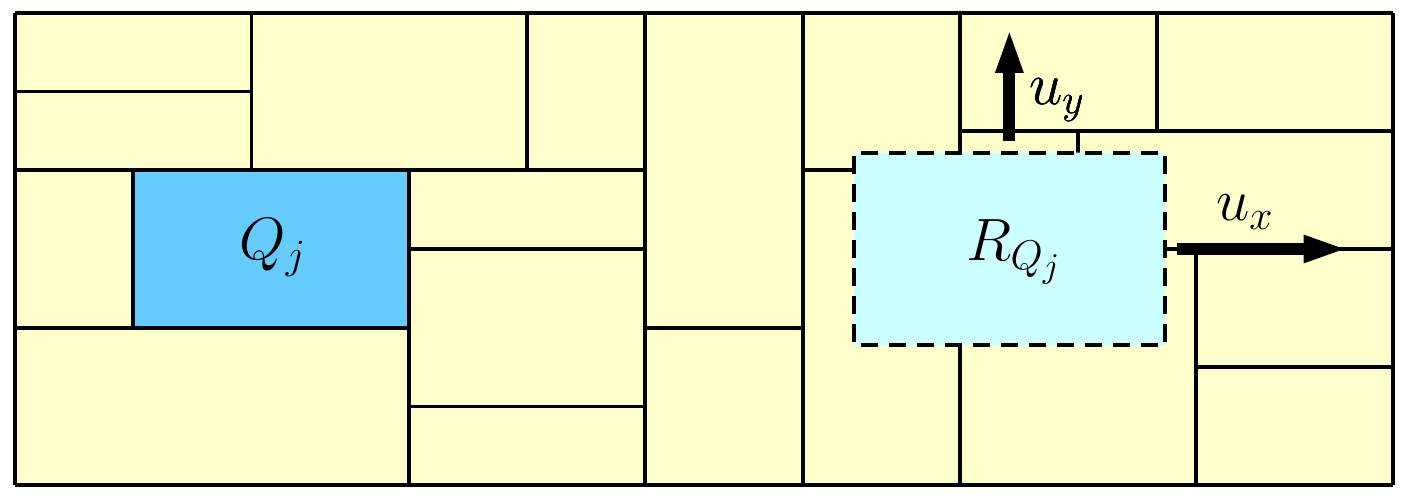}
\caption{2D depiction of the synthesis problem for system \eqref{eq3}. Every state $Q_{j}$ has a reachable set $R_{Q_j}$ under $\mathcal{F}$ which is shifted when an input $u$ is applied. The permanent component construction problem requires positioning $R_{Q_j}$ such that all instances of noise inside $R_{Q_j}$ ensures the satisfiability of the specification. If no input can achieve this, the lower bound reachability maximization problem amounts to finding a position for $R_{Q_j}$ such that the probability of reaching a permanent component is maximized in the worst instance of noise inside $R_{Q_j}$.}
\label{FigSynthProb}
\end{figure}

According to Remark \ref{RemReach}, given a CIMC abstraction $\mathcal{C}$ of  \eqref{eq3}, for every state $\left< Q_{j}, s_{i} \right>$ of the product CIMC $\mathcal{C} \otimes \mathcal{A}$ \textcolor{black}{(or $\mathcal{C} \otimes \overline{\mathcal{A}}$ when the objective is to minimize the probability of satisfying $\Psi$}), the goal is to shift the reachable set $R_{Q_j}$ of $Q_{j}$ via the application of an input $u$ so as to maximize the lower bound probability of reaching a permanent winning component from $\left< Q_{j}, s_{i} \right>$, as illustrated in Figure \ref{FigSynthProb}.

As in the finite-mode case, this is achieved by first solving a qualitative problem, which we call \textit{component construction problem}, where \textcolor{black}{the greatest permanent winning component} of $\mathcal{C} \otimes \mathcal{A}$ \textcolor{black}{($\mathcal{C} \otimes \overline{\mathcal{A}}$ for minimization) is} created; then, a quantitative problem is solved where an input maximizing the lower bound probability of reaching these components is computed for all states of $\mathcal{C} \otimes \mathcal{A}$.

In the following sections, we first provide a solution to Subproblem 2.1 and show that, although the input space $U$ of a CIMC $\mathcal{C}$ is uncountably infinite, the qualitative problem can be converted to a finite-mode component search by carefully selecting a finite number of inputs of $U$, which are identified geometrically under the stated assumptions. Subsequently, we derive an optimization problem for solving the quantitative problem and obtain the desired policies for the CIMC abstraction $\mathcal{C}$ of the system. Finally, the refinement of the partition $P$, from which the CIMC abstraction $\mathcal{C}$ arises, is addressed so as to reach a set level of optimality for the control policies with respect to the abstracted system.

\subsection{COMPONENTS CONSTRUCTION}
\label{SubsecCompCons}

In this subsection, we discuss the problem of generating the greatest permanent component $(WC)_{P}^{G}$ in a product CIMC $\mathcal{C} \otimes \mathcal{A}$ when $\mathcal{C}$ abstracts \eqref{eq3} under Assumptions \ref{AssumpRec} to \ref{assum:uni}, that is, the transition bounds between the states of $\mathcal{C}$ are given as in Fact \ref{FactBounds}.

First, we remark that if all density functions $f_{w_i}$ of the disturbance vector $w[k]$ have infinite support, the probability of making a transition between any two states of $\mathcal{C}$ has a non-zero lower bound for all choices of input. In this case, the IMC abstraction induced by some policy of $\mathcal{C}$ always induces MCs where all possible transitions have a non-zero probability, greatly simplifying the component construction problem. Here, we remove this restriction and alternatively assume that each $w_{i}$ has a probability density function living on a \textcolor{black}{bounded} interval support.\\

\begin{assumption}
\label{AssumpFinDis}
All probability density functions $f_{w_i}$ of the disturbance vector $w[k]= \begin{bmatrix}w_{1}[k]& w_{2}[k]& \ldots& w_{n}[k]\end{bmatrix}^{T}$ of system \eqref{eq3} have a \textcolor{black}{bounded interval} support, that is $W_{i} = [\widecheck{w}_{i},  \widehat{w}_{i}] \subset R$ and  $f_{w_i}(x_{i}) = 0 \;\;  \forall x_{i} \not \in W_{i}$.
\end{assumption}\mbox{}

Recall that, in an IMC, a transition between two states $Q_{j}$ and $Q_{i}$ can be classified into three different categories:
\begin{itemize}
\item An ``off'' transition if $\widehat{T}(Q_{j}, Q_{i}) = 0$,
\item An ``on'' transition if $\widecheck{T}(Q_{j}, Q_{i}) > 0$,
\item A transition which could be either ``on'' or ``off'' depending on the assumed transition values if $\widecheck{T}(Q_{j}, Q_{i}) = 0$ and $\widehat{T}(Q_{j}, Q_{i}) > 0$.\\
\end{itemize}

\noindent The connectivity properties of an IMC $\mathcal{I}$ dictate which states belong to a permanent winning component or a largest winning component in the product between $\mathcal{I}$ and an automaton $\mathcal{A}$. Provided that the partition $P$ of the system's domain is finite, the number of possible connectivity structures of an IMC abstraction arising from this partition is finite as well. Therefore, in the case of a CIMC abstraction, the objective is to find all connectivity structures which are achievable with the set of inputs $U$, choose an input $u \in U$ for all such structures and for all states $Q_{j}$ of $\mathcal{C}$, and feed the resulting finite-input BMDP $\mathcal{B}$ into the component search algorithms introduced in Section \ref{SecSynthFin} in order to compute the \textcolor{black}{permanent winning component} of the product CIMC $\mathcal{C} \otimes \mathcal{A}$, where $\mathcal{C}$ is the CIMC abstraction of \eqref{eq3} with domain partition $P$. The same procedure can be applied to find the greatest winning $(WC)_{L}^{G}$ of $\mathcal{C} \otimes \mathcal{A}$. \\

\begin{fact}
The problem of computing the greatest permanent winning component $(WC)_{P}^{G}$ as well as the greatest winning component $(WC)_{L}^{G}$ of a product CIMC $\mathcal{C} \otimes \mathcal{A}$ can be converted to a component search in a product BMDP.
\end{fact}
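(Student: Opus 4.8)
The plan is to show that, although the input set $U$ is uncountable, only finitely many inputs are \emph{qualitatively distinct} as far as the component search is concerned, so that a single finite-input BMDP captures every connectivity structure realizable by $\mathcal{C}$. The starting observation is that membership of a product state in $(WC)^G_P$ or in $(WC)^G_L$ is a purely combinatorial property of the induced product MCs: by Definitions \ref{BSCCdef}, \ref{accepcon} and \ref{defwin} (and their permanent/largest variants), whether a state lies in an accepting BSCC, or reaches an accepting BSCC with probability one, depends only on which transitions of the MC are present --- i.e.\ strictly positive --- and which are absent, never on the precise values. Equivalently, $(WC)^G_P$ and $(WC)^G_L$ of $\mathcal{C}\otimes\mathcal{A}$ are determined by the directed graphs obtainable from the trichotomy ``on'' ($\widecheck T>0$) / ``off'' ($\widehat T=0$) / ``either'' ($\widecheck T=0<\widehat T$) on each edge, ranging over memoryless policies (which suffice, by Lemmas \ref{LemmaGreaPerm} and \ref{LemmaGrea}). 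Since under a memoryless policy the input $u=\mu(Q_j)$ chosen at $Q_j$ affects only the outgoing bounds $\widecheck T_{Q_j\xrightarrow{u}Q_\ell}$ and $\widehat T_{Q_j\xrightarrow{u}Q_\ell}$, the only qualitative datum contributed by this choice is the trichotomy vector of outgoing edges of $Q_j$, indexed by $Q_\ell\in Q$.

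Next I would use Fact \ref{FactBounds} together with Assumption \ref{AssumpFinDis} to make this trichotomy vector an explicit, piecewise-constant function of $u$. Because each support $W_i=[\widecheck w_i,\widehat w_i]$ is a bounded interval and \eqref{eq:15}--\eqref{eq:16} are products over coordinates, $\widehat T_{Q_j\xrightarrow{u}Q_\ell}>0$ iff for every coordinate $i$ the swept interval $[\widecheck r^j_i+u_i+\widecheck w_i,\ \widehat r^j_i+u_i+\widehat w_i]$ meets $[a^\ell_i,b^\ell_i]$ in positive length, and $\widecheck T_{Q_j\xrightarrow{u}Q_\ell}>0$ iff the analogous nondegenerate overlap holds for the ``worst-case'' mode placement prescribed by \eqref{eq:13}--\eqref{eq:14}. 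Each of these conditions is a conjunction of finitely many strict affine inequalities in $u$, the breakpoints being translates of the box endpoints $a^\ell_i,b^\ell_i,\widecheck r^j_i,\widehat r^j_i,\widecheck w_i,\widehat w_i$; hence for each fixed $Q_j$ the inequalities jointly partition $U$ into finitely many cells on which the entire outgoing trichotomy vector is constant. I would pick one representative input $u^j_1,\dots,u^j_{k_j}\in U$, one from (the interior of) each nonempty cell.

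I would then assemble a BMDP $\mathcal{B}=(Q,Act,\widecheck T,\widehat T,q_0,\Sigma,L)$ with the same state set, labelling and initial states as $\mathcal{C}$, action set $A(Q_j)=\{u^j_1,\dots,u^j_{k_j}\}$ at $Q_j$, and transition bounds under action $u^j_r$ obtained from Fact \ref{FactBounds} evaluated at $u^j_r$; its product $\mathcal{B}\otimes\mathcal{A}$ is then well defined. The claim is that $(WC)^G_P$ (resp.\ $(WC)^G_L$) of $\mathcal{C}\otimes\mathcal{A}$ equals that of $\mathcal{B}\otimes\mathcal{A}$. The inclusion $\supseteq$ is immediate, since every action of $\mathcal{B}$ is an input of $\mathcal{C}$. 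For $\subseteq$, take a memoryless policy of $\mathcal{C}\otimes\mathcal{A}$ witnessing membership of some $\langle Q_j,s_i\rangle$ and replace its value at every product state by the cell representative of the corresponding partition state; by construction this leaves each state's outgoing trichotomy vector unchanged, hence leaves the directed graph of every induced product MC unchanged, hence preserves all BSCCs, their acceptance status (which depends only on $L'$ and $Acc'$, unchanged), and all probability-one reachability relations, so the same product state is witnessed by a memoryless policy of $\mathcal{B}\otimes\mathcal{A}$. Running Algorithms \ref{AlgPerBSCCW} and \ref{AlgPerWin} on $\mathcal{B}\otimes\mathcal{A}$ therefore returns $(WC)^G_P$ and, via the modifications described after Algorithm \ref{AlgPerBSCCW} and Algorithm \ref{AlgPerWin}, $(WC)^G_L$; the identical construction applied to $\mathcal{C}\otimes\overline{\mathcal{A}}$ handles the minimization objective. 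This is the advertised reduction.

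The step I expect to be the main obstacle is the piecewise-constancy claim of the second paragraph. One must verify, from the case splits in \eqref{eq:13}--\eqref{eq:14} defining $s^{j\to\ell}_{i,\max}$ and $s^{j\to\ell}_{i,\min}$, that the event $\{\widecheck T_{Q_j\xrightarrow{u}Q_\ell}>0\}$ --- which hides an optimization over the mode position inside the shifted reachable box --- is genuinely a finite union of boxes in $u$-space, with no coupling between distinct target states $Q_\ell$ that would obstruct choosing representatives per state. Thanks to the coordinatewise product form of \eqref{eq:16} and the symmetry and unimodality of each $f_{w_i}$, this reduces to a one-dimensional bookkeeping of how the relevant $1$-D overlap length changes sign as $u_i$ varies across the breakpoints, which is precisely where Assumptions \ref{AssumpRec}, \ref{AssumpReach}, \ref{assum:uni} and \ref{AssumpFinDis} are used; beyond this finite case analysis I do not anticipate essential difficulty.
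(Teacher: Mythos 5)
Your reduction founders on the claim that $(WC)^G_P$ and $(WC)^G_L$ are ``determined by the directed graphs obtainable from the trichotomy'' of each outgoing edge, and consequently on the later assertion that replacing an input by a representative from the same trichotomy cell ``leaves the directed graph of every induced product MC unchanged.'' For a single Markov chain the combinatorial observation is fine, but for an IMC the \emph{set} of directed graphs realizable by its adversaries is not a function of the per-edge classification into ``on''/``off''/``either'': whether a given \emph{subset} $S$ of undecided transitions can be deactivated simultaneously is a quantitative question, since the remaining targets must be able to absorb probability mass one, i.e.\ the relevant feasibility condition is $\sum_{q\notin S}\widehat T(Q_j,u,q)\ge 1$. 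This is exactly the point of Example \ref{ExTranProf}: the two profiles there have identical trichotomy vectors, yet in one the adversary can sever both undecided transitions at once and in the other it cannot. Since the component search hinges on precisely such questions (e.g.\ whether some adversary can drive to zero the probability of reaching the unmatched Rabin accepting states, which is what determines $A_{bad}$ in Algorithm \ref{AlgPerBSCCW}), two inputs lying in the same trichotomy cell of $U$ can yield genuinely different permanent and largest components. A single representative per cell can therefore miss an input that renders a state permanently winning (or realizes a BSCC that your representative's bounds forbid), so the inclusion $\subseteq$ in your final paragraph fails and your BMDP need not reproduce $(WC)^G_P$ or $(WC)^G_L$ of $\mathcal{C}\otimes\mathcal{A}$.

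The remainder of your argument --- the piecewise identification of the trichotomy cells from Fact \ref{FactBounds} and Assumptions \ref{AssumpRec}--\ref{AssumpFinDis}, and the coordinatewise bookkeeping you flag as the main obstacle --- matches the paper's trigger regions (Proposition \ref{probtrigg}) and overlaps (Definition \ref{defoverlap}) and is not where the difficulty lies. The missing ingredient is the further refinement of every overlap containing two or more undecided trigger regions (the set $\mathcal{H}^{n}_{Q_j}$): within such an overlap one must select an input for each achievable combination of simultaneously deactivated undecided transitions, which the paper obtains by solving the feasibility and optimization problems in Algorithm \ref{InpSelecAlg}. Once those additional actions are included, finiteness is preserved and the conversion to a component search on a finite-action product BMDP goes through.
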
\mbox{}

Finding the appropriate actions for state $Q_{j}$ is done by partitioning the input space $U$ into regions such that the resulting IMCs upon application of an input in different regions are qualitatively different, as illustrated in Figure \ref{CompConstFig}. We achieve this by first finding the subsets of $U$ where, for each state $Q_{i}$ reachable by $Q_{j}$ under some input, the transition from $Q_{j}$ to $Q_{i}$ behaves differently (``on'', ``off'' or either), formalized below as \textit{trigger regions}.\\

\begin{definition}[Trigger Region]
For any states $Q_{j}$ and $Q_{i}$ of $P$, the \textit{trigger regions} of $Q_{j}$ with respect to $Q_{i}$ are subsets of the input space $U$ defined as follows:
\begin{itemize}
\item The ``off'' trigger region $U^{f}_{Q_{j}}(Q_{i}) \subseteq U$ is the set of inputs such that $\widehat{T}(Q_{j}, u, Q_{i}) = 0$, $\forall u \in U^{f}_{Q_{j}}(Q_{i})$,
\item The ``on'' trigger region $U^{o}_{Q_{j}}(Q_{i}) \subseteq U$ is the set of inputs such that \mbox{} $\widecheck{T}(Q_{j}, u, Q_{i}) > 0$, $\forall u \in U^{o}_{Q_{j}}(Q_{i})$,
\item The ``undecided'' trigger region $U^{\textcolor{black}{n}}_{Q_{j}}(Q_{i}) \subseteq U$ is the set of inputs such that $\widecheck{T}(Q_{j}, u, Q_{i}) = 0$ and $\widehat{T}(Q_{j}, u, Q_{i}) > 0$, $\forall u \in U^{\textcolor{black}{n}}_{Q_{j}}(Q_{i})$.
\end{itemize}
\end{definition}\mbox{}

\begin{figure}
\centering
\includegraphics[scale=0.7]{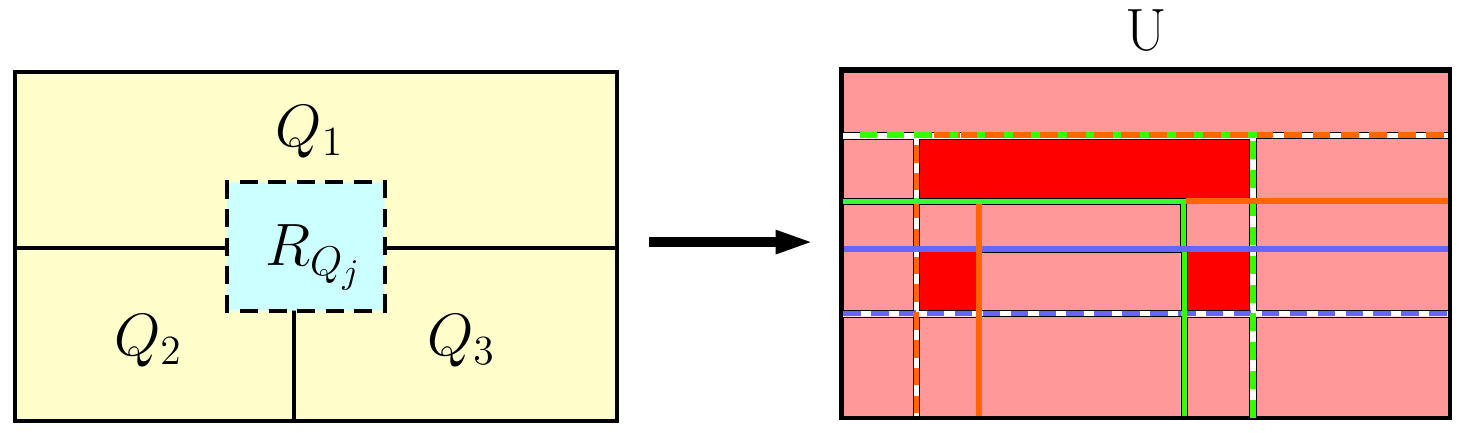}
\caption{Sketch example of the component construction problem. The reachable set $R_{Q_{j}}$ of state $Q_{j}$ induces a partition of the input space $U$ where each region produces a qualitatively different set of transitions. Dashed lines separate regions of $U$ where the transition to some state is turned ``on'' or ``off'', solid lines separate regions where the lower bound probability of transition to some state is zero and non-zero. Blue lines correspond to state $Q_{1}$, green to $Q_{2}$ and orange to $Q_{3}$. Dark red regions highlight inputs causing several transitions to have a zero lower bound and a non-zero upper bound; such regions may need to be further partitioned.}
\label{CompConstFig}
\end{figure}

\noindent Note that some of these triggers regions may evaluate to the empty set for some choices of partition $P$. In addition, the union of all trigger regions of state $Q_{j}$ with respect to state $Q_{i}$ is equal to the input space $U$. For system \eqref{eq3} with Assumptions \ref{AssumpRec} to \ref{AssumpFinDis}, these trigger regions for state $Q_{j}$ are geometrically identifiable due to the structure of both the disturbance and the over-approximation of the one-step reachable state of $Q_{j}$ highlighted in Remark \ref{RemReach}. The ``off" trigger region corresponds to shifted reachable sets of $Q_{j}$ where disturbance $w$ cannot reach $Q_{i}$, the ``on" trigger region corresponds to shifted reachable sets where any position of the disturbance results in an overlap with $Q_{i}$, and the ``undecided" trigger region corresponds to shifted reachable sets where some positions of the disturbance cause an overlap with $Q_{i}$ and some do not.\\

\begin{proposition}
\label{probtrigg}
The trigger regions of state $Q_{j} \in P$ with respect to state $Q_{i} \in P$ and input space $U$ under dynamics \eqref{eq3} with partition $P$ and satisfying Assumptions \ref{AssumpRec} to \ref{AssumpFinDis} are given by
\begin{align}
U^{f}_{Q_{j}}(Q_{i}) & = \{u \in \mathbb{R}^{n} : \exists k \;\; \widehat{r}^{j}_{k} + u_{k} + \widehat{w}_{k} \leq a_{k}^{i} \;  \\ & or \;  \widecheck{r}^{j}_{k} + u_{k} + \widecheck{w}_{k} \geq b_{k}^{i}  \} \cap \; U  , \\
U^{o}_{Q_{j}}(Q_{i}) & = \Big\{u \in \mathbb{R}^{n} : \forall k \; \; \Big( \frac{\widehat{r}^{j}_{k} + \widecheck{r}^{j}_{k}}{2} + u_{k} \geq \frac{a^{i}_{k} + b^{i}_{k}}{2} - c_{i} \\ & and \; \widehat{r}^{j}_{k} + u_{k} + \widecheck{w}_{k} \leq b^{i}_{k} \Big) \; or \; \Big(\frac{\widehat{r}^{j}_{k} + \widecheck{r}^{j}_{k}}{2} + u_{k} \leq \frac{a^{i}_{k} + b^{i}_{k}}{2} - c_{i} \\ & and \; \widecheck{r}^{j}_{k} + u_{k} + \widehat{w}_{k} \geq a^{i}_{k} \Big)   \Big\}  \cap \; U \; , \\
U^{\textcolor{black}{n}}_{Q_{j}}(Q_{i}) & = \left(\mathbb{R}^{n} \setminus (\; U^{o}_{Q_{j}}(Q_{i}) \cup U^{f}_{Q_{j}}(Q_{i}) \; ) \right) \cap \; U \; .
\end{align}
\end{proposition}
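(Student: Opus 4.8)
The plan is to read each trigger region directly off the closed-form transition bounds of Fact~\ref{FactBounds}, reducing every claim to a one-dimensional condition per coordinate $k$. Since both $\widehat{T}_{Q_{j} \xrightarrow{u} Q_{i}}$ and $\widecheck{T}_{Q_{j} \xrightarrow{u} Q_{i}}$ factor as products over $k$ of one-dimensional ``slab'' probabilities, I would set $p_{k}^{+}(u) = F_{w_{k}}(b_{k}^{i}-s_{k,max}^{j\rightarrow i}) - F_{w_{k}}(a_{k}^{i}-s_{k,max}^{j\rightarrow i})$ and define $p_{k}^{-}(u)$ in the same way with $s_{k,min}^{j\rightarrow i}$ in place of $s_{k,max}^{j\rightarrow i}$, so that $\widehat{T}=\prod_{k}p_{k}^{+}$ and $\widecheck{T}=\prod_{k}p_{k}^{-}$ with all factors in $[0,1]$. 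Hence $\widehat{T}=0$ iff $p_{k}^{+}(u)=0$ for some $k$, and $\widecheck{T}>0$ iff $p_{k}^{-}(u)>0$ for every $k$. Two elementary facts are used throughout: (i) by symmetry of $f_{w_{k}}$ about its mode, $c_{k}=\frac{1}{2}(\widecheck{w}_{k}+\widehat{w}_{k})$; and (ii) for a unimodal density supported on $[\widecheck{w}_{k},\widehat{w}_{k}]$, a closed interval $[\alpha,\beta]$ has positive $w_{k}$-mass iff $\alpha<\widehat{w}_{k}$ and $\beta>\widecheck{w}_{k}$ (an interval that only touches an endpoint of the support carries zero mass, since $F_{w_{k}}$ is flat outside $[\widecheck{w}_{k},\widehat{w}_{k}]$).

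\textbf{The ``off'' region.} Recall that $s_{k,max}^{j\rightarrow i}$ is the metric projection of $s_{k,max}^{i}:=\frac{1}{2}(a_{k}^{i}+b_{k}^{i})-c_{k}$ onto $[\widecheck{r}_{k}^{j}+u_{k},\ \widehat{r}_{k}^{j}+u_{k}]$. If $s_{k,max}^{i}$ lies in this interval, the slab $[a_{k}^{i}-s_{k,max}^{j\rightarrow i},\ b_{k}^{i}-s_{k,max}^{j\rightarrow i}]$ is centred at $c_{k}$, so by (i)--(ii) and non-degeneracy of $Q_{i}$ and $W_{k}$ it has positive mass and $p_{k}^{+}(u)>0$. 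If instead $s_{k,max}^{i}>\widehat{r}_{k}^{j}+u_{k}$, then $s_{k,max}^{j\rightarrow i}=\widehat{r}_{k}^{j}+u_{k}$ and the slab lies to the right of the mode; a one-line estimate using (i) together with $\widehat{r}_{k}^{j}+u_{k}<s_{k,max}^{i}$ shows its right endpoint always exceeds $\widecheck{w}_{k}$, so by (ii) it has zero mass iff its left endpoint satisfies $a_{k}^{i}-(\widehat{r}_{k}^{j}+u_{k})\geq\widehat{w}_{k}$, i.e. $\widehat{r}_{k}^{j}+u_{k}+\widehat{w}_{k}\leq a_{k}^{i}$. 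The case $s_{k,max}^{i}<\widecheck{r}_{k}^{j}+u_{k}$ is symmetric and yields $p_{k}^{+}(u)=0 \iff \widecheck{r}_{k}^{j}+u_{k}+\widecheck{w}_{k}\geq b_{k}^{i}$. Finally, each of these two scalar inequalities already forces the relevant clamping case (e.g. $\widehat{r}_{k}^{j}+u_{k}+\widehat{w}_{k}\leq a_{k}^{i}$ implies $\widehat{r}_{k}^{j}+u_{k}<s_{k,max}^{i}$ via (i)), so $p_{k}^{+}(u)=0$ iff one of the two inequalities holds. Taking the union over $k$ and intersecting with $U$ gives exactly the displayed $U^{f}_{Q_{j}}(Q_{i})$.

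\textbf{The ``on'' and ``undecided'' regions.} Here $s_{k,min}^{j\rightarrow i}$ is the endpoint of $[\widecheck{r}_{k}^{j}+u_{k},\ \widehat{r}_{k}^{j}+u_{k}]$ farther from $s_{k,max}^{i}$: it equals $\widecheck{r}_{k}^{j}+u_{k}$ when $s_{k,max}^{i}$ exceeds the interval midpoint $\frac{1}{2}(\widecheck{r}_{k}^{j}+\widehat{r}_{k}^{j})+u_{k}$, and $\widehat{r}_{k}^{j}+u_{k}$ otherwise (one checks that the rule of Fact~\ref{FactBounds}, stated via the clamped $s_{k,max}^{j\rightarrow i}$, coincides with this after inspecting the three clamping cases). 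In the branch $\frac{1}{2}(\widehat{r}_{k}^{j}+\widecheck{r}_{k}^{j})+u_{k}\geq\frac{1}{2}(a_{k}^{i}+b_{k}^{i})-c_{k}$ we get $s_{k,min}^{j\rightarrow i}=\widehat{r}_{k}^{j}+u_{k}$, and a one-line use of (i) and the branch inequality shows the slab's left endpoint $a_{k}^{i}-(\widehat{r}_{k}^{j}+u_{k})$ lies below $\widehat{w}_{k}$, so by (ii) the slab has positive mass iff $b_{k}^{i}-(\widehat{r}_{k}^{j}+u_{k})>\widecheck{w}_{k}$, i.e. $\widehat{r}_{k}^{j}+u_{k}+\widecheck{w}_{k}\leq b_{k}^{i}$; the complementary branch is symmetric and gives $\widecheck{r}_{k}^{j}+u_{k}+\widehat{w}_{k}\geq a_{k}^{i}$. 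Imposing $p_{k}^{-}(u)>0$ for all $k$ then reproduces the $\forall k$ disjunction-of-branches displayed for $U^{o}_{Q_{j}}(Q_{i})$; at the boundary where the branch inequality is an equality both branches apply and the configuration is symmetric, so the two candidate slab masses agree and there is no ambiguity. Since $\widecheck{T}\leq\widehat{T}$, the sets $\{\widecheck{T}>0\}$ and $\{\widehat{T}=0\}$ are disjoint and, together with the undecided case, exhaust $U$, whence $U^{n}_{Q_{j}}(Q_{i})=\bigl(\mathbb{R}^{n}\setminus(U^{o}_{Q_{j}}(Q_{i})\cup U^{f}_{Q_{j}}(Q_{i}))\bigr)\cap U$ directly from the definitions.

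\textbf{Main obstacle.} The only non-routine part is the bookkeeping in the two steps above showing that each region is carved out by a \emph{single} scalar inequality per coordinate and branch: in every clamping/branch regime one must check that the ``other'' of the two support-intersection conditions of fact (ii) holds automatically, and this is precisely where the symmetry identity $c_{k}=\frac{1}{2}(\widecheck{w}_{k}+\widehat{w}_{k})$ and the non-degeneracy of the boxes $Q_{i}$ and $W_{k}$ are used. A secondary, purely cosmetic, point is the measure-zero set of boundary inputs at which a support-intersection inequality holds with equality: there the relevant slab has zero mass because $F_{w_{k}}$ is flat outside the support, which is consistent with the closed (non-strict) form of the regions as written, and in any case these inputs have no effect on the qualitative analysis that follows.
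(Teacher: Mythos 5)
Your proof is correct, and it is the argument the paper leaves implicit: Proposition \ref{probtrigg} is stated without a formal proof, justified only by the preceding geometric description (the ``off''/``on''/``undecided'' regions correspond to shifted reachable sets from which the disturbance cannot, must, or may overlap $Q_i$). Your write-up makes that rigorous in the natural way, by exploiting the product structure of the bounds in Fact \ref{FactBounds} to reduce everything to one scalar condition per coordinate, and your case analysis of the clamping rules for $s_{k,max}^{j\rightarrow i}$ and $s_{k,min}^{j\rightarrow i}$ (including the symmetric tie at the midpoint and the use of $c_{k}=\tfrac{1}{2}(\widecheck{w}_{k}+\widehat{w}_{k})$ to discharge the ``other'' support-intersection condition in each branch) checks out. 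The one place where your discussion is slightly off is the boundary remark: at an input with, say, $\widehat{r}^{j}_{k}+u_{k}+\widecheck{w}_{k}=b^{i}_{k}$, the slab meets the support of $w_{k}$ in a single point and so $\widecheck{T}=0$; this is indeed consistent with the non-strict inequalities defining $U^{f}_{Q_{j}}(Q_{i})$, but it means the non-strict inequalities in the displayed $U^{o}_{Q_{j}}(Q_{i})$ admit a measure-zero set of inputs at which $\widecheck{T}=0$, contrary to the definition of the ``on'' region. Your derivation correctly yields the strict inequalities there, so this is an imprecision in the proposition as written rather than a gap in your argument, and, as you note, it has no bearing on the ensuing qualitative analysis.
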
\mbox{}

\noindent It follows that different overlaps of the trigger regions of state $Q_{j}$ induce qualitatively different profiles for the outgoing transitions of $Q_{j}$.\\

\begin{definition}[Trigger Regions Overlap]
\label{defoverlap}
A \textit{Trigger Regions Overlap} $\textcolor{black}{\mathcal{H}_{Q_{j}}} \subseteq U$ of state $Q_{j} \in P$ is a subset of the input space $U$ such that
\begin{align*}
\textcolor{black}{\mathcal{H}_{Q_{j}}(t_{1}, t_{2}, \ldots, t_{|P|}) }= \bigcap_{i \in \{1, 2, \ldots ,  |P|\}} U^{t_i}_{Q_{j}}(Q_{i}) \;\; ,
\end{align*}
\noindent where $t_{i} \in \{f, o, \textcolor{black}{n} \}, \; \forall i$.
\end{definition}\mbox{}

It should be noticed that an overlap of two or more undecided trigger regions could produce qualitatively different transitions for several subset of its inputs and have to be further examined, as demonstrated in the following example depicted in Figure \ref{FigExample3}. \\

\begin{figure}
\centering
\includegraphics[scale=0.55]{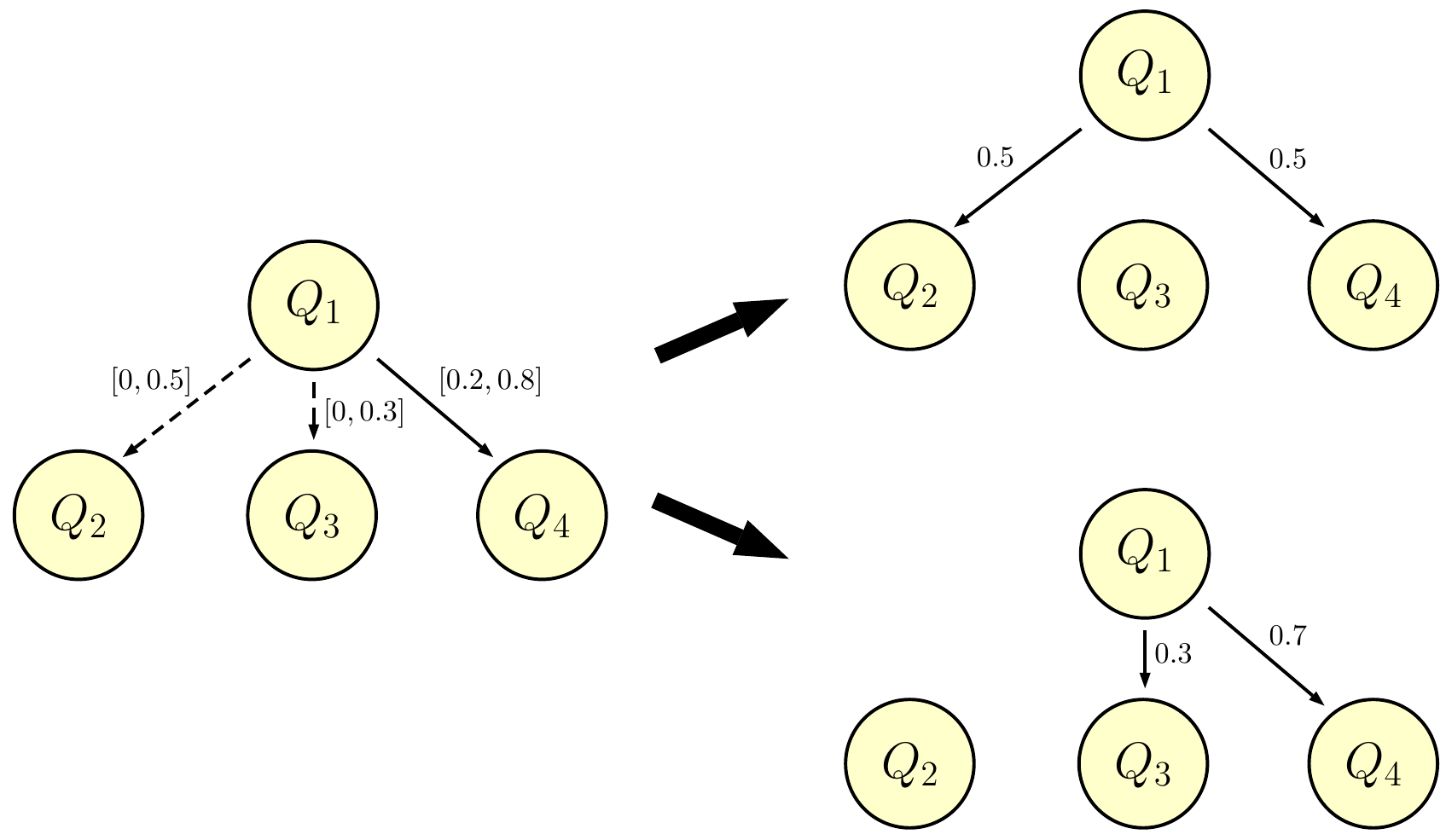}
\includegraphics[scale=0.55]{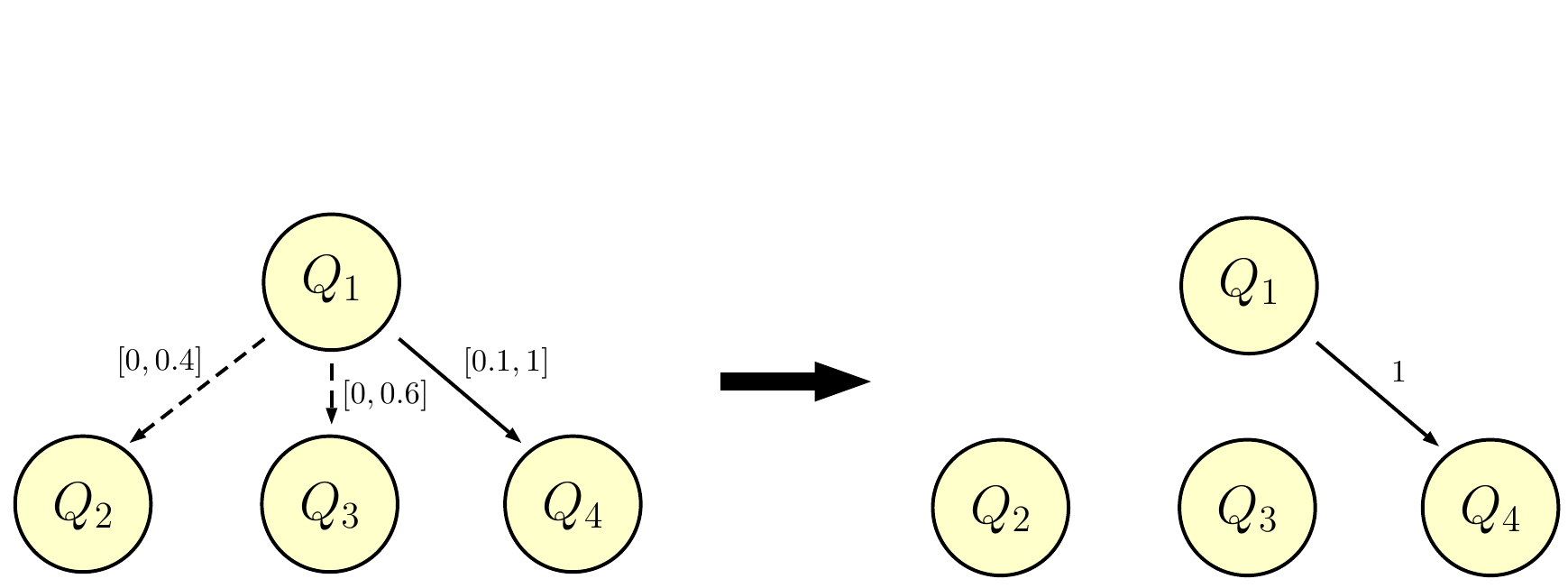}
\caption{\textcolor{black}{Two IMC transition profiles with similar transition types but different qualitative structures as discussed in Example 3. The transitions from $Q_{1}$ to the three other states are of the same type in both cases; however, while the transitions from $Q_{1}$ to $Q_{2}$ and $Q_{3}$ cannot be set to zero simultaneously for any adversary in the top example, this can be achieved in the bottom example.}}
\label{FigExample3}
\end{figure}

\begin{example}
\label{ExTranProf}
Consider the following two transition profiles from state $Q_{1}$ to three states $Q_{2}$, $Q_{3}$ and $Q_{4}$:
\begin{itemize}
\item $T(Q_1, Q_2) = [0, 0.5]$,  $T(Q_1, Q_3) = [0, 0.3]$ and $T(Q_1, Q_4) = [0.2, 0.8]$,
\item $T(Q_1, Q_2) = [0, 0.4]$,  $T(Q_1, Q_3) = [0, 0.6]$ and $T(Q_1, Q_4) = [0.1, 1]$.
\end{itemize}
\textcolor{black}{Although $T(Q_1, Q_2)$ and $T(Q_1, Q_3)$ are undecided in both cases and $T(Q_1, Q_4)$ is ``on" in both cases}, the two profiles are qualitatively different. In the first case, no probability assignment can simultaneously turn off the transitions from $Q_{1}$ to $Q_{2}$ and from $Q_{1}$ to $Q_{3}$; however, in the second case, it is possible to turn off these two transitions at the same time by assigning a probability of 1 to the transition from $Q_{1}$ to $Q_{4}$. 
\end{example}\mbox{}

For all states $Q_{j} \in P$, we denote the set of overlaps with 2 or more undecided trigger regions by $\mathcal{H}^{n}_{Q_j}$, and all other overlaps by $\mathcal{H}^{S}_{Q_j}$. 

In summary, we remark that the components construction problem in a product CIMC $\mathcal{C} \otimes \mathcal{A}$ is solved by converting it to a component search in a finite-action product BMDP $\mathcal{B} \otimes \mathcal{A}$. The construction of $\mathcal{B}$ is achieved by partitioning the input space of all states $Q_{j}$ of $\mathcal{C}$ into trigger region overlaps yielding qualitatively different transition profiles, and by choosing one control action per overlap in $\mathcal{H}^{S}_{Q_j}$, and possibly more than one control actions per overlap in $\mathcal{H}^{n}_{Q_j}$. Indeed, we observed in Example \ref{ExTranProf} that, for every overlap in the set $\mathcal{H}^{n}_{Q_j}$ of a state $Q_{j}$, we have to distinguish the sets of inputs allowing for different combinations of inactive uncertain transitions. We show that the overlaps are geometrically identified for system \eqref{eq3} under Assumption \ref{AssumpRec} to \ref{AssumpFinDis}. 

The input selection procedure is detailed in Algorithm \ref{InpSelecAlg}. This algorithm chooses the minimum energy input in all overlaps in $\mathcal{H}^{S}_{Q_j}$ and performs a search over from the overlaps in $\mathcal{H}^{n}_{Q_j}$ in order to find control inputs allowing for different combinations of inactive uncertain transitions. We emphasize that the optimization problem on line 20 is non-convex under our system assumptions and is in general hard to solve. Note that Algorithm \ref{InpSelecAlg} in its current state may select more actions than needed from the overlaps in $\mathcal{H}^{n}_{Q_j}$. This is due to the fact that our procedure is likely to choose different actions for two distinct combinations of achievable ``off'' uncertain transitions $S$ and $S'$, where none of these combinations is a strict subset of the other, while a single action may be able to accommodate these two combinations at once. A consequence is that the resulting BMDP $\mathcal{B}$ may have a larger action space than necessary. This could be addressed by considering multiple such combinations at once in the constraints on line 20, at the cost of having to potentially solve a greater number of optimization problems.

\begin{algorithm}[H]
\caption{Input Selection for State $Q_{j}$}
\begin{algorithmic}[1]
\STATE \textbf{Input}: Sets of overlaps $\mathcal{H}^{S}_{Q_{j}}$ and $\mathcal{H}^{n}_{Q_{j}}$ of state $Q_{j}$
\STATE \textbf{Output}: Finite set of actions $A(Q_{j})$
\STATE \textbf{Initialize}: $A(Q_{j}) := \emptyset$
\FOR {$\mathcal{H}_{i} \in \mathcal{H}^{S}_{Q_j}$}
\STATE $u^{*} := \textcolor{black}{\argmin_{u \in \mathcal{H}_{i}}} ||u||_{2}^{2}$
\STATE $A(Q_{j}) \leftarrow u^{*}$
\ENDFOR

\FOR {$\mathcal{H}_{i} \in \mathcal{H}^{n}_{Q_j}$}
\STATE $L := \emptyset$, $O := \emptyset$, $Y := \emptyset$
\STATE For all states $Q_{k}$ such that $U^{o}_{Q_{k}} \cap \mathcal{H}_{i} \not = \emptyset $, $O \leftarrow Q_{k}$
\STATE For all states $Q_{k}$ such that $U^{n}_{Q_{k}} \cap \mathcal{H}_{i} \not = \emptyset $, $Y \leftarrow Q_{k}$
\STATE $L \leftarrow Y$

\FOR{$S \in L$}
\FOR{$u \in A(Q_{j})$}
\STATE Check if $\sum_{q \in O} \widehat{T}(Q_{j}, u , q) + \sum_{q \in Y \setminus S} \widehat{T}(Q_{j}, u , q) \geq 1$
\ENDFOR
\IF{Feasible for some $u \in  A(Q_{\textcolor{black}{j}})$}
\STATE Continue for-loop (Line 13)
\ENDIF
\STATE Solve $u^{*} = \textcolor{black}{\argmin_{u \in \mathcal{H}_{i}}} ||u||_{2}^{2}$ such that $\sum_{q \in O} \widehat{T}(Q_{j}, u , q) + \sum_{q \in Y \setminus S} \widehat{T}(Q_{j}, u , q) \geq 1$
\IF{Feasible}
\STATE $A(Q_{j}) \leftarrow u^{*}$ 
\ELSE
\STATE Add the ${|S| \choose  |S| -1}$ combinations of $|S| -1$ states of $S$ (which are not already in $L$ and for which no superset of states previously returned a feasible solution) to $L$
\ENDIF
\ENDFOR
\ENDFOR
\RETURN $A(Q_{j})$
\end{algorithmic}
\label{InpSelecAlg}
\end{algorithm}

\begin{algorithm}[t!]
\caption{Component Construction Method for \eqref{eq3}}
\begin{algorithmic}[2]
\STATE \textbf{Input}: Domain Partition $P$, input Space $U$, DRA $\mathcal{A}$ of specification $\Psi$
\STATE \textbf{Output}: \textcolor{black}{Winning components} $(WC)_{P}^{G}$ and $(WC)_{L}^{G}$ of product CIMC $\mathcal{C} \otimes \mathcal{A}$ constructed from $P$
\STATE Create a BMDP $\mathcal{B}$ with the same states as $P$ and with each action set $A(Q_{j})$ initialized to the empty set
\STATE Compute the overlap sets for all $Q_{j} \in P$ using Proposition \ref{probtrigg} and according to Definition \ref{defoverlap}
\FOR {$Q_{j} \in P$}
\STATE Compute the set of actions $A(Q_{j})$ using Algorithm \ref{InpSelecAlg} as well as their corresponding transition profiles
\ENDFOR
\RETURN $(WC)_{P}^{G}$ and $(WC)_{L}^{G}$ and their corresponding control actions by applying the component search in Algorithm \ref{AlgPerBSCCW} and \ref{AlgPerWin} to $\mathcal{B} \otimes \mathcal{A}$
\end{algorithmic}
\label{CompConsAlg}
\end{algorithm}

Algorithm \ref{CompConsAlg} summarizes the component construction procedure and outputs the greatest permanent winning component $(WC)_{P}^{G}$ of a product CIMC $\mathcal{C} \otimes \mathcal{A}$, as well as its greatest winning component $(WC)_{L}^{G}$, where $\mathcal{C}$ serves as a CIMC abstraction of system \eqref{eq3}.

\subsection{REACHABILITY MAXIMIZATION}
\label{ReachMaxCont}

To devise an optimal control policy for system \eqref{eq3} abstracted by a CIMC $\mathcal{C}$, we now have to find the control inputs in the continuous set $U$ maximizing the lower bound probability of reaching $(WC)_{P}^{G}$ in a product CIMC  according to Theorem \ref{TheoWorstCasePol}. 

Our approach is inspired from the lower bound reachability maximization algorithm for BMDPs in \cite{lahijanian2015formal}. In this algorithm, the procedure for computing a control policy maximizing the lower bound probability of reaching a target set of states $G$ in a finite-action BMDP is based on value iteration and is as follows:
\begin{enumerate}
\item Initialize a probability vector $W^{0} = [p^{0}_{1}, p^{0}_{2}, \ldots, p^{0}_{m}]$ where $p^{0}_{i} = 1$ if $p_{i} \in G$ and $0$ otherwise.
\item At each time step $k$, construct an ascending ordering $\mathcal{O}_{k} = q_{1} q_{2} \ldots q_{m}$, $q_{i} \in Q$, of the states such that $p^{k}_{1} \leq p^{k}_{2} \leq \ldots \leq p^{k}_{m}$.
\item For each state $Q_{j}$ and for each action in $A(Q_{j})$, allocate as much probability mass $z^{j}_{1}$ as possible to state $q_{1}$, then allocate as much probability mass $z^{j}_{2}$ as possible to state $q_{2}$ with the amount of probability left, etc., in order to construct the worst possible assignment of the probabilities allowed by the IMC under each action with respect to the objective of reaching $G$.
\item For each state, pick the action from $A(Q_{j})$ that yields the highest worst-case probability $p^{k+1}_{i} = \sum_{j = 1}^{m} p^{k}_{j} z^{i}_{j}$ of reaching $G$. 
\item Update the probability vector $W^{k+1}$ such that $p^{k+1}_{i} = \sum_{j = 1}^{m} p^{k}_{j} z^{i}_{j}$, with $p^{k+1}_{i}$ being the computed probability under the chosen action at state $Q_{i}$, and construct a new ordering $\mathcal{O}^{k+1}$. Repeat this process until vector $W$ converges \cite{wu2008reachability} and the last selected actions are the lower bound reachability maximizing actions for all states.
\end{enumerate}

We propose to follow the same procedure for computing lower bound maximizing policies in the product CIMC $\mathcal{C} \otimes \mathcal{A}$. However, while finite-mode systems rely on exhaustive search over every possible action to choose the most optimal one at each step $k$ of the above algorithm, systems with a continuous set of inputs $U$ require solving an optimization problem at Step 3 of the above algorithm to find the reachability maximizing input $u$ for all states $\left< Q_{j}, s_{i} \right>$ of the product CIMC $\mathcal{C} \otimes \mathcal{A}$.

We first note that the transition bound functions in $\mathcal{C} \otimes \mathcal{A}$ are determined by the transition bound functions in $\mathcal{C}$, as seen in the definition of a product CIMC. We formulate an optimization problem that outputs the best action $u \in U$ for state $\left< Q_{j}, s_{i} \right>$ at some time step $k$ of the aforementioned algorithm. Consider the set of states $\{q_{\ell}\}_{\ell = 1}^{m}$ which are reachable by $\left< Q_{j}, s_{i} \right>$ under some input, that is $\exists u \in U$ such that $\widehat{T}(\left< Q_{j}, s_{i} \right>, u, q_{\ell}) > 0, \; i = 1, 2, \ldots , m$. We denote the probability of reaching the desired component from state $q_{\ell}$ at the current time step of the algorithm by $p_{\ell}$. Consider an ascending ordering $\mathcal{O} = q_{1} q_{2} q_{3} \ldots q_{m}$ of the states reachable by $\left< Q_{j}, s_{i} \right>$ such that $p_{1} \leq p_{2} \leq \ldots \leq p_{m}$. Step 3 and 4 of the reachability maximization algorithm for the continuous input case are formulated as the optimization program

\begin{align}
\label{optimiz}
& \underset{u \in U}{\text{max}} \;\; \sum_{\ell = 1}^{m} p_{\ell} z_{\ell} \\
& s.t. \; \; \;  z_{\ell} = \text{min} \Bigg\{ \widehat{T} \big( \left< Q_{j}, s_{i} \right>, u, q_{\ell} \big), \;\; 1 - \sum_{k = 1}^{\ell - 1} z_{k} - \sum_{k = \ell+1}^{m} \widecheck{T} \big(\left< Q_{j}, s_{i} \right>, u, q_{k} \big) \Bigg\}, \\
& \qquad \qquad \qquad \qquad  \qquad \qquad \qquad \qquad  \qquad \qquad \qquad \qquad  \ell = 1, 2, 3, \ldots, m \ ,
\end{align}

\noindent where the lower and upper bound terms are given by \eqref{eq:13} and \eqref{eq:14} for the specific case of system \eqref{eq3} under Assumption \ref{AssumpRec} to \ref{assum:uni}, rendering this problem non-convex. The constraints ensure that, for a given input $u$, each state in $\mathcal{O}$ is allocated either its upper bound probability of transition or the maximum probability mass allowed by the lower bound transition probability of the following states in $\mathcal{O}$ and the probability mass distributed to the preceding states in $\mathcal{O}$. In the case study section of this paper, we tackle optimization problem \eqref{optimiz} using numerical heuristics.

Unlike in the finite-mode case, this value iteration procedure for continuous input sets is not guaranteed to converge in a finite number of steps. Therefore, we suggest computing the maximum change in the reachability probability among all states of $\mathcal{C} \otimes \mathcal{A}$ at each step of the algorithm, and terminating the procedure once this change reaches a user-defined convergence threshold.

\subsection{STATE SPACE REFINEMENT}
\label{ContSpaceRef}

Finally, we discuss partition refinement for system \eqref{eq3} to address Subproblem 2.2. 

The \textcolor{black}{quality} of the controller designed in the CIMC abstraction $\mathcal{C}$ with respect to continuous states of \eqref{eq3} can be assessed as in Section \ref{SecSynthFin} for the finite-mode system case. In light of Subsection \ref{SecRefFin}, we need to construct a best-case and a worst-case product MC induced by the product CIMC $\mathcal{C} \otimes \mathcal{A}$ to determine the suboptimality factor of each state of $\mathcal{C} \otimes \mathcal{A}$. In particular, when devising a maximizing control policy, \textcolor{black}{a} best-case MC $(\mathcal{M}_{\otimes}^{\mathcal{A}})_{u}$ is constructed by solving an upper bound reachability maximization problem on the greatest winning component $(WC)_{L}^{G}$ of the product CIMC $\mathcal{C} \otimes \mathcal{A}$, where $\mathcal{C}$ is the CIMC abstraction of \eqref{eq3} under the current partition $P$. When devising a minimizing control policy, \textcolor{black}{a best-case}  MC \textcolor{black}{$(\mathcal{M}_{\otimes}^{\mathcal{A}})_{u}$} is constructed by solving an upper bound reachability maximization problem on the greatest winning component of the product CIMC \textcolor{black}{$\mathcal{C} \otimes \overline{\mathcal{A}}$}, where $\mathcal{C}$ is the CIMC abstraction of \eqref{eq3}. These upper bound reachability maximization problems are addressed using a similar procedure as in Subsection \ref{ReachMaxCont}, with the difference that the ordering $\mathcal{O} = q_{1} q_{2} q_{3} \ldots q_{m}$ in the optimization program \eqref{optimiz} is now descending with respect to the probability of reaching the target set $G$, that is $p_1 \geq p_2 \geq \ldots \geq p_{m}$.

Propositions \ref{PropSubop} to \ref{PropCompPer}, which discuss some properties that are passed from a partition to its refinements for the finite-mode case, are also valid in this continuous input framework. In particular, as in the finite-mode case, subsets of the input space $U$ which can be shown to be certainly suboptimal may be removed. To find such subsets, we suggest building a partition $U(\left< Q_{j}, s_{i} \right>) = \{U_{n} (\left< Q_{j}, s_{i} \right>) \}_{n = 1}^{k}$ of the input space for all states $\left< Q_{j}, s_{i} \right>$ of $\mathcal{C} \otimes \mathcal{A}$. Then, for all subsets $U_{n}$, an upper bound maximization step on $(WC)_{L}^{G}$ is conducted; subsets yielding an upper bound on the maximum upper bound probability of reaching an accepting BSCC from $\left< Q_{j}, s_{i} \right>$  which is lower than the lower bound produced by $(\widehat{\mu}^{low}_{\Psi})_{\otimes}( \left< Q_{j}, s_{i} \right> )$ \textcolor{black}{(respectively, by $(\widecheck{\mu}^{up}_{\Psi})_{\otimes}( \left< Q_{j}, s_{i} \right>)$ for the case of minimization)} are suboptimal with respect to the entire input set of $\left< Q_{j}, s_{i} \right>$ and are removed from $U(\left< Q_{j}, s_{i} \right>)$, as depicted in Figure \ref{fig_update}. \textcolor{black}{Note that a finer discretization of the input space $U(\left< Q_{j}, s_{i} \right>)$ for the update step may result in the removal of a greater volume of suboptimal inputs from $U(\left< Q_{j}, s_{i} \right>)$ at each iteration of the synthesis procedure, allowing to  ``zoom in" on better inputs for state $\left< Q_{j}, s_{i} \right>$ in fewer iterations at the expense of having to solve a larger number of optimization problems per iteration.} 

\begin{figure}[H]
\centering
\includegraphics[scale=0.8]{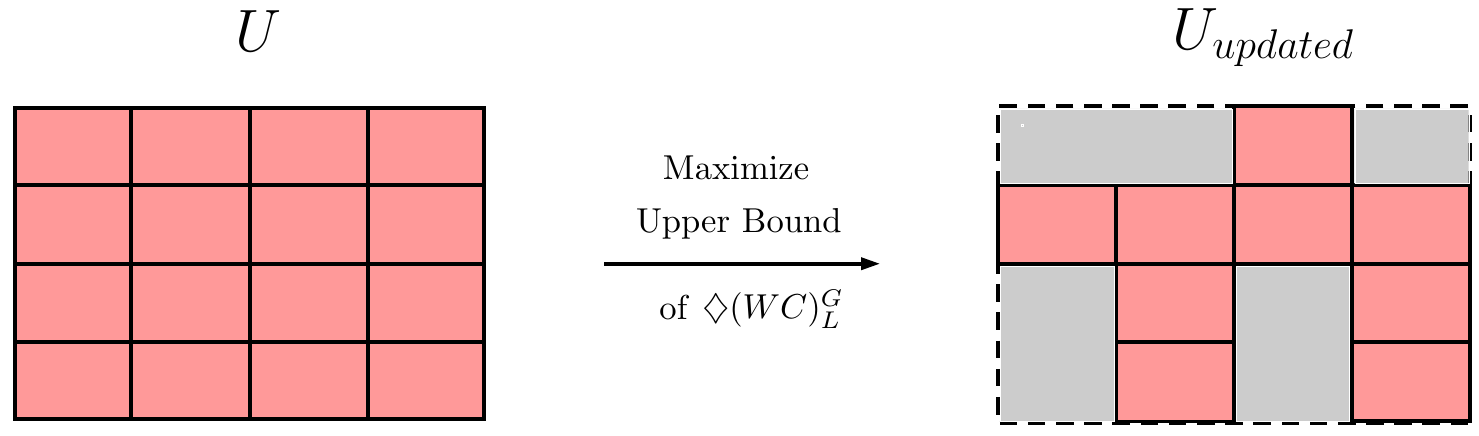}
\caption{Sketch of an input space update before refinement of the domain partition. The original input space $U$ of the considered state is gridded and the upper bound probability of reaching $(WC)_{L}^{G}$ is maximized for all subsets of the grid. The subsets producing suboptimal bounds are shown in gray and are discarded.}
\label{fig_update}
\end{figure}

Finally, once $(\mathcal{M}_{\otimes}^{\mathcal{A}})_{u}$ and $(\mathcal{M}_{\otimes}^{\mathcal{A}})_{l}$ are generated and all input sets are updated, the scoring and refinement procedure are performed identical to the finite-mode case. \textcolor{black}{After refinement, the trigger regions and overlaps of a state $Q_{j}$ calculated in Algorithm \ref{CompConsAlg} have to be re-computed only if there exists a state $\left<Q_{j}, s_{i} \right>$ for some $i$ which belonged to the difference between the greatest winning component and the greatest permanent winning component in the previous abstraction, as only such a state could potentially be a member of a new permanent winning set of states in the refined abstraction as a consequence of Proposition \ref{PropLargest}, and if either $Q_{j}$ has been refined into children states in which case the trigger regions of the children states have to be determined, or a state that was reachable from $Q_{j}$ under some action in the input space $U(Q_{j})$ has been refined into children states with respect to which the trigger regions have to be evaluated.}

The controller synthesis algorithm for continuous input systems is summarized in Algorithm \ref{ContInputSyn}. \textcolor{black}{The run-time complexity of most sub-algorithms of Algorithm \ref{ContInputSyn} has already been presented in Section \ref{SecSynthFin}. Additionally, as previously discussed, the input selection in Algorithm \ref{InpSelecAlg} grows combinatorially in $|Q|$ and the computation of overlaps in Algorithm \ref{CompConsAlg} runs exponentially in $|Q|$. Lastly, the computational complexity of the algorithm strongly depends on the optimization method used for solving the reachability maximization step.}

 \textcolor{black}{Now that the theoretical foundations of our approach have been thoroughly delineated, our synthesis techniques for both finite mode systems and continuous input systems are put into practical use in the following case study section}.

\begin{algorithm}[H]
\caption{Controller Synthesis for Continuous Input Systems} 
\begin{algorithmic}[1]
\STATE \textbf{Input}: Partition $P_{0}$ of domain $D$ of \eqref{eq1}, $\omega$-regular property $\Psi$ \textcolor{black}{(complement property $\overline{\Psi}$)} and corresponding DRA $\mathcal{{A}}$ \textcolor{black}{($\overline{\mathcal{{A}}}$)}, target controller precision $\epsilon_{thr}$ \hspace{-0.6em}
\STATE \textbf{Output}: Maximizing (minimizing) switching policy $\widehat{\mu}_{\Psi}^{low}$ ($\widecheck{\mu}_{\Psi}^{up}$), final partition $P_{fin}$
\STATE \textbf{Initialize}: $\epsilon_{max} := 1$, $i := 0$
\WHILE{$\epsilon_{max} > \epsilon_{thr}$}
\STATE Compute the sets $(WC)_{P}^{G}$ and $(WC)_{L}^{G}$ of the product CIMC $\mathcal{C} \otimes \mathcal{A}$ \textcolor{black}{($\mathcal{C} \otimes \overline{\mathcal{A}}$)} constructed from $P_{i}$ using Algorithm \ref{CompConsAlg}
\STATE Compute the policies $\widehat{\mu}_{\Psi}^{low}$ and $\widehat{\mu}_{\Psi}^{up}$ ($\widecheck{\mu}_{\Psi}^{up}$ and $\widecheck{\mu}_{\Psi}^{low}$) of the CIMC $\mathcal{C}$ according to Subsection \ref{ReachMaxCont}
\STATE Compute $\epsilon_{max}$ using \eqref{maxsub}
\IF{$\epsilon_{max} > \epsilon_{thr}$}
\STATE Compute the best-case and worst-case product MC $(\mathcal{M}_{\otimes}^{\mathcal{A}})_{u}$ and $(\mathcal{M}_{\otimes}^{\mathcal{A}})_{l}$ as discussed in Subsection \ref{ContSpaceRef}.
\STATE Construct a partition $\{U_{n} (\left< Q_{j}, s_{m} \right>) \}_{n = 1}^{k}$ of the input space $U(\left< Q_{j}, s_{m} \right>)$ of all states $\left< Q_{j}, s_{m} \right>$ of the product CIMC $\mathcal{C} \otimes \mathcal{A}$ \textcolor{black}{($\mathcal{C} \otimes \overline{\mathcal{A}}$)}
\FOR{$U_{n}(\left< Q_{j}, s_{m} \right>) \in U(\left< Q_{j}, s_{m} \right>)$}
\STATE Maximize the upper bound probability of $\Diamond (WC)_{L}^{G}$ from $\left< Q_{j}, s_{m} \right>$ with the set of inputs $U_{n}(\left< Q_{j}, s_{m} \right>)$
\ENDFOR
\STATE Apply the scoring procedure in Algorithm \ref{ScorAlgSyn} and refine all states in $P_{i}$ with a score above a user-defined threshold to produce $P_{i+1}$
\STATE Update the set of inputs of all states in the product CIMC $\mathcal{C} \otimes \mathcal{A}$ \textcolor{black}{($\mathcal{C} \otimes \overline{\mathcal{A}}$)} constructed from $P_{i+1}$ as discussed in Subsection \ref{ContSpaceRef}.
\STATE $i := i + 1$
\ENDIF
\ENDWHILE
\RETURN $\widehat{\mu}_{\Psi}^{low}$ ($\widecheck{\mu}_{\Psi}^{up}$), $P_{fin} := P_{i}$
\end{algorithmic}
\label{ContInputSyn}
\end{algorithm}

\section{CASE STUDY}

We now present a numerical example to demonstrate the synthesis procedures derived in previous sections. The code used to generate this example was written in Python 2.7 and is available at \url{https://github.com/gtfactslab/StochasticSynthesis}. All computations were conducted on the Partnership for an Advanced Computing Environment (PACE) Georgia Tech cluster \cite{PACE} which offered 120GB of memory. The examples in Section \ref{finitemodeex} were performed on a single core, while those in Section \ref{continuousinputex} were distributed over 4 cores.

We consider a stochastic model of a bistable switch with dynamics
\begin{equation}
  \begin{aligned}
x_{1}[k+1] & = x_1[k]  + ( \; -a x_{1}[k] + x_{2}[k] \; ) \cdot \Delta T + u_1+ w_1\\
x_{2}[k+1] & = x_2[k]  + \Big(\; \frac{(x_{1}[k])^{2}}{(x_{1}[k])^{2} + 1} - b x_{2}[k] \; \Big) \cdot \Delta T + u_2 + w_2 \;\; ,
 \end{aligned}
\label{eq5}
\end{equation}

\noindent where $w_1$ and $w_2$ are independent truncated Gaussian random variables sampled at each time step. $w_1\sim \mathcal{N}(\mu = -0.3 ; \sigma^2 = 0.1)$ and is truncated on $[-0.4, -0.2]$; $w_2$ is similarly defined. We will consider two sets of inputs in this case study: the continuous set $U = [-0.05, 0.05] \times [-0.05, 0.05]$ and the finite set $U_{fin} = \{[0, 0]^T, [0.05, 0]^T,  [-0.05, 0]^T, [0, 0.05]^T, [0, -0.05]^T\}$ which is a subset of $U$. The domain $D$ of \eqref{eq5} is $[0.0, 4.0] \times [0.0, 4.0]$. To keep the system self-contained in $D$, we assume that any time the disturbance would push the trajectory outside of $D$, it is actually maintained on the boundary of $D$. We choose the parameters $a = 1.3$, $b =0.25$ and $\Delta T =0.05$. Our goal is to synthesize a controller for \eqref{eq5} that maximizes the probability of satisfying the LTL specifications
\begin{align*}
\phi_1 & = \square((\neg A \wedge  \bigcirc A) \rightarrow (\bigcirc \bigcirc A \wedge \bigcirc \bigcirc \bigcirc A)) \ , \\
\phi_2 & = ( \textcolor{black}{\square \lozenge} A \rightarrow \lozenge B) \wedge (\lozenge C \rightarrow \square \neg B) \ ,
\end{align*}
\noindent where $\phi_{1}$ translates to `` always remain in an $A$ state for at least 2 more time steps when entering an $A$ state'' and $\phi_{2}$ translates to  ``reach a $B$ state if the trajectory \textcolor{black}{always eventually returns to an $A$ state}, and never reach a $B$ state if the trajectory reaches a $C$ state'' in natural language. The DRA corresponding to specification $\phi_{1}$ contains 5 states and has 1 Rabin pair, while the DRA representing $\phi_{2}$ contains 7 states and has 3 Rabin pairs. Schematic representations of these DRAs are found in Figure \ref{FigsDRA}. Initial partitions of the domain $D$ along with the labeling of the states are presented in the next subsections. First, we synthesize controllers using the finite set of inputs $U_{fin}$. Second, we devise control policies from the continuous set of inputs $U$. Finally, we compile some observations and concluding remarks in a discussion subsection.

\begin{figure}
\includegraphics[scale=0.7]{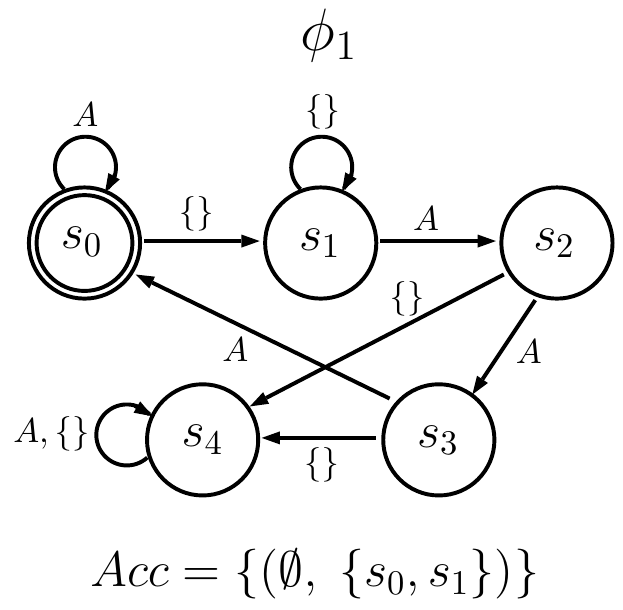}
\hspace{0.7cm}
\includegraphics[scale=0.7]{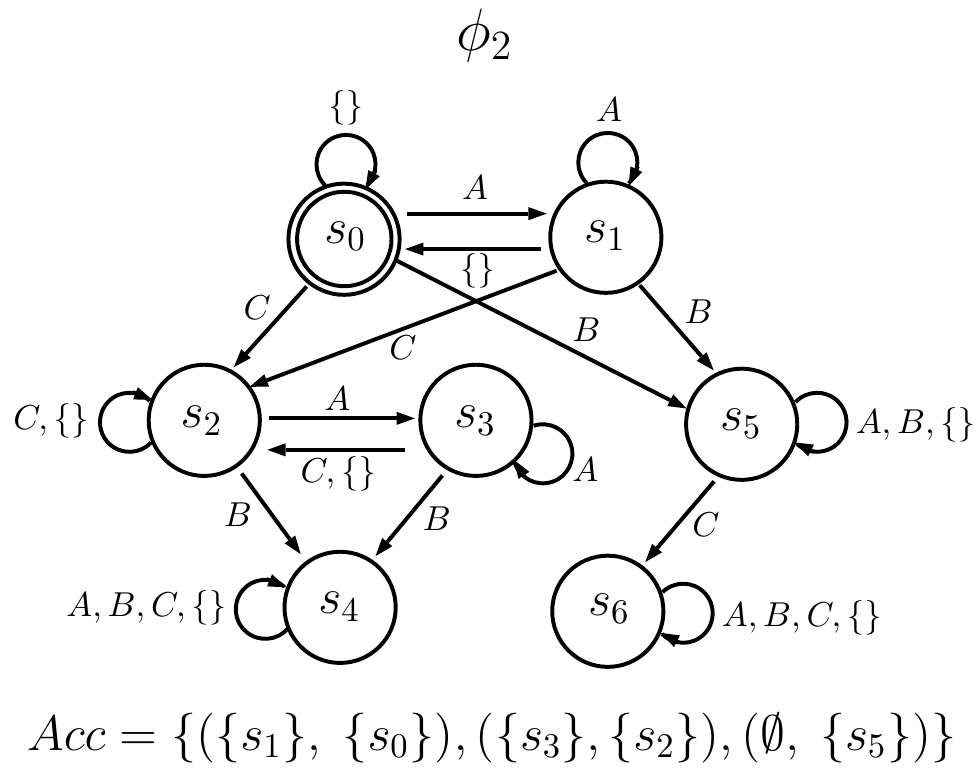}
\caption{\textcolor{black}{Possible DRAs for specification $\phi_{1}$ (Left) and specification $\phi{_2}$ (Right). Note that these DRAs assume the convention that the state initialization ``counts as a transition", i.e., when a state of the BMDP $Q_{j}$ is chosen as initial state, the product BMDP transitions from $\left< Q_{j}, s_{0} \right>$ to $\left< Q_{j}, \delta(s_{0}, L(Q_{j})) \right>$}.}
\label{FigsDRA}
\end{figure}

\subsection{FINITE-MODE SYNTHESIS}
\label{finitemodeex}

First, we synthesize a switching policy for maximizing the probability of satisfying $\phi_{1}$ and $\phi_{2}$ in \eqref{eq5} using the finite set $U_{fin}$, where each input corresponds to one mode, and applying the synthesis Algorithm \ref{AlgFiniteMode} for finite-mode systems with a target precision $\epsilon_{thr} = 0.30$. At each refinement step, states of the current partition with a refinement score that is greater than 5\% of the maximum score are chosen to be refined and split in half along their greatest dimension. The deterministic portion of the dynamics of system \eqref{eq5} are known to be monotone. Therefore, BMDP abstractions of \eqref{eq5} for rectangular partitions of $D$ are efficiently computed using the technique in \cite{dutreix2018} for each mode. The initial partition of the domain $D$ for specification $\phi_{1}$ is given in Figure \ref{FiniteStateSpaces1} (Left), and the initial partition for specification $\phi_{2}$ is in Figure \ref{FiniteStateSpaces2} (Left). At each refinement step, the states selected for refinement are split in half along their greatest dimension.

The component search algorithm is conducted at each iteration of the while loop of Algorithm \ref{AlgFiniteMode} until the set of potential accepting BSCCs \textcolor{black}{$(U)_{\textcolor{black}{pot}}^{G}$} becomes empty, in which case the component construction procedure is skipped and the lower bound maximization problem in Line 6 is performed on the latest known version of the greatest permanent winning component $(WC)_{P}^{G}$. As no new permanent accepting BSCCs can be constructed anywhere else in the state space in this scenario, an under-approximation of $(WC)_{P}^{G}$ containing all possible permanent BSCCs without all permanent sink states is sufficient for the reachability problem. Note that $(WC)_{P}^{G}$ can be updated if permanent sink states with a lower bound of 1 are constructed during the lower bound maximization step.

The controller synthesis procedure for specification $\phi_{1}$ terminated in 13 hours and 27 minutes with a greatest suboptimality factor $\epsilon_{max} = 0.2999$, and created 18418 states in 18 refinement steps, corresponding to 92090 states in the product BMDP constructed from the final partition. The final refined partition is shown in Figure \eqref{FiniteStateSpaces1} (Right). For specification $\phi_{2}$, the procedure terminated in 38 minutes with a greatest suboptimality factor $\epsilon_{max} = 0.2998$ and created 7711 states in 15 refinement steps, corresponding to 53977 states in the product BMDP constructed from the final partition. The final refined partition is shown in Figure \eqref{FiniteStateSpaces2} (Right).

The cumulative execution time against the number of refinement steps is plotted in Figure \ref{RunningTimesFinite} for specification $\phi_{1}$ (Left) and specification $\phi_{2}$ (Right). The average number of actions left at each state of the product BMDP $\mathcal{B} \otimes \mathcal{A}$ after each refinement step is displayed in Figure \ref{AvgNumAct} for specification $\phi_{1}$ (Left) and specification $\phi_{2}$ (Right). Lastly, three possible metrics of precision for the computed controller --- namely, the greatest suboptimality factor, average suboptimality factor of the product BMDP and fractions of states above the target precision $\epsilon_{thr}$ --- as a function of the number of refinement steps are shown in Figure \eqref{OptFacFinite} for specification $\phi_{1}$ (Left) and specification $\phi_{2}$ (Right).

\begin{figure}[H]
\centering
\includegraphics[scale=0.32]{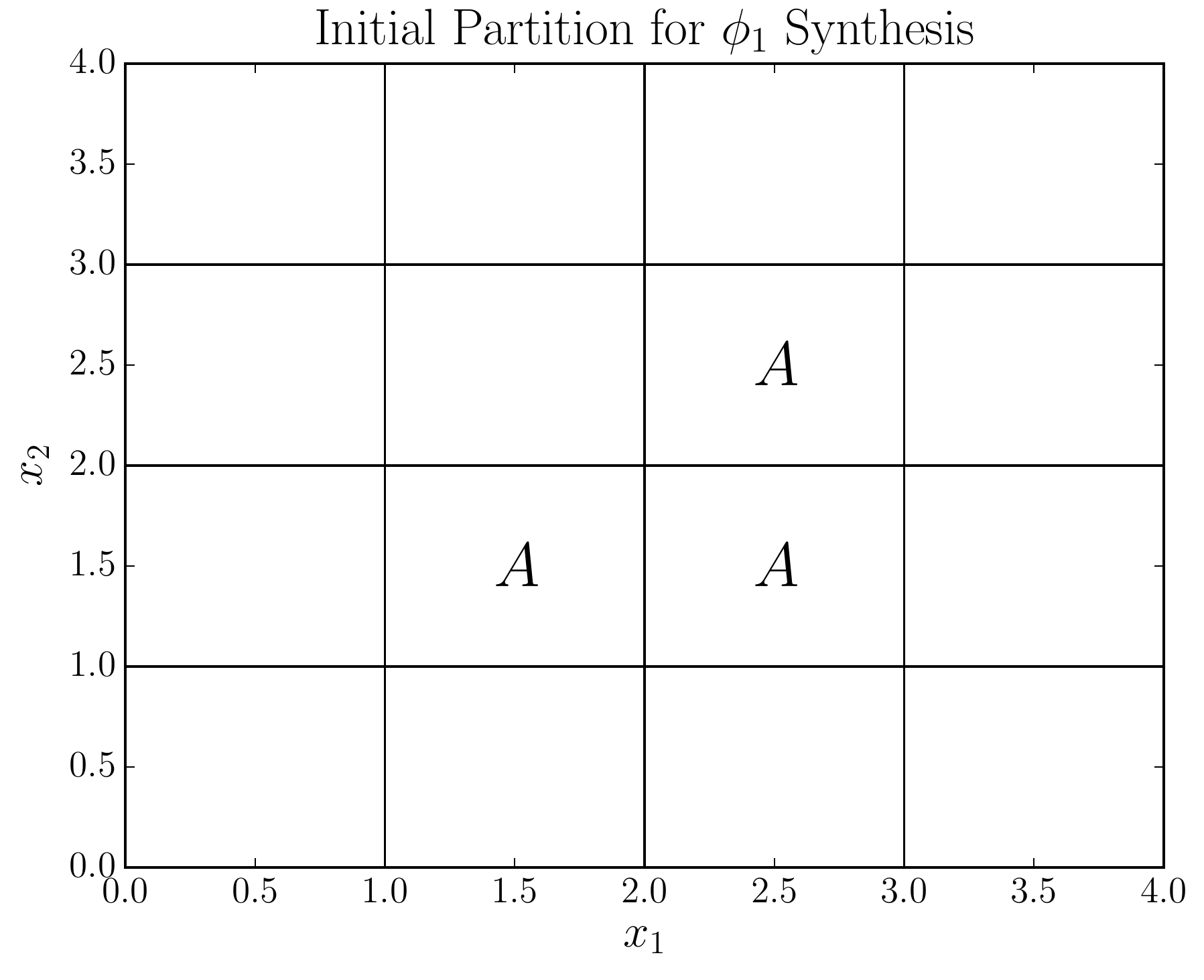}
\includegraphics[scale=0.32]{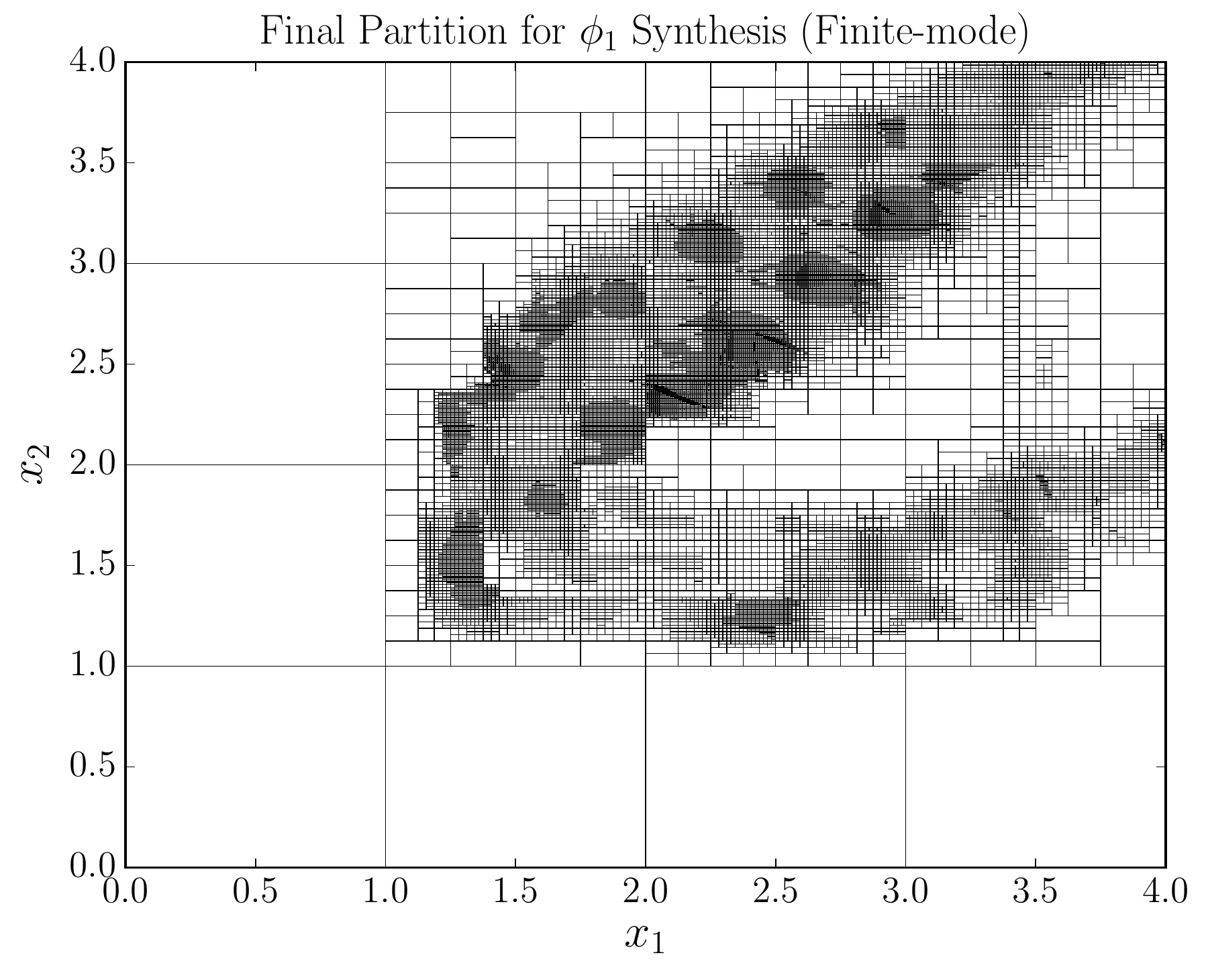}
\caption{Initial domain partition with state labeling (Left) and final domain partition upon synthesis of a controller for maximizing the probability of satisfying $\phi_{1}$ in \eqref{eq5} using the finite set of inputs $U_{fin}$ after 18 refinement steps (Right). The final partition contains 18418 states, corresponding to 92090 states in the resulting product BMDP abstraction.}
\label{FiniteStateSpaces1}
\end{figure}

\begin{figure}[H]
\centering
\includegraphics[scale=0.32]{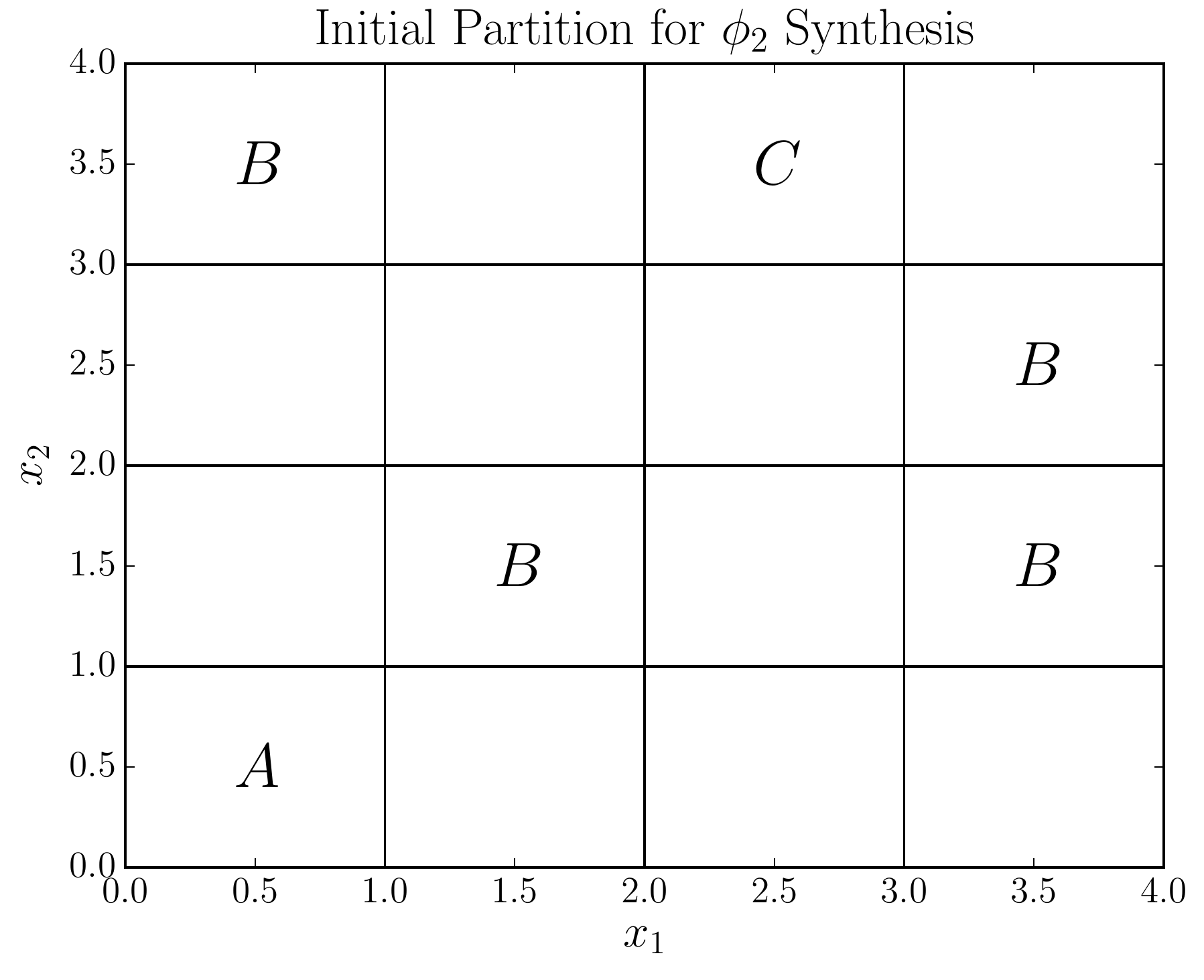}
\includegraphics[scale=0.32]{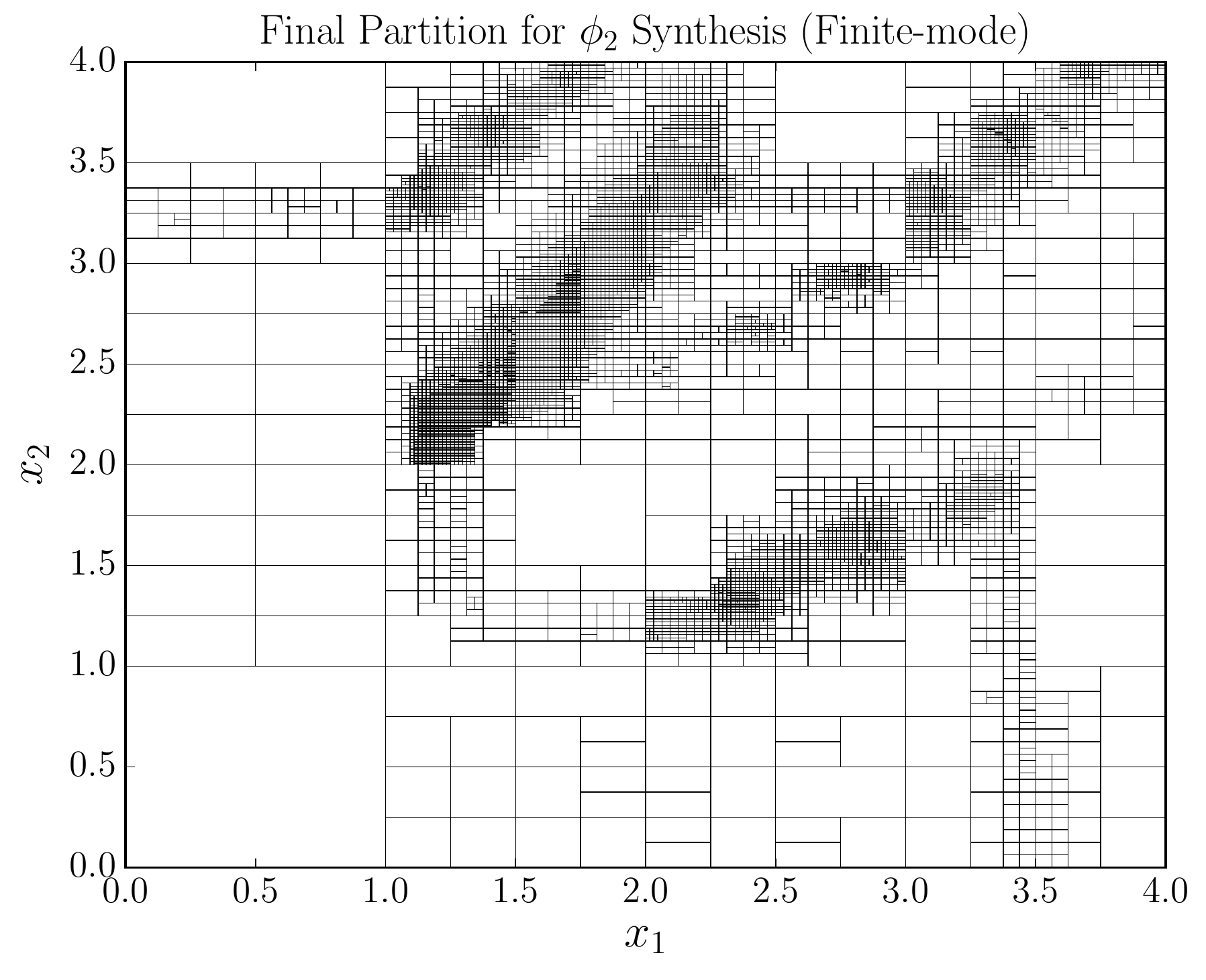}
\caption{Initial domain partition with state labeling (Left) and final domain partition upon synthesis of a controller for maximizing the probability of satisfying $\phi_{2}$ in \eqref{eq5} using the finite set of inputs $U_{fin}$ after 15 refinement steps (Right). The final partition contains 7711 states, corresponding to 53977 states in the resulting product BMDP abstraction.}
\label{FiniteStateSpaces2}
\end{figure}

\begin{figure}[H]
\centering
\includegraphics[scale=0.30]{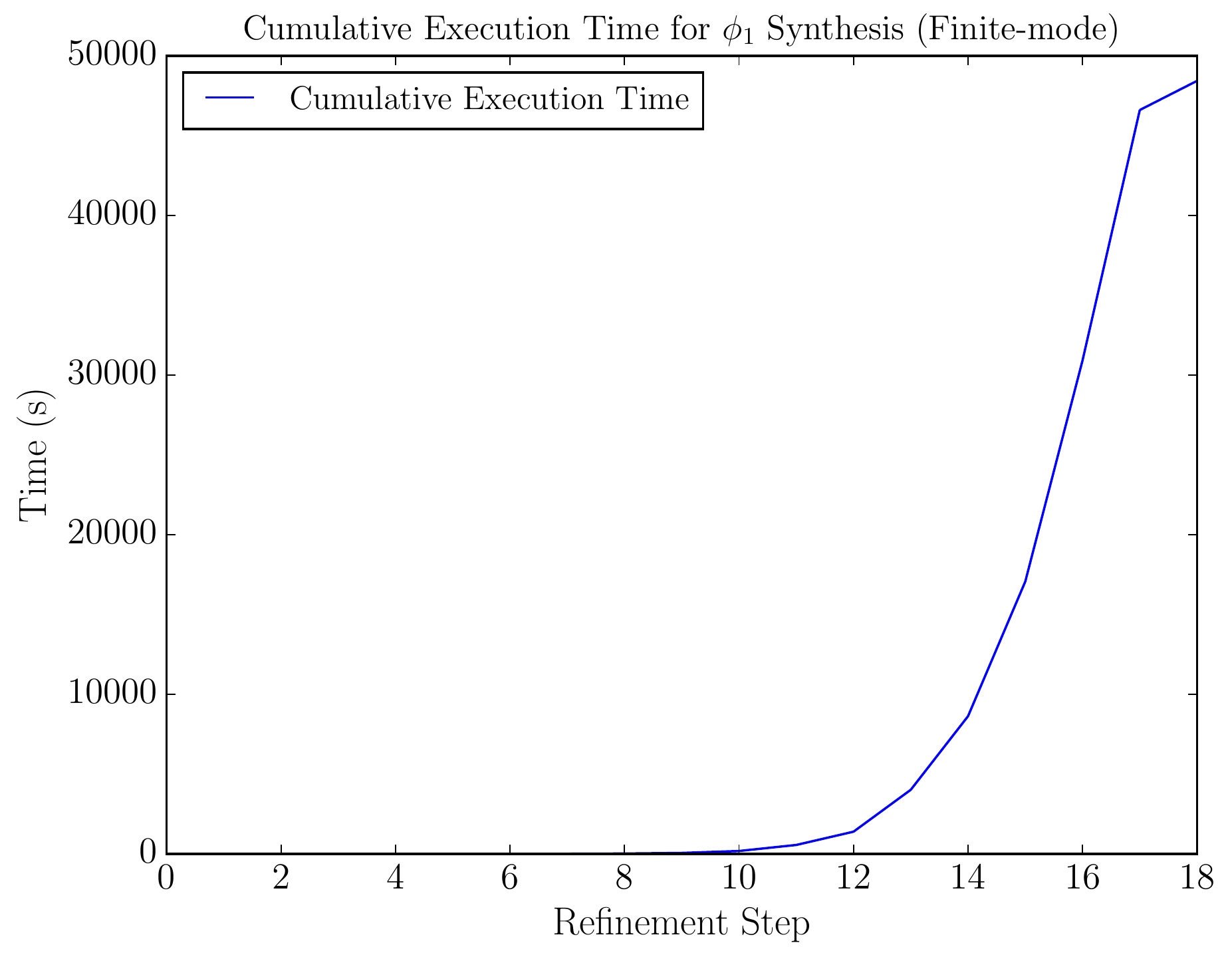}
\includegraphics[scale=0.30]{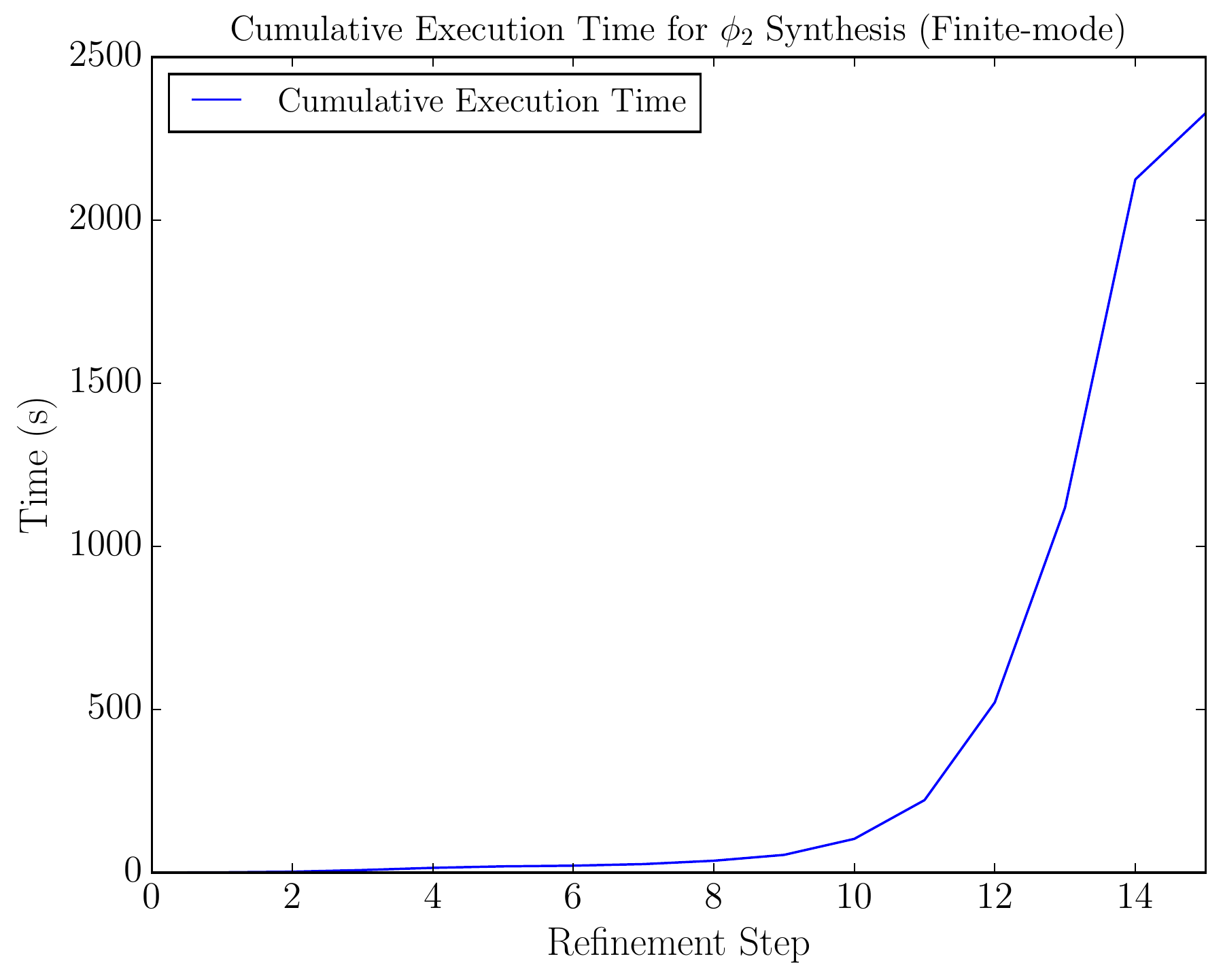}
\caption{Cumulative execution time of the synthesis procedure with the finite input set $U_{fin}$ as a function of the number of refinement steps for specification $\phi_{1}$ (Left) and specification $\phi_{2}$ (Right). The synthesis procedure for $\phi_{1}$ terminated in 13 hours and 27 minutes; the synthesis procedure for $\phi_{2}$ terminated in 38 minutes}
\label{RunningTimesFinite}
\end{figure}

\begin{figure}[H]
\centering
\includegraphics[scale=0.30]{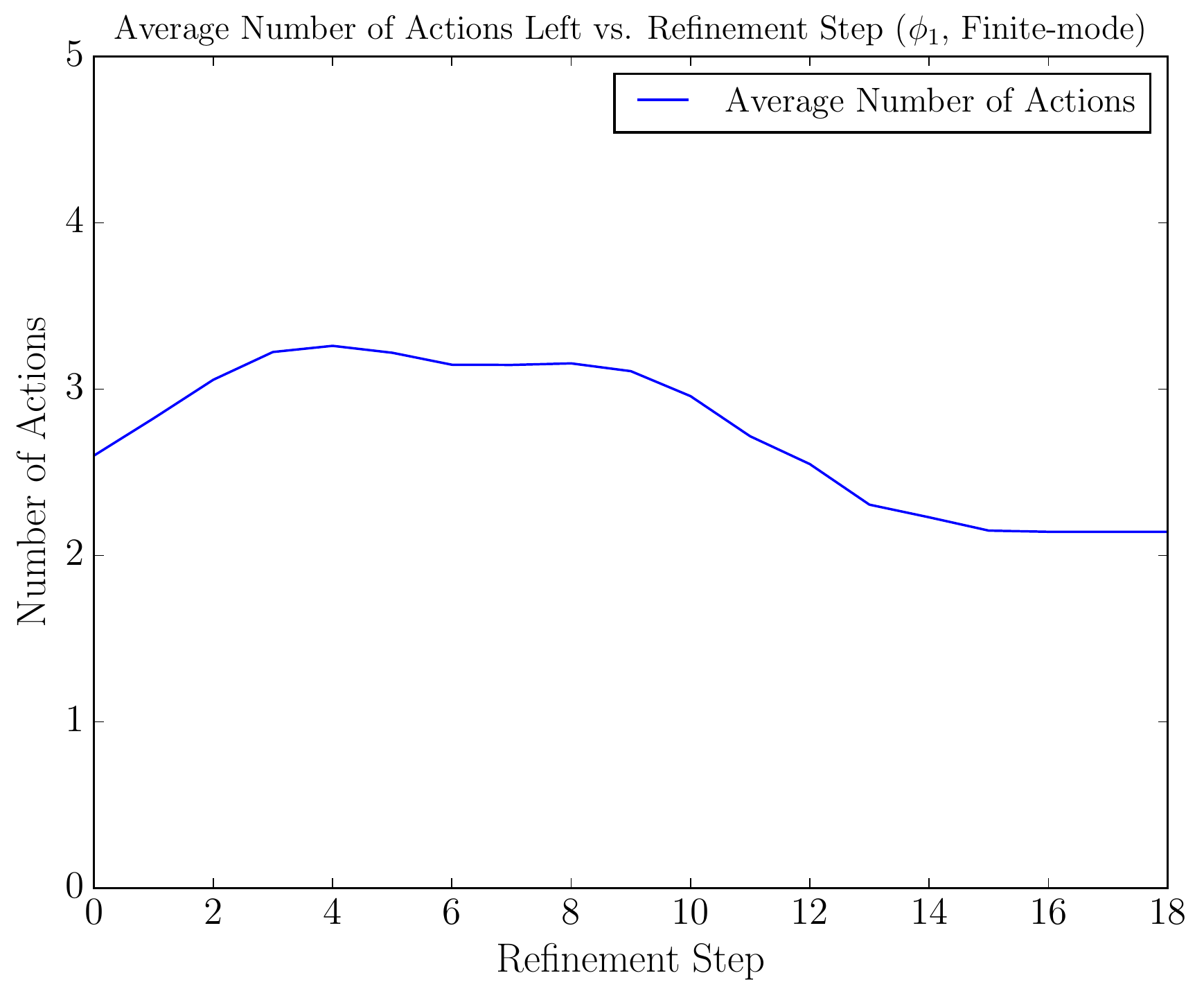}
\includegraphics[scale=0.30]{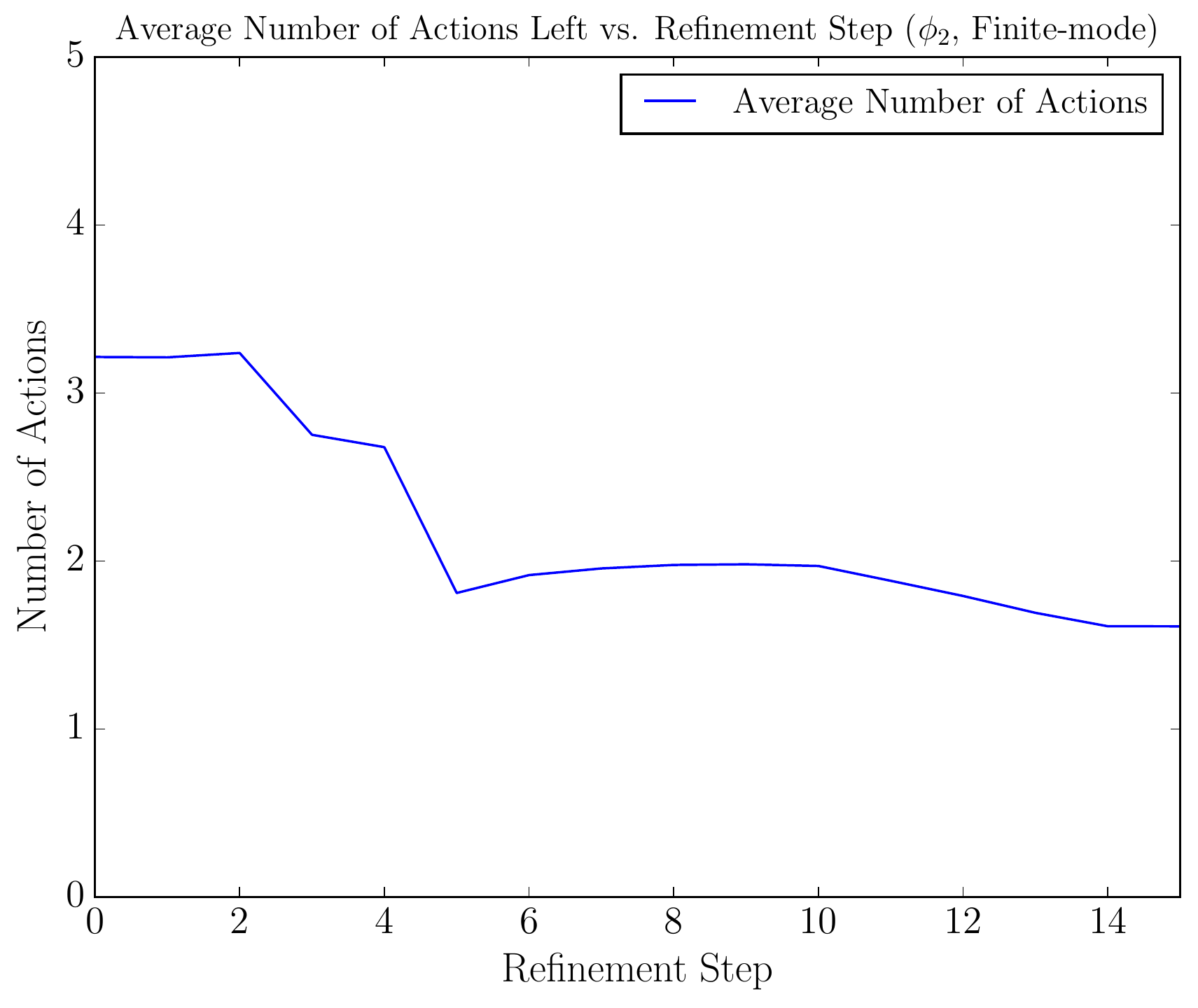}
\caption{Average number of actions left at each state of the product BMDP as a function of the number of refinement steps for specification $\phi_{1}$ (Left) and specification $\phi_{2}$ (Right).}
\label{AvgNumAct}
\end{figure}

\begin{figure}[H]
\centering
\includegraphics[scale=0.30]{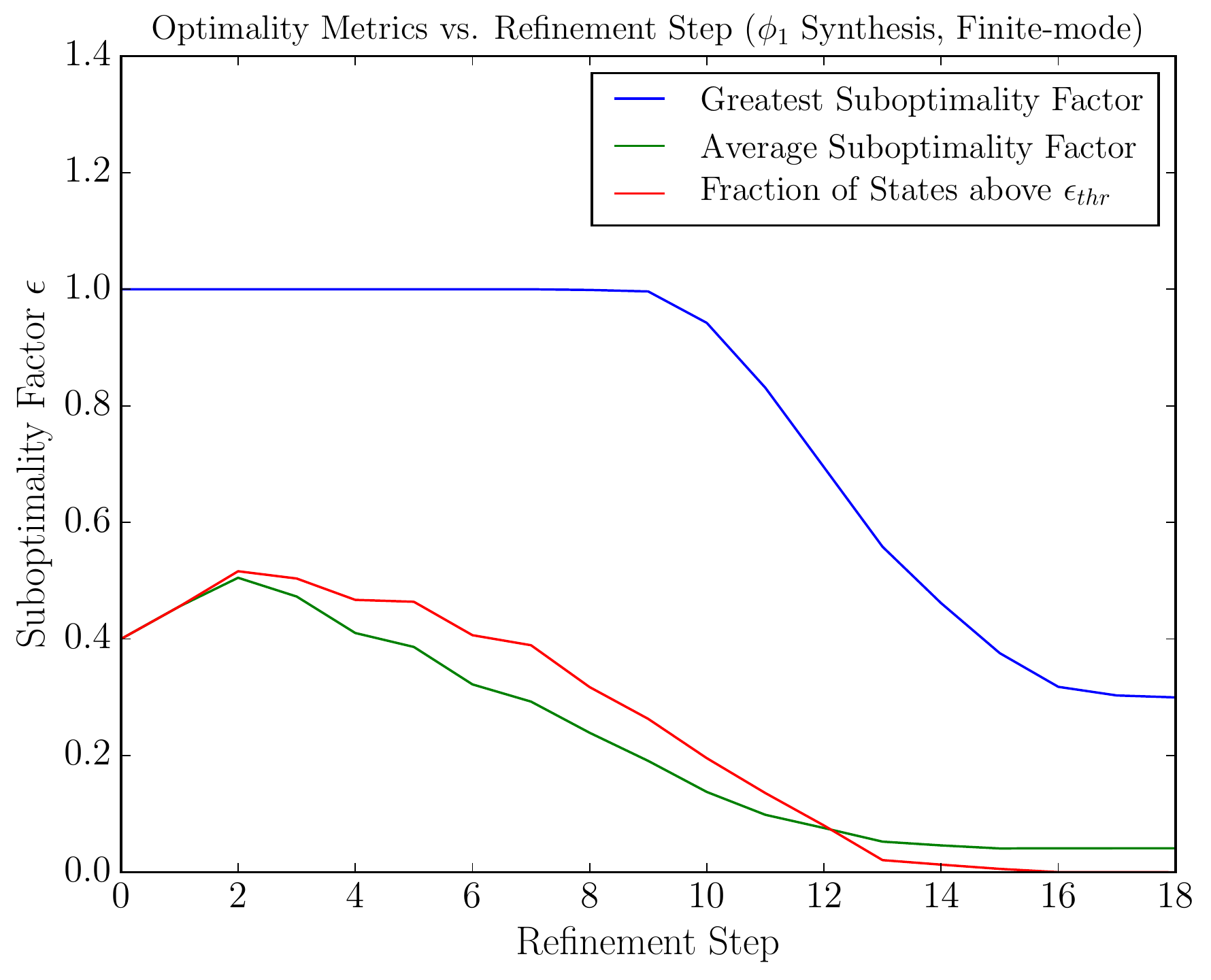}
\includegraphics[scale=0.30]{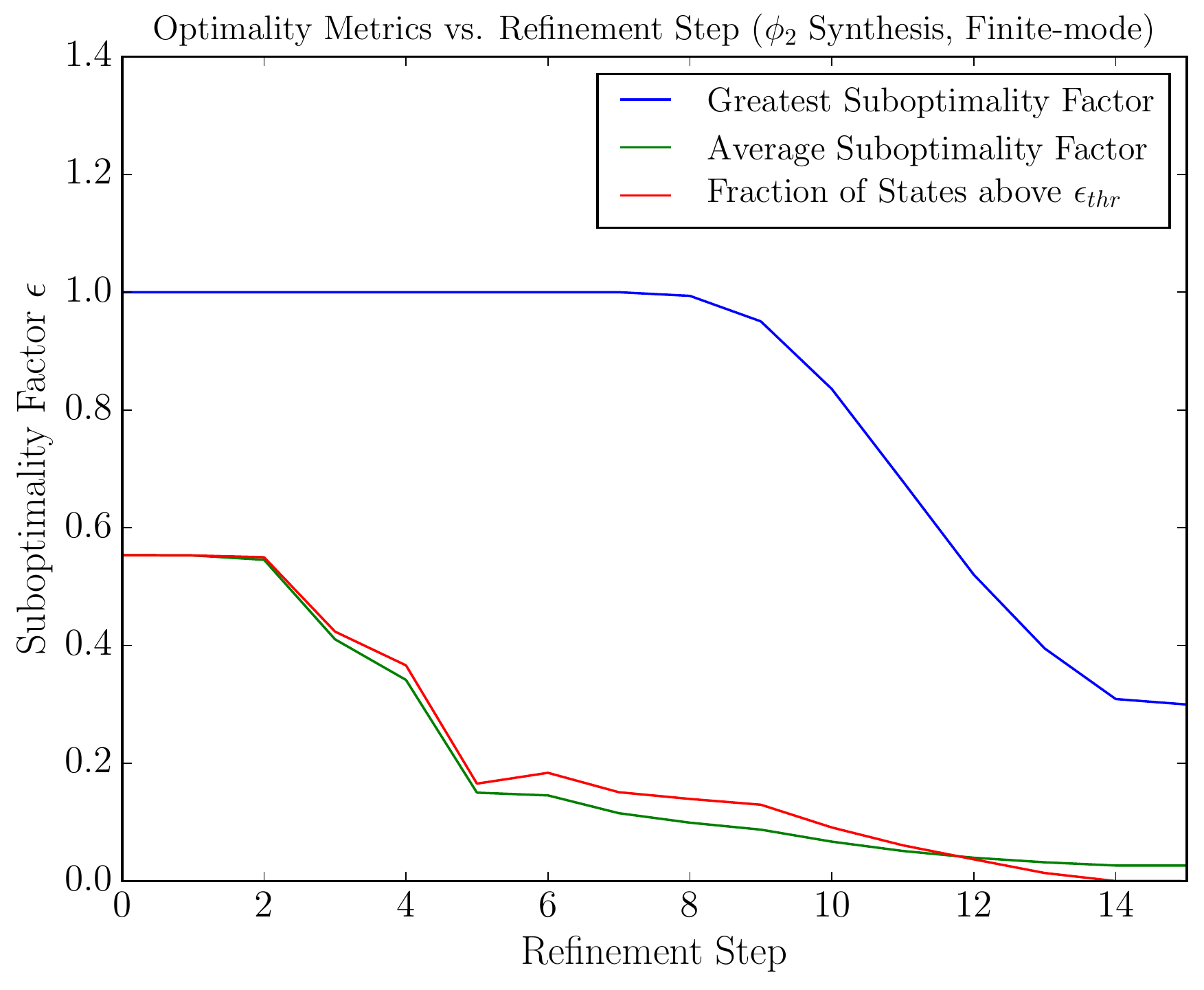}
\caption{Different metrics of precision for the controller computed from the finite input set $U_{fin}$ as a function of the number of refinement steps for specification $\phi_{1}$ (Left) and specification $\phi_{2}$ (Right). The synthesis algorithm reaches the target $\epsilon_{thr} = 0.30$ for both specifications. This means that the probability of satisfying the specifications can only increase by a maximum of 0.30 from all possible states of the abstracted system by choosing another switching policy.}
\label{OptFacFinite}
\end{figure}

\subsection{CONTINUOUS INPUT SET SYNTHESIS}
\label{continuousinputex}

Next, we generate a control policy from the set of continuous inputs $U$ by applying Algorithm \ref{ContInputSyn}. 

The desired threshold precision is chosen to be $\epsilon_{thr} = 0.30$. At each refinement step, states of the current partition with a refinement score that is greater than 1\% of the maximum score are chosen to be refined and split in half along their greatest dimension. Tight rectangular over-approximation of the deterministic reachable set of \eqref{eq5} are obtained efficiently from the results in \cite{coogan2015efficient} thanks to the monotone property of the state update map. The input space of all states in the product CIMC is stored as a union of rectangles. When evaluating the optimality of the synthesized controller before every refinement step, we partition each rectangle of the input space of all states into 4 rectangles of equal area. This allows the input spaces to always remain a union of rectangles in case some sub-regions of the input space were removed, as in Figure \ref{fig_update}, which facilitates the computation of the overlaps in Algorithm \ref{CompConsAlg}.

The non-convex optimization problem in Algorithm \ref{InpSelecAlg}, line 14, and the non-convex optimization problem \eqref{optimiz} are solved by gridding each rectangle $U_{i}$ of the input space of interest with an $N$-by-$N$ meshgrid, where $N = \text{max}  \lbrace N_{min}, $ $\lceil{N_{init} \cdot \frac{Area(U_{i})}{Area(U)} \rceil} \rbrace$ with $N_{min} = 3$ and $N_{init} = 12$, and using a convex solver from all points of the grid. The component construction algorithm is conducted at each iteration of the while loop of Algorithm \ref{ContInputSyn} until the set of potential accepting BSCCs $(U)_{\textcolor{black}{pot}}^{G}$ becomes empty, as in the finite-mode examples. The threshold of convergence for the reachability value iteration scheme is set to 0.01.

The controller synthesis procedure for specification $\phi_{1}$ was manually terminated after 12 refinement steps which lasted 22 hours and 32 minutes with a greatest suboptimality factor $\epsilon_{max} = 0.8705$, and created 16079 states, corresponding to 80395 states in the product BMDP constructed from the final partition. The final refined partition is displayed in Figure \ref{StateSpaceCont1} (Right). The procedure for specification $\phi_{2}$ was manually terminated after 14 refinement steps which lasted 73 hours with a greatest suboptimality factor $\epsilon_{max} = 0.7754$, and created 24607 states in 14 refinement steps, corresponding to 172249 states in the product BMDP constructed from the final partition. The final refined partition is displayed in Figure \ref{StateSpaceCont2} (Right). 

The cumulative execution time against the number of refinement steps is plotted in Figure \ref{RunningTimesCont} for specification $\phi_{1}$ (Left) and specification $\phi_{2}$ (Right). The original input space for all states of the system is shown in Figure \ref{InputSpaceCont}, along with the reduced input space with respect to specification $\phi_{1}$ and $\phi_{2}$ upon refinement for 2 states of the system. Finally, the greatest suboptimality factor, average suboptimality factor of the product CIMC and fractions of states above the target precision $\epsilon_{thr}$ as a function of the number of refinement steps are shown in Figure \eqref{OptFacCont} for specification $\phi_{1}$ (Left) and specification $\phi_{2}$ (Right).

\begin{figure}[H]
\centering
\includegraphics[scale=0.32]{InitialPartition_phi1}
\includegraphics[scale=0.32]{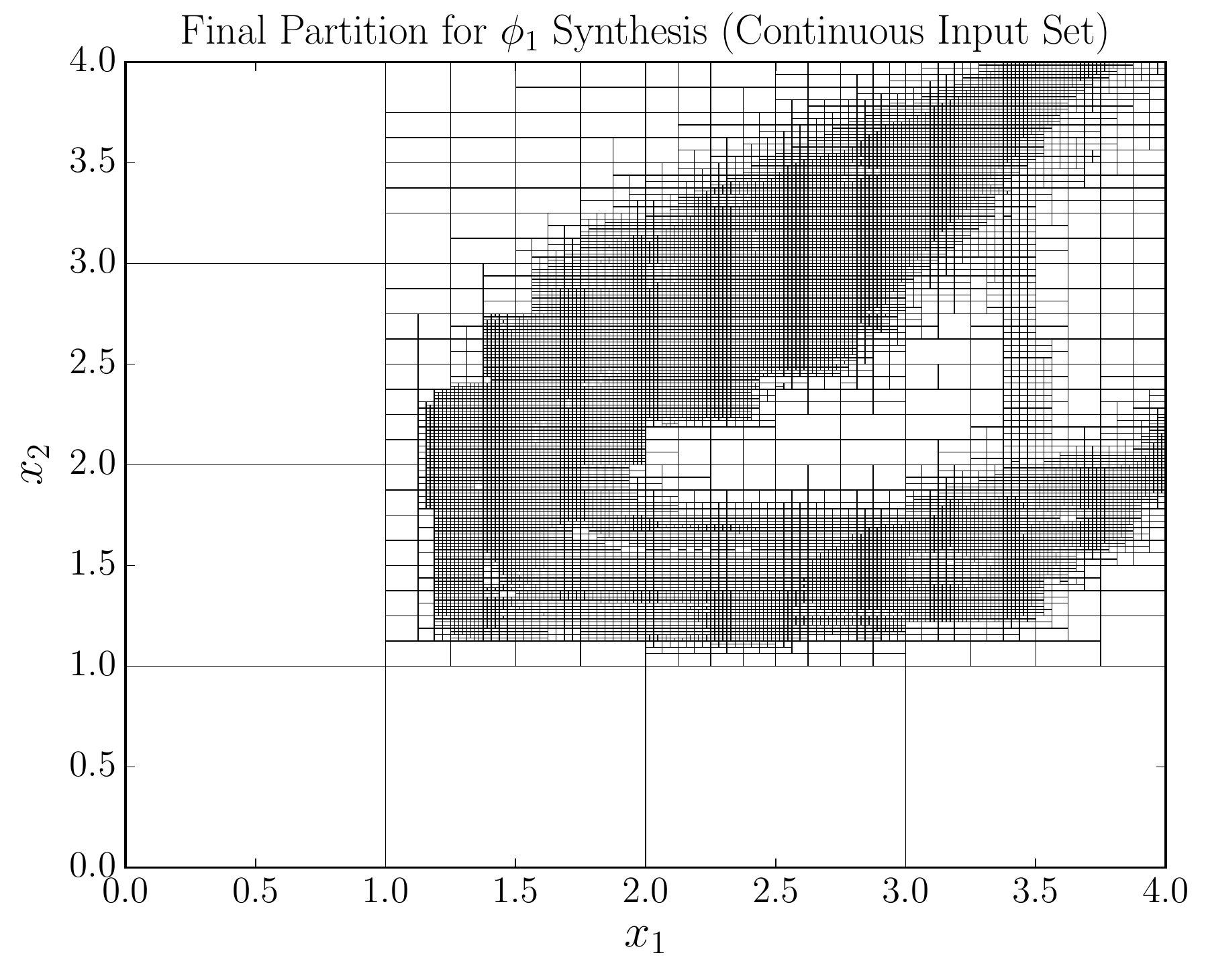}
\caption{Initial domain partition with state labeling (Left) and final domain partition upon synthesis of a controller for maximizing the probability of satisfying $\phi_{1}$ using the continuous set of inputs $U$ after 12 refinement steps (Right). The final partition contains 16079 states, corresponding to 80395 states in the resulting product CIMC abstraction.}
\label{StateSpaceCont1}
\end{figure}

\begin{figure}[H]
\centering
\includegraphics[scale=0.32]{InitialPartition_phi2}
\includegraphics[scale=0.32]{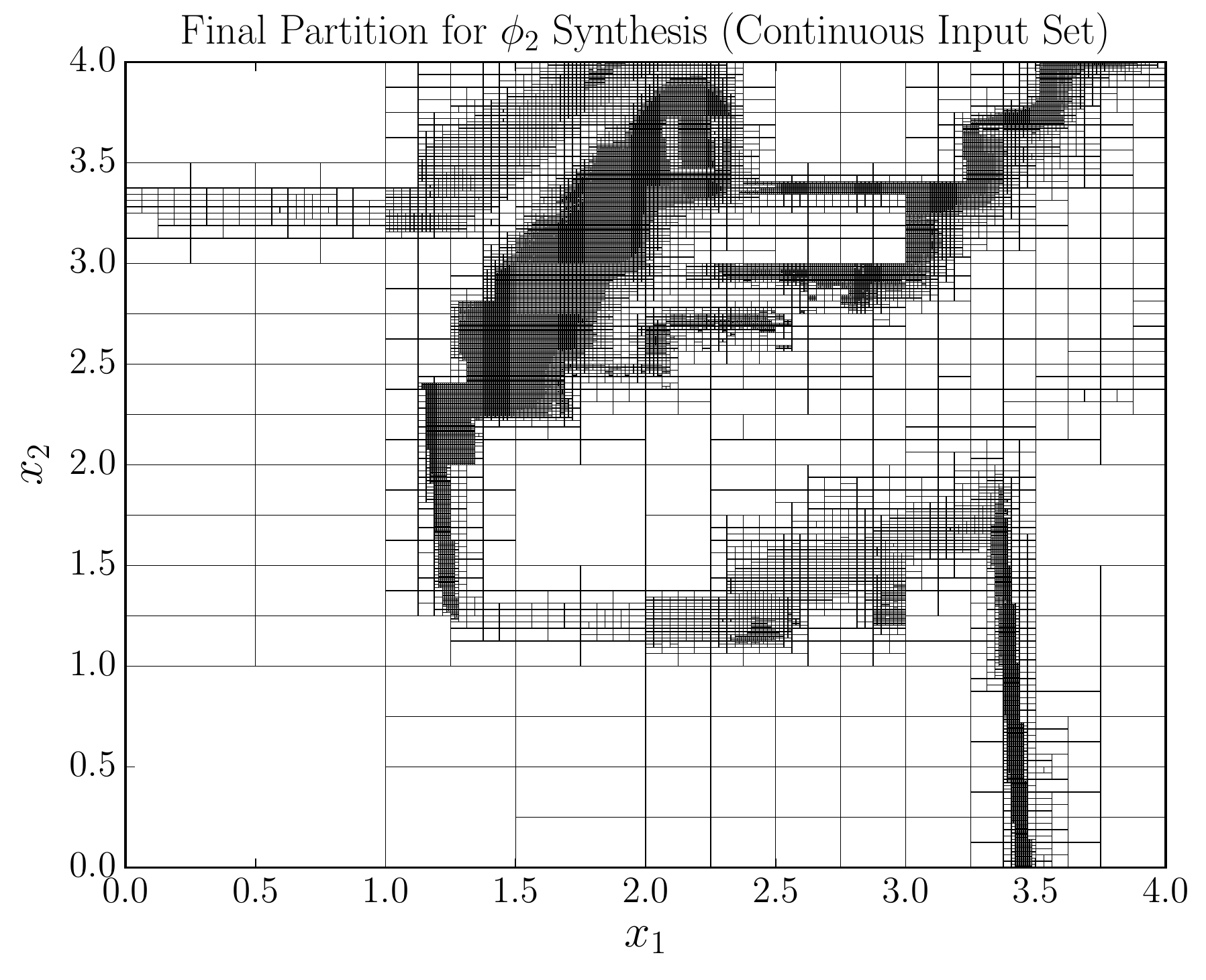}
\caption{Initial domain partition with state labeling (Left) and final domain partition upon synthesis of a controller for maximizing the probability of satisfying $\phi_{2}$ using the continuous set of inputs $U$ after 14 refinement steps (Right). The final partition contains 24607 states, corresponding to 172249 states in the resulting product CIMC abstraction.}
\label{StateSpaceCont2}
\end{figure}

\begin{figure}[H]
\centering
\includegraphics[scale=0.31]{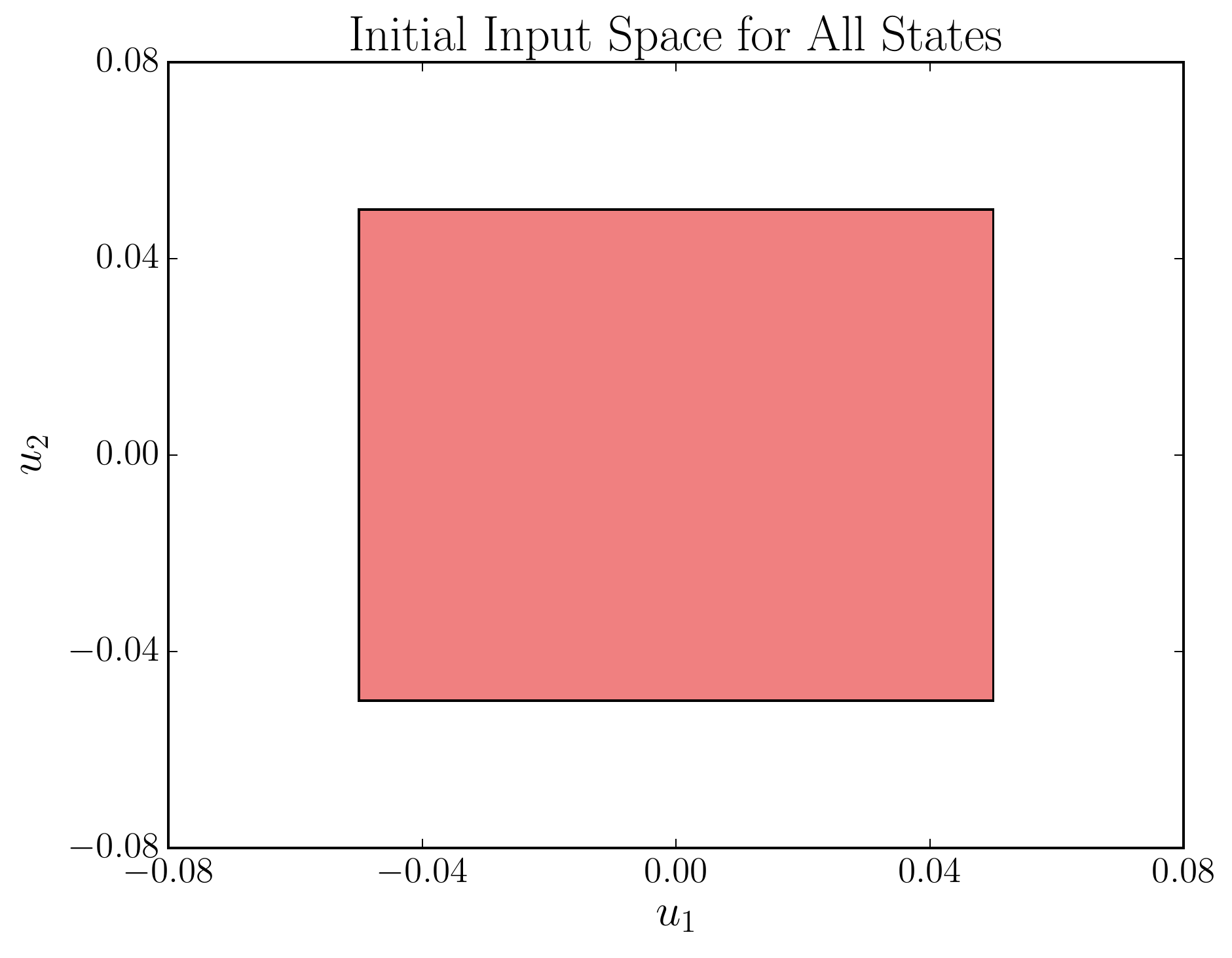}\\
\includegraphics[scale=0.31]{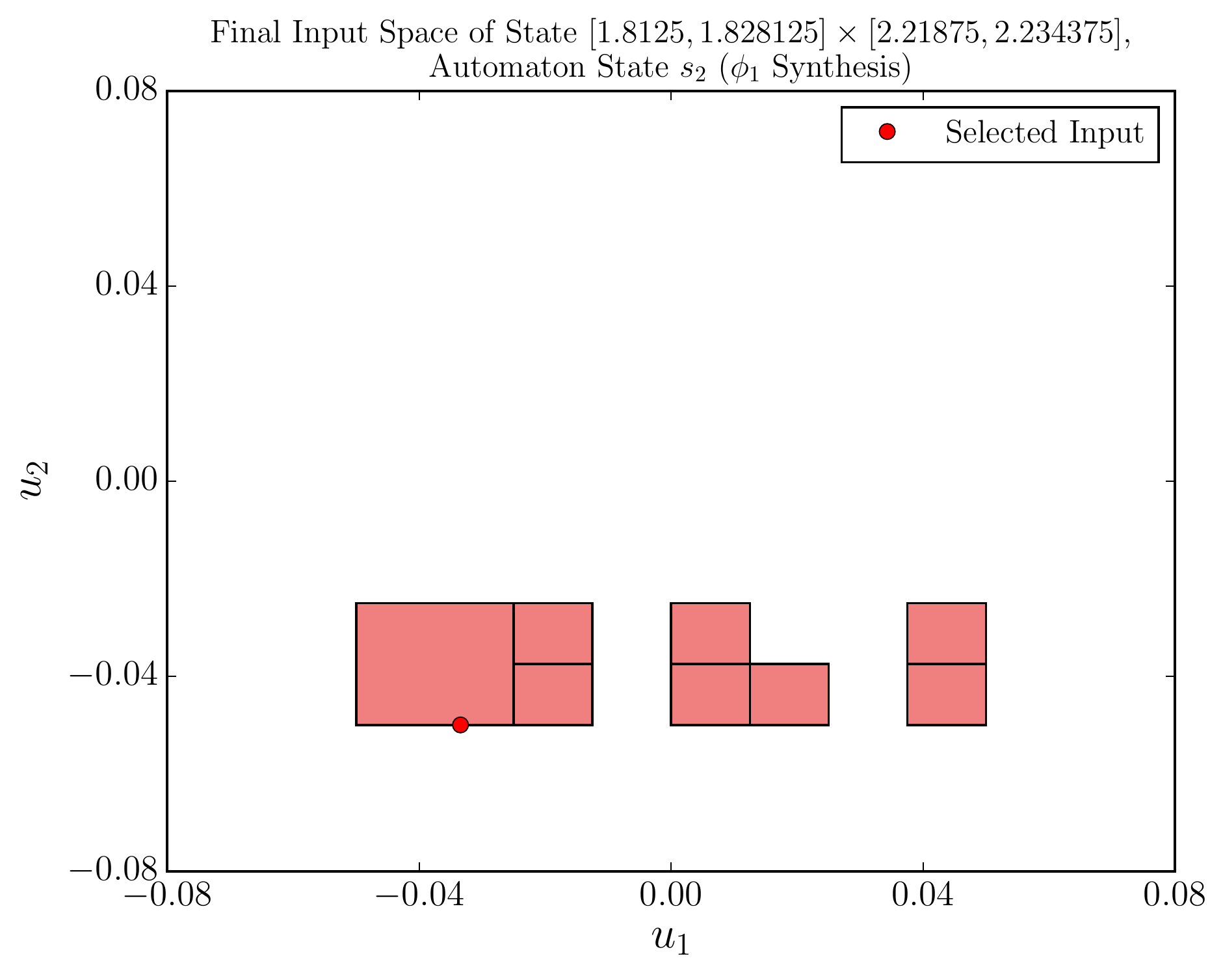}
\includegraphics[scale=0.31]{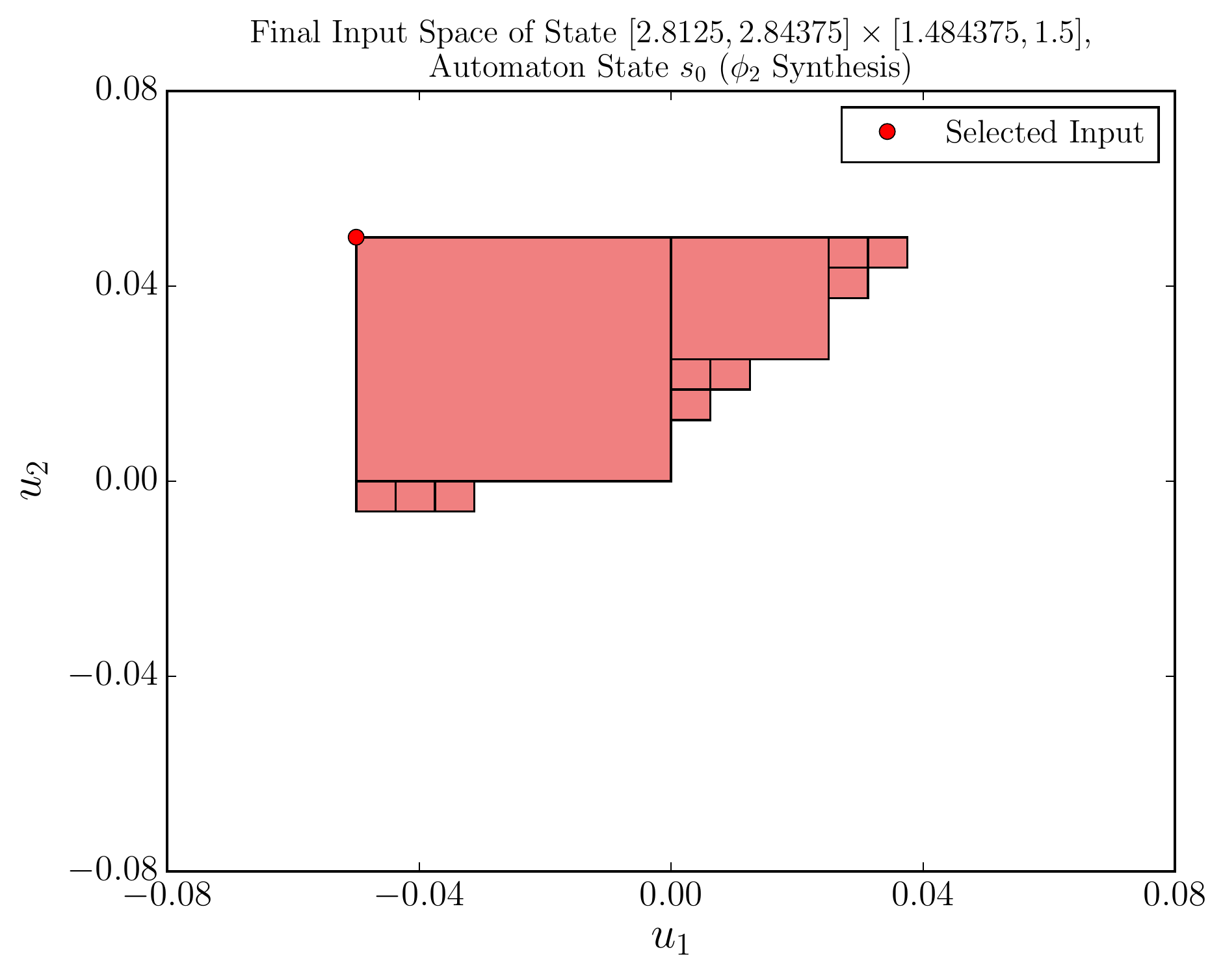}
\caption{Plot of the initial input space $U$ (Top) for all states of the state space. The reduced input space of state $[1.8125, 1.828125] \times [2.21875, 2.234375]$ with automaton state $s_2$ with respect to specification $\phi_1$ upon refinement is shown in the bottom left plot. The reduced input space of state $[2.8125, 2.84375] \times [1.484375, 1.5]$ with automaton state $s_{0}$ with respect to specification $\phi_2$ upon refinement is shown in the bottom right plot.}
\label{InputSpaceCont}
\end{figure}

\begin{figure}[H]
\centering
\includegraphics[scale=0.30]{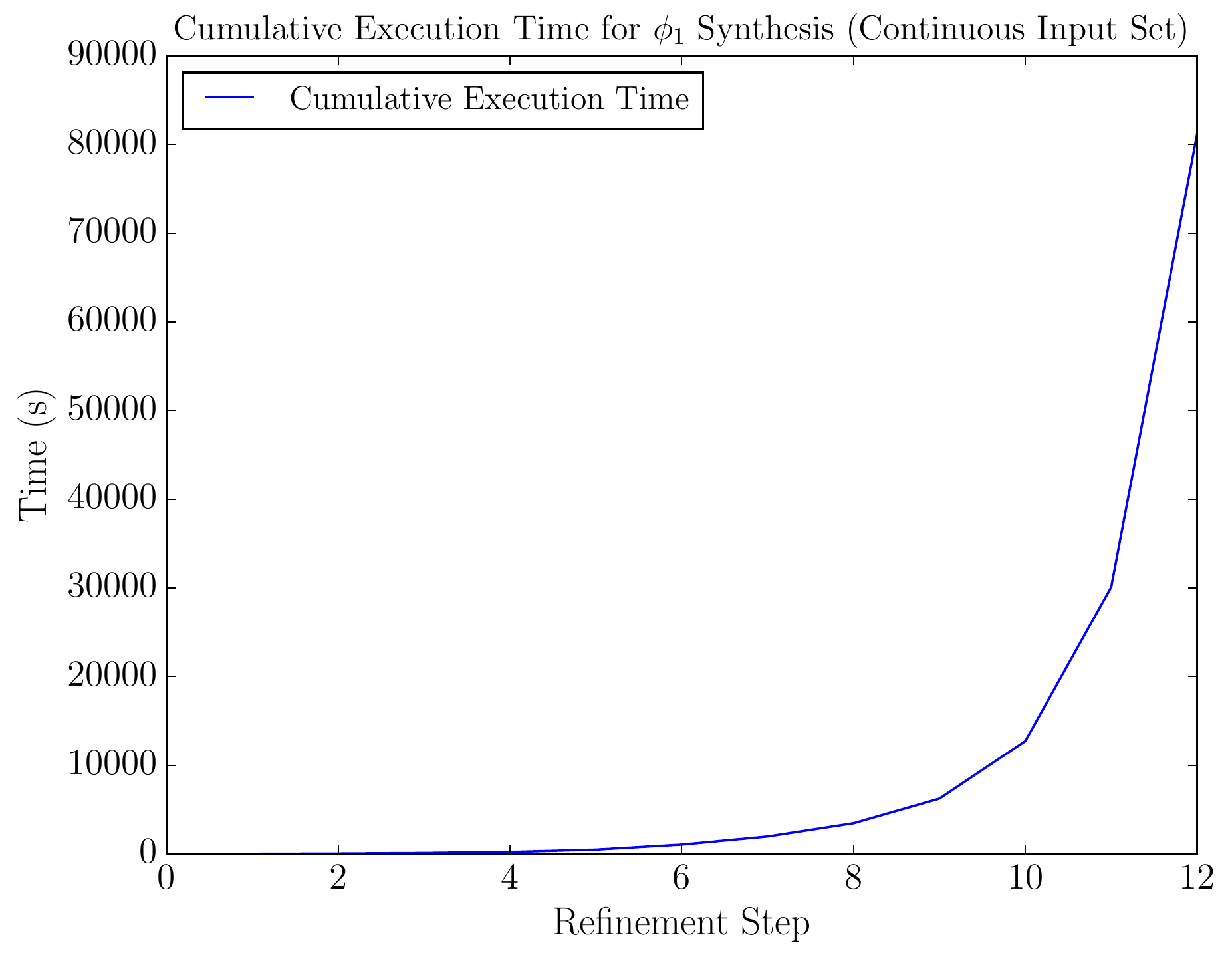}
\includegraphics[scale=0.30]{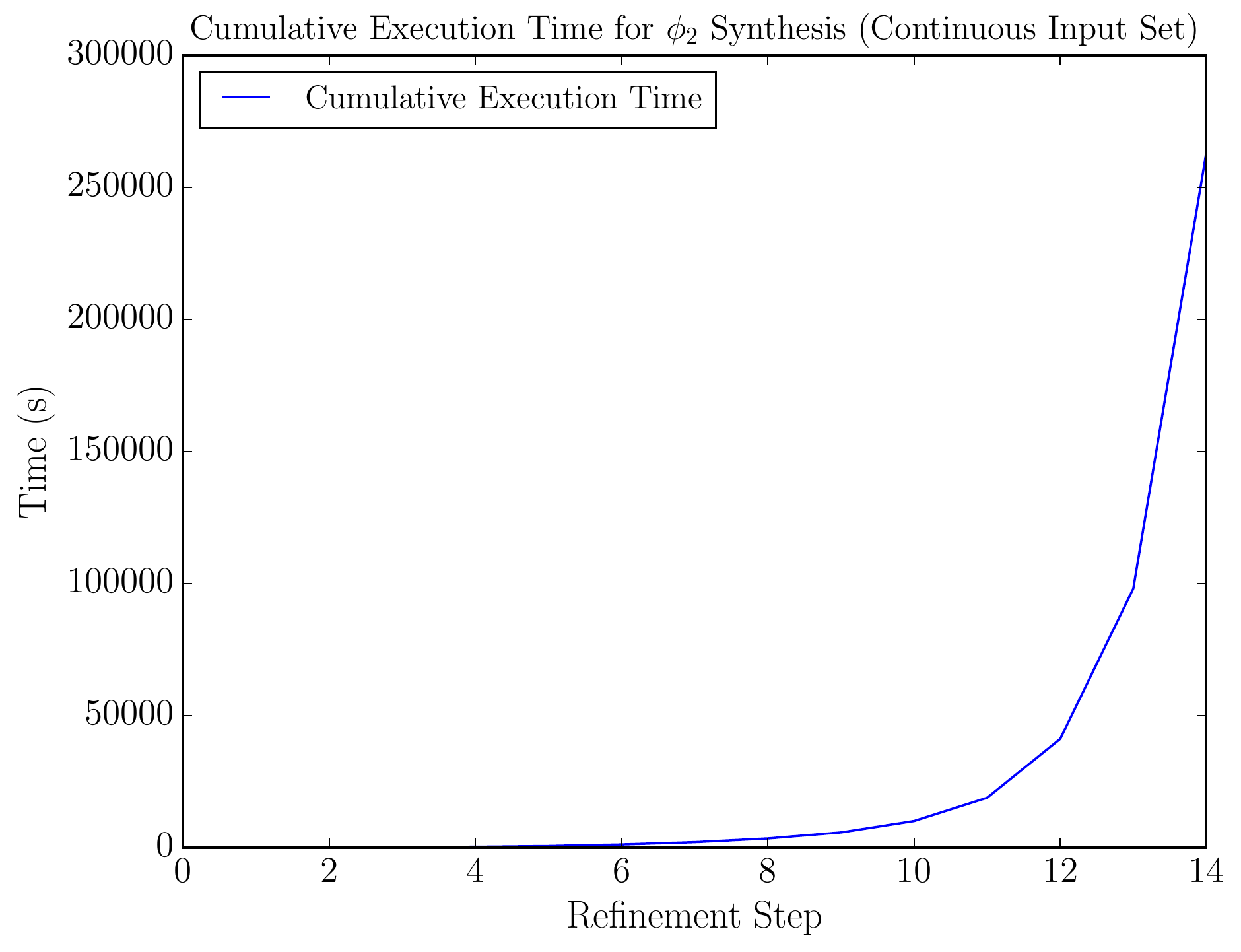}
\caption{Cumulative execution time of the synthesis procedure with the continuous input set $U$ as a function of the number of refinement steps for specification $\phi_{1}$ (Left) and specification $\phi_{2}$ (Right).}
\label{RunningTimesCont}
\end{figure}

\begin{figure}[H]
\centering
\includegraphics[scale=0.30]{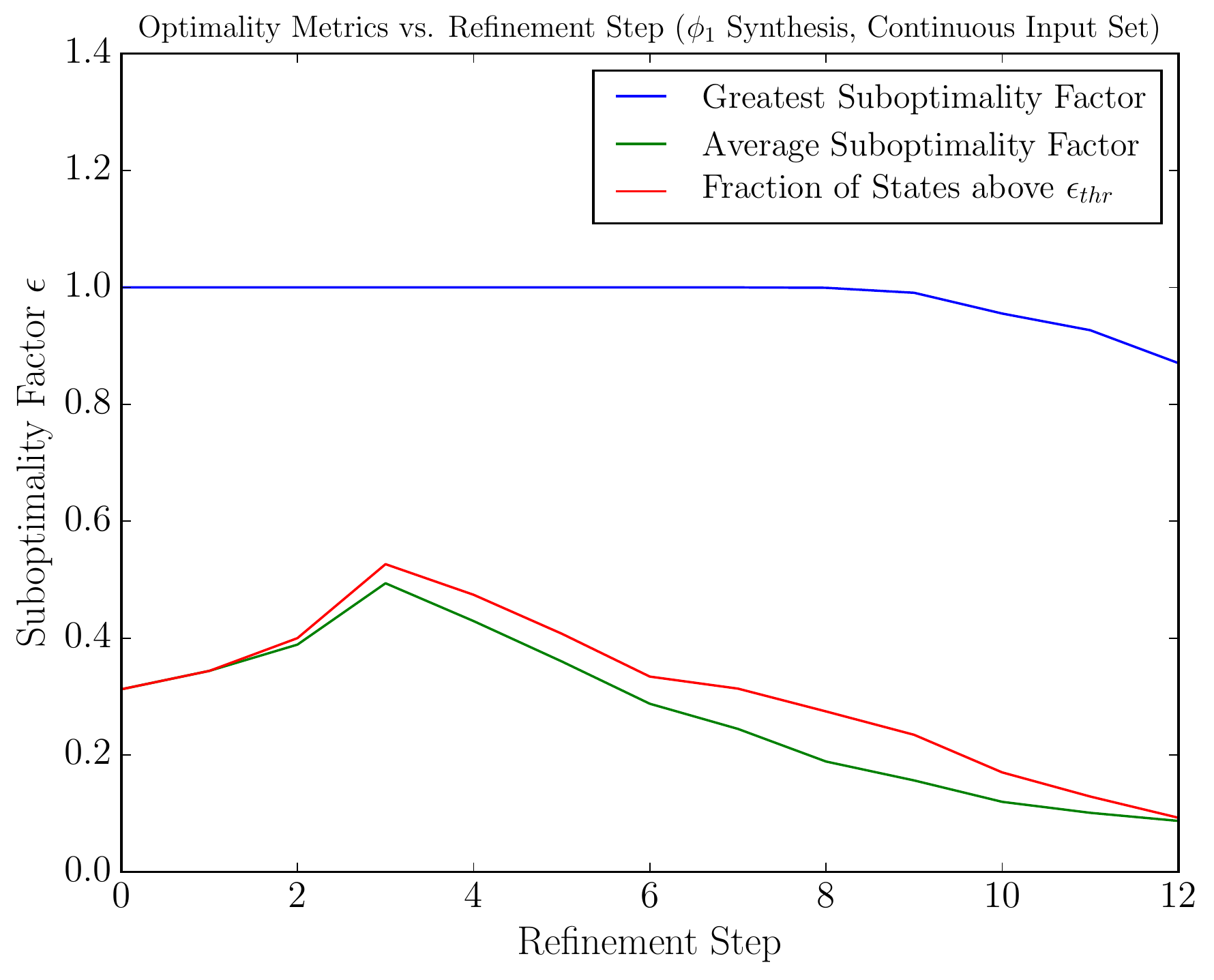}
\includegraphics[scale=0.30]{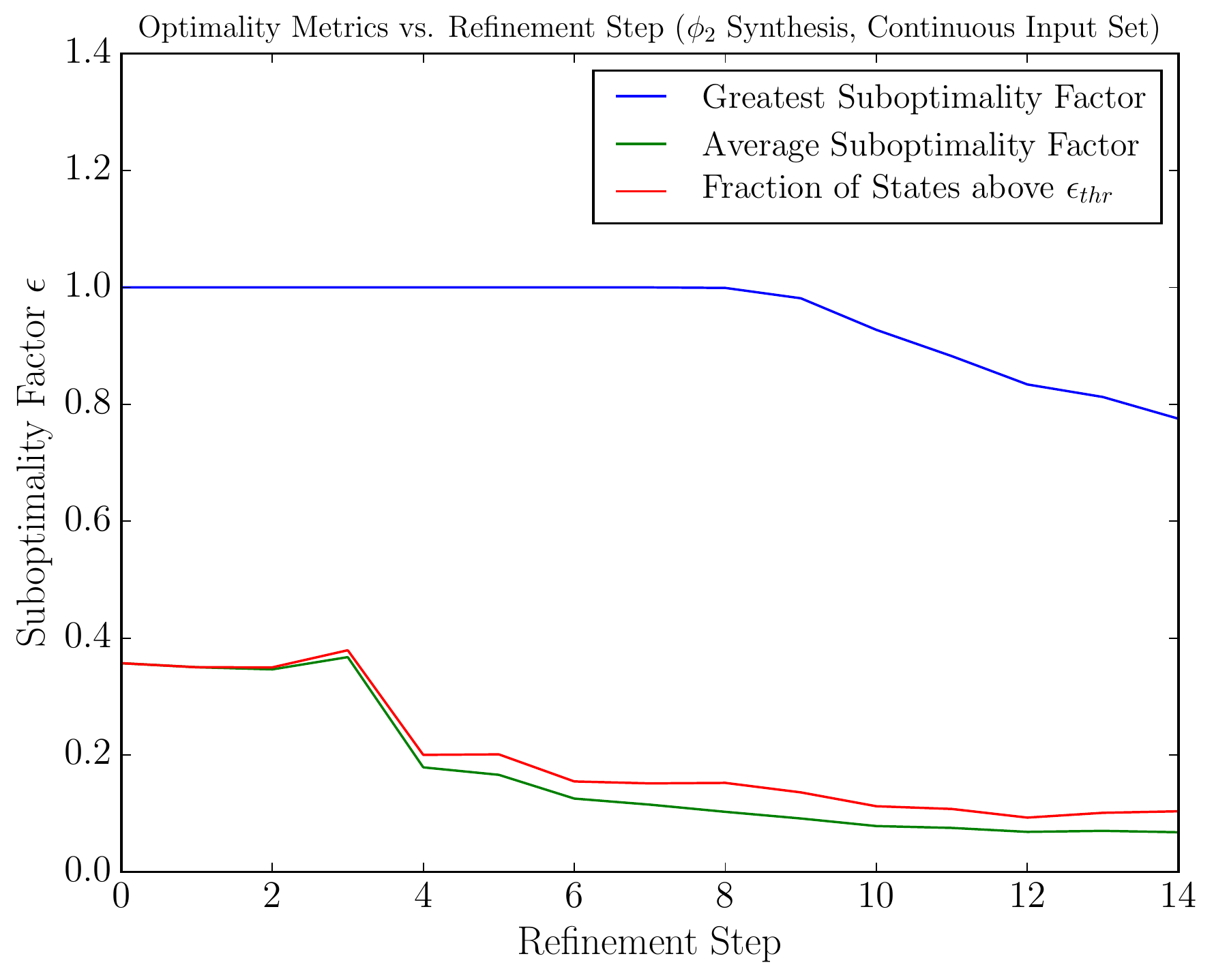}
\caption{Different metrics of precision for the computed controller with the continuous input set as a function of the number of refinement steps for specification $\phi_{1}$ (Left) and specification $\phi_{2}$ (Right). The synthesis algorithm is manually terminated before reaching the target $\epsilon_{thr} = 0.30$ for both specifications.}
\label{OptFacCont}
\end{figure}

\subsection{DISCUSSION}

The synthesis algorithms presented in the previous sections successfully designed controllers from both the finite set of inputs $U_{fin}$ and the continuous set of inputs $U$. Moreover, the algorithms conducted synthesis for two different complex specifications that existing tools could not accommodate, and automatically produced a targeted domain refinement for the two cases so as to achieve a higher level of optimality for the computed controllers. We also consider our approach to be an improvement over related synthesis works in terms of scalability; for instance, our finite-mode algorithm is orders of magnitude faster than the technique used for the synthesis case study in \cite{lahijanian2015formal}, which designed a switching policy for a 3-mode 2D linear system with a simple reachability specification over the course of several days.

To further demonstrate the synthesis procedure, in Figure \ref{Verifcomp} (Top), we display the verification of system \eqref{eq5} against $\phi_{1}$ without any available input with respect to a satisfaction threshold of 0.8 from the work in \cite{dutreix2020specification}, where the initial states in green have a probability of satisfying the specification which is greater than 0.8, the states in red have a probability which is below 0.8, and the states in yellow are undecided at the level of precision of the available partition. In the bottom left, we display the verification of system \eqref{eq5} under the computed switching policy in the finite-mode section, and in the bottom right, we show the verification of system \eqref{eq5} under the computed control policy from the continuous set of inputs. As expected, moving counter-clockwise through the plots, we observe that some red regions of the state-space are converted to green regions.

\begin{figure}[H]
\centering
\includegraphics[scale=0.33]{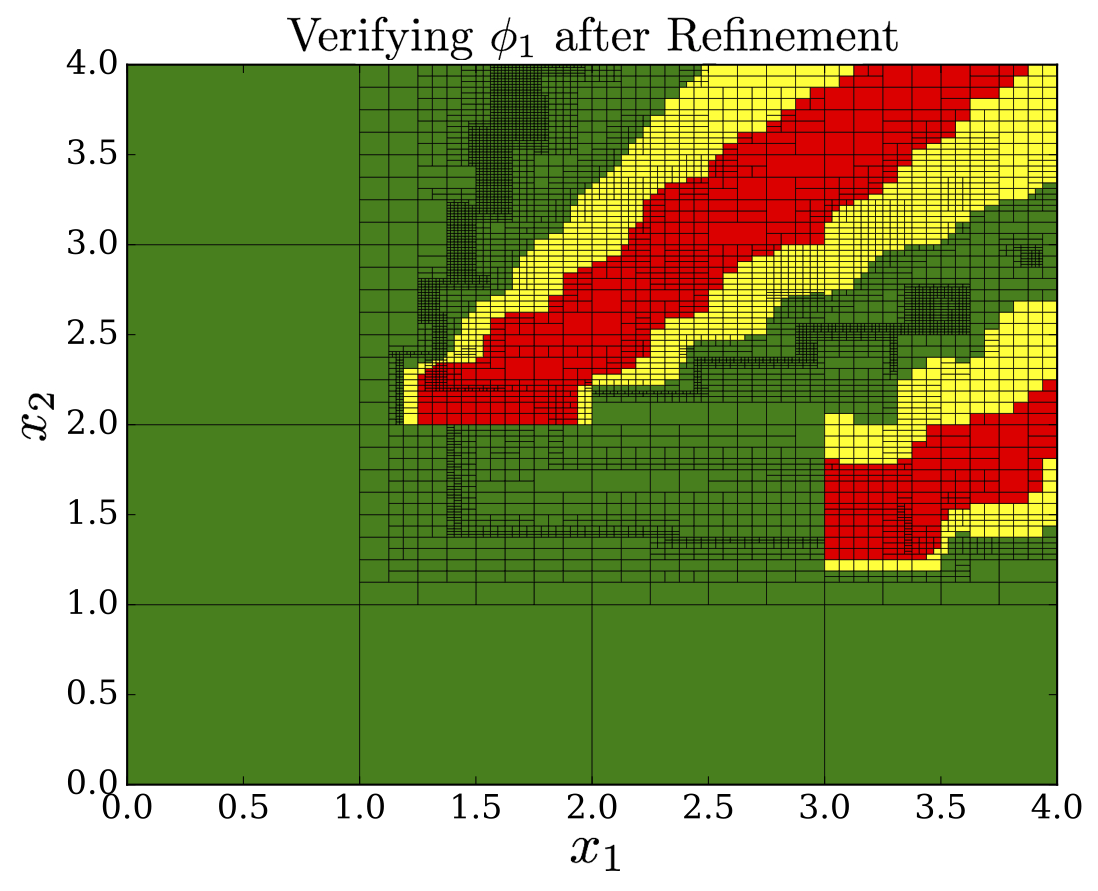}\\
\includegraphics[scale=0.325]{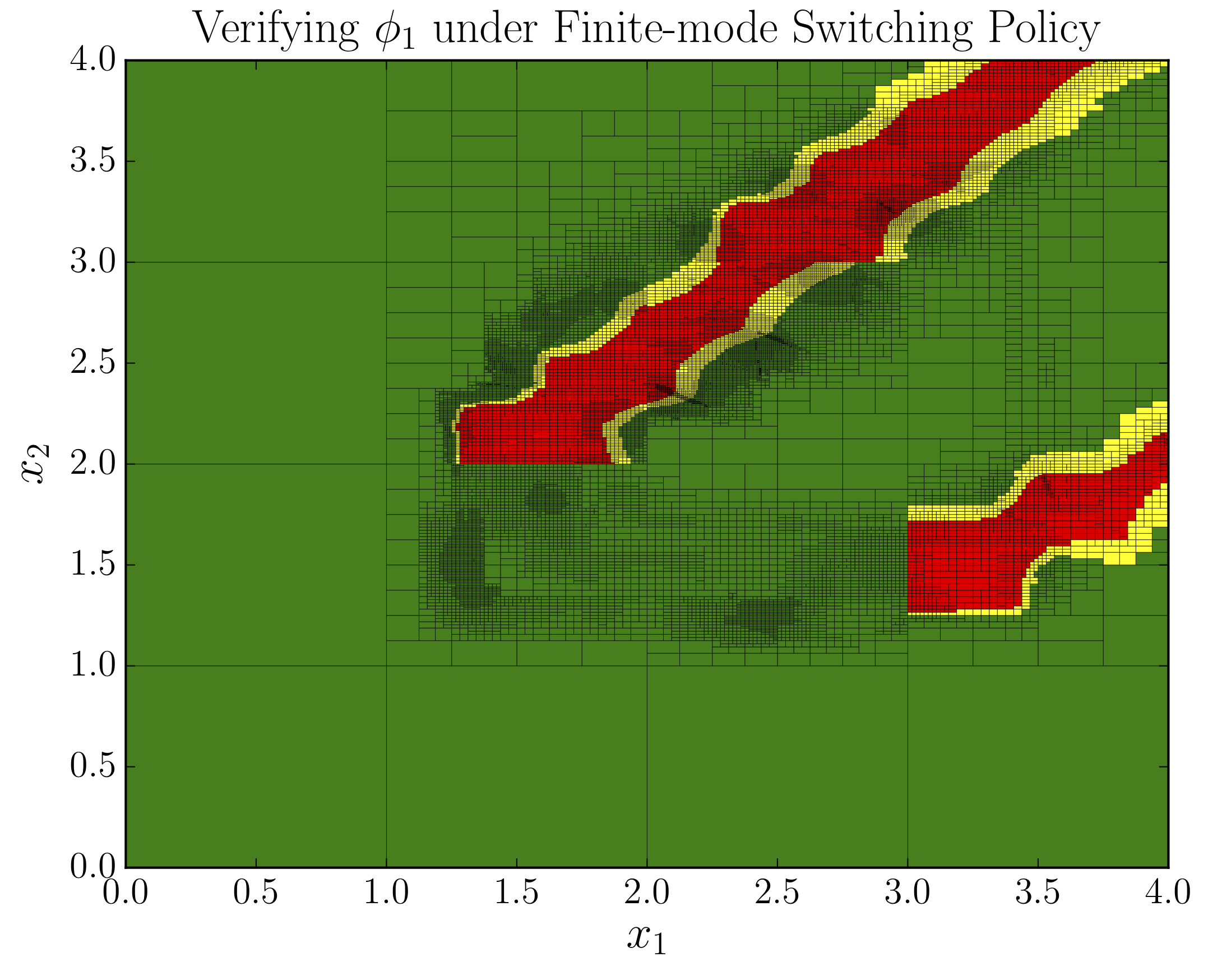}
\includegraphics[scale=0.325]{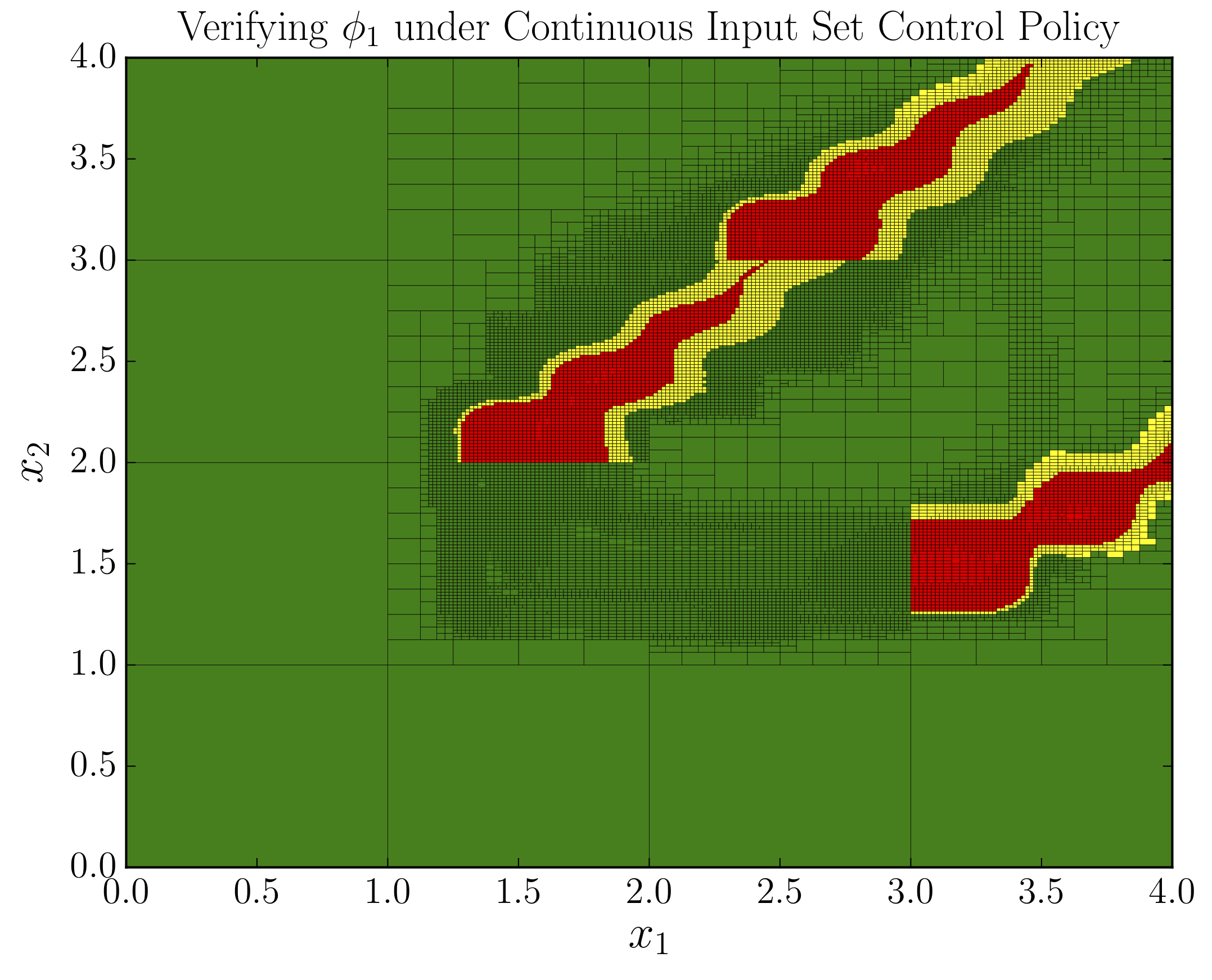}
\caption{Verification of system \eqref{eq5} against $\phi_{1}$ with respect to a satisfaction threshold of 0.8 without any input (Top), and under both the switching policy computed from the finite input set $U_{fin}$ (Bottom Left) and the control policy computed from the continuous input space $U$ (Bottom Right). The initial states in green have a probability of satisfying the specification which is greater than 0.8, the states in red have a probability which is below 0.8, and the states in yellow are undecided. The controlled versions of \eqref{eq5} convert some red regions of the state-space in the uncontrolled case to green regions.}
\label{Verifcomp}
\end{figure}

It is evident that computing controllers from a continuous set of inputs requires a more significant amount of computational effort compared to the finite input case. The largest portion of the continuous-input synthesis algorithm is expended solving the optimization problems for the value iteration step of the procedure, which is the clear scalability bottleneck of our current implementation. Moreover, we notice that the greatest suboptimality factor decreases at a slower rate as a function of refinement steps in the continuous input case than in the finite mode case, which causes a much finer partition of the domain and is the reason for the manual termination in the former example. We explain this phenomenon by observing that the suboptimality factor is more dependent on the abstraction error when using the continuous set of inputs. To see this, consider an optimal input $u^{*}$ computed for a state of the product CIMC $\mathcal{C} \otimes \mathcal{A}$, yielding an interval of satisfaction $[a, b]$ for this state. Now, consider another input $u^{*} +\epsilon$ for a small disturbance $\epsilon$. Assuming the dynamics of interest are continuous, it follows that the interval of satisfaction under the disturbed input is $[a + \epsilon_{a}, b + \epsilon_{b}]$. Therefore, the suboptimality factor for this state will be at least $b + \epsilon_{b} - a \approx b - a$, which is the size of the satisfaction interval of the considered state under the computed optimal input. Nonetheless, the algorithm still results in overall progress towards the goal optimality across all metrics as it performs more refinement steps.

\section{CONCLUSION}

In this paper, we developed abstraction-based controller synthesis techniques for stochastic systems with $\omega$-regular objectives. First, we showed a method to compute \textcolor{black}{switching policies} in stochastic systems with a finite number of modes by performing a permanent component search and a reachability maximization task \textcolor{black}{in an abstraction of the dynamics}. We proposed a specification-guided domain partition refinement scheme which targets states causing the most uncertainty in the abstraction and discards the system modes that are guaranteed to be suboptimal. We extended these results to stochastic systems with a continuous set of inputs and designed a synthesis method for the specific class of affine-in-input and affine-in-disturbance systems. Finally, we presented a numerical example where controller synthesis is conducted for both finite and continuous input sets on a nonlinear system with complex temporal logic tasks.

Future works will further explore the relationship between original partitions and their refined versions in order to reduce the number of operations performed in the components search and reachability algorithms after each refinement step and consequently improve scalability of our technique. An adaptation of these algorithms to guarantee a monotone decrease of the suboptimality factor throughout the synthesis procedure will also be investigated. \textcolor{black}{Obtaining formal convergence guarantees of the refinement heuristic is another important issue.}. Other immediate research directions include the study of wider classes of systems with continuous sets of inputs to which our abstraction-based technique can be extended.

\bibliography{NAHS1}

\newpage
\appendix

\subsection*{Proof of Lemma 1}

\mbox{}\\ We provide a constructive proof for this lemma. Consider a product BMDP $\mathcal{B} \otimes \mathcal{A}$ with set of states $Q \times S$, \textcolor{black}{set of policies $\mathcal{U}_{\otimes}^{\mathcal{A}}$ and set of memoryless policies $(\mathcal{U}_{\otimes}^{\mathcal{A}})_{mem}$}. We define the greatest permanent accepting BSCC $(U)_{P}^{G} \subseteq Q \times S$ as the set of all states of $\mathcal{B} \otimes \mathcal{A}$ such that, if \textcolor{black}{$q \in (U)_{P}^{G}$}, then there exists a policy in $\mathcal{U}_{\otimes}^{\mathcal{A}}$ such that $q$ belongs to a permanent accepting BSCC in $\mathcal{B} \otimes \mathcal{A}$.

The first part of the proof consists in showing that there exists a set of \textcolor{black}{memoryless} policies $\mathcal{U}_{(U)_{P}^{G}} \subseteq (\mathcal{U}_{\otimes}^{\mathcal{A}})_{mem}$ such that, under all product IMCs induced by a policy in $\mathcal{U}_{(U)_{P}^{G}}$, all states in $(U)_{P}^{G}$ belong to a permanent winning component simultaneously and, therefore, $(U)_{P}^{G} \subseteq (WC)_{P}^{G}$.

The second part of the proof shows that, for any other states of $\mathcal{B} \otimes \mathcal{A}$ which can be made a permanent winning component under some policy \textcolor{black}{in $\mathcal{U}_{\otimes}^{\mathcal{A}}$}, there exists a set of \textcolor{black}{memoryless} policies in $ (\mathcal{U}_{\otimes}^{\mathcal{A}})_{mem}$ (which is a subset of $\mathcal{U}_{(U)_{P}^{G}}$), such that all these states are a permanent winning component simultaneously. \\

\noindent \textbf{\textit{I] Proof of existence of \textcolor{black}{memoryless} policies generating the greatest permanent accepting BSCC as a permanent winning component}} \\

First, we constructively show that, if there exists a policy $\mu_{1} \in \mathcal{U}_{\otimes}^{\mathcal{A}}$ generating a permanent accepting BSCC $B_{1} \subseteq Q \times S$ in $(\mathcal{B} \otimes \mathcal{A})[\mu_{1}]$, and if there exists another policy $\mu_{2} \in \mathcal{U}_{\otimes}^{\mathcal{A}}$ generating a permanent accepting BSCC $B_{2} \subseteq Q \times S$ in $(\mathcal{B} \otimes \mathcal{A})[\mu_{2}]$, then there has to exist a set of \textcolor{black}{memoryless} policies in $(\mathcal{U}_{\otimes}^{\mathcal{A}})_{mem}$  causing the set $B_{1} \cup B_{2}$ to be a permanent winning component in $\mathcal{B} \otimes \mathcal{A}$. Consider a policy $\mu_{3} \in (\mathcal{U}_{\otimes}^{\mathcal{A}})_{mem}$ \textcolor{black}{constructed as follows}:\\

\noindent \textcolor{black}{1) For the states in $B_{1}$, consider the following reasoning: by virtue of $B_{1}$ being a permanent accepting BSCC for some policy, it has to hold that, for some state $q_{acc} \in B_{1}$, $ F_{i} \in L'(q_{acc})$ and $ E_{i} \not \in L'(q) \; \forall q \in B_{1}$, for some $i$. Moreover, as $B_{1}$ is a permanent BSCC under $\mu_{1}$, for any state $q \in B_{1}$, there exists a sequence of inputs chosen by $\mu_{1}$ such that the lower bound probability of reaching $q_{acc}$ is 1, that is, $\widecheck{\mathcal{P}}_{(\mathcal{B} \otimes \mathcal{A})[\mu_{1}] }(q \models \Diamond q_{acc}) = 1$. Since reachability problems in BMDPs have memoryless optimal policies \cite{haddad2018interval}, it must be true that a memoryless policy $\mu_{1}^{mem}$ choosing no other actions than the ones prescribed by $\mu_{1}$ at all states $q\in B_{1}$ and guaranteeing $\widecheck{\mathcal{P}}_{(\mathcal{B} \otimes \mathcal{A})[\mu_{1}^{mem}] }(q \models \Diamond q_{acc}) = 1$ for all $q \in B_{1}$ exists as well. For all $q \in B_{1}$, set $\mu_{3}(q) = \mu_{1}^{mem}(q)$.\\}

\noindent \textcolor{black}{ 2) For all states $ q \in B_{2} \setminus (B_{1} \cap B_{2})$, apply the same reasoning with respect to the problem of reaching $(B_{1} \cap B_{2})$ instead of $q_{acc}$, that is, there exists a memoryless policy $\mu_{2}^{mem}$ choosing no other actions than the ones prescribed by $\mu_{2}$ such that $\widecheck{\mathcal{P}}_{(\mathcal{B} \otimes \mathcal{A})[\mu_{2}^{mem}] }(q \models \Diamond (B_{1} \cap B_{2})) = 1$ for all $ q \in B_{2} \setminus (B_{1} \cap B_{2})$. For all $ q \in B_{2} \setminus (B_{1} \cap B_{2})$, set $\mu_{3}(q) = \mu_{2}^{mem}(q)$.}  \\

\noindent \textcolor{black}{3) For all states $q \in (Q \times S) \setminus (B_{1} \cup B_{2})$, choose any action in $Act(q)$ as $\mu_{3}(q)$.}\\

\noindent \textcolor{black}{As $B_{1}$ is a permanent BSCC under $\mu_{1}$, no state of $B_{1}$ can transition outside of $B_{1}$ under $\mu_{3}$, that is, it holds that $\widehat{\mathcal{P}}_{(\mathcal{B} \otimes \mathcal{A})[\mu_{3}] }(q \models \Diamond (Q \times S) \setminus B_{1}) = 0$. Moreover, since $\widecheck{\mathcal{P}}_{(\mathcal{B} \otimes \mathcal{A})[\mu_{3}] }(q \models \Diamond q_{acc}) = 1$ for all $q \in B_{1}$, it follows that any trajectory starting in $B_{1}$ will always return to $q_{acc}$, that is, $\widecheck{\mathcal{P}}_{(\mathcal{B} \otimes \mathcal{A})[\mu_{3}] }(q \models \square \Diamond q_{acc}) = 1$ for all $q \in B_{1}$, and will additionally never reach a state $q_{n-acc} \in Q \times S$ satisfying $E_{i} \in L'(q_{n-acc})$. Therefore, any trajectory starting in $B_{1}$ satisfies the Rabin acceptance condition with lower bound probability 1, and $B_{1}$ is a member of the permanent winning component of $(\mathcal{B} \otimes \mathcal{A})[\mu_{3}]$. Furthermore, for all $q \in B_{2} \setminus (B_{1} \cap B_{2})$, we have $ \widecheck{\mathcal{P}}_{(\mathcal{B} \otimes \mathcal{A})[\mu_{3}] }(q \models \Diamond B_{1} )  =  \widecheck{\mathcal{P}}_{(\mathcal{B} \otimes \mathcal{A})[\mu_{3}] }(q \models \Diamond (B_{1} \cap B_{2})) = 1$ and thus, $B_{2} \setminus (B_{1} \cap B_{2})$ is a member of the permanent winning component of $(\mathcal{B} \otimes \mathcal{A})[\mu_{3}]$. Therefore, $B_{1} \cup B_{2}$ is a member of the permanent winning component of $(\mathcal{B} \otimes \mathcal{A})[\mu_{3}]$}.

Iteratively applying this logic with $B_{1} \cup B_{2}$ and any other member of $(U)_{P}^{G}$ shows that there exists a set of policies in $\mathcal{U}_{(U)_{P}^{G}} \subseteq (\mathcal{U}_{\otimes}^{\mathcal{A}})_{mem}$ such that all states in $(U)_{P}^{G}$ belong to a permanent winning component simultaneously.\\

\noindent \textbf{\textit{II] Proof of existence of greatest permanent winning component \textcolor{black}{and of memoryless policies generating this component}}}\\

Now, we consider the set $R = (Q \times S) \setminus \textcolor{black}{(U)_{P}^{G}}$ of all states of $\mathcal{B} \otimes \mathcal{A}$ which do not belong to \textcolor{black}{$(U)_{P}^{G}$}. 

\textcolor{black}{We define the set $\mathcal{U}^{out}_{(U)_{P}^{G}}$ of all policies which are history-dependent outside of $(U)_{P}^{G}$ and generate $(U)_{P}^{G}$ with an (arbitrary) memoryless policy on the states in $(U)_{P}^{G}$.}

For a policy $\mu \in \textcolor{black}{\mathcal{U}^{out}_{(U)_{P}^{G}}}$, the set of all states $C \subseteq R$ that belong to the permanent winning component $(WC)_{P}$ of $(\mathcal{B} \otimes \mathcal{A})[\mu]$ without being a member of $(U)_{P}^{G}$ --- that is, $C \cup (U)_{P}^{G} = (WC)_{P}$ and $C \cap (U)_{P}^{G} = \emptyset$ --- has to satisfy two conditions:\\

\noindent a) $C$ does not allow a transition outside of $C \cup (U)_{P}^{G}$ under any adversary of $(\mathcal{B} \otimes \mathcal{A})[\mu]$, that is, $\widehat{\mathcal{P}}_{(\mathcal{B}\otimes \mathcal{A})[\mu] }\bigg(q \models \Diamond \Big((Q \times S) \setminus \big(C \cup (U)_{P}^{G}\big)\Big) \bigg) = 0$ for all $q \in C$,\\

\noindent b) No subset of $C$ can form a losing component under any adversary of $(\mathcal{B} \otimes \mathcal{A})[\mu]$, that is, no state in $C$ is a member of the largest losing component $(LC)_{L}$ of the product IMC $(\mathcal{B} \otimes \mathcal{A})[\mu]$, or $C \cap (LC)_{L} = \emptyset$. \\

\noindent With these two conditions fulfilled, all states in $C$ either transition to $(U)_{P}^{G}$ or reach an accepting BSCC formed within $C$ under all adversaries of $(\mathcal{B} \otimes \mathcal{A})[\mu]$, and therefore reach an accepting BSCC with lower bound probability 1.\\

Now, we constructively show that, if there exists a policy $\mu_{1} \in \textcolor{black}{\mathcal{U}^{out}_{(U)_{P}^{G}}}$ inducing a product IMC $(\mathcal{B} \otimes \mathcal{A})[\mu_{1}]$ with permanent winning component $(WC^{1})_{P}$ and with a set of states $C_1 \in R$ satisfying conditions a) and b) such that $C_{1} \cup (U)_{P}^{G} = (WC^{1})_{P}$ and $C_{1} \cap (U)_{P}^{G} = \emptyset$, and if there exists a policy $\mu_{2} \in \mathcal{U}^{out}_{(U)_{P}^{G}}$ inducing a product IMC $(\mathcal{B} \otimes \mathcal{A})[\mu_{2}]$ with permanent winning component $(WC^{2})_{P}$ and with a set of states $C_2 \in R$ satisfying conditions a) and b) such that $C_{2} \cup (U)_{P}^{G} = (WC^{2})_{P}$ and $C_{2} \cap (U)_{P}^{G} = \emptyset$, then there has to exist a \textcolor{black}{memoryless} policy $\mu_{3} \in \mathcal{U}_{(U)_{P}^{G}}$ inducing a product IMC $(\mathcal{B} \otimes \mathcal{A})[\mu_{3}]$ with permanent winning component $(WC^{3})_{P}$ and with the set of states $(C_1 \cup C_2) \in R$ satisfying conditions a) and b) such that $(C_1 \cup C_2) \cap (U)_{P}^{G} = \emptyset$. Consider a policy $\mu_{3} \in \mathcal{U}_{(U)_{P}^{G}}$ \textcolor{black}{constructed as follows}:\\

\noindent \textcolor{black}{
 1) For all state $q \in C_{1}$, consider the following reasoning inspired by the arguments in the proof of \cite[Theorem 8]{chatterjee2004trading} on the optimality of memoryless policies in MDPs for Rabin objectives: for any state in $q \in C_{1}$, it must be true that any trajectory initiated at $q$ under policy $\mu_{1}$ reaches with lower bound probability 1 a set $K \subseteq C_{1}$ such that the continuation of any trajectory that reaches $K$ is confined to $K \cup (U)_{P}^{G}$ and either reaches $(U)_{P}^{G}$ or visits an unmatched accepting Rabin state in $K$ infinitely often.  Consider the arbitrarily ordered set $(K_{1}, K_{2}, \ldots, K_{m})$ of all such sets which can by reached by some initial state $q \in C_{1}$ under $\mu_{1}$. Due to the optimality of memoryless policies for reachability problems in BMDPs and the properties of the $K$ sets, there must exist a memoryless policy $\mu^{K_{1}}_{1}$ such that, for all $q \in K_{1}$, $\widecheck{\mathcal{P}}_{(\mathcal{B}\otimes \mathcal{A})[\mu^{K_{1}}_{1}] }\Big(q \models \Diamond  ((U)_{P}^{G}) \cup A \Big) = 1$, where $A$ is the set of all unmatched Rabin accepting states in $K_{1}$, and $\widehat{\mathcal{P}}_{(\mathcal{B}\otimes \mathcal{A})[\mu^{K_{1}}_{1}] }\Big(q \models \Diamond (Q \times S) \setminus ((U)_{P}^{G} \cup K_{1}) \Big) = 0$. Set $\mu_{3}(q) = \mu^{K_{1}}_{1}(q)$ for all $q \in K_{1}$. Apply the same procedure recursively to $K_{2} \setminus K_{1}$ and replacing $(U)_{P}^{G}$ with $(U)_{P}^{G} \cup K_{1}$, then to $K_{3} \setminus (K_{2} \cup K_{1})$ etc. For the states $q \in C_{1}$ outside the $K$ sets, design $\mu_{3}$ such that  $\widecheck{\mathcal{P}}_{(\mathcal{B}\otimes \mathcal{A})[\mu_{3}] }\Big(q \models \Diamond  (U)_{P}^{G} \cup K_{1} \cup \ldots \cup K_{m} \Big) = 1$, which again can be achieved with a memoryless choice of actions due to the optimality of memoryless policies for reachability and the fact that $\mu_{1}$ satisfies this condition.}  \\

\noindent \textcolor{black}{
 2) For all state $q \in C_{2} \setminus (C_{1} \cap C_{2})$, choose the actions in $\mu_{3}$ by following the same reasoning as in 1) after replacing $C_{1}$ with $q \in C_{2} \setminus (C_{1} \cap C_{2})$ and $(U)_{P}^{G}$ with $(U)_{P}^{G} \cup C_{1}$.}\\

\noindent \textcolor{black}{
 3) For all state $q \in (Q \times S) \setminus (C_{1} \cup C_{2})$ (not in $(U)_{P}^{G}$, since actions are already fixed in this set), choose any action in $Act(q)$ as $\mu_{3}(q)$.\\}

\noindent \textcolor{black}{By construction, the set $C_1 \cup C_2$ satisfy condition a) and b), as no subset of $C_1 \cup C_2$ can form a losing component under the actions prescribed by $\mu_{3}$ and no trajectory can leave $C_1 \cup C_2 \cup (U)_{P}^{G}$.} Therefore, $C_1 \cup C_2$ is a subset of the permanent winning component $(WC^{3})_{P}$ of $(\mathcal{B} \otimes \mathcal{A})[\mu_{3}]$. 

\textcolor{black}{Replacing the set $(U)^{G}_{P}$ from the beginning of section II] with $C_{1} \cup C_{2} \cup (U)^{G}_{P}$ and applying the same process iteratively proves the existence of a set $(WC)_{P}^{G}$ satisfying the properties enunciated in the lemma and of a set of \textcolor{black}{memoryless} policies $\mathcal{U}_{(WC)^{G}_{P}}$ generating $(WC)_{P}^{G}$.}\\

\end{document}